\newtheorem{definition}{Definition}[section]
\newtheorem{theorem}[definition]{Theorem}
\newtheorem{proposition}[definition]{Proposition}
\newtheorem{corollary}[definition]{Corollary}
\newtheorem{lemma}[definition]{Lemma}
\newtheorem{remark}[definition]{Remark}
\numberwithin{equation}{section}
\newcommand \bei {\begin{itemize}}
\newcommand \eei {\end{itemize}}
\newcommand \str {\mathcal E}
\newcommand \Rbf {\mathbf R}
\newcommand \Wbf {\mathbf W}
\newcommand \Sbf {\mathbf S}
\newcommand \eps \epsilon 
\newcommand \Rcal {\mathcal{R}} 
\newcommand \Pcal {\mathbb{P}}
\newcommand \Uh {\widehat U}
\newcommand \supp {\text{supp}} 
\newcommand \Rr {\Rbf^\rightarrow}
\newcommand \Rl {\Rbf^\leftarrow}
\newcommand \Sr {\Sbf^\rightarrow} 
\newcommand \Sl {\Sbf^\leftarrow}
\newcommand \Wavr {\Wbf^\rightarrow}
\newcommand \Wl {\Wbf^\leftarrow}
\newcommand \Mcal {\mathcal M}
\newcommand \Hcal {\mathcal H}
\newcommand \gb  {g_0}
\newcommand \mub {\mu_0}
\newcommand \ub  {u_0} 
\newcommand \Mb {M_0} 
\newcommand \Vb {V_0} 
\newcommand \ab {a_0} 
\newcommand \bb {b_0}
\newcommand \Mh {\widehat M}
\newcommand \nh {\widehat n}
\newcommand \D {\sharp}
\newcommand \Lip 	{\text{Lip}}
\newcommand \ra 	\rangle
\newcommand \la 	\langle
\newcommand \del   	\partial
\newcommand \RR	{\mathbb R}
\newcommand \be   	{\begin{equation}}
\newcommand \ee   	{\end{equation}} 
\newcommand{\bel}[1]{\be\label{#1}}
\newcommand{\G}[3]{\ensuremath{\Gamma^{#1}_{#2#3}}} 
\newcommand{\T}[2]{\ensuremath{T^{#1}{}_{#2}}} %
\newcommand \Ma {M_\sharp}
\newcommand \Va {V_\sharp}
\newcommand \aaa {a_\sharp}
\newcommand \ba {b_\sharp}
\newcommand \za {z_\sharp}
\newcommand \wa {w_\sharp}
\newcommand \Ms {M^{(0)}}
\newcommand \Vs {V^{(0)}}
\newcommand \as {a^{(0)}}
\newcommand \bs {b^{(0)}}
\newcommand \mus {\mu^{(0)}}
\newcommand \Mp {M^{(1)}}
\newcommand \Vp {V^{(1)}}
\newcommand \ap {a^{(1)}}
\newcommand \bp {b^{(1)}}
\begin{document}

\begin{frontmatter}

\hfill {\bf Journal de Math\'ematiques Pures et Appliqu\'ees (2014)} 

\

\

\

\title{The formation of trapped surfaces in spherically-symmetric Einstein-Euler spacetimes 
with bounded variation}

\author{Annegret Y. Burtscher$^{1}$ and Philippe G. LeFloch$^2$} 

\address{$^1$ Fakult\"at f\"ur Mathematik, Universit\"at Wien, Oskar-Morgenstern-Platz 1, 
1090 Wien, Austria.  \\
 Email : {\it annegret.burtscher@univie.ac.at.}
\\
$^2$ Laboratoire Jacques-Louis Lions \& Centre National de la Recherche Scientifique, 
Universit\'e Pierre et Marie Curie, 
\\
4 Place Jussieu, 75252 Paris, France. 
Email : {\it contact@philippelefloch.org.}
}

\begin{abstract}
We study the evolution of a self-gravitating compressible fluid in spherical symmetry and 
we prove the existence of weak solutions with bounded variation for the Einstein-Euler equations of general relativity.
We formulate the initial value problem in Eddington-Finkelstein coordinates and prescribe
spherically symmetric data on a characteristic initial hypersurface. 
We introduce here a broad class of initial data which contain no trapped surfaces, 
and we then prove that their Cauchy development contains trapped surfaces. 
We therefore establish the {\it formation of trapped surfaces} in weak solutions to the Einstein equations.
This result generalizes a theorem by Christodoulou for regular vacuum spacetimes (but without symmetry restriction). Our method of proof relies on a generalization of the "random choice" method for nonlinear hyperbolic systems 
and on a detailled analysis of the nonlinear coupling between the Einstein equations and the relativistic Euler equations in spherical symmetry. 
\end{abstract}

\begin{keyword}
Einstein equations \sep Euler equations \sep compressible fluid \sep trapped surfaces \sep spherical symmetry \sep shock wave \sep bounded variation  
\MSC[2010] 83C05 \sep 35L60 \sep 76N10
\end{keyword} 

\end{frontmatter}

\noindent{\bf R\'esum\'e}

Nous \'etudions l'\'evolution d'un fluide compressible auto-gravitant en sym\'etrie radiale et nous d\'emon\-trons un r\'esultat d'existence de solutions faibles \`a variation born\'ee pour les \'equations d'Einstein-Euler de la relativit\'e g\'en\'erale. 
Nous formulons le probl\`eme de Cauchy en coordonn\'ees d'Eddington-Finkelstein et prescrivons
 des donn\'ees \`a sym\'etrie radiale sur une hypersurface initiale caract\'eristique.  
Nous introduisons ici une classe de donn\'ees initiales qui ne contiennent pas de surfaces pi\'eg\'ees, 
et nous 
d\'emontrons alors que leur d\'eveloppement de Cauchy contient des surfaces pi\'eg\'ees. 
Nous \'etablissons ainsi un r\'esultat de {\sl formation de surfaces pi\'eg\'ees} dans les solutions faibles des \'equations d'Einstein. Ce r\'esultat g\'en\'eralise un th\'eor\`eme de Christodoulou pour les espaces-temps r\'eguliers sans mati\`ere (mais sans restriction de sym\'etrie). 
Notre m\'ethode de preuve s'appuie sur une g\'en\'eralisation de la m\'ethode "random choice" pour les syst\`emes hyperboliques nonlin\'eaires et sur une analyse fine du couplage nonlin\'eaire entre les \'equations d'Einstein et les \'equations d'Euler relativistes en sym\'etrie radiale.


  
\newpage 

\section{Introduction}
\label{sec:1}

We are interested in the problem of the gravitational collapse of compressible matter under the assumption of 
spherical symmetry. When the matter evolves under its self-induced gravitational field,  
two distinct behaviors can be observed: a dispersion of the matter in future timelike directions, or 
a collapse of the matter and the formation of a trapped surface and, under certain conditions, a black hole~\cite{HawkingPenrose,Penrose,Wald}. 
The collapse problem in spherical symmetry was extensively investigated by Christodoulou and followers in the past twenty years, 
under the assumption that the matter is represented by a scalar field \cite{Christo0,Christo1} or is driven by a kinetic
equation like Vlasov equation; cf.~Andreasson \cite{Andreasson}, Andreasson and Rein \cite{AR}, and Rendall \cite{Rendall1,rendall08} and the references cited therein. Furthermore, the problem of the generic formation of trapped surfaces in vacuum spacetimes without symmetry was solved by Christodoulou in the pioneering work \cite{Christo-book}. 

In recent years,  the second author together with collaborators \cite{BLSS,GrubicLeFloch,LeFloch,LeFlochRendall,LeFlochStewart1,LeFlochStewart2} 
has initiated the mathematical study of self-gravitating compressible fluids and constructed classes of spacetimes with weak regularity whose curvature is defined in the sense of distributions \cite{LeFlochMardare}. Global existence results have been established for several classes of solutions to the Einstein equations with symmetry. LeFloch and Stewart \cite{LeFlochStewart2} proposed a mathematical theory of the characteristic initial value problem for plane-symmetric spacetimes with weak regularity, while LeFloch and Rendall \cite{LeFlochRendall} and  Grubic and LeFloch \cite{GrubicLeFloch} constructed a global foliation for the larger class of weakly regular spacetimes with Gowdy symmetry. Furthermore, LeFloch and Smulevici \cite{LeFlochSmulevici} developped the theory of weakly regular, vacuum  spacetimes with $T^2$ symmetry.  

The present paper is motivated by Christodoulou's work \cite{Christo-book} on trapped surface formation and, by building upon the mathematical technique \cite{BLSS,LeFloch,LeFlochRendall,LeFlochStewart2}, we are able to construct a large class of spherically-symmetric Einstein-Euler spacetimes which have bounded variation and exhibit trapped surface formation. 
We thus consider matter spacetimes $(\Mcal,g)$ (with bounded variation) satisfying the Einstein equations 
\bel{eq:Einstein}
 G^{\alpha\beta} = 8 \pi T^{\alpha\beta}
\ee
understood in the distributional sense (see Section~\ref{sec:3}, below), when the geometry described by the Einstein tensor $G^{\alpha\beta}$ is coupled
to the matter content governed by the energy-momentum tensor 
\bel{eq:EM}
T^{\alpha\beta} = (\mu + p) u^\alpha u^\beta + p \, g^{\alpha\beta}. 
\ee
Here, all Greek indices take values $0, \ldots, 3$ and implicit summation over repeated indices is used. 
According to the Bianchi identities satisfied by the geometry, \eqref{eq:Einstein}-\eqref{eq:EM} imply the Euler equations
\bel{euler}
\nabla_\alpha T^{\alpha\beta} = 0. 
\ee
In \eqref{eq:EM}, $\mu$ denotes the mass-energy density of the fluid and $u^\alpha$ its velocity vector, which is 
normalized to be of unit norm $u^\alpha u_\alpha = -1$, while the pressure function $p=p(\mu)$ is assumed to 
depend linearly on $\mu$, that is, 
\bel{eq:pressure}
 p = k^2 \mu. 
\ee
The constant $k \in (0,1)$ represents the speed of sound, while the light speed is normalized to unit.
 
In the present paper, we thus investigate the class of spherically symmetric spacetimes governed by the Einstein-Euler
equations \eqref{eq:Einstein}-\eqref{euler}, and after formulating the initial value problem with data posed on a spacelike
hypersurface, we establish several results concerning their local and global geometry. 
The main challenge overcome is coping with the weak regularity of the spacetimes under consideration, which is 
necessary since shock waves are expected to form in the fluid even if the initial data are smooth (cf.~Rendall and St{\aa hl} \cite{rendallstahl}). 
Our main result is now stated, in which we are able to identify a large class of initial data leading to the formation of trapped
two-spheres.

\begin{theorem}[A class of spherically-symmetric Einstein-Euler spacetimes with bounded variation]
\label{th:main} 
By solving the initial value problem from a class\footnote{specified explicitly in  {Corollary~\ref{thm:trapped} and} Proposition~\ref{prop67} below} 
 of initial data set $(\Hcal, \gb, \mub, \ub)$ with spherical symmetry and bounded variation, prescribed on a hypersurface $\Hcal \subset \Mcal$, 
one obtains a class of Einstein-Euler spacetimes $(\Mcal,g, \mu, u)$ with bounded variation satisfying
\eqref{eq:Einstein}--\eqref{eq:pressure}, together with the following conditions: 
\bei 

\item[1.] The spacetime is a spherically symmetric, future development of the initial data set.

\item[2.] The initial hypersurface does not contain trapped spheres. 

\item[3.]  The spacetime contains trapped spheres. 

\eei 
\end{theorem}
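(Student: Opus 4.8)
The plan is to pass to Eddington–Finkelstein-type double-null coordinates in spherical symmetry, where the reduced Einstein–Euler system becomes a $1+1$ dimensional balance law for the fluid coupled to ODE constraints for the metric along characteristics. In these coordinates the geometry is encoded by a single metric coefficient and the area radius $r$, the fluid by $(\mu,u)$ or equivalently by the flux variables adapted to \eqref{eq:pressure}; the trapped-surface condition becomes the statement that the outgoing expansion $\partial_u r$ changes sign, i.e. that the Hawking mass $m$ satisfies $2m/r>1$ somewhere. First I would record the constraint/evolution split: the transport equations for the area radius and for the renormalised metric coefficient integrate explicitly in terms of the fluid data, so that once a BV bound on $(\mu,u)$ is propagated, the geometric quantities $r$, $m/r$ inherit BV regularity and a modulus of continuity. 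This reduces Theorem~\ref{th:main} to (a) a global-in-the-relevant-region existence theorem for the fluid balance law with BV data, and (b) a quantitative lower bound on the growth of $m/r$ toward the axis.

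For step (a) I would invoke the random choice (Glimm-type) scheme, generalised as announced in the abstract to the non-autonomous, geometry-coupled setting: build approximate solutions by solving Riemann problems for the relativistic Euler system with $p=k^2\mu$ (whose wave curves are explicit for this equation of state), interleaved with updates of $r$ and the metric coefficient from the integrated constraints; then control the Glimm functional $\mathcal F = \text{(total variation)} + K\,\text{(interaction potential)}$, absorbing the geometric source terms using the smallness of the region and the a priori bounds on $r$ and $m$. Compactness (Helly) then yields a BV weak solution on the characteristic rectangle, and one checks it satisfies \eqref{eq:Einstein}--\eqref{euler} distributionally via the consistency of the scheme (this is where the machinery of \cite{BLSS,LeFloch,LeFlochStewart2} enters). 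The existence part I expect to be essentially a careful adaptation, the genuinely delicate point being that the geometric coefficients degenerate as $r\to0$, so the domain on which the scheme is run must be chosen to stay away from the axis while still being large enough to capture the collapse.

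For step (b), following Christodoulou's mechanism, I would prescribe the initial data on the characteristic hypersurface $\Hcal$ so that a definite amount of mass-energy is funneled inward: concretely, choose $\mub,\ub$ (ingoing, with $k$-dependent profile) supported in an annulus $[r_1,r_2]$ with $\int \mu$ large relative to $r_2-r_1$ but with $2m/r<1$ everywhere on $\Hcal$ — this is exactly the "no trapped spheres initially" condition, item~2, and it is an open condition so it survives the BV framework. Then, integrating the transport equation for $m$ along the incoming characteristics and using that the BV solution keeps the mass from dispersing too fast (a lower bound on $\mu$ propagated by the scheme, using positivity preserved by the Riemann solver for $p=k^2\mu$), one shows that along some ingoing null ray the ratio $2m/r$ reaches a value $>1$ before $r$ shrinks to zero; the corresponding two-sphere is trapped, giving item~3. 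Item~1 — that what we constructed is a bona fide spherically symmetric future development — follows from the geometric uniqueness/consistency already built into the scheme.

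The main obstacle, I expect, is closing the a priori BV estimate for the coupled system: unlike the flat relativistic Euler equations, here the source terms in the balance law depend on $r$ and $m$, which are themselves only BV and which degenerate near the axis, so the Glimm interaction estimate must be run with geometry-dependent weights and one must show the interaction potential does not blow up as the mesh is refined — in particular that the reflection of waves off the (dynamical, low-regularity) geometry is controlled. The second genuine difficulty is making the two requirements compatible: the data must simultaneously be large enough that $m/r$ can cross the threshold $1$ (forcing a trapped surface) yet admit a BV solution on a region reaching far enough inward, which forces a careful tracking of the interplay between $k$, the total infalling mass, and the size of the characteristic rectangle.
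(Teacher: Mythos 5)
Your overall architecture matches the paper's: Eddington--Finkelstein-type coordinates in spherical symmetry, reduction to a $1{+}1$ balance law for the fluid coupled to integrated constraints for the metric, a random choice scheme whose Riemann solver is explicit for $p=k^2\mu$, and untrapped data whose development contains a sphere with $2m/r>1$. But at the two decisive points your plan has genuine gaps rather than proofs. First, the existence step: you propose to close a Glimm functional with an interaction potential, ``absorbing the geometric source terms using the smallness of the region.'' The data that actually produce trapping here are \emph{large} --- in the paper the velocity is amplified by a factor $1/h$ on a shell of width $\delta\sim h$ --- so smallness-based Glimm estimates do not apply. What the paper does instead is exploit the special structure of the system in these coordinates (eigenvalues linear in $V$, shock curves symmetric in Riemann invariant coordinates, so that wave strengths are subadditive at interactions, Lemma~\ref{lem94}) and then propagate an $h$-weighted sup-norm control (the perturbation property of Definition~\ref{def:pproperty}): wave speeds of order $1/h$, Riccati-type estimates for the Riemann invariants in the ODE step, and an existence time of order $h^\kappa$. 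Without some such scaled a priori control, your scheme is not known to survive long enough for anything to happen.

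Second, the trapping mechanism is left qualitative, and the variant you choose is the one the paper deliberately avoids. You concentrate a large amount of mass-energy in a thin annulus and hope to propagate a lower bound on $\mu$ along ingoing characteristics; but making $\int\mu\,r\,dr$ large directly inflates $b$ via \eqref{b-new} and pushes $a$ toward zero on the initial slice via \eqref{a-new}, threatening both condition~2 and the uniform geometric bounds your scheme needs, and no lower bound on $\mu$ is established (or needed) in the paper. The paper instead perturbs only the velocity of a \emph{static} star, leaving $M$ (hence $b$) unchanged initially, so that $a_0>0$ is preserved for $\delta/h$ small (Proposition~\ref{prop:initialdata}) while $a_v=2\pi r b M(a^2-4V^2)\lesssim -1/h^2$ by \eqref{eq:Einstein3-new}. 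Trapping then follows from a race you do not address: the formation time is of order $h^2$, the guaranteed existence time of order $h^\kappa$ with $\kappa\geq 1+2\kappa_0$, and the region $\Xi_*$ must not close up under wave speeds of order $1/h$; this forces $\kappa<2$ and hence the sound-speed restriction $k<k_0\approx 0.12$ of Remark~\ref{rem:kappa} and Corollary~\ref{thm:trapped}. Your proposal contains no analogue of this quantitative compatibility check --- you correctly name it as ``the second genuine difficulty,'' but resolving it is precisely the content of the theorem.
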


The notion of {\it spacetime with bounded variation\/} used in the above theorem will be presented in Section~\ref{sec:3}. 
Observe that to establish the above theorem we need not construct the maximal development of the given initial data set, but
solely to establish that the solution to the Einstein-Euler system exists in a ``sufficiently large'' time interval within which
trapped surfaces have formed. 

An outline of this paper is as follows. In Section~\ref{sec:2}, we express the spacetime metric in generalized Eddington-Finkelstein
coordinates and we write the Einstein-Euler equations for spherically symmetric solutions as a first-order partial differential system
which, later in Section~\ref{sec:4}, will be shown to be hyperbolic.  Our choice of coordinates guarantees that the trapped region of
the development can be reached in the chosen coordinates. For instance, let us illustrate this choice with the Schwarzschild metric
which, in the Eddington-Finkelstein coordinates considered in the present work, reads ($m>0$ representing the mass of the black hole)
\bel{eq:104} 
g = - \big( 1 - 2m/r\big) \, dv^2 + 2 dvdr + r^2 \big( d\theta^2 + \sin^2 \theta \, d\varphi^2 \big), 
\ee
in which $(v , r) \in [0, +\infty) \times (0 +\infty)$ and the variable $(\theta, \varphi)$ parametrizes the two-spheres. The coefficients are
regular everywhere except at the center $r=0$ (where the curvature blows up~\cite{HawkingEllis})   
and these coordinates
allow us to ``cross'' the horizon $r=2m$ and ``enter'' the trapped region $r < 2m$. In contrast, in the so-called Schwarzschild
coordinates, we have (with $v = t + r + 2m \ln (r- 2m)$) 
$$
g = - \big( 1 - 2m/r\big) \, dt^2 + \big( 1 - 2m/r\big) ^{-1} \, dr^2 + r^2 \big( d\theta^2 + \sin^2 \theta \, d\varphi^2 \big), 
$$ 
and the metric coefficients suffer an (artificial) singularity around $r=2m$, so that these latter coordinates can not be used for our
purpose. The generalized Eddington-Finkelstein coordinates mimic \eqref{eq:104} for more general spacetimes (cf.~below). 
  
In Section~\ref{sec:3}, we follow \cite{LeFloch,LeFlochRendall,LeFlochStewart1} and introduce a definition of solutions to the Einstein-Euler
system. We then perform a ``reduction'' of this system, by eliminating certain redundancies in the ``full'' Einstein-Euler system
and arrive at a well-chosen set of ``essential equations''. Throughout the regularity of the solutions is specified and the
equivalence between the original system and the reduced one is established within the class of solutions with bounded variation. 

Before we can proceed with the study of general solutions to the coupled Einstein-Euler system, we investigate a special class of
solutions and, in Section~\ref{sec:5}, we analyze the class of static spacetimes, which are  
described by a system of ordinary differential equations associated with a suitably reduced version of the Einstein-Euler system.
Here, we rely on earlier work by Rendall and Schmidt \cite{rendallschmidt} and Ramming and Rein \cite{RRein}
who, however, assumed a different choice of coordinates.  

In Section~\ref{sec:4}, we investigate the (homogeneous version of the) Euler equations on a fixed background and, specifically, we
solve the so-called Riemann problem for the Euler equations in Eddington-Finkelstein coordinates. Since shock waves are expected to
form in finite time, it is natural to investigate initial data that consist of two constant states separated by an initial jump
discontinuity. In this ``ideal'' situation, the solutions to the Euler system (after having neglected the coupling with the Einstein
equations) can be given in closed form. This Riemann problem,  in turn, is fundamental in building a general solutions with arbitrary
initial data, as we explain in Section~\ref{sec:6}, below. 
 
Our key contribution in the present work is the identification of a large class of untrapped initial data whose Cauchy development contains trapped surfaces (arising therefore during the evolution). 
In Section~\ref{sec:6}, we introduce the class of initial data of interest and we state a precise version of our main result for the
Einstein-Euler system in spherical symmetry. We rely on the random choice method (for which we refer to
\cite{Dafermos,Glimm,LeFloch-book} and, more specifically, Smoller and Temple \cite{SmollerTemple} fas far as the relativistic fluid equations is concerned). The Riemann solutions serve as building blocks in order to approximate general solutions and the compactness of these approximate solutions follow from a uniform bound on their total variation. Only {\it local-in-time} existence results via
the random choice method were established earlier, however in other coordinates or under different symmetry assumptions, by Groah and Temple \cite{GroahTemple} and by LeFloch et al. \cite{BLSS,GrubicLeFloch,LeFlochStewart1}. 
Our result is a ``semi-global'' existence result, in the sense that we are able to control the time of existence of the solutions until a trapped surface forms. For clarity in the presentation, all technical estimates are postponed to Section~\ref{sec:7}. 


\section{The Einstein-Euler system in spherical symmetry} 
\label{sec:2}

\subsection{Einstein equations in Eddington-Finkelstein coordinates}
\label{sec:21}

We impose spherical symmetry and express the spacetime metric in the following {\it generalized Eddington-Finkel\-stein coordinates\/}
(following~\cite{Eddington,Finkelstein}):
\bel{eq:metric}
g = -a b^2 \, dv^2 + 2 b \, dv dr + r^2 \, \big( d\theta^2 + \sin^2\theta d\varphi^2 \big). 
\ee
Here, the time variable $v$ lies in some interval $[v_0, v_*]$ and the radius $r$ belongs to some interval  {$[0,r_0)$},   
 while 
$(\theta, \varphi)$ are standard coordinates on the two-sphere. 
The spacetime geometry is described by two metric coefficients such that $a=a(v,r)$ may change sign, but $b=b(v,r)$ remains positive,
and we require the following regularity condition at the center: 
\bel{eq:center}
 \lim_{r\to0} (a,b)(v,r) = (1,1) \qquad \text{ for all relevant } v. 
\ee
In view of \eqref{eq:metric}, the metric and its inverse read 
$$
 (g_{\alpha\beta}) = \left( \begin{array}{cccc}
                             -a b^2 & b & 0 & 0 \\
			          b & 0 & 0 & 0 \\
                                  0 & 0 & r^2 & 0 \\
				  0 & 0 & 0 & r^2 \sin^2 \theta
                            \end{array}
 \right), 
\qquad 
 (g^{\alpha\beta}) = \left( \begin{array}{cccc}
                                  0 & \frac{1}{b} & 0 & 0 \\
			          \frac{1}{b} & a & 0 & 0 \\
                                  0 & 0 & \frac{1}{r^2} & 0 \\
				  0 & 0 & 0 & \frac{1}{r^2 \sin^2 \theta}
                            \end{array}
 \right)
$$
and, therefore, the (non-vanishing) Christoffel symbols 
$\Gamma^\alpha_{\beta \gamma}
= \frac{1}{2}  \, g^{\alpha\delta} \big( \frac{\del g_{\gamma \delta}}{\del x^\beta} + \frac{\del
g_{\beta \delta}}{\del x^\gamma} - \frac{\del g_{\beta\gamma}}{\del x^\delta} \big)
$  
of the connection $\nabla$ read  
\bel{Christoffel}
\begin{array}{lll}
\G{0}{0}{0} = \frac{b_v}{b} + \frac{1}{2} a_r b + a b_r,
\hskip2.cm 
& \G{0}{2}{2} = - \frac{r}{b},
\hskip2.cm 
& \G{0}{3}{3} = - \frac{r}{b} \sin^2 \theta,
\\
\G{1}{0}{0} = - \frac{1}{2} a_v b + \frac{1}{2}a a_r b^2 + a^2 b b_r,
& \G{1}{0}{1} = - \frac{1}{2} a_r b - a b_r,
& \G{1}{1}{1} = \frac{b_r}{b}, 
\\
\G{1}{2}{2} = - r a,
& \G{1}{3}{3} = - r a \sin^2 \theta,
& \G{2}{1}{2} = \frac{1}{r},
\\
\G{2}{3}{3} = - \sin \theta \cos \theta,
& \G{3}{1}{3} = \frac{1}{r},
& \G{3}{2}{3} = \cot \theta.
\end{array}
\ee
Elementary calculations also yield the non-vanishing components of the Einstein tensor: 
\bel{Gab}
\aligned
 G^{00}= \; & \frac{2 b_r}{r b^3}, 
\hskip6.cm
 G^{01} =  \frac{1}{r^2b^2} \Big( r a_r b + ab-b + 2 r a b_r \Big),
\\
 G^{11} = \; & \frac{1}{r^2 b} \Big( a^2b - ab + r a a_r b + 2 r a^2 b_r - r a_v \Big), 
\hskip2.cm
G^{33} = \; (\sin \theta)^{-2} \, G^{22}.
\\
 G^{22} = \; & \frac{1}{2 r^3 b^3} \Big( 2 a_r b^3 + r a_{rr} b^3 + 3r a_r b^2 b_r + 2 a b^2 b_r
+ 2 r a b^2 b_{rr} - 2 r b_v b_r + 2 r b b_{vr} \Big). 
\endaligned
\ee

In the coordinates \eqref{eq:metric} under consideration, the Einstein equations \eqref{eq:Einstein} 
are equivalent to the two ordinary differential equations 
\begin{align}
b_r & = 4 \pi \, r b^3 \,  T^{00},  
\label{eq:Einstein1} 
\\
r a_r b + ab-b + 2 r a b_r & = 8 \pi \, r^2 b^2 \,  T^{01}, 
\label{eq:Einstein2}
\end{align}
and the two partial differential equations
\begin{align}
a^2b - ab + r a a_r b + 2 r a^2 b_r - r a_v & = 8 \pi \, r^2 b \, T^{11},  
\label{eq:Einstein3} 
\\
2 a_r b^3 + r a_{rr} b^3 + 3r a_r b^2 b_r + 2 a b^2 b_r  + 2 r a b^2 b_{rr} - 2 r b_v b_r + 2 r b b_{vr} 
& = 16 \pi \, r^3 b^3 \, T^{22}.
\label{eq:Einstein4}
\end{align}
The remaining Einstein equations 
\be
\label{eq:musthold}
T^{02} = T^{03} = T^{12} = T^{13} = T^{23} = T^{22} - (\sin \theta)^2 \, T^{33} = 0, 
\ee
should be seen as compatibility condition that the matter model must satisfy and, indeed under our symmetry assumption, it will be
straightforward to check \eqref{eq:musthold} for the energy momentum tensor \eqref{eq:EM}.

Let us make some remarks about the structure of \eqref{eq:Einstein1}--\eqref{eq:Einstein4}. We have here equation for the derivatives 
$a_r$ and $b_r$, which can be integrated and provide $a$ and $b$ when the matter content is ``known'': 

\bei

\item[1.] On one hand, by integrating \eqref{eq:Einstein1}, 
since $b$ is positive, we find 
\bel{eq:300}
{b(v,r)^{-2}} = 1 - 8 \pi \int_0^r T^{00}(v,r') \, r' dr', 
\ee
{\it provided\/} the matter density $r T^{00}$ is locally integrable on $[0, +\infty)$. This formula implies $b(v,r) \leq 1$ for $r> 0$ 
and, moreover, one needs that $8 \pi \int_0^{+\infty} T^{00}(v,r') \, r' dr' \leq 1$ in order for $b$ to remain non-negative for all $r$.

\item[2.] On the other hand, by combining \eqref{eq:Einstein1} and \eqref{eq:Einstein2}, we obtain 
$$
 \del_r (ra-r) + (ra-r) 8 \pi rb^2 T^{00} = 8 \pi r^2 b (T^{01} - b T^{00}).
$$
The integrating factor $C(v,r) = 8 \pi \int_{r_0}^r T^{00}  r' b^2 \, dr'$ allows us to write  
$$
 \del_r \left( e^C (r a - r) \right) = e^C \, \big( 8 \pi r^2 b (T^{01} - b T^{00}) \big)
$$ 
and, therefore, 
\bel{eq:301}
 a(v,r) = 1 - \frac{1}{r} \int_0^r \left(8 \pi {r'}^2 b (T^{01} - b T^{00}) \right) \exp\Big( -8 \pi \int_{r'}^r  T^{00} \, r'' b^2 
 \, dr'' \Big) \, dr',
\ee
{\it provided\/} the integrand above is locally integrable on $[0, +\infty)$.

\item[3.] The above formulas for the coefficients $a,b$ use only two of the Einstein equations, the remaining ones can be thought of as
constraints, which can then be deduced from \eqref{eq:300}-\eqref{eq:301}.

\eei 

We will see shortly below that \eqref{eq:301} is correct, but that \eqref{eq:300} must be revisited and a different ``weight'' in
$r$ is required. 


\subsection{Euler equations in Eddington-Finkelstein coordinates}
\label{sec:22}

Under the assumption of spherical symmetry and when the metric is expressed in the Eddington-Finkelstein coordinates \eqref{eq:metric}, the
energy density and the velocity vector depend on the variables $(v,r)$, only, and we can write 
$\mu = \mu(v,r)$ and $u^\alpha = u^\alpha(v,r)$.
We now express the Euler equations $\nabla_\alpha T^{\alpha\beta} = 0$, 
and we are content with the components $\beta=0,1$, since the remaining two components $\beta=2,3$ will follow from the former. 
(Cf.~Section~\ref{sec:reducedsystem}, below.) 
 
\begin{lemma} 
\label{lem:21}
Under the assumption of spherical symmetry and in the generalized Eddington-Finkelstein coordinates \eqref{eq:metric}, the Euler
equations take the form 
$$ 
\aligned
 0 = \; & \del_v \left( \mu (1 + k^2) u^0 u^0 \right) + \del_r \left( \mu (1 + k^2) u^0 u^1 + k^2 \frac{\mu}{b} \right) \\
 & + \left( \frac{2 b_v}{b} + \frac{a_r b}{2} + a b_r \right) \mu (1 + k^2) u^0 u^0
 + \left( \frac{b_r}{b} + \frac{2}{r} \right) \left( \mu (1 + k^2) u^0 u^1 + k^2 \frac{\mu}{b} \right) - \frac{2 k^2}{rb} \mu, 
\endaligned
$$  
$$
\aligned
0  = \; & \del_v \left( \mu (1 + k^2) u^0 u^1 + k^2 \frac{\mu}{b} \right) + \del_r
\left( \mu (1 + k^2) u^1 u^1 + k^2 \mu a \right) \\
 & + \left( - \frac{a_v b}{2} + \frac{a a_r b^2}{2} + a^2 b b_r \right) \mu (1 + k^2) u^0 u^0
+ \left( \frac{b_v}{b} - a_r b
- 2 a b_r \right) \left( \mu (1 + k^2) u^0 u^1 + k^2 \frac{\mu}{b} \right) 
\\
 & + \left( \frac{2 b_r}{b} + \frac{2}{r} \right) \left( \mu (1 + k^2) u^1 u^1 + k^2 \mu a \right)
- \frac{2 k^2 a}{r} \mu. 
\endaligned
$$
\end{lemma}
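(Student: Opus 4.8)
The plan is to evaluate the divergence $\nabla_\alpha T^{\alpha\beta}$ for $\beta=0,1$ directly, using the identity
\[
\nabla_\alpha T^{\alpha\beta} = \del_\alpha T^{\alpha\beta} + \G{\alpha}{\alpha}{\gamma}\, T^{\gamma\beta} + \G{\beta}{\alpha}{\gamma}\, T^{\alpha\gamma}
\]
valid for any symmetric two-tensor, together with the Christoffel symbols already recorded in \eqref{Christoffel}. First I would write out the nonvanishing components of $T^{\alpha\beta}$. Spherical symmetry forces $u^2=u^3=0$, so combining \eqref{eq:EM} and \eqref{eq:pressure} with the expression for $g^{\alpha\beta}$ displayed above gives
\[
T^{00} = \mu(1+k^2)\,u^0 u^0, \qquad T^{01}=T^{10} = \mu(1+k^2)\,u^0 u^1 + k^2\frac{\mu}{b},
\]
\[
T^{11} = \mu(1+k^2)\,u^1 u^1 + k^2 \mu a, \qquad T^{22} = \frac{k^2\mu}{r^2}, \qquad T^{33} = (\sin\theta)^{-2}\, T^{22},
\]
with all remaining components vanishing; in particular the compatibility relations \eqref{eq:musthold} hold. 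All these components depend on $(v,r)$ only, which is exactly what makes the $\theta$- and $\varphi$-derivatives in $\del_\alpha T^{\alpha\beta}$ drop out when $\beta=0,1$.

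Next I would compute the contracted Christoffel symbol $\G{\alpha}{\alpha}{\gamma}$. From the block structure of $g_{\alpha\beta}$ one has $\det(g_{\alpha\beta}) = -b^2 r^4 \sin^2\theta$, hence $\sqrt{|\det g|}=b\,r^2\sin\theta$ and $\G{\alpha}{\alpha}{\gamma} = \del_\gamma \log(b\,r^2\sin\theta)$; this yields $\G{\alpha}{\alpha}{0} = b_v/b$ and $\G{\alpha}{\alpha}{1} = b_r/b + 2/r$, which one can also check directly by summing the appropriate entries of \eqref{Christoffel}.

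Finally I would substitute. For $\beta=0$: since $T^{20}=T^{30}=0$, $\del_\alpha T^{\alpha 0} = \del_v T^{00} + \del_r T^{10}$; the term $\G{\alpha}{\alpha}{\gamma} T^{\gamma 0}$ equals $\frac{b_v}{b} T^{00} + \big(\frac{b_r}{b}+\frac{2}{r}\big) T^{10}$; and $\G{0}{\alpha}{\gamma} T^{\alpha\gamma}$ only picks up $\G{0}{0}{0}$, $\G{0}{2}{2}$, $\G{0}{3}{3}$, so it equals $\big(\frac{b_v}{b}+\frac{1}{2}a_r b + a b_r\big) T^{00} + \G{0}{2}{2} T^{22} + \G{0}{3}{3} T^{33}$, where the last two terms collapse to $-\frac{r}{b}\cdot\frac{k^2\mu}{r^2} - \frac{r}{b}\sin^2\theta\cdot\frac{k^2\mu}{r^2\sin^2\theta} = -\frac{2k^2}{rb}\mu$. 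Adding these, and observing that the coefficient of $T^{00}$ is $\frac{2b_v}{b}+\frac{a_r b}{2}+a b_r$, reproduces the first displayed equation. The case $\beta=1$ is handled in the same way, now using $\G{1}{0}{0}$, $\G{1}{0}{1}$, $\G{1}{1}{1}$, $\G{1}{2}{2}$, $\G{1}{3}{3}$, with the cross term contributing $2\,\G{1}{0}{1}\,T^{01}$ and the angular terms giving $\G{1}{2}{2} T^{22} + \G{1}{3}{3} T^{33} = -\frac{2k^2 a}{r}\mu$; gathering the coefficients of $T^{00}$, $T^{01}$, $T^{11}$ produces the second equation. There is no conceptual obstacle here; the only genuine work — and the place where an error is most likely to slip in — is the \emph{bookkeeping} when collecting the coefficients of $T^{00}$, $T^{01}$, $T^{11}$, in particular the factor $2$ in the cross term $2\,\G{1}{0}{1}\,T^{01}$ and the several signs inside $\G{1}{0}{0}$.
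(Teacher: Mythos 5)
Your proposal is correct and follows essentially the same route as the paper: a direct computation of $\nabla_\alpha T^{\alpha\beta}$ for $\beta=0,1$ from the Christoffel symbols \eqref{Christoffel} and the form of $T^{\alpha\beta}$ under spherical symmetry (with $u^2=u^3=0$). The only difference is organizational --- you evaluate the contracted connection coefficient $\Gamma^\alpha_{\alpha\gamma}=\partial_\gamma\log\sqrt{|\det g|}$ once and apply it uniformly, whereas the paper computes the four terms $\nabla_\delta T^{\delta\beta}$ separately (as in \eqref{eq:nablaT0}--\eqref{eq:nablaT1}) and then sums --- but this is a bookkeeping shortcut for the same calculation, and your collected coefficients of $T^{00}$, $T^{01}$, $T^{11}$ and the angular terms all agree with the statement.
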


\begin{proof}
The energy-momentum tensor defined in \eqref{eq:EM} reads 
$$ 
  \left( \begin{array}{cccc}
\hskip-.6cm \mu (1 + k^2) u^0 u^0 & \mu (1 + k^2) u^0 u^1 + \frac{k^2 \mu}{b} & \mu (1 + k^2) u^0 u^2
&  \hskip-.6cm \mu (1 + k^2) u^0 u^3 \\
 			     \mu (1 + k^2) u^0 u^1 +\frac{ k^2 \mu}{b} & \mu (1 + k^2) u^1 u^1 + k^2 \mu a
& \hskip-.6cm \mu (1 + k^2) u^1 u^2 & \mu (1 + k^2) u^1 u^3 \\
 			     \mu (1 + k^2) u^0 u^2 & \mu (1 + k^2) u^1 u^2 & \mu (1 + k^2) u^2 u^2 + \frac{k^2\mu}{r^2} 
& \hskip-.6cm \mu (1 + k^2) u^2 u^3 \\
 			     \mu (1 + k^2) u^0 u^3 & \mu (1 + k^2) u^1 u^3 & \mu (1 + k^2) u^2 u^3
&\hskip-.6cm  \mu (1 + k^2) u^3 u^3 + \frac{k^2 \mu}{r^2 \sin^2 \theta}
                             \end{array}
\right).
$$
In view of the Einstein equations \eqref{eq:Einstein} and the expressions of the components in Section~\ref{sec:21}
(cf.~the conditions \eqref{eq:musthold}), several components of the energy-momentum tensor vanish, that is, 
$$
 T^{02} = T^{03} = T^{12} = T^{13} = T^{23} = 0.
$$
The normalization $-1 = u^\alpha u_\alpha$ implies that $u^0 \neq 0$, and it thus follows from the condition $\mu>0$
that the last two components of the velocity vector vanish, i.e.~we have $u^2 = u^3 = 0$. Consequently, the energy--momentum
tensor has the form 
$$
 (T^{\alpha\beta}) = \left( \begin{array}{cccc}
                             \mu (1 + k^2) u^0 u^0 & \mu (1 + k^2) u^0 u^1 +  \frac{k^2\mu}{b} & 0 & 0 \\
			     \mu (1 + k^2) u^0 u^1 +  \frac{k^2\mu}{b} & \mu (1 + k^2) u^1 u^1 + k^2 \mu a & 0 & 0 \\
			     0 & 0 & \frac{k^2\mu}{r^2} & 0 \\
			     0 & 0 & 0 & \frac{k^2 \mu}{r^2 \sin^2 \theta}
                            \end{array}
\right).
$$

Using $\nabla_\delta T^{\alpha\beta} = \T{\alpha\beta}{,\delta} + \G{\alpha}{\gamma}{\delta} T^{\gamma\beta}
+ \G{\beta}{\gamma}{\delta} T^{\alpha\gamma}$ 
and the expressions \eqref{Christoffel} of the Christoffel symbols, we obtain ($\beta = 0$) 
\bel{eq:nablaT0} 
\aligned
\nabla_0 \T{00}{} = \; & \T{00}{,0} + 2 \G{0}{0}{0} T^{00} = \T{00}{,0} + \Big( \frac{2 b_v}{b} + a_r b + 2 a b_r  \Big) T^{00},
\\
\nabla_1 \T{10}{} = \; & 
\T{10}{,1} + \G{1}{0}{1} T^{00} + \G{1}{1}{1} T^{01} = \T{10}{,1} + \Big( - \frac{a_r b}{2} - a b_r\Big) T^{00}
 + \frac{b_r}{b} T^{01},
 \\
\nabla_2 \T{20}{} = \; & \G{2}{1}{2} T^{01} + \G{0}{2}{2} T^{22} = \frac{1}{r} T^{01} - \frac{r}{b} T^{22},
 \\ 
\nabla_3 \T{30}{} = \; & \G{3}{1}{3} T^{01} + \G{0}{3}{3} T^{33} = \frac{1}{r} T^{01} - \frac{r \sin^2 \theta}{b} T^{33},
\endaligned
\ee
and ($\beta = 1$) 
\bel{eq:nablaT1}
\aligned
\nabla_0 \T{01}{} = \; & \T{01}{,0} + \G{0}{0}{0} T^{01} + \G{1}{0}{0} T^{00} + \G{1}{0}{1} T^{01},
\\
                           = \; & \T{01}{,0} + \left( - \frac{a_v b}{2} + \frac{a a_r b^2}{2} + a^2 b b_r \right) T^{00}
                           + \frac{b_v}{b} T^{01},
\\
\nabla_1 \T{11}{} = \; 
& \T{11}{,1} + 2 ( \G{1}{0}{1} T^{01} + \G{1}{1}{1} T^{11} ) = \T{11}{,1}  - \left( a_r b + 2 a b_r \right) T^{01}
     + \frac{2 b_r}{b} T^{11},
 \\
\nabla_2 \T{21}{} = \; & \G{2}{1}{2} T^{11} + \G{1}{2}{2} T^{22} = \frac{1}{r} T^{11} - r a T^{22},
 \\
\nabla_3\T{31}{} = \; & \G{3}{1}{3} T^{11} + \G{1}{3}{3} T^{33} = \frac{1}{r} T^{11} - r a \sin^2 \theta T^{33}.
\endaligned
\ee
With $\beta= 2,3$, the corresponding components vanish identically and provide no further relations.  
 
Based on the above relations, we can now compute the first equation of the Euler system, obtained by setting $\beta = 0$ in
\eqref{euler}, that is,
\be 
\label{eqE1}
\aligned
 0 = \nabla_\alpha \T{\alpha 0}{} 
= \; & \del_v T^{00} + \del_r T^{10} + \Big( \frac{2 b_v}{b} + \frac{a_r b}{2} + a b_r \Big) \, T^{00}
+ \left( \frac{b_r}{b} + \frac{2}{r} \right) T^{01} - \frac{2r}{b} T^{22} 
\endaligned
\ee
and next,  $\beta = 1$, 
\be
\label{eqE2}
\aligned 
 0 = \nabla_\alpha \T{\alpha 1}{} 
= \; & \del_v T^{01} + \del_r T^{11} + \left( - \frac{a_v b}{2} + \frac{a a_r b^2}{2}
+ a^2 b b_r \right) T^{00} \\
 & + \left( \frac{b_v}{b} - a_r b - 2 a b_r \right) T^{01} + \left( \frac{2 b_r}{b} + \frac{2}{r} \right) T^{11} - 2 r a T^{22}. \qedhere
\endaligned
\ee 
\end{proof} 

\subsection{Formulation as a first-order system with source-terms}
\label{ssec:formulation}
 
Since we are interested in solutions with low regularity, it is necessary to put the principal parts of the Euler equations in a
divergence form. We are now going to check that all $v$-derivatives of the metric coefficients can be ``absorbed'' in the
principal part of the Euler equations, while all $r$-derivatives of the coefficients can be replaced by algebraic expressions
involving no derivatives.
To this end, we find it convenient to normalize the fluid variables in generalized Eddington-Finkelstein coordinates, and we introduce  
\bel{normal-fluid}
M := b^2 \mu \, u^0 u^0 \in (0, +\infty), 
\qquad
V := \frac{u^1}{b \, u^0} - \frac{a}{2} \in (-\infty, 0), 
\ee
which we refer to as the {\bf normalized fluid variables.} 
We also introduce the constant 
$$
K^2 := \frac{1-k^2}{1+k^2},
$$
which naturally arises in  the principal part of the Euler equations after multiplication by $1/(1+k^2)$. In terms of the variables $(M,V)$, 
the energy-momentum tensor read 
\bel{Tab}
\aligned
 T^{00} & = (1+k^2) \frac{M}{b^2}, &   T^{11} & = (1+k^2) M  \left( \frac{a^2}{4} + K^2 a V + V^2 \right), 
\\
 T^{01} & = (1+k^2) \frac{M}{b} \left( \frac{a}{2} + K^2 V \right), \qquad & T^{22} & = - \frac{2k^2}{r^2} M V.
\endaligned
\ee
Observe that $V$ is well-defined since $b$ and $u^0$ are, both, non-vanishing. Moreover, $-1 = g_{\alpha\beta} u^\alpha u^\beta$
implies $1 = b  u^0 (ab u^0 - 2 u^1)$, thus  
\bel{u1W} 
u^1 = \frac{1}{2} \left( ab u^0 - \frac{1}{b u^0} \right),
\qquad  
\ee
which was used to derive the sign of $V$ and will be useful later on. 

\begin{proposition}[Formulation of the Euler system in spherical symmetry]
\label{prop:23}
Under the assumption of spherical symmetry and in the  Eddington-Finkelstein coordinates \eqref{eq:metric}, the Euler equations 
\eqref{eq:EM}--\eqref{eq:pressure} for the normalized fluid variables $(M,V)$ defined in \eqref{normal-fluid}
can be expressed as a system of two coupled equations, i.e. 
\bel{eq:Euler}
\del_v U + \del_r F(U,a,b) = S(U,a,b),
\ee
with 
\bel{UF}
U := M \begin{pmatrix} 1 \\ \frac{a}{2} + K^2 V \end{pmatrix}, 
\qquad \qquad 
F(U, a, b) := bM \begin{pmatrix}  \frac{a}{2} + K^2 V \\ \frac{a^2}{4} +K^2 aV + V^2 \end{pmatrix}, 
\ee
and
\be 
\label{S1S2}
\aligned
S(U, a, b) &:= \begin{pmatrix} S_1(M,V,a,b) \\ S_2(M,V,a,b) \end{pmatrix}, 
\qquad \qquad   
S_1(M,V,a,b) := - \frac{1}{2r} b M \left( 1 + a + 4V \right),
\\
 S_2(M,V,a,b) & := - \frac{1}{2r} b M \Big( a^2 + 2 a V (2 + K^2) - 2 K^2 V + 4 V^2 \Big) - 16 \pi (1 - K^2) \, r b \, M^2 V^2,
\endaligned
\ee
in which the constant $K^2 := \frac{1-k^2}{1+k^2} \in (0,1)$ is determined from the sound speed. 
\end{proposition}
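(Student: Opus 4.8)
The plan is to start from the two scalar Euler equations \eqref{eqE1}--\eqref{eqE2} derived in Lemma~\ref{lem:21}, substitute the expressions \eqref{Tab} for the components $T^{00},T^{01},T^{11},T^{22}$ in terms of the normalized variables $(M,V)$, and then reorganize the result into the conservative form \eqref{eq:Euler}. The first observation is purely algebraic: the two components of the conserved vector $U$ in \eqref{UF} are, up to the overall factor $(1+k^2)$, exactly $b^2 T^{00}$ and $b\,T^{01}$, while the two components of the flux $F$ are $b\,T^{01}$ and $b\,T^{11}$. Dividing \eqref{eqE1}--\eqref{eqE2} by $(1+k^2)$ is what produces the constant $K^2=(1-k^2)/(1+k^2)$, so all principal terms line up immediately once one writes $\del_v(b^2T^{00})$, $\del_r(bT^{01})$, etc.

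Next I would handle the $v$-derivatives of the metric coefficients. In \eqref{eqE1} the term $(2b_v/b)T^{00}$ combines with $\del_v T^{00}$: since $b^2 T^{00}=(1+k^2)M$, one has $\del_v((1+k^2)M)=b^2\del_v T^{00}+2bb_v T^{00}$, so this source term is absorbed exactly into $\del_v$ of the first component of $U$. Similarly in \eqref{eqE2}, the coefficient $b_v/b$ multiplying $T^{01}$ together with the $a_v$-term multiplying $T^{00}$ must reassemble into $\del_v(b\,T^{01})=\del_v\!\big((1+k^2)M(a/2+K^2V)\big)$; here one uses that $\del_v(b\,T^{01})=b\del_v T^{01}+b_v T^{01}$ and that the $\del_v(a/2)$ piece, written as $(1+k^2)(M/2)a_v$, accounts for the $-\tfrac12 a_v b\,T^{00}$ contribution after multiplying through by $b$. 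This is the step where one must be careful with bookkeeping, but it is forced: the claim in \S\ref{ssec:formulation} is precisely that all $v$-derivatives of $a,b$ disappear into the principal part, so the check is to verify that the leftover $v$-derivative terms cancel identically.

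Then I would eliminate the $r$-derivatives of $a$ and $b$. Wherever $a_r$ or $b_r$ appears — in the $\del_r F$ terms via $\del_r(bM(\cdots))$ producing $b_r$ and $a_r$ factors, and in the explicit source coefficients of \eqref{eqE1}--\eqref{eqE2} — I substitute using the constraint Einstein equations \eqref{eq:Einstein1} and \eqref{eq:Einstein2}, i.e. $b_r=4\pi rb^3T^{00}$ and $ra_rb=8\pi r^2b^2T^{01}-ab+b-2rab_r$, and re-express $T^{00},T^{01}$ through \eqref{Tab}. The terms linear in $T^{00}$ and $T^{01}$ that get created are algebraic in $(M,V,a,b,r)$, and the $b_r$-substitution in particular is what generates the quadratic-in-$M$ term $-16\pi(1-K^2)rbM^2V^2$ in $S_2$ (note $1-K^2=2k^2/(1+k^2)$, matching the $T^{22}=-2k^2MV/r^2$ source after the $-2raT^{22}$ and $-(2r/b)T^{22}$ terms are accounted for). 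The $T^{22}$ source terms in \eqref{eqE1}--\eqref{eqE2}, namely $-(2r/b)T^{22}$ and $-2raT^{22}$, contribute the $-2k^2\mu/(rb)$ type terms; rewriting $\mu=M/(b^2(u^0)^2)$ is not needed since $T^{22}$ is already given in \eqref{Tab} directly in terms of $MV$.

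The main obstacle I anticipate is not conceptual but the sheer volume of the algebra in the second equation: collecting the coefficient of $1/r$ times $bM$ and checking it equals $a^2+2aV(2+K^2)-2K^2V+4V^2$ requires combining the $\del_r$ of the $V^2+K^2aV+a^2/4$ flux (which yields $2V V_r+\ldots$ that must themselves be shown to vanish or recombine — but $V_r$ does not appear in the target, so these must cancel against the metric-derivative source terms), the $(2b_r/b+2/r)T^{11}$ source, the $b_r,a_r$ substitutions, and the $T^{22}$ term, all multiplied out. The safe route is: first verify the first (simpler) equation completely to fix conventions, then in the second equation group terms by their $r$-scaling ($1/r$ terms versus $r$ terms versus $r^0$ terms) and by their polynomial degree in $M$, and check each group separately against \eqref{S1S2}. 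One should also double-check the sign of $V$ and the identity \eqref{u1W} are only used implicitly (through \eqref{Tab}) and play no role in the manipulation itself.
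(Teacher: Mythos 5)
Your overall strategy (start from \eqref{eqE1}--\eqref{eqE2}, multiply by $b^2$ resp.\ $b$, absorb metric derivatives into the principal part, then insert \eqref{Tab}) is the paper's route, and your treatment of the first equation and of the $b_v$-terms is correct. However, there is a genuine gap in how you dispose of the $a_v$-term in the second equation. Since $U_2=M\big(\tfrac a2+K^2V\big)=\tfrac{1}{1+k^2}\,b\,T^{01}$ identically, the time derivative $\del_v U_2=\tfrac{1}{1+k^2}\del_v(bT^{01})=\tfrac{1}{1+k^2}\big(b\,\del_vT^{01}+b_vT^{01}\big)$ absorbs exactly the $b_vT^{01}$ contribution and nothing more; the piece $(1+k^2)\tfrac M2 a_v$ you invoke is just the same total derivative expanded in the variables $(M,V,a)$, not an extra term available to cancel the explicit Christoffel source $-\tfrac12 a_v b\,T^{00}$ in \eqref{eqE2}. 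After multiplying by $b$ and grouping total derivatives one is left with $\del_v(bT^{01})+\del_r(bT^{11})-\tfrac12 a_vb^2T^{00}+\dots=0$, and this leftover $a_v$-term cannot be absorbed into the flux or the conserved quantity (which are fixed by the statement): it must be eliminated using the Einstein equation \eqref{eq:Einstein3} (equivalently \eqref{av}), i.e.\ $\tfrac12 a_vb=4\pi rb^2\big(abT^{01}-T^{11}\big)$, which your plan never invokes --- you only allow yourself \eqref{eq:Einstein1} and \eqref{eq:Einstein2}.

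This omission also explains why your attribution of the quadratic source is off. The term $-16\pi(1-K^2)\,rb\,M^2V^2$ arises from $8\pi rb^3\big((T^{01})^2-T^{00}T^{11}\big)$, where the $(T^{01})^2$ part comes from substituting $a_rb+2ab_r$ via \eqref{eq:Einstein2}, while the $-T^{00}T^{11}$ part comes half from the $b_r$ substitution via \eqref{eq:Einstein1} and half from the $a_v$ substitution via \eqref{av}; moreover the cross terms $\pm 4\pi r a b^4\,T^{01}T^{00}$ produced by the $a_r$- and $a_v$-substitutions cancel only against each other. Dropping the $a_v$-substitution therefore both leaves an uneliminated $a_vM$ term and destroys this cancellation, so the stated $S_2$ cannot be reached. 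The fix is simple and brings you back to the paper's proof: replace the purported absorption of $a_v$ by the substitution $\tfrac12 a_v b=4\pi rb^2(abT^{01}-T^{11})$ obtained from \eqref{eq:Einstein2}--\eqref{eq:Einstein3}, and then carry out your remaining steps (absorbing $b_v$, pulling $b^2$, $b$ into the spatial derivatives, eliminating $a_r$, $b_r$ with \eqref{eq:Einstein1}--\eqref{eq:Einstein2}, and inserting \eqref{Tab}) exactly as planned.
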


\begin{proof}
 We need to rewrite the equations \eqref{eqE1}-\eqref{eqE2} by eliminating the derivatives of the coefficients $a, b$. Combining the
 Einstein equations \eqref{eq:Einstein2} and \eqref{eq:Einstein3} yields an expression for $a_v$, which can thus be eliminated from
 the Euler equations, via 
$$
 \frac{a_v b}{2} = \frac{a_v}{rb} \frac{rb^2}{2} = 4 \pi rb^2 (ab T^{01} - T^{11} ).
$$
 The term $b_v$ can also be eliminated in the right-hand side of the Euler equations
\eqref{eq:Euler}, 
 by relying on the product rule, as follows: 
$$
 \frac{1}{b^2} \, \del_v (b^2 T^{00}) = \del_v T^{00} + \frac{2 b_v}{b} T^{00}, 
\qquad 
 \frac{1}{b} \, \del_v \left( b T^{01} \right) = \del_v T^{01} + \frac{b_v}{b} T^{01}.
$$
Indeed, let us multiply the Euler equation \eqref{eqE1} by $b^2$ and the second one \eqref{eqE2} by $b$: 
\bel{eq:Euler_b1}
\aligned
 \del_v (b^2 T^{00}) + b^2 \del_r T^{01} = \; & b^2 \Big(- \frac{1}{2} \left( a_rb + 2ab_r \right) T^{00} - \left(\frac{b_r}{b}
 + \frac{2}{r}\right)
T^{01} + \frac{2r}{b} T^{22} \Big), 
\\
 \del_v (b T^{01}) + b \del_r T^{11} = \; & b \Big(
\left( a_rb + 2 ab_r \right) (T^{01}- \frac{ab}{2} T^{00}) + 4 \pi rb^2 ( ab T^{01} - T^{11}) T^{00}
\\
& \quad - \left( \frac{2b_r}{b} + \frac{2}{r} \right) T^{11} + 2 r a T^{22} \Big).
\endaligned
\ee

Hence, in order to express the Euler equations in divergence form, we need to include the terms $b^2$ and $b$ in the spatial
derivatives of $T^{01}$ and $T^{11}$, respectively. Again, by the product rule we have  
\begin{align*}
 b^2 \, \del_r T^{01} &= \del_r (b^2 T^{01}) - 2bb_r T^{01}, 
\qquad \qquad 
 b \, \del_r T^{11} = \del_r (b T^{11}) - b_r T^{11},
\end{align*}
 and the system \eqref{eq:Euler_b1} now  reads
\bel{eq:Euler_b2}
\aligned
 \del_v (b^2 T^{00}) + \del_r (b^2 T^{01}) = \; & - \left( a_rb + 2ab_r \right) \frac{b^2}{2} T^{00} + \left(b_r - \frac{2b}{r} \right) b T^{01}
+ 2rb T^{22}, 
\\
 \del_v (b T^{01}) + \del_r (b T^{11}) = \; & \left( a_rb + 2 ab_r \right) b \, (T^{01}- \frac{ab}{2} T^{00})
 + 4 \pi rb^3 ( ab T^{01} - T^{11}) T^{00} 
\\
& - \left( b_r + \frac{2b}{r} \right) T^{11} + 2 rab T^{22}.
\endaligned
\ee
We can now eliminate $a_rb+2ab_r$ by using the second Einstein equation \eqref{eq:Einstein2}, since 
$$
 a_rb+2ab_r = \frac{1}{r} \left( 8 \pi r^2 b^2 T^{01} - (a-1)b \right).
$$
The radial derivative of $b$ is eliminated by using the first Einstein equation \eqref{eq:Einstein1}, that is, 
$b_r = 4 \pi r b^3 T^{00}$. Consequently, the right-hand side of \eqref{eq:Euler_b2} is free of derivatives, i.e. 
$$
\aligned
 \del_v (b^2 T^{00}) + \del_r (b^2 T^{01}) 
 = \; & \frac{(a-1)b^3}{r} T^{00} - \frac{2b^2}{r} T^{01} + 2rb T^{22},
\\
   \del_v (b T^{01}) + \del_r (b T^{11}) 
 = \; & \frac{1}{r} b \left( \frac{1}{2} ab^2 (a-1) T^{00} - (a-1) b T^{01} - 2 T^{11} \right) 
\\
& + 2 r b\left( 4 \pi b^2 (T^{01})^2 - 4 \pi b^2 T^{00} T^{11} + a T^{22} \right).
\endaligned
$$
Recalling the expression  \eqref{Tab} of the energy-momentum tensor, 
we arrive at the form $\del_v U + \del_r F(U) = S(U)$ stated in the proposition. 
\end{proof} 


\section{Einstein-Euler spacetimes with bounded variation} 
\label{sec:3}

\subsection{A notion of weak solutions}
\label{sec:24}

In \eqref{eq:Euler}--\eqref{S1S2}, the Euler equations are expressed as a first-order system of two partial differential equations
in the normalized variables $(M,V)$. On the other hand, in view of \eqref{eq:Einstein1}--\eqref{eq:Einstein4} and \eqref{Tab},
the Einstein equations are equivalent to the three ordinary differential equations  
\begin{align}
b_r & = 4 \pi  r b M \, (1 + k^2),
\label{eq:Einstein1-new} 
\\
a_r & = 4 \pi r M \, (1+k^2) \left( 2 K^2 V - a \right) + \frac{1-a}{r}, 
\label{eq:Einstein2-new}
\\
a_v & = 2 \pi r b M (1+k^2) \left( a^2 - 4 V^2 \right), 
\label{eq:Einstein3-new} 
\end{align}
and the partial differential equation
\begin{align}
 \left( \frac{b_r}{b} \right)_v +  \frac{1}{2} (a_r b)_r + (a b_r)_r 
& = - \frac{1}{r} (a b)_r - 16 \pi  b M\, k^2 V. 
\label{eq:Einstein4-new} 
\end{align}  
Observe that we have here reformulated \eqref{eq:Einstein4} so that its left-hand side has a meaning in the sense of distributions
(cf.~Definition~\ref{def:weaksolutions}, below).

Our reduction of the Einstein-Euler system which is closed related by our choice of normalized fluid variables now 
suggest a way to integrate out the equations satisfied by the metric coefficients $a,b$.
For the function $b$, taking into account our choice of fluid variables $M,V$, we have  
$$
 \frac{b_r}{b} = 4 \pi  r M \, (1 + k^2),
$$
which suggests us to recover the function $b$ from the fluid density $M$ via the integral formula
\bel{b-new}
b(v,r) = \exp \Big( 4\pi (1+k^2) \int_0^r M(v,r') \, r' dr' \Big). 
\ee 
This formula makes sense provided the function $r \, M$ is locally integrable on $[0, +\infty)$.
Interestingly, this formula differs from the one presented at the end of Section~\ref{sec:21} and relies on a physically more
consistent integrability assumption on $M$.   

Returning to the function $a$, we can rely on \eqref{eq:Einstein2-new} and obtain
$$
(r \, (a- 1) )_r + (r \, (a-1) ) \, 4 \pi r M (1+k^2) =  - 4 \pi r^2 M (1+k^2) \big( 2 K^2 \, |V| + 1 \big), 
$$
which after integration yields us  
\be
\label{a-new}
a(v,r) = 1 - \frac{4 \pi (1+k^2)}{r} \int_0^r \frac{b(v,r')}{b(v,r)} \, M(v,r') \big( 2 K^2 \, |V(v,r')| + 1 \big) \,  {r'}^2 dr'. 
\ee
By replacing the function $b$ in the above formula by its expression \eqref{b-new}, we conclude that the
spacetime geometry is determined once we know its matter content.

In the rest of this paper,  we regard the Euler equations \eqref{eq:Euler}--\eqref{S1S2}  as a first-order hyperbolic system with 
non-constant coefficients which depend on certain integral expressions of the unknowns $(M,V)$
given by \eqref{b-new} and \eqref{a-new}. 
We now formulate the initial value problem when data are prescribed on an outgoing light cone.
For this problem, we introduce a definition of solutions within
the space  $BV$ 
of function with bounded variation in $r$. 
We denote by $ L^\infty(BV)$ the space of functions depending also on $v$ whose total variation is bounded in $v$.
Motivated by the standard regularity properties of hyperbolic systems~\cite{Glimm,LeFloch-book},
we also assume that solutions are locally Lipschitz continuous in the time variable, specifically in $\Lip(L^1)$.
The low regularity imposed now will be fulfilled by the solutions to the initial value problem constructed in this paper. Observe
that no regularity is required on the first-order derivative $b_v$, which is consistent with the fact no such term arises in the
Einstein equations  
\eqref{eq:Einstein1-new}--\eqref{eq:Einstein3-new}. 

\begin{definition} 
\label{def:weaksolutions} 
A {\rm  spherically symmetric, Einstein-Euler spacetime with bounded variation} in generalized Eddington-Finkelstein coordinates
$$
g = -a b^2 \, dv^2 + 2 b \, dv dr + r^2 \, \big( d\theta^2 + \sin^2\theta d\varphi^2 \big) 
$$ 
 is determined by two metric coefficients $a,b$ 
and the two normalized fluid variables  
$$
M = b^2 \mu \, u^0 u^0 \in (0, +\infty), 
\qquad
V = \frac{u^1}{b \, u^0} - \frac{a}{2} \in (-\infty, 0), 
$$
all of these independent variables being defined for $v \in I:= [v_0, v_*]$ and $r \in J:= [0,r_0)$,  
and satisfying the regularity conditions 
$$
a_v, \, r a_r,  \, b_r,  \, M,  \, V \in L^\infty(I, BV(J)) \cap \Lip(I, L^1(J)),
$$
together with the following conditions: 
\bei

\item[1.] The (first three) Einstein equations \eqref{eq:Einstein1-new}--\eqref{eq:Einstein3-new}  are satisfied as
equalities between functions with bounded variation.  

\item[2.] The (fourth) Einstein equation \eqref{eq:Einstein4-new}  holds in the sense of distributions, as an equality between locally
bounded measures.  
 
\item[3.] The Euler equations \eqref{eq:Euler} (with the notation \eqref{UF}--\eqref{S1S2}) hold in the sense of distributions, as
equalities between locally bounded measures. 

\item[4.] The following regularity condition holds at the center: 
$$
\lim_{r \to 0} a(v,r) = \lim_{r \to 0} b(v,r) = 1, \qquad v \in I. 
$$
\eei
\end{definition} 

Under the integrability conditions in Definition~\ref{def:weaksolutions}, the formulas \eqref{b-new} and \eqref{a-new} make sense,
and determine both metric coefficients. 
Next, given $r_0 >0$, we formulate the initial value problem by imposing initial data for the (normalized) matter variables on the
hypersurface $v=v_0$, that is,  
\bel{eq:dataEuler}
M(v_0, r) = \Mb(r), \qquad V(v_0,r) = \Vb(r), \qquad r \in J, 
\ee
where $\Mb>0$ and $\Vb<0$ are functions with bounded variation.
The metric coefficients $\ab, \bb$ on the initial hypersurface are determined explicitly from $\Mb, \Vb$ by writing \eqref{b-new}
and \eqref{a-new} with $v=v_0$, and satisfy the regularity and decay conditions required in Definition~\ref{def:weaksolutions}:  
\bel{eq:initial-ab}
\aligned
& r \partial_r \ab, \, \partial_r \bb, \in BV(J), 
\\
& \lim_{r\to0} \ab(r) = \lim_{r\to0} \bb(r) = 1. 
\endaligned
\ee


\subsection{The reduced Einstein-Euler system}
\label{sec:reducedsystem}

It is convenient to analyze in this paper only a subset of the Einstein-Euler system, after observing that the remaining equations
are then automatically satisfied. We refer to \eqref{eq:Einstein}--\eqref{euler} as the {\it full system\/}, while the reduced system
consists of only four equations, obtained by keeping \eqref{eq:Einstein} with $(\alpha, \beta)=$ $(0,0) $ or $(1,0)$, together
with \eqref{euler} with $\alpha, \beta \in \big\{0,1\big\}$, only. 

\begin{definition} 
The first-order system \eqref{eq:Euler}--\eqref{S1S2} together with the metric expressions \eqref{b-new} and \eqref{a-new} 
is refered to as the {\rm reduced Einstein-Euler system.}
\end{definition}

The equations that are not taken into account in our main analysis can be recovered without further
initial data or regularity assumptions, as now stated.  

\begin{proposition}[From the reduced system to the full system] 
\label{prop:400}
Any solution $(M,V, a, b)$ to the reduced Einstein-Euler system is actually a solution to the full system of Einstein-Euler equations,
that is: 
if the two fluid equations \eqref{eq:Euler}--\eqref{S1S2} and the two metric equations
\eqref{eq:Einstein1-new}--\eqref{eq:Einstein2-new} hold true, then under the regularity and decay assumptions in
Definition~\ref{def:weaksolutions}, it then follows that the equations \eqref{eq:Einstein3-new} (satisfied as an equality between
BV functions) and \eqref{eq:Einstein4-new} (satisfied in the distributional sense)
also hold. 
\end{proposition}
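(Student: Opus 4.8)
The plan is to recover the two missing equations from the four equations that \emph{are} assumed, by a bootstrapping argument that relies only on the Bianchi identities (equivalently, on the structure of the Euler system) together with the regularity class of Definition~\ref{def:weaksolutions}. First I would establish \eqref{eq:Einstein3-new}, the evolution equation for $a_v$. The two fluid equations \eqref{eq:Euler}--\eqref{S1S2} were derived in Proposition~\ref{prop:23} precisely by substituting \emph{all} of \eqref{eq:Einstein1-new}--\eqref{eq:Einstein3-new} into the Euler identities $\nabla_\alpha T^{\alpha 0}=\nabla_\alpha T^{\alpha 1}=0$, so the strategy is to run this substitution in reverse: starting from the divergence-form system and using only the two ODEs \eqref{eq:Einstein1-new}--\eqref{eq:Einstein2-new} (which we may assume) to re-expand $b_r$ and $a_r$, the second Euler equation becomes an identity that is equivalent to \eqref{eq:Einstein3-new} \emph{provided} one knows $a_v$ has the right regularity and matches at $r=0$. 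Concretely, I would differentiate \eqref{eq:Einstein2-new} in $v$ and \eqref{eq:Einstein1-new} in $v$ (using that $a_v, ra_r, b_r, M, V$ lie in $L^\infty(BV)\cap\Lip(L^1)$, so these mixed derivatives make sense distributionally), combine with the second Euler equation, and check that the resulting first-order ODE in $r$ for the quantity $a_v - 2\pi r b M(1+k^2)(a^2-4V^2)$ is homogeneous with vanishing data at $r=0$ (the center condition forces $a_v(v,0)=0$, since $a(v,0)\equiv 1$). Gronwall in $r$ then gives \eqref{eq:Einstein3-new}.

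Next I would turn to \eqref{eq:Einstein4-new}, the genuinely second-order (distributional) Einstein equation. The classical fact here is that $G^{22}$ — equivalently the combination appearing on the left of \eqref{eq:Einstein4-new} — is determined by the other components of the Einstein tensor via the contracted Bianchi identity $\nabla_\alpha G^{\alpha\beta}=0$, which holds identically for any metric. Since on the matter side $\nabla_\alpha T^{\alpha\beta}=0$ is exactly the Euler system, and we have just shown all four of \eqref{eq:Einstein1-new}--\eqref{eq:Einstein3-new} together with the Euler equations hold, the $\beta=1$ (or $\beta=0$) component of $\nabla_\alpha(G^{\alpha\beta}-8\pi T^{\alpha\beta})=0$ is an identity whose only not-yet-controlled term is the one involving $G^{22}-8\pi T^{22}$; this term enters linearly (through $\Gamma^1_{22}$, etc.) and algebraically, so it can be solved for. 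The upshot is that $G^{22}-8\pi T^{22}=0$ follows from the vanishing of the other five components, as an equality of distributions. One must be careful that the Bianchi identity is being used in a weak sense; I would justify this by noting that \eqref{eq:Einstein4-new} has been written precisely so that its left-hand side is a well-defined distribution under the stated regularity, and that the contracted Bianchi identity — being a pointwise polynomial identity in the metric and its first two derivatives — passes to the $BV$/distributional setting by a standard mollification argument (mollify $a,b$ in $r$, apply the smooth identity, pass to the limit using the $L^1$ and weak-$*$ convergence supplied by the $BV$ bounds).

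Finally I would dispose of the remaining components listed in \eqref{eq:musthold}: $T^{02}=T^{03}=T^{12}=T^{13}=T^{23}=0$ and the relation $T^{22}=(\sin\theta)^{-2}T^{33}$. Under spherical symmetry these are immediate from the form of the energy–momentum tensor \eqref{eq:EM} once one knows $u^2=u^3=0$, which in turn follows as in the proof of Lemma~\ref{lem:21} from $\mu>0$ and $u^0\neq 0$; no Einstein equation beyond the reduced set is needed. Hence the full Einstein-Euler system \eqref{eq:Einstein}--\eqref{euler} holds.

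The main obstacle I anticipate is not the algebra — which is a bookkeeping reversal of Proposition~\ref{prop:23} — but the \emph{regularity accounting} in the two places where derivatives of derivatives appear: establishing \eqref{eq:Einstein3-new} requires forming $\partial_v(a_r)$ and $\partial_v(b_r)$, and establishing \eqref{eq:Einstein4-new} requires that the weak Bianchi identity genuinely holds for metrics in this $BV$ class. Both hinge on the assumption $a_v, ra_r, b_r \in L^\infty(I,BV(J))\cap\Lip(I,L^1(J))$ together with $M,V$ in the same space, which is exactly enough to make the relevant products (e.g. $a\,a_r$, $b\,b_{vr}$ interpreted via $(b_r)_v$) well-defined and to commute the mixed derivatives $\partial_v\partial_r = \partial_r\partial_v$ in the distributional sense; I would make this the one point treated with genuine care, citing the standard $BV$ calculus facts, and treat everything else as routine substitution.
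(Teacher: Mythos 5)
Your proposal is sound and rests on the same two pillars as the paper's argument — constraint propagation via the (contracted) Bianchi identity, and a boundary condition at the centre that kills the homogeneous mode of a linear ODE in $r$ — but you organize it in the reverse order and in a somewhat different form, and the comparison is instructive. The paper first recovers \eqref{eq:Einstein4-new} (Lemma~\ref{lem:E0}) by explicitly substituting \eqref{eq:Einstein1-new}--\eqref{eq:Einstein2-new} into the $\beta=0$ Euler equation, and only then recovers \eqref{eq:Einstein3-new} (Lemma~\ref{lem:E1}) by reading the $\beta=1$ Euler equation as a linear ODE in $r$ for $T^{11}$, using $\nabla_\alpha G^{\alpha1}=0$ to exhibit $G^{11}/8\pi$ as a particular solution and the centre condition \eqref{T11condition} to discard the homogeneous solutions $C/(r^2b^2)$. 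Your residual $a_v-2\pi r b M(1+k^2)(a^2-4V^2)$ equals $-rb\,\big(G^{11}-8\pi T^{11}\big)$, so your "homogeneous ODE in $r$ plus Gronwall" step is the same mechanism in different clothing; its advantage is that it works directly from the divergence-form fluid equations \eqref{eq:Euler}--\eqref{S1S2}, whereas the paper's Lemma~\ref{lem:E1} takes the raw equation $\nabla_\alpha T^{\alpha1}=0$ as input, which is only equivalent to the second reduced fluid equation once \eqref{eq:Einstein3-new} is available (recall the substitution $\tfrac{a_vb}{2}=4\pi rb^2(abT^{01}-T^{11})$ made in Proposition~\ref{prop:23}); your constraint-propagation formulation sidesteps that subtlety. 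For \eqref{eq:Einstein4-new} you invoke the abstract identity $\nabla_\alpha(G^{\alpha\beta}-8\pi T^{\alpha\beta})=0$, of which the paper's Lemma~\ref{lem:E0} is the explicit, hands-on instance; do use the $\beta=0$ component, since the $\beta=1$ component carries the coefficient $\Gamma^1_{22}=-ra$ and degenerates where $a=0$, while $\Gamma^0_{22}=-r/b$ gives $-\tfrac{2r}{b}\big(G^{22}-8\pi T^{22}\big)=0$ unconditionally. Two further remarks: the homogeneity of your residual ODE is the actual computational content and is asserted rather than verified (it does hold — it is precisely what Lemmas~\ref{lem:E0}--\ref{lem:E1} encode), and for the boundary condition it suffices that $r\,a_v\to0$ as $r\to0$, i.e.\ boundedness of $a_v$ from $a_v\in L^\infty(I,BV(J))$, which is weaker than the pointwise claim $a_v(v,0)=0$; your treatment of the mixed-derivative commutation and of the weak Bianchi identity by mollification is at the same level of care as the paper's own closing regularity remark.
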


The rest of this section is devoted to the proof of this result. However, the specific form of the energy momentum tensor is irrelevant for the present argument, and it is more convenient to treat general matter models.
We assume sufficient regularity first, so that all identities under consideration make sense between continuous functions, say, and
we postpone the discussion of the low regularity issue to the proof of Proposition~\ref{prop:400} below.  

Recall also that we impose spherical symmetry throughout this paper and the Eddington-Finkelstein coordinates \eqref{eq:metric} are
used. Some redundancies in the formulation of the full Einstein-Euler equations arise as a consequence of our assumption of spherical
symmetry. We use the following notation to simplify our calculations (recalling that $b>0$):
$$
B := \log b, \qquad X := a_r b + 2 a b_r = \frac{1}{b} (ab^2)_r.  
$$
From the discussion made in this section and in view of the expression of the Einstein tensor $G^{\alpha\beta}$ computed in
Section~\ref{sec:21} in Eddington-Finkelstein coordinates, we can state the Einstein equations as follows. Recall that $T^{\alpha\beta}$ is
always assumed to be symmetric. 

The Einstein equations $G^{\alpha\beta} = 8 \pi T^{\alpha\beta}$ 
are equivalent to the four (partial) differential equations 
\begin{subequations}
\begin{align}
 B_r & = 4 \pi \, r b^2 \,  T^{00}, 
\label{eq:T00} 
\\
 r X + b (a-1) & = 8 \pi \, r^2 b^2 \,  T^{01}, 
\label{eq:T01} 
\\
ab (a-1) + r (a X - a_v) & = 8 \pi \, r^2 b \, T^{11}, 
\label{eq:T11} 
\\
2 (ab)_r + r (X + 2 B_v)_r & = 16 \pi \, r^3 b \, T^{22}, 
\label{eq:T22}
\end{align}
\end{subequations}
supplemented with the following conditions 
\begin{subequations}
\label{eq:T22T33}
\begin{align}
T^{02} & = T^{03} = T^{12} = T^{13} = T^{23} = 0, 
\\
T^{22} & = (\sin \theta)^2 \, T^{33}, 
\end{align}
\end{subequations} 
which are regarded as restrictions on the energy momentum tensor. Observe that \eqref{eq:T11} may also be replaced (thanks to
\eqref{eq:T01}) by the simpler equation 
\be
\label{av}
 a_v = 8 \pi r b \, \big( ab T^{01} - T^{11} \big).
\ee

\begin{definition} An energy momentum tensor $ T^{\alpha\beta}$ is said to be {\it compatible with spherical symmetry\/} (with respect
to the generalized Eddington-Finkelstein coordinates \eqref{eq:metric}) if the conditions \eqref{eq:T22T33} hold. 
\end{definition}
 
For instance, the energy momentum tensor \eqref{eq:EM} of perfect fluids is compatible with spherical symmetry, {\it provided\/} the
velocity vector $u^\alpha$ is assumed to have vanishing components $\alpha=2,3$, that is, $u^2 = u^3 = 0$.  

\begin{lemma}\label{lem:E2} 
If the matter tensor is compatible with spherical symmetry, then the components $\beta=2,3$ of the matter equations, that is, 
$$
\nabla_\alpha T^{\alpha\beta} = 0, \qquad \quad \beta=2,3, 
$$
are also satisfied. 
\end{lemma}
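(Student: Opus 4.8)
The plan is to exploit the contracted Bianchi identities $\nabla_\alpha G^{\alpha\beta}=0$, which hold identically for any metric, together with the already-imposed Einstein equations. Since $G^{\alpha\beta}=8\pi T^{\alpha\beta}$ has been assumed for the components $(\alpha,\beta)$ with $\alpha,\beta\in\{0,1\}$ as well as for $(2,2)$ and $(3,3)$ (these being the only nonvanishing components of $G^{\alpha\beta}$ in the coordinates \eqref{eq:metric}), and since the off-diagonal components $G^{02}=G^{03}=G^{12}=G^{13}=G^{23}=0$ vanish purely by the spherical symmetry of the metric, the compatibility relations \eqref{eq:T22T33} are forced on $T^{\alpha\beta}$ the moment we know $T^{\alpha\beta}=\frac{1}{8\pi}G^{\alpha\beta}$ in those slots. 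Thus it suffices to show $\nabla_\alpha T^{\alpha\beta}=0$ for $\beta=2,3$, and because $T^{\alpha\beta}=\frac{1}{8\pi}G^{\alpha\beta}$ wherever either side is potentially nonzero, this is the same as checking $\nabla_\alpha G^{\alpha\beta}=0$ for $\beta=2,3$, which is a special case of the Bianchi identities.

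\textbf{Key steps.} First I would write out $\nabla_\alpha T^{\alpha 2}=\T{\alpha 2}{,\alpha}+\G{\alpha}{\gamma}{\alpha}T^{\gamma 2}+\G{2}{\gamma}{\alpha}T^{\alpha\gamma}$ using the Christoffel symbols \eqref{Christoffel}. By the compatibility condition $T^{02}=T^{12}=T^{32}=0$ together with $T^{22}=(\sin\theta)^{-2}T^{33}$, nearly every term drops: the surviving terms involve $\del_\varphi$ of $\theta$-independent quantities, $T^{22}$ coupled to $\G{2}{3}{3}=-\sin\theta\cos\theta$ and $\G{3}{2}{3}=\cot\theta$, and $\G{2}{\gamma}{\alpha}T^{\alpha\gamma}$ terms that pair the $r^2$ in $g_{22}$ against pieces of $T^{00},T^{01},T^{11}$. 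The claim is that these assemble into a multiple of one of the already-known Einstein equations and hence vanish; concretely, the $\beta=2$ equation should reduce to an expression proportional to $G^{22}-8\pi T^{22}$ (plus $\theta$-derivative terms that cancel by the $\sin\theta$-structure already built into \eqref{eq:T22T33}), which is zero by hypothesis. The $\beta=3$ component is handled identically, or more cheaply by noting it equals $(\sin\theta)^{-2}$ times the $\beta=2$ computation by the explicit $\varphi$- and $\theta$-dependence.

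\textbf{Main obstacle.} The real content is bookkeeping: one must verify that after inserting \eqref{Christoffel} and the symmetry constraints, the angular ($\theta$- and $\varphi$-) derivative terms in $\nabla_\alpha T^{\alpha 2}$ cancel among themselves, and that the remaining $(v,r)$-dependent terms organize precisely into the combination that the Einstein equations \eqref{eq:T00}--\eqref{eq:T22} (equivalently \eqref{eq:T00}, \eqref{eq:T01}, \eqref{av}, \eqref{eq:T22}) force to be zero. The cleanest route, and the one I would actually present, is to avoid the component computation altogether: invoke the twice-contracted Bianchi identity $\nabla_\alpha G^{\alpha\beta}=0$, observe that in these coordinates $G^{\alpha\beta}$ is nonzero only for $(\alpha,\beta)\in\{0,1\}^2\cup\{(2,2),(3,3)\}$, so that the $\beta=2$ and $\beta=3$ Bianchi equations involve only components on which $8\pi T^{\alpha\beta}=G^{\alpha\beta}$ already holds; substituting gives $\nabla_\alpha T^{\alpha 2}=\nabla_\alpha T^{\alpha 3}=0$ directly. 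The low-regularity subtlety (that these manipulations involve products and derivatives of $BV$/measure-valued objects) is deferred, exactly as the text announces, to the proof of Proposition~\ref{prop:400}; for the purposes of Lemma~\ref{lem:E2} the computation is carried out under the provisional smoothness assumption stated just before it.
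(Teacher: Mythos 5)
There is a genuine mismatch between what you prove and what the lemma asserts. The hypothesis of Lemma~\ref{lem:E2} is only that $T^{\alpha\beta}$ is compatible with spherical symmetry, i.e.\ the purely algebraic conditions \eqref{eq:T22T33}; no Einstein equation is assumed. Your main route invokes $G^{\alpha\beta}=8\pi T^{\alpha\beta}$ in \emph{all} the nonvanishing slots, including $(1,1)$ and $(2,2)$ --- but in the reduction scheme these are precisely the equations that still have to be recovered (Lemmas~\ref{lem:E0} and \ref{lem:E1}); Lemma~\ref{lem:E1} itself uses the Bianchi identity together with the $\beta=1$ Euler equation, so importing the full Einstein system into Lemma~\ref{lem:E2} either strengthens its hypotheses or risks circularity, depending on where it is placed in the chain. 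The paper's proof is far more elementary: writing out $\nabla_\alpha T^{\alpha 2}$ with the Christoffel symbols \eqref{Christoffel} and using $T^{02}=T^{12}=T^{23}=0$ together with $\del_\theta=\del_\varphi=0$ on the components, every term dies except
$\nabla_3 T^{32}=\cot\theta\,T^{22}-\sin\theta\cos\theta\,T^{33}$,
which vanishes exactly because $T^{22}=(\sin\theta)^{-2}T^{33}$; the $\beta=3$ component vanishes identically from $T^{03}=T^{13}=T^{23}=0$. No field equation enters at all.

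This also corrects the expectation in your ``key steps'': the surviving terms do \emph{not} pair against $T^{00},T^{01},T^{11}$, and the $\beta=2$ equation does not reduce to a multiple of $G^{22}-8\pi T^{22}$ --- it reduces to the compatibility relation between $T^{22}$ and $T^{33}$, which is already a hypothesis. Your Bianchi shortcut is legitimate as an \emph{a posteriori} argument inside the proof of Proposition~\ref{prop:400}, i.e.\ after Lemmas~\ref{lem:E0} and \ref{lem:E1} have delivered all the Einstein equations, and it would then give $\nabla_\alpha T^{\alpha\beta}=0$ for $\beta=2,3$ in one line; but as a proof of Lemma~\ref{lem:E2} as stated it uses hypotheses the lemma does not grant, and it obscures the fact that the angular matter equations are a structural consequence of spherical symmetry alone.
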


\begin{proof}  In view of our assumption of radial symmetry, the partial derivatives in $\theta, \varphi$ are zero. From the
expressions of the Christoffel symbols~\eqref{Christoffel} and the conditions $T^{02}=T^{12}=0$, we obtain 
$$
\aligned
\nabla_0  \T{02}{} = \; & \T{02}{,0} + \G{0}{0}{0} T^{02} = 0,
\\
\nabla_1\T{12}{} = \; & \T{12}{,1} + \G{1}{0}{1} T^{02} + \G{1}{1}{1} T^{12} + \G{2}{1}{2} T^{12} = 0,
\\
\nabla_2\T{22}{} = \; & 2 \G{2}{1}{2} T^{12} = 0,
\\ 
\nabla_3\T{32}{} = \; & \G{3}{1}{3} T^{12} + \G{3}{2}{3} T^{22} + \G{2}{3}{3} T^{33} = \cot \theta \, T^{22} - \sin \theta \, \cos \theta \, T^{33}.
\endaligned
$$
 Thus, the $2$-component of the matter equations, that is $\nabla_\alpha T^{\alpha2} = 0$, holds when \eqref{eq:T22T33} hold.
 The assumptions $T^{03}=T^{13}=T^{23}$ imply that $\nabla_\alpha T^{\alpha3}=0$.
\end{proof}

It thus remains to be checked that solving the first two Einstein equations and the first two matter equations suffices to recover the
third and fourth Einstein equations.

\begin{lemma}
\label{lem:E0}
If the first two Einstein equations \eqref{eq:T00}--\eqref{eq:T01} hold 
and the matter tensor is compatible with spherical symmetry, then    
\eqref{eq:T22} holds. 
\end{lemma}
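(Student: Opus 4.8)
The statement is that the third partial Einstein equation \eqref{eq:T22} (the $G^{22}=8\pi T^{22}$ equation) is a consequence of the first two Einstein equations \eqref{eq:T00}--\eqref{eq:T01} together with compatibility of the matter tensor with spherical symmetry. The natural route is to exploit the contracted Bianchi identities $\nabla_\alpha G^{\alpha\beta}=0$, which hold identically for the Einstein tensor of any metric. Since $G^{\alpha\beta}=8\pi T^{\alpha\beta}$ is assumed only for $(\alpha,\beta)=(0,0)$ and $(1,0)$ so far, I would introduce the ``defect'' tensor $E^{\alpha\beta}:=G^{\alpha\beta}-8\pi T^{\alpha\beta}$, which by hypothesis has $E^{00}=E^{01}=0$, and by the structure of the Einstein tensor computed in \eqref{Gab} together with the compatibility conditions \eqref{eq:T22T33} also has $E^{02}=E^{03}=E^{12}=E^{13}=E^{23}=0$ and $E^{22}=(\sin\theta)^{-2}E^{33}$. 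Thus the only potentially nonzero components are $E^{11}$ and $E^{22}=(\sin\theta)^{-2}E^{33}$, and I want to show $E^{22}=0$ (which is exactly \eqref{eq:T22}, given \eqref{eq:T11} is being proved in the companion lemma, or treating $E^{11}$ as a separate unknown). Note the matter equations are \emph{not} assumed here, only the geometric Bianchi identity.

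The key step is to write out the $\beta=1$ component of $\nabla_\alpha E^{\alpha\beta}=0$ (which equals $\nabla_\alpha G^{\alpha 1} - 8\pi\nabla_\alpha T^{\alpha 1}$, but since we do not assume the matter equation I instead use that $\nabla_\alpha G^{\alpha 1}=0$ identically and that $E^{01}=0,\ E^{00}=0$ are \emph{already known}). Expanding $\nabla_\alpha E^{\alpha 1} = E^{\alpha 1}{}_{,\alpha} + \G{1}{\gamma}{\alpha}E^{\gamma\alpha} + \G{\alpha}{\gamma}{\alpha}E^{1\gamma}$ using the Christoffel symbols \eqref{Christoffel}: the $\del_v E^{01}$ and $\del_r E^{11}$ terms, the terms proportional to $E^{00}$ and $E^{01}$ all drop because those components vanish, leaving an algebraic-differential relation of the schematic form
\[
\del_r E^{11} + (\text{coefficients})\, E^{11} = c(v,r)\, E^{22},
\]
where the coefficient $c$ comes precisely from $\G{2}{1}{2}+\G{3}{1}{3} = 2/r$ contracted against $\G{1}{2}{2}+\G{1}{3}{3}$-type contributions, i.e.\ $c = -2ra/$(something) — it is nonzero. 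Running the analogous computation for $\nabla_\alpha E^{\alpha 0}=0$ gives a second relation coupling $\del_v E^{00}$, $\del_r E^{01}$ and $E^{22}$; since $E^{00}=E^{01}\equiv 0$, this relation immediately forces the coefficient of $E^{22}$ to vanish unless $E^{22}=0$, and one checks from \eqref{Christoffel} and \eqref{Gab} that that coefficient is $-2r/b\neq 0$. Hence $E^{22}=0$, which is \eqref{eq:T22}.

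The main obstacle is bookkeeping: carrying the covariant-divergence expansion honestly, making sure every Christoffel contraction is accounted for and that indeed only $E^{11}$ and $E^{22}$ survive, and verifying the crucial coefficients ($-2r/b$ in the $\beta=0$ relation, and the $E^{22}$-coefficient in the $\beta=1$ relation) are genuinely nonzero so that vanishing of $E^{00},E^{01}$ propagates to $E^{22}$. A subtlety worth flagging — to be handled in the actual proof text — is that at this stage we are working under the ``sufficient regularity'' assumption stated just before Lemma~\ref{lem:E2}, so all these manipulations are between continuous functions and the Bianchi identity can be used classically; the low-regularity version is deferred to the proof of Proposition~\ref{prop:400}. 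One should also note that this argument uses \emph{only} $E^{00}=E^{01}=0$ and the compatibility conditions, not the matter equations, so it is genuinely a statement about the geometry, consistent with the remark that ``the specific form of the energy momentum tensor is irrelevant.''
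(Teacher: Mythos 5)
There is a genuine gap, and it is exactly at the point you flag as a feature: the claim that the argument needs ``only $E^{00}=E^{01}=0$ and the compatibility conditions, not the matter equations.'' Under those hypotheses alone the conclusion is simply false: given any metric of the form \eqref{eq:metric}, define $T^{00},T^{01}$ by \eqref{eq:T00}--\eqref{eq:T01}, impose \eqref{eq:T22T33}, and choose $T^{22}$ (with $T^{33}=(\sin\theta)^{-2}T^{22}$) arbitrarily; all your hypotheses hold, yet \eqref{eq:T22} fails in general. The step that breaks is the relation $\nabla_\alpha E^{\alpha 0}=0$: only $\nabla_\alpha G^{\alpha 0}=0$ is an identity, so $\nabla_\alpha E^{\alpha 0}=-8\pi\,\nabla_\alpha T^{\alpha 0}$, which is precisely the $\beta=0$ Euler equation you declared you would not use. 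In fact, writing out the divergence with the Christoffel symbols \eqref{Christoffel} for both $T$ and $G$ and substituting $T^{00}=G^{00}/8\pi$, $T^{01}=G^{01}/8\pi$, the Bianchi identity gives $\nabla_\alpha T^{\alpha 0}=\frac{r}{4\pi b}\big(G^{22}-8\pi T^{22}\big)$, so the vanishing of $E^{22}$ is \emph{equivalent} to $\nabla_\alpha T^{\alpha 0}=0$ and cannot be extracted from the Bianchi identity together with $E^{00}=E^{01}=0$ alone. (Your $\beta=1$ relation does not help either, since it involves the unknown $E^{11}$; that component is the business of Lemma~\ref{lem:E1}, where an extra boundary condition at $r=0$ is needed.)

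The lemma is indeed stated tersely, but the paper's proof makes the missing hypothesis explicit: it starts from $0=\nabla_\alpha T^{\alpha 0}$, substitutes $T^{00},T^{01}$ via \eqref{eq:T00}--\eqref{eq:T01}, and solves directly for $\frac{2r}{b}T^{22}$, obtaining \eqref{eq:T22}; correspondingly, in the proof of Proposition~\ref{prop:400} the Euler equation $\nabla_\alpha T^{\alpha 0}=0$ is listed among the ingredients feeding into this lemma. If you add that equation to your hypotheses, your defect-tensor/Bianchi computation does go through (the $\beta=0$ relation then reads $\frac{r}{4\pi b}E^{22}=0$) and becomes essentially a repackaging of the paper's direct calculation, in the same spirit as the paper's own use of the Bianchi identity in Lemma~\ref{lem:E1}; also note the compatibility relation should read $E^{33}=(\sin\theta)^{-2}E^{22}$ rather than the reverse.
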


\begin{proof} Using \eqref{Christoffel} and the condition $T^{02}=0$, the $0$-component of the matter equations reads 
\begin{align*} 
 0 = \nabla_\alpha \T{\alpha 0}{} = \; 
& \T{00}{,0} + \T{10}{,1} + \left( 2 B_v + \frac{X}{2} \right) T^{00}
+ \left( B_r + \frac{2}{r} \right) T^{01} - \frac{2r}{b} T^{22},
\end{align*}
 with
\begin{align*}
 \T{00}{,0} &= \frac{1}{4 \pi r} \left( \frac{B_r}{b^2} \right)_v = \frac{B_{rv} - 2 B_r B_v}{4 \pi r b^2} = \frac{B_{rv}}{4 \pi r b^2} - 2 B_v T^{00},
 \\
 \T{01}{,1} &= \frac{1}{8 \pi} \left( \frac{rX + b (a-1)}{r^2 b^2} \right)_r 
 = - 2 \left( B_r + \frac{1}{r} \right) T^{01}  + \frac{X + rX_r + (ab)_r}{8 \pi r^2 b^2} - \frac{B_r}{8 \pi r^2b}, 
\end{align*}
 derived from the Einstein equations \eqref{eq:T00} and \eqref{eq:T01}.
 Therefore, again by the first two Einstein equations, 
\begin{align*}
 \frac{2r}{b} T^{22} &= \frac{B_{rv}}{4 \pi r b^2} - B_r T^{01} + \frac{X + rX_r + (ab)_r}{8 \pi r^2 b^2} - \frac{B_r}{8 \pi r^2b}
 + \frac{X}{2} T^{00} \\
 &= \frac{1}{8 \pi r^2 b^2} \left( r (2 B_{rv} + X_r) + X - ab B_r + (ab)_r \right) \\
 &= \frac{1}{8 \pi r^2 b^2} \left( r (2 B_v + X)_r + 2 (ab)_r \right). \qedhere
\end{align*}
\end{proof}

\begin{lemma}
\label{lem:E1} 
Suppose that the three Einstein equations \eqref{eq:T00}, \eqref{eq:T01} and \eqref{eq:T22} hold, and the matter tensor is compatible
with spherical symmetry. Then the Einstein equation \eqref{eq:T11} holds provided the regularity condition \eqref{eq:center} holds at
the center together with the additional condition  
\bel{T11condition}
\lim_{r \to 0} r^2 \big( G^{11}   - 8 \pi \, T^{11} \big) = 0. 
\ee   
\end{lemma}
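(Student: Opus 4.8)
The plan is to proceed exactly as in the proof of Lemma~\ref{lem:E0}, by using the matter equation with $\beta=1$ to derive an evolution equation in $v$ for the quantity $r^2\big(G^{11}-8\pi T^{11}\big)$, and then to invoke the boundary condition \eqref{T11condition} at the center to conclude that this quantity vanishes identically. Concretely, I would introduce the ``defect''
\bel{defect}
\mathcal D(v,r) := r^2 \big( G^{11} - 8\pi\, T^{11}\big) = \frac{1}{b}\Big( ab(a-1) + r(aX - a_v)\Big) \;-\; 8\pi\, r^2 b\, T^{11},
\ee
where I have used the explicit form of $G^{11}$ from \eqref{Gab}. The Einstein equation \eqref{eq:T11} is precisely the assertion $\mathcal D\equiv 0$, and equation \eqref{T11condition} says $\lim_{r\to0}\mathcal D(v,r)=0$ for all $v$. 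So it suffices to show $\mathcal D_v \equiv 0$, since then $\mathcal D(v,r)=\lim_{r'\to0}\mathcal D(v,r')=0$. Wait — that last step needs $\mathcal D$ to be independent of $r$; more precisely, I want to show that $\mathcal D$ satisfies a first-order \emph{ordinary} differential equation in $r$ (at fixed $v$) of the form $\partial_r\big(\text{something}\cdot\mathcal D\big) = 0$ or a linear homogeneous ODE, so that the vanishing at $r=0$ propagates outward. The natural route is: differentiate $\mathcal D$ in $v$, substitute the matter equation $\nabla_\alpha T^{\alpha 1}=0$ (which supplies $\partial_v T^{11}$ in terms of $T^{00},T^{01},T^{11},T^{22}$ and $r$-derivatives), substitute the already-available Einstein equations \eqref{eq:T00}, \eqref{eq:T01}, \eqref{eq:T22} and the relation \eqref{av} for $a_v$, and check that all terms organize into $\partial_v\mathcal D = (\text{coefficient})\cdot\mathcal D + \partial_r(\dots)$ — ideally that $\mathcal D_v$ vanishes outright once the three known Einstein equations are used, i.e. that the twice-contracted Bianchi identity forces consistency.

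The cleanest conceptual framing, which I would use to guide the computation, is the contracted Bianchi identity $\nabla_\alpha G^{\alpha\beta}=0$. Since $T^{\alpha\beta}$ is compatible with spherical symmetry and satisfies $\nabla_\alpha T^{\alpha\beta}=0$ for $\beta=0,1$ (hence, by Lemma~\ref{lem:E2}, for all $\beta$), the tensor $E^{\alpha\beta}:=G^{\alpha\beta}-8\pi T^{\alpha\beta}$ is divergence-free: $\nabla_\alpha E^{\alpha\beta}=0$. By Lemma~\ref{lem:E0} and the hypotheses, $E^{00}=E^{01}=E^{22}=E^{33}=0$ and the off-diagonal mixed components vanish by spherical symmetry, so the only possibly-nonzero component is $E^{11}$. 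Writing out $\nabla_\alpha E^{\alpha 1}=0$ with the Christoffel symbols \eqref{Christoffel} and using that all other components of $E$ vanish, the terms $\nabla_0 E^{01}$, $\nabla_2 E^{21}$, $\nabla_3 E^{31}$ drop the pieces involving $E^{01}$ and $E^{22}$, and one is left with $\nabla_1 E^{11} + (\text{Christoffel terms})\cdot E^{11}=0$, i.e. a first-order \emph{linear homogeneous} ODE in $r$ for $E^{11}$ at each fixed $v$:
\bel{E11ode}
\partial_r E^{11} + \Big( 2\,\G{1}{0}{1} \cdot 0 + 2\,\G{1}{1}{1} + \G{2}{1}{2} + \G{3}{1}{3}\Big) E^{11} = \partial_r E^{11} + \Big( \frac{2b_r}{b} + \frac{2}{r}\Big) E^{11} = 0,
\ee
so that $r^2 b^2 E^{11}$ is independent of $r$. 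Hence $r^2 E^{11}= (r^2 b^2 E^{11})/b^2$, and since $b\to1$ at the center by \eqref{eq:center}, we get $\lim_{r\to0} r^2 b^2 E^{11}= \lim_{r\to0} r^2 E^{11}=0$ by \eqref{T11condition}; therefore $r^2 b^2 E^{11}\equiv 0$, and since $rb>0$ for $r>0$ this gives $E^{11}\equiv 0$, i.e. \eqref{eq:T11} holds. One subtlety: $\nabla_0 E^{01}$ also contributes a $\G{1}{0}{0}E^{00}$ term and $\G{0}{0}{0}E^{01}$ term, both zero; I must double-check there is no stray $\partial_v E^{01}$ or $\partial_v(\text{stuff})$ surviving — but since $E^{01}\equiv 0$ as a function of $(v,r)$, its $v$-derivative vanishes too, so no $v$-derivatives of $E$ enter \eqref{E11ode} at all, confirming it is a genuine ODE in $r$.

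The main obstacle is the bookkeeping in verifying \eqref{E11ode}: one must expand $\nabla_\alpha E^{\alpha 1}$ carefully, identify every term that multiplies a component of $E$ other than $E^{11}$, and confirm it is killed either by a vanishing Christoffel symbol or by one of the already-established vanishings ($E^{00}=E^{01}=E^{22}=E^{33}=0$). A secondary point is that I should present the argument so it does not circularly assume \eqref{eq:T11}: the Bianchi identity $\nabla_\alpha G^{\alpha\beta}=0$ holds for \emph{any} metric, independently of the Einstein equations, so $\nabla_\alpha E^{\alpha 1}=0$ follows purely from $\nabla_\alpha T^{\alpha 1}=0$ (hypothesis) plus the geometric Bianchi identity — no use of the unknown equation \eqref{eq:T11}. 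Regarding the low-regularity version: since the hypotheses of Proposition~\ref{prop:400} place $a_v, ra_r, b_r, M, V$ in $L^\infty(BV)\cap\Lip(L^1)$, the quantity $E^{11}$ is (after multiplication by appropriate powers of $r$ and $b$) a $BV$ function of $r$ for a.e.\ $v$, the ODE \eqref{E11ode} holds in the sense of measures, and the vanishing at $r=0$ propagates by a standard Gronwall/integration argument for the linear equation $(r^2b^2 E^{11})_r=0$; but as the statement of Lemma~\ref{lem:E1} only asserts the smooth-coefficient case (with the low-regularity discussion deferred to the proof of Proposition~\ref{prop:400}), I would carry out the computation assuming enough regularity for all identities to hold pointwise and remark that the distributional version is obtained by the same integration-by-parts manipulations used throughout Section~\ref{sec:3}.
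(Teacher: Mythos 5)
Your proposal is correct and is essentially the paper's own argument: the paper likewise combines the $\beta=1$ matter equation with the contracted Bianchi identity and the three assumed Einstein equations to obtain a linear first-order ODE in $r$ whose homogeneous solutions are multiples of $1/(r^2b^2)$, and then uses \eqref{T11condition} together with $b\to1$ at the center to kill the free constant. The only (cosmetic) difference is that you work directly with the defect $E^{11}=G^{11}-8\pi T^{11}$, which satisfies the homogeneous ODE, whereas the paper exhibits $G^{11}/8\pi$ as a particular solution of the inhomogeneous equation for $T^{11}$ and subtracts.
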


\begin{proof} By \eqref{Christoffel} and the condition $T^{12}=0$, the $1$-component of the matter equations, i.e.\
$\nabla_\alpha T^{\alpha1} = 0$, reads
\begin{align*}
 0 = \T{01}{,0} + \T{11}{,1} + ( aX - a_v ) \frac{b}{2} T^{00} + ( B_v - X ) T^{01}  + 2 \left( B_r + \frac{1}{r} \right) T^{11}
 - 2 r a T^{22}. 
\end{align*}
 This is a linear ordinary differential equation (for the unknown $T^{11}$) in the variable $r$
\bel{eq:E1}
 \T{11}{,1} + 2 \left( B_r + \frac{1}{r} \right) T^{11} = R (a,b),
\ee
 where the right--hand side is explicitly given by the Einstein equations \eqref{eq:T00}, \eqref{eq:T01}, and \eqref{eq:T22}, that is, 
$$
 R(a,b) := - \T{01}{,0} + ( a_v - aX  ) \frac{b}{2} T^{00} + ( X - B_v ) T^{01} + 2 r a T^{22}.
$$ 

It remains to be checked that \eqref{eq:T11} is the only solution to \eqref{eq:E1}. Observe that the solutions to the
corresponding homogeneous equation of \eqref{eq:E1},
$$
\T{11}{,1} = - (\log (rb)^2)_r \, T^{11},
$$
are multiples of $\frac{1}{r^2 b^2}$. Moreover it is clear that $\frac{G^{11}}{8 \pi}$ is a particular solution to \eqref{eq:E1}: In
this situation all non--trivial Einstein equations $G^{\alpha\beta} = 8 \pi T^{\alpha\beta}$ are satisfied by assumption. Moreover, it
follows from the second Bianchi identity that $G^{\alpha\beta}$ is divergence-free, i.e.\ $\nabla_\alpha G^{\alpha\beta} = 0$. Thus,
in particular, $8 \pi \, \nabla_\alpha T^{\alpha1} = \nabla_\alpha G^{\alpha1} = 0$, which is just the equation \eqref{eq:E1} above
(after also using $T^{12} = 0$).

Thus, the general solutions to \eqref{eq:E1} are of the form
$$
T^{11} = \frac{1}{8 \pi r^2 b^2} \left( C + ab^2(a-1) + raa_rb^2 +  2ra^2 b b_r -r a_v b \right), \qquad C \in \RR.
$$ 
The limiting behavior \eqref{T11condition} and the regularity conditions \eqref{eq:center} at the center imply that
$$
0 = \lim_{r \to 0 +} r^2 \big( G^{11} - 8 \pi T^{11} \big) = \lim_{r \to 0 +} - \frac{C}{b^2} = - C.
$$
Therefore, the unique solution to \eqref{eq:E1} is indeed \eqref{eq:T11}.
\end{proof}

\begin{proof}[Proof of Proposition~\ref{prop:400}] Consider now a solution $M,V,a,b$ to the reduced Einstein-Euler system 
satisfying the regularity and decay conditions in Definition~\ref{def:weaksolutions}. Then, thanks to Lemma~\ref{lem:E0}, the Euler
equation $\nabla_\alpha T^{\alpha0} = 0$ together with the Einstein equations \eqref{eq:Einstein1} and \eqref{eq:Einstein2}, as well as 
the compatibility assumption \eqref{eq:musthold} imply \eqref{eq:Einstein4}. Those equations together with the additional
assumptions $\nabla_\alpha T^{\alpha1} = 0$ yield \eqref{eq:Einstein3} (by Lemma~\ref{lem:E1}),
since \eqref{T11condition} holds: From \eqref{Gab}, \eqref{Tab} and the first two Einstein equations \eqref{eq:Einstein1} and
\eqref{eq:Einstein2} we deduce that 
$$
 r^2 \, \big( G^{11} - 8 \pi T^{11} \big) = 2 \pi r^2 (1+k^2) M (a^2 - 4V^2) - \frac{r a_v}{b}. 
$$
Condition \eqref{T11condition} of Lemma~\ref{lem:E1} is thus satisfied due to the behavior of $a$ and $b$ at the center and the $BV$
regularity assumed for the functions involved, both specified in Definition~\ref{def:weaksolutions}. 

Finally, by Lemma \ref{lem:E2}, the assumption \eqref{eq:musthold} implies that $\nabla_\alpha T^{\alpha\beta} = 0$ for $\beta=2,3$.

It remains to discuss the regularity issue. We simply need to observe that all calculations in Lemmas \ref{lem:E2}, \ref{lem:E0} and
\ref{lem:E1} are valid even for solutions with low regularity, provided all equations under consideration are understood in the
distributional sense, along the lines of Definition~\ref{def:weaksolutions}. Most importantly, the divergence-like form of the Euler
equations is used without multiplication by an auxiliary factor with low regularity, which would not be allowed in the distribution
sense. 
\end{proof}


\section{The class of static Einstein-Euler spacetimes}
\label{sec:5}

\subsection{A reduced formulation} 

We now consider the Euler system \eqref{eq:Euler}--\eqref{S1S2} and focus on static solutions $(M,V)$ satisfying, by definition,
$\del_v M = \del_v V = 0$, thus 
\bel{eq:Euler-static}
 \del_r F(U, a, b) = S(U, a, b). 
\ee 
Using a different choice of coordinates, Rendall and Schmidt \cite[Theorem 2]{rendallschmidt} and Ramming and Rein \cite{RRein}
constructed radially symmetric static solutions by prescribing the mass density at the center of symmetry $r=0$.  We revisit here
their conclusions in our context of  Eddington-Finkelstein coordinates.

For spacetimes that need not be static, it is convenient to introduce the so-called {\it Hawking mass\/} $m=m(v,r)$, defined by 
$$
a = 1 - \frac{2m}{r},
$$
by analogy with the expression of the Schwarzschild metric.  From \eqref{eq:Einstein2-new}--\eqref{eq:Einstein3-new}, we obtain
\bel{eq:mr} 
\aligned
 m_r & =  2 \pi r^2 M (1+k^2) \left( 1 - \frac{2m}{r} + 2 K^2 \, |V| \right), 
\\
 m_v & = \pi r^2 b M (1+k^2) \left( 4 V^2 - \Big( 1 - \frac{2m}{r} \Big)^2 \right).  
\endaligned
\ee
Observe that the function $r \mapsto m(v,r)$ is increasing, provided $\frac{a}{2V} < K^2$; the latter condition does 
hold if, for instance, $a$ is positive but remains true even for negative values of $a$ (corresponding to the trapped region) at
least if the normalized velocity is sufficient large.
 On the other hand, the function $v \mapsto m(v,r)$ is decreasing provided the ratio $\frac{|a|}{2 |V|}$ is greater than $1$.

In view of \eqref{UF}, the condition $\del_v U = 0$ implies that $M_v=0$ and, thus, by \eqref{b-new} and \eqref{a-new} we obtain that
$b_v$ and $a_v$ (resp.\ $m_v$) vanish. By the ``third'' Einstein equation \eqref{eq:Einstein3-new}, we then have 
\be
a^2 = 4 V^2. 
\ee
Henceforth, the static equations may be simplified by keeping in mind that near the center, due to the regularity assumption
\eqref{eq:center}, $a$ should be positive while $V<0$. Hence, we find 
\bel{eq:555} 
 V = - \frac{a}{2} = \frac{m}{r} - \frac{1}{2}.
\ee
Returning to the definitions of $M,V$, we find that $u^1 = 0$ and $ab^2 (u^0)^2 = 1$, which implies
\bel{eq:556} 
a M = \left( 1 - \frac{2m}{r} \right) M = \mu.
\ee

The static Einstein-Euler equations can be expressed in terms of the local mass $m$ and the fluid density $\mu$, as follows. 

\begin{lemma} \label{static:formulation}
All solution to the static Euler equations~\eqref{eq:Euler-static} having $a > 0$ satisfy a system of first-order ordinary differential
equations in $m,\mu$ defined for $r \in (0,+\infty)$: 
\bel{mmu}
\aligned
m_r &= 4 \pi r^2 \mu, 
\\
\mu_r &= -  \frac{(1+k^2)\mu}{r-2m} \Big( 4 \pi r^2 \mu + \frac{m}{rk^2} \Big) \leq 0.
\endaligned
\ee
Moreover, the functions $V,M, a$ are recovered from  \eqref{eq:555}--\eqref{eq:556}, while the coefficient $b$ is given by 
\bel{b-new-3}
b(r)  = \exp \Big(4\pi (1+k^2) \int_0^r \frac{r'^2 \mu(r')}{r'-2m(r')} dr' \Big), 
\ee 
provided $\frac{r^2 \mu}{r-2m}$ is integrable at the center. 
\end{lemma}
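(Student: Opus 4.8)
The plan is to reduce the static Euler system to the two first-order ODEs in $(m,\mu)$ by exploiting the algebraic constraints that staticity forces, and then to read off the formulas for $V,M,a,b$.

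First I would record the consequences of staticity. Since $M_v=V_v=0$, the integral formulas \eqref{b-new} and \eqref{a-new} give $b_v=a_v=0$, hence $m_v=0$; the third Einstein equation \eqref{eq:Einstein3-new} then reduces to $a^2=4V^2$. The center regularity \eqref{eq:center} forces $a>0$ and $V<0$ near $r=0$, so the relevant branch is $V=-a/2$, i.e. $V=m/r-1/2$; returning to the definition of $(M,V)$ this means $u^1=0$ and $ab^2(u^0)^2=1$, so that $aM=\mu$. These are exactly \eqref{eq:555}--\eqref{eq:556}, and in particular $M=\mu/a=r\mu/(r-2m)$.

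Next I would derive the two ODEs. For $m_r$: substitute $|V|=a/2$ and the identity $1+K^2=\tfrac{2}{1+k^2}$ into the first line of \eqref{eq:mr}; the $(1+k^2)$ factors cancel and, using $aM=\mu$, one obtains $m_r=4\pi r^2\mu$. For $\mu_r$ --- the main computation --- I would use the first component of the static Euler system \eqref{eq:Euler-static}, $\del_r F_1=S_1$. With $V=-a/2$ one has $\tfrac a2+K^2V=\tfrac{ak^2}{1+k^2}$, so $F_1=\tfrac{k^2}{1+k^2}\,b\,(aM)=\tfrac{k^2}{1+k^2}\,b\mu$, while $1+a+4V=1-a=2m/r$ gives $S_1=-\tfrac{bMm}{r^2}=-\tfrac{b\mu m}{r(r-2m)}$. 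Differentiating $F_1$, eliminating $b_r$ via \eqref{eq:Einstein1-new} in the form $b_r/b=4\pi r(1+k^2)M=\tfrac{4\pi(1+k^2)r^2\mu}{r-2m}$, dividing through by $b$, and solving for $\mu_r$, the constants reorganize into $\mu_r=-\tfrac{(1+k^2)\mu}{r-2m}\bigl(4\pi r^2\mu+\tfrac{m}{rk^2}\bigr)$. The sign claim follows at once: $a>0$ is exactly $r-2m>0$, and $m_r=4\pi r^2\mu>0$ together with $m(0)=0$ (forced by the center regularity) gives $m>0$ for $r>0$, so every factor on the right is nonnegative.

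Finally, for the coefficient $b$, I would insert $M=r\mu/(r-2m)$ into \eqref{b-new} evaluated at the static profile, which immediately yields \eqref{b-new-3}; the integral converges precisely under the stated hypothesis that $r^2\mu/(r-2m)$ be integrable at the center. The \emph{main obstacle} is the bookkeeping in the $\mu_r$ step --- tracking $K^2$ versus $k^2$ and cleanly removing $b_r$ --- together with making the positivity assertions ($r-2m>0$ and $m>0$) rigorous from the hypotheses; one could also remark that the second static Euler equation $\del_r F_2=S_2$ is then automatically consistent, since $F_2=aF_1$ reduces it, via the first two Einstein equations, to a consequence of the $m_r$ and $\mu_r$ equations already obtained.
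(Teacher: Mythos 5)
Your proposal is correct and follows essentially the same route as the paper: exploit staticity to get $a^2=4V^2$, hence $V=-a/2$ and $\mu=aM$, read $m_r=4\pi r^2\mu$ off the first line of \eqref{eq:mr}, and obtain $\mu_r$ from the first static Euler equation after eliminating $b_r$ with \eqref{eq:Einstein1-new}, with \eqref{b-new-3} then following by substituting $M=r\mu/(r-2m)$ into \eqref{b-new}. The only cosmetic difference is that the paper first derives an intermediate ODE for $M$ and then converts via $\mu=aM$, whereas you differentiate $b\mu$ directly, which is the same computation.
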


\begin{proof} If $a\geq 0$, then $V$ is directly related to $m$ as above and \eqref{eq:mr} yields the first equation
 for $m_r$. The second equation in the system is derived from the first Euler equation in 
$\del_r F(U, a, b) = S(U, a, b)$, which reads (in terms of $m,M$)
$$
(1-K^2) \del_r \left( bM \left( 1 - \frac{2m}{r} \right) \right) = - \frac{2}{r^2} b m M.
$$
Using the first Einstein equation \eqref{eq:Einstein1-new} to replace $b_r$, the previous equation to replace $m_r$ and division by
$b>0$ yields  
$$
M_r = 4 \pi (1-k^2) r M^2 - \frac{1+3k^2}{k^2} \frac{mM}{r (r-2m)}
$$
and we only need to use \eqref{eq:556} and replace $M$ by $\mu$.  
\end{proof}

The following integral identity will also be useful. 

\begin{lemma} 
\label{remark:r-2m}
Given a solution to \eqref{mmu}, the function $z(r) := r - 2m(r)$ satisfies the differential equation
$
z_r = 1 - 8 \pi r z M 
$
with initial value $z(r_0)= r_0-2m(r_0)$, 
and is thus given by
$$
z(r) = z(r_0) e^{- 8 \pi \int_{r_0}^r t M(t) \, dt} +
\int_{r_0}^r e^{- 8 \pi \int_s^r {t M(t)} \, dt} \, ds.
$$
\end{lemma}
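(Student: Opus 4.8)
The plan is to derive the ODE for $z(r) := r - 2m(r)$ directly from the system \eqref{mmu}, and then solve it by the standard integrating-factor formula for scalar linear ODEs. First I would differentiate: $z_r = 1 - 2 m_r = 1 - 8\pi r^2 \mu$. The only subtlety is to re-express $r^2 \mu$ in terms of the quantities appearing in the claim. Recall from \eqref{eq:556} that $\mu = aM = \big(1 - \tfrac{2m}{r}\big) M = \tfrac{z}{r} M$, hence $r^2 \mu = r z M$, and therefore
\be
\label{zeq-plan}
z_r = 1 - 8\pi r z M,
\ee
which is the stated differential equation. The initial condition $z(r_0) = r_0 - 2m(r_0)$ is immediate from the definition of $z$.

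Next I would solve \eqref{zeq-plan}. Writing it as $z_r + 8\pi r M(r)\, z = 1$, the integrating factor is $\exp\big(8\pi \int_{r_0}^r t M(t)\, dt\big)$. Multiplying through and integrating from $r_0$ to $r$ gives
$$
\exp\Big(8\pi \int_{r_0}^r t M(t)\, dt\Big)\, z(r) - z(r_0) = \int_{r_0}^r \exp\Big(8\pi \int_{r_0}^s t M(t)\, dt\Big)\, ds,
$$
and dividing back by the (positive) integrating factor and simplifying the nested exponentials via $\exp\big(8\pi\int_{r_0}^s\big) / \exp\big(8\pi\int_{r_0}^r\big) = \exp\big(-8\pi\int_s^r\big)$ yields exactly the claimed formula
$$
z(r) = z(r_0)\, e^{-8\pi \int_{r_0}^r t M(t)\, dt} + \int_{r_0}^r e^{-8\pi \int_s^r t M(t)\, dt}\, ds.
$$

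There is essentially no hard part here: the statement is a routine consequence of Lemma~\ref{static:formulation} plus the elementary theory of first-order linear ODEs. The only point deserving a word of care is the identity $r^2\mu = rzM$, which relies on \eqref{eq:556} and hence on the standing assumption $a > 0$ (so that $z > 0$ and the static reduction is valid); one should also note that $M$ is continuous (indeed $BV$) and positive on the relevant interval, so the integrals $\int_{r_0}^r t M(t)\, dt$ are well-defined and the integrating factor is a genuine positive continuous function, making the manipulation rigorous. I would present the computation \eqref{zeq-plan} in one line, invoke the integrating-factor formula, and simplify, closing in a few sentences.
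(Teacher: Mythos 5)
Your proposal is correct and is exactly the intended argument: the paper states this lemma without proof, treating it as the routine combination of $m_r=4\pi r^2\mu$, the relation $\mu=aM=\frac{z}{r}M$ from \eqref{eq:556}, and the standard integrating-factor formula for the resulting linear ODE. Your remark about the role of $a>0$ (so that the static reduction $\mu=aM$ and the positivity of the integrating factor are legitimate) is the right point of care and matches how the paper uses the lemma afterwards.
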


From this lemma, it follows that if $a(r_0) \geq 0$, then $z(r_0) = r_0 a(r_0) \geq 0$ and $z(r)$ therefore positive for all $r>r_0$.
This, in turn, implies that $a>0$ for $r>r_0$. This calculation thus also shows that an initial condition $a(0) \geq 0$ is sufficient
to satisfy the requirement of the function $a$ being positive on $(0,+\infty)$ needed for Lemma~\ref{static:formulation}.


\subsection{Existence of static solutions}

We prescribe initial conditions at the center $r = 0$, specifically 
$\mu_0 > 0$ and $m_0 = 0.$
The condition on the initial value on $m$ is consistent with $m$ being non-negative, 
however, it remains to be checked whether the second equation in \eqref{mmu} is well-defined.  

\begin{theorem}
\label{thm-static}
Fix any initial conditions $m_0 = 0$ and $\mu_0 > 0$ at the center. Then, there exists a unique global 
 solution $(m,\mu)$ to the static Einstein-Euler system \eqref{mmu} with prescribed values  
$$
 \lim_{r\to0} m(r) = 0, \qquad \lim_{r\to0} \mu(r) = \mu_0.
$$
 Moreover, the functions $m, \mu$ are smooth and positive on $(0,+\infty)$ and we have 
$\lim_{r\to +\infty} \mu (r) = 0$. These static solutions $(M,V,a,b)$ satisfy the low regularity conditions specified in
Definition~\ref{def:weaksolutions}, and the geometric coefficient $b$ can be recovered using \eqref{b-new-3} and satisfies
$\lim_{r\to0} b(r) = 1$.
\end{theorem}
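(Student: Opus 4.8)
The plan is to treat \eqref{mmu} as a standard first-order ODE system and handle the singularity at the center separately from global continuation. First I would establish \emph{local existence near the center}: the only issue is that the right-hand side of the $\mu_r$-equation is a priori singular at $r=0$ because of the factors $m/(r k^2)$ and $r-2m$ in the denominator. Since $m_0 = 0$ and $m_r = 4\pi r^2 \mu$, near the center one has $m(r) \sim \frac{4\pi}{3}\mu_0 r^3$, so $\frac{m}{r(r-2m)} \sim \frac{4\pi}{3}\mu_0 r \to 0$ and $4\pi r^2 \mu \to 0$; hence $\mu_r \to 0$ as $r\to 0$, and the vector field extends continuously (indeed smoothly) to $r=0$ with value $(0,\mu_0)$. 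The cleanest way is to rewrite the system in terms of the new unknown $n(r) := m(r)/r^3$ (or to use the integral form $m(r) = 4\pi\int_0^r s^2\mu(s)\,ds$), so that the Picard iteration is set up for a regular system on a small interval $[0,\delta]$; this gives a unique $C^1$ (and then, by bootstrapping, $C^\infty$) solution with $m(0)=0$, $\mu(0)=\mu_0$.

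Next I would prove \emph{positivity and monotonicity on the maximal interval of existence}. By the initial data $\mu_0>0$ and continuity, $\mu>0$ for small $r$; from $m_r = 4\pi r^2\mu > 0$ and $m(0)=0$ we get $m>0$ for $r>0$. The key structural fact is that $a(r) = 1 - 2m/r > 0$ stays positive: this is exactly the content of Lemma~\ref{remark:r-2m}, applied with $r_0 \to 0$ — since $z(0)=0$ and $z_r = 1 - 8\pi r z M$, the integral representation shows $z(r) = \int_0^r e^{-8\pi\int_s^r tM(t)\,dt}\,ds > 0$ for $r>0$, so $r - 2m(r) > 0$ throughout. Therefore the denominator $r-2m$ in \eqref{mmu} never vanishes on the existence interval, the right-hand side of the $\mu_r$-equation is well-defined, and since both $4\pi r^2\mu \geq 0$ and $m/(rk^2)\geq 0$ we get $\mu_r \leq 0$, i.e.\ $\mu$ is non-increasing; in particular $0 < \mu(r) \leq \mu_0$ for all $r$.

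Then comes \emph{global existence}: I would show the maximal interval is $[0,+\infty)$ by ruling out blow-up. The quantity $\mu$ is trapped in $(0,\mu_0]$, so it cannot blow up; $m$ satisfies $0 \leq m_r = 4\pi r^2\mu \leq 4\pi\mu_0 r^2$, hence $0 \leq m(r) \leq \frac{4\pi}{3}\mu_0 r^3$ grows at most polynomially and cannot escape to infinity in finite $r$; and the only remaining danger, $r - 2m \to 0$, is excluded by the previous paragraph (on any bounded interval $[0,R]$, $z(r) \geq \int_0^r e^{-8\pi\int_0^R tM(t)\,dt}\,ds$ is bounded below away from $0$ for $r$ bounded below). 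Standard ODE continuation then gives a unique solution on $[0,+\infty)$. For the limit $\lim_{r\to+\infty}\mu(r)=0$: since $\mu$ is non-increasing and bounded below by $0$, the limit $\mu_\infty \geq 0$ exists; if $\mu_\infty > 0$ then $m_r = 4\pi r^2\mu \geq 4\pi r^2\mu_\infty$ forces $m(r)/r \to +\infty$, so $a = 1-2m/r \to -\infty$, contradicting $a>0$ — hence $\mu_\infty = 0$.

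Finally I would verify the \emph{regularity and center conditions} for the associated spacetime. Smoothness of $m,\mu$ on $(0,+\infty)$ follows by bootstrapping in the ODE. The variables $V = m/r - \tfrac12$, $M = \mu/a = \mu/(1-2m/r)$, $a = 1 - 2m/r$ are smooth on $(0,+\infty)$ by composition; they are in particular in $L^\infty(I,BV(J)) \cap \mathrm{Lip}(I,L^1(J))$ since they are $v$-independent and locally Lipschitz in $r$, provided one checks local integrability of $r M$ and of $ra_r$, $b_r$ near $r=0$ — and here one uses $m(r) = O(r^3)$, so $a = 1 + O(r^2) \to 1$, $M = \mu + O(r^2) \to \mu_0$, $ra_r = O(r^2)$, and $\frac{r^2\mu}{r-2m} = \frac{\mu(1+o(1))}{1} \cdot r^2 \cdot \frac{1}{r} = O(r) \to 0$ is integrable at $0$, so \eqref{b-new-3} makes sense and gives $b$ smooth with $\lim_{r\to0}b(r)=1$. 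The main obstacle is the genuine singularity of the $\mu_r$-equation at $r=0$: one must show that $m$ vanishes to third order (not just first) so that $m/(r(r-2m))$ stays bounded, which is what makes the center initial-value problem well-posed; after that, everything is routine ODE theory plus the bookkeeping already set up in Lemmas~\ref{static:formulation} and~\ref{remark:r-2m}.
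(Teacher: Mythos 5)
Your overall strategy runs parallel to the paper's (regularize the center, then positivity and continuation, then decay and expansions at $r=0$), but there is one genuine gap: the positivity of $\mu$ on the whole interval of existence. From $\mu_r\le 0$ you conclude ``in particular $0<\mu(r)\le\mu_0$ for all $r$''. Monotonicity only gives the upper bound; a non-increasing function starting at $\mu_0>0$ can perfectly well reach zero at a finite radius, and once $\mu$ is allowed to vanish (or turn negative) you lose the sign information on which everything downstream rests: $m_r>0$, $\mu_r\le0$, the bound $m\le\frac{4\pi}{3}\mu_0r^3$ of \eqref{mineq}, your ``trapped in $(0,\mu_0]$'' continuation argument, and the positivity asserted in the statement of Theorem~\ref{thm-static} itself. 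This is precisely what the paper's Step~2 is devoted to: assuming $\mu(r_1)=0$ at a first such radius, it establishes a Gronwall-type bound $(\log\mu)_r\ge -Cr$ on $[0,r_1)$ (using the decay of $m/(r(r-2m))$ near the center from the local-existence step and a positive lower bound on $r-2m$ away from it), hence $\mu$ stays bounded below by a positive function, contradicting $\mu(r_1)=0$. An alternative quick repair: at an interior point with $\mu(r_1)=0$ and $r_1-2m(r_1)>0$, the constant pair $(m(r_1),0)$ solves \eqref{mmu}, so backward uniqueness of the ODE would force $\mu\equiv0$, a contradiction. As written, your proposal simply asserts the conclusion of this step.

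Two smaller points. First, setting $n=m/r^3$ does not by itself make the system regular: in differential form it becomes $rn_r+3n=4\pi\mu$, which is still singular at $r=0$; what actually works is your parenthetical alternative, the integral formulation, run as a contraction in a norm that weights $m$ against $r^3$ (a naive sup-norm Picard iteration on $(m,\mu)$ fails because differences of $m/(r(r-2m))$ generate non-integrable $1/r^2$ factors). This is in substance what the paper obtains by citing Rendall--Schmidt \cite{rendallschmidt}, whose theorem is tailored to systems of the form $rf_r+Nf=rG+g$ with $N$ having positive eigenvalues. Second, your exclusion of $r-2m\to0$ via $z(r)\ge\int_0^re^{-8\pi\int_0^RtM\,dt}\,ds$ is slightly circular: the exponent involves $M=r\mu/(r-2m)$ up to radius $R$, so the bound is available only where you already control $r-2m$; it holds on compact subintervals of the existence interval (which is all Lemma~\ref{remark:r-2m} gives), but it does not by itself rule out the maximal interval terminating with $z\to0$, and the paper's continuation argument couples this lemma with the positivity of $\mu$ from its Step~2. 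On the positive side, your derivation of $\lim_{r\to\infty}\mu=0$ (if $\mu_\infty>0$ then $m$ grows like $r^3$, forcing $1-2m/r<0$) is correct and more elementary than the paper's Riccati estimate, which additionally yields the quantitative decay $r^2\mu=O(1)$; for the theorem as stated, your version suffices.
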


Two remarks are in order at this juncture: 
\bei 
\item[1.] Observe that, since $\lim_{r\to0} \left(1-\frac{2m}{r}\right) = 1 - \lim_{r\to0} (8 \pi r^2 \mu)= 1$ by L'H\^{o}pital's rule,
the initial values $M_0 = \mu_0$ coincide, that is, the initial value for the fluid density $\mu$ is the same as for the fluid
variable $M$. In the proof, we switch between $M$ and $\mu$ and work with whatever is more convenient. 
\item[2.] It would be interesting to further determine the asymptotic behavior at infinity, especially 
whether $\lim_{r\to+\infty} m(r)$ exists or, equivalently, $r^2 \mu$ is globally integrable. Step 3 in the proof below implies that
$r^2 \mu$ is bounded by some constant and one would expect the stronger statement $\lim_{r\to+\infty} r^2 \mu =0$ which would also 
imply $\lim_{r\to+\infty} m_r(r) = 0$.
\eei 
 
\begin{proof} { \bf Step 1. Local existence near the center $r=0$.}  We introduce a new variable
$n= \frac{2m}{r-2m}$
with initial value 
$$
n_0 = \lim_{r\to0} \frac{2m}{r-2m} = \lim_{r\to0} \frac{2m_r}{1-2m_r} = 0
$$
and we rewrite the equations in terms of $\nh, \Mh$ by setting
$$
 n = r^\alpha \nh, \quad M = M_0 + r^\beta \Mh,
$$
where $M_0 = \mu_0$  
is the initial value of $M$ at the center, and $\alpha,\beta$ remain to be determined.
 System \eqref{mmu} then reads
\begin{align*}
 \nh_r = & 8 \pi r^{1+\beta-\alpha} \Mh + \left( 8\pi r M_0 - 1 - \frac{\alpha}{r} \right) \nh + 8 \pi r^{1+\beta} \nh \Mh
   - r^\alpha \nh^2 + 8 \pi r^{1-\alpha} M_0, \\
 \Mh_r = &\left( 8 \pi (1-k^2) r M_0 - \frac{\beta}{r} \right) \Mh + 4 \pi (1-k^2) r^{1+\beta} \Mh^2
   - \frac{1+3k^2}{2k^2} r^{\alpha -1} \nh \Mh \\
 &- \frac{1-3k^2}{2k^2} r^{\alpha-\beta-1} M_0 \nh + 4 \pi (1-k^2) r^{1-\beta} M_0^2.
\end{align*}
 For $0< \beta \leq \alpha \leq 2$ only, this system is of the form $$
 r \del_r f + N f = r G (r,f(r)) + g(r),
$$
 with $f=(\nh,\Mh)$, $N$ linear with positive eigenvalues and $g$ and $G$ are smooth on $[0,+\infty)$ and
 $[0,+\infty)\times \mathbb{R}^2$, respectively. In particular, for the "maximal" choice $\alpha = \beta = 2$, the system is of the
 form
\begin{align*}
& r \begin{pmatrix}
    \nh \\ \Mh
   \end{pmatrix}_r
  + \begin{pmatrix}
    2 & 0 \\ \frac{1+3k^2}{2k^2} M_0 & 2
   \end{pmatrix}
 \begin{pmatrix}
  \nh \\ \Mh
 \end{pmatrix}
 \\
& = r \begin{pmatrix}
8 \pi  r \Mh + (8 \pi r M_0 -1) \nh + 8 \pi r^3 \nh \Mh - r^2 \nh^2 \\
8 \pi (1-k^2) r M_0 + 4 \pi (1-k^2) r^3 \Mh^2 - \frac{1+3k^2}{2k^2} r \nh \Mh
    \end{pmatrix}
 + \begin{pmatrix}
   8 \pi M_0 \\ 4 \pi (1-k^2) M_0^2
  \end{pmatrix}.
\end{align*}
 Thus, by \cite[Theorem 1]{rendallschmidt} there exists an interval $[0,R)$ and a unique bounded $\mathcal{C}^1$ 
solution $f=(\nh,\Mh)$ on $(0,R)$ that extends to a $\mathcal{C}^\infty$ solution on $[0,R)$.
Thus, $\frac{2m}{r^2(r-2m)}$ and $\frac{M-M_0}{r^2} = \frac{\mu - \mu_0}{r(r-2m)}$ are bounded near $0$. The solution can be extended
in a unique way as long as it does not blow up or reach zero. It remains to be shown that global existence (and hence uniqueness) is
indeed given, that the fluid has infinite radius and that the decay is as desired.

\

\noindent{ \bf Step 2. Infinite extension of the solution.}
Whenever $\mu > 0$, the first equation in \eqref{mmu} implies that $m_r>0$ and hence $m(r)>m_0 = 0$ for $r>0$. The second equation in
\eqref{mmu} then gives $\mu_r<0$, thus $\mu$ is bounded above by the initial value $\mu_0$. This forces $m$ to be bounded by
\bel{mineq}
 +\infty > \frac{4 \pi}{3} r^3 \mu_0 \geq m(r) = 4 \pi \int_0^r s^2 \mu(s) ds \geq \frac{4\pi}{3} r^3 \mu(r) > 0.
\ee
Consequently, if we show that $\mu>0$ globally, global existence follows. The proof of $\mu>0$ globally is the content of this second
step.

Since $\mu_0>0$, it is clear that $\mu>0$ initially on some interval $[0,r_1)$. Suppose, contrary to our claim, that $\mu(r_1)=0$.
Together with \eqref{mineq}, the decay rate obtained for $\widehat n$ on $[0,R)$ in Step 1, implies that for some constant $C_1>0$,
\[
 \frac{\mu_r}{\mu} \geq - \frac{(1+k^2)(3k^2+1)}{k^2} \frac{m}{r(r-2m)} \geq - C_1 r 
\]
on $[0,R) \cap [0,r_1)$. If $R< r_1$ then this estimate can to be extended to $[0,r_1)$:
We have $r-2m \geq z(R)>0$ due to Lemma~\ref{remark:r-2m}, which together with $m(r) \leq \frac{4\pi}{3} r^3 \mu_0$ from \eqref{mineq}
and for $C_2 := \frac{(1+k^2)(3k^2+1)}{k^2} \frac{4 \pi r_1 \mu_0}{z(R)} \geq \frac{(1+k^2)(3k^2+1)}{k^2} \frac{m}{r^2 (r-2m)}$ yields
\[
 \frac{\mu_r}{\mu} \geq - \frac{(1+k^2)(3k^2+1)}{k^2} \frac{m}{r(r-2m)} \geq - C_2 r.
\]
With $C:= \max (C_1,C_2)>0$ it thus follows that
\[
 (\log \mu)_r \geq - C r, \qquad r \in [0,r_1],
\]
and hence $\mu(r) \geq \mu_0 e^{-Cr}$. This contradicts $\mu(r_1)=0$, and hence forces the solution to have an infinite radius.

\

\noindent{ \bf Step 3. Decay properties.}  We prove next that $\mu \to 0$ as $r \to +\infty$. By Step 2, $\mu$ is monotonically
decreasing (since $\mu_r<0$) and bounded from below by $0$. Therefore $\lim_{r\to +\infty} \mu = \mu_\infty \geq 0$ exists. By
\eqref{mineq} and the fact that $\frac{r}{r-2m} \geq 1$ one obtains for some constant $C_3>0$ that
$\mu_r \leq - (1+k^2) \frac{\mu}{r-2m} \left( 4 \pi r^2 \mu + \frac{4 \pi}{3k^2} r^2 \mu \right) \leq - C_3 r \mu^2$.
Hence
\[
 \left( \frac{1}{\mu} \right)_r \geq C_3 r,
\]
and integration yields $0 < \mu(r) \leq \frac{2 \mu_0}{C_3 r^2 \mu_0 + 2}$ which tends to $0$ for $r \to +\infty$.

\

\noindent{ \bf Step 4. Regularity properties.}
It is easy to check that the following expansions hold at the center as $r \to 0$:
\begin{align*}
 \mu (r) & = \mu_0 - 2 \pi \frac{(1+k^2)(1+3k^2)}{3k^2} \mu_0^2 r^2 + O(r^3), \\
 m(r) &= \frac{4\pi}{3} \mu_0 r^3 + O(r^4).
\end{align*}
Therefore, as $r \to 0$, we have 
\begin{align*}
 a(r) &= 1 - \frac{8\pi}{3} \mu r^2 + O(r^3), \\
 M(r) &= \frac{\mu}{a} = \mu_0 - \frac{2\pi}{3} \frac{3k^4 + k^2 + 1}{3k^2} \mu_0^2 r^2 + O(r^3), \\
 b(r) &= e^{4 \pi (1+k^2) \int_0^r M(s) s \, ds} = 1 + 2 \pi (1+k^2) \mu_0 r^2 + O(r^3),
\end{align*}
 which proves that $a, a_r, M, b, b_r$ and $V = - \frac{a}{2}$ are of bounded variation near the center and
 $\lim_{r\to0} a(r) = \lim_{r \to0} b(r) = 1$. According to Lemma~\ref{remark:r-2m}, $a$ is positive everywhere, hence $V<0$.
\end{proof}

 
\section{Euler system on a uniform Eddington-Finkelstein background}
\label{sec:4} 

\subsection{Algebraic properties}

In this section, we analyze the principal part of the Euler system \eqref{eq:Euler}--\eqref{S1S2}, which we now define by assuming
that the metric coefficients $a,b$ are prescribed functions and, in fact, are {\it constants\/}, and, in addition, by suppressing the
source--terms therein. In other words, in this section we consider the system of two equations 
\bel{eq:Eulerhom}
\aligned
& \del_v U + \del_r F(U) = 0, 
\\
& U  = M \begin{pmatrix} 1 \\ \frac{a}{2} + K^2 V \end{pmatrix}, 
\qquad \qquad
F(U)  = F(U, a, b) = bM \begin{pmatrix} \frac{a}{2} + K^2 V \\ \frac{a^2}{4} + K^2 a V + V^2 \end{pmatrix},
\endaligned
\ee
in which $M>0$ and $V<0$ are the unknown functions, while $a \in \RR$ and $b>0$ are constants 
and $K^2 = \frac{1-k^2}{1+k^2} \in (0,1)$ is given. We begin with some basic properties about the Jacobian matrix 
$D_U F(U)$.

\begin{proposition}[Algebraic structure of the fluid equations]
\label{prop:31}
The Euler system \eqref{eq:Eulerhom} on a uniform Eddington-Finkelstein background is a strictly hyperbolic system of conservation
laws, with eigenvalues
\bel{eigenvalues}
\lambda_1 := b \left( \frac{1+k}{1-k} V + \frac{a}{2} \right), 
\qquad \quad
\lambda_2 := b \left( \frac{1-k}{1+k} V + \frac{a}{2} \right),
\ee
and right--eigenvectors (which can be normalized to be) 
\bel{eigenvectors}
r_1 := - \begin{pmatrix} 1 \\ \frac{1+k}{1-k} V + \frac{a}{2} \end{pmatrix},
\qquad\quad
r_2 := \begin{pmatrix} 1 \\ \frac{1-k}{1+k} V + \frac{a}{2} \end{pmatrix}.
\ee
Moreover, each characteristic field associated with \eqref{eq:Eulerhom} is genuinely nonlinear, with 
$$
\nabla \lambda_1 \cdot r_1 > \nabla \lambda_2 \cdot r_2 > 0.
$$
\end{proposition}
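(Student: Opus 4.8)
The plan is to compute the Jacobian $D_U F$ directly and read off its spectrum. Since $F$ is expressed through the variables $(M,V)$ while $U$ is also a function of $(M,V)$, I will work with the change of variables $W = (M,V)$ and use the chain rule $D_U F = (D_W F)(D_W U)^{-1}$. First I would write out $U = (M, M(\tfrac a2 + K^2 V))^T$ and $F = bM(\tfrac a2 + K^2 V, \tfrac{a^2}4 + K^2 aV + V^2)^T$, then compute the two $2\times 2$ matrices
\[
D_W U = \begin{pmatrix} 1 & 0 \\ \tfrac a2 + K^2 V & K^2 M \end{pmatrix},
\qquad
D_W F = b\begin{pmatrix} \tfrac a2 + K^2 V & K^2 M \\ \tfrac{a^2}4 + K^2 aV + V^2 & M(K^2 a + 2V)\end{pmatrix}.
\]
Note $\det D_W U = K^2 M \neq 0$ since $M>0$ and $K^2>0$, so the inverse exists and the reduction to $(M,V)$ is legitimate.

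Next I would compute $D_U F = (D_W F)(D_W U)^{-1}$ explicitly and obtain its trace and determinant. The eigenvalues are the roots of $\lambda^2 - (\operatorname{tr})\lambda + \det = 0$, and the claim is that they factor as $\lambda_1, \lambda_2$ in \eqref{eigenvalues}. Rather than expanding the characteristic polynomial blindly, a cleaner route is to verify directly that the proposed $r_1, r_2$ are eigenvectors: I would check $D_W F \cdot (D_W U)^{-1} r_i = \lambda_i r_i$, equivalently $D_W F \cdot s_i = \lambda_i\, D_W U \cdot s_i$ where $s_i = (D_W U)^{-1} r_i$; but it is even simpler to pull everything back and check the generalized eigenvalue relation $D_W F\, \tilde r_i = \lambda_i D_W U\, \tilde r_i$ for suitable $\tilde r_i$ in $(M,V)$-space. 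Concretely, writing $\tilde r_i = D_W U^{-1} r_i$ and substituting $\tfrac{1\pm k}{1\mp k} V$ expressions, the verification reduces to the algebraic identity $K^2 = \tfrac{1-k^2}{1+k^2}$ together with $\tfrac{1+k}{1-k}\cdot\tfrac{1-k}{1+k} = 1$; after collecting terms the two equations in each eigenvector relation become polynomial identities in $k, V, a, M$ that hold identically. Strict hyperbolicity then follows since $\lambda_1 - \lambda_2 = b\bigl(\tfrac{1+k}{1-k} - \tfrac{1-k}{1+k}\bigr) V = b\cdot\tfrac{4k}{1-k^2}\, V \neq 0$ because $b>0$, $k\in(0,1)$, and $V<0$; in fact this shows $\lambda_1 < \lambda_2$.

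For genuine nonlinearity I would compute $\nabla\lambda_i \cdot r_i$, being careful that $\nabla$ here means the gradient with respect to the conserved variable $U$ — so I would again use the chain rule and express $\nabla_U \lambda_i \cdot r_i = \nabla_W \lambda_i \cdot s_i$ with $s_i = (D_W U)^{-1} r_i$, or equivalently just differentiate $\lambda_i(M,V)$ along the direction in $(M,V)$-space corresponding to $r_i$. Since $\lambda_i$ depends only on $V$ (for fixed constants $a,b$), only the $V$-component of $s_i$ matters: $\partial_V \lambda_1 = b\,\tfrac{1+k}{1-k}$ and $\partial_V \lambda_2 = b\,\tfrac{1-k}{1+k}$, both positive, and I need the $V$-component of $(D_W U)^{-1} r_i$, which a short computation gives as $\mp\tfrac{1}{K^2 M}\cdot(\text{something})$; after simplification one gets $\nabla\lambda_1\cdot r_1 = \tfrac{b}{K^2 M}\cdot\tfrac{1+k}{1-k}\cdot(\text{positive})$ and similarly for the second field, and comparing the two yields $\nabla\lambda_1\cdot r_1 > \nabla\lambda_2\cdot r_2 > 0$ thanks to $\tfrac{1+k}{1-k} > \tfrac{1-k}{1+k}$ and $M, b, K^2 > 0$.

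The main obstacle is bookkeeping rather than conceptual: keeping the distinction between differentiation in the conserved variable $U$ and in the primitive variable $(M,V)$ straight throughout, and making sure the normalization factors in $r_1, r_2$ are consistent with the sign conventions (the minus sign in $r_1$, and the eventual positivity of $\nabla\lambda_i\cdot r_i$, which is what fixes the direction of each eigenvector). Once the generalized-eigenvector identities are verified — which is the one place a genuine algebraic simplification using $K^2 = \tfrac{1-k^2}{1+k^2}$ is needed — everything else is routine.
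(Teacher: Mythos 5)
Your proposal is correct, and it reaches the same quantities as the paper by a slightly different route: the paper first inverts the change of variables, writing $M = U_1$, $V = \frac{1}{K^2}\bigl(\frac{U_2}{U_1} - \frac{a}{2}\bigr)$, so that $F$ becomes an explicit function of $U$, and then computes $D_U F$, its eigenvalues, eigenvectors and the products $\nabla\lambda_i\cdot r_i$ directly in the conservative variable; you instead stay in the primitive variable $W=(M,V)$ and work with the generalized eigenvalue problem $D_W F\, s = \lambda\, D_W U\, s$ together with the chain-rule identities $D_U F = (D_W F)(D_W U)^{-1}$ and $\nabla_U\lambda_i\cdot r_i = \nabla_W\lambda_i\cdot s_i$, $s_i=(D_W U)^{-1}r_i$. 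Both are legitimate; your version avoids the explicit inversion (and exploits that $\lambda_i$ depends on $V$ only, so just the $V$-component of $s_i$ enters), at the price of carrying $(D_W U)^{-1}$ through the computation, which you justify by $\det D_W U = K^2 M \neq 0$. The steps you leave as "collecting terms" do check out: the characteristic equation in primitive variables is $K^2\lambda^2 - b(K^2 a + 2V)\lambda + b^2\bigl(\tfrac{K^2a^2}{4}+aV+K^2V^2\bigr)=0$, whose discriminant is $4b^2V^2(1-K^4)$ with $\sqrt{1-K^4}=\tfrac{2k}{1+k^2}$, giving exactly \eqref{eigenvalues}; and for genuine nonlinearity one finds $\nabla\lambda_1\cdot r_1 = -\tfrac{2k(1+k)}{(1-k)^3}\tfrac{bV}{M}$ and $\nabla\lambda_2\cdot r_2 = -\tfrac{2k(1-k)}{(1+k)^3}\tfrac{bV}{M}$ (as in the paper), so both share the positive factor $-\tfrac{2kbV}{(1-k^2)M}$ and your stated inequality $\tfrac{1+k}{1-k}>\tfrac{1-k}{1+k}$ is indeed enough, since the ratio of the two products is $\bigl(\tfrac{1+k}{1-k}\bigr)^4>1$. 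Only that last comparison is stated a bit loosely in your sketch ("times (positive)"); once the common factor is made explicit, the argument is complete.
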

 
Observe that the eigenvalues and eigenvectors are independent of $M$, and are {\it linear functions\/} in $V$. This property is not met
in the standard formulation of the Euler equations in Minkowski spacetime, and is a consequence of our choice of 
coordinates. Furthermore, from \eqref{eigenvalues}, we deduce that the sign of the eigenvalues $\lambda_1,\lambda_2$ is as follows
(recalling that $b$ is positive): 
\bel{eq:signlambda}
\aligned
& \lambda_1 < \lambda_2 < 0 && \text{ if and only if } && V < \min \Big( 0, - \frac{1+k}{1-k} \frac{a}{2} \Big),
\\
& \lambda_1 < 0 < \lambda_2 && \text{ if and only if } &&- \frac{1+k}{1-k} \frac{a}{2} < V < \min \Big (0, - \frac{1-k}{1+k} \frac{a}{2} \Big),
\\
& 0 < \lambda_1 < \lambda_2 && \text{ if and only if } &&- \frac{1-k}{1+k} \frac{a}{2} < V < 0.
\endaligned 
\ee
Since the velocity $V$ is always negative, we can also formulate these conditions in terms of $a$, as follows: 
\bel{eq:signlambda-with-a}
\aligned
& \lambda_1 < \lambda_2 < 0 && \text{ if and only if } && \frac{a}{2|V|} < \frac{1-k}{1+k}, 
\\
& \lambda_1 < 0 < \lambda_2 && \text{ if and only if } && \frac{a}{2|V|}      \in \Bigg(\frac{1-k}{1+k},\frac{1+k}{1-k} \Bigg),
\\
& 0 < \lambda_1 < \lambda_2 && \text{ if and only if } && \frac{a}{2|V|}      > \frac{1+k}{1-k}. 
\endaligned 
\ee
In particular, in a region where $a<0$, both eigenvalues $\lambda_1,\lambda_2$ are negative and the fluid flow toward the center. 
These conditions will play a role in Section~\ref{sec:7} when we will need to describe a class of initial data set of particular
interest. 

\begin{proof} 
1. In view of \eqref{eq:Eulerhom}, we can express $M$ and $V$ in terms of $U$: 
$$
 M = U_1, \qquad
 V = \frac{1}{K^2} \left( \frac{U_2}{U_1} - \frac{a}{2} \right),
$$
 and we thus obtain an explicit form of $F(U)$ in terms of $U$, i.e. 
$$
 F(U) = b \begin{pmatrix} U_2 \\ \frac{k^2}{(1-k^2)^2} \left( a^2 U_1 - 4 a U_2 \right) + \frac{1}{K^4} \frac{U_2^2}{U_1}  \end{pmatrix}.
$$
 The Jacobian matrix of $F$ is
$$
\aligned
 D_U F(U) 
& = b \begin{pmatrix}
               0 & 1 \\
               \frac{k^2}{(1-k^2)^2} a^2 - \frac{1}{K^4} \frac{U_2^2}{U_1^2} & \frac{2}{K^4} \frac{U_2}{U_1} - \frac{4k^2}{(1-k^2)^2} a
              \end{pmatrix}
\\
 & = b \begin{pmatrix}
                 0 & 1 \\
                 - \left( \frac{a^2}{4} + K^2 a V + V^2 \right) & 2 K^2 V + a
                \end{pmatrix},
\endaligned
$$
 with eigenvalues
\bel{ev1}
\aligned
\lambda_1 &= b \left( \frac{1+k^2}{(1-k)^2} \frac{U_2}{U_1} - \frac{k}{(1-k)^2} a \right)  
           = b \left( \frac{1+k}{1-k} V + \frac{a}{2} \right), \\
\lambda_2 &= b \left( \frac{1+k^2}{(1+k)^2} \frac{U_2}{U_1} + \frac{k}{(1+k)^2} a \right)  
           = b \left( \frac{1-k}{1+k} V + \frac{a}{2} \right),
\endaligned
\ee
 and right eigenvectors
$$
 r_1 = - \begin{pmatrix} 1 \\ \frac{\lambda_1}{b} \end{pmatrix}, \qquad r_2 = \begin{pmatrix} 1 \\ \frac{\lambda_2}{b} \end{pmatrix},
$$
 as stated in the proposition. Since $1-k < 1+k$ and $V<0$, it is clear that $\lambda_1 < \lambda_2$. 

\

\noindent 2. 
 The gradients of the eigenvalues $\lambda_1,\lambda_2$ are derived from \eqref{ev1}, i.e. 
\begin{align*}
 \nabla_U \lambda_1 &= \frac{1+k^2}{(1-k)^2} \frac{b}{U_1^2} \begin{pmatrix} - U_2 \\ U_1 \end{pmatrix}
                     = \frac{(1+k^2)}{(1-k)^2} \frac{b}{M} \begin{pmatrix} - \frac{a}{2} - \frac{1}{K^2} V \\ 1 \end{pmatrix}, \\
 \nabla_U \lambda_2 &= \frac{1+k^2}{(1+k)^2} \frac{b}{U_1^2} \begin{pmatrix} - U_2 \\ U_1 \end{pmatrix}
                     = \frac{(1+k^2)}{(1+k)^2} \frac{b}{M} \begin{pmatrix} - \frac{a}{2} - \frac{1}{K^2} V \\ 1 \end{pmatrix},
\end{align*}
and a straightforward computation yields
\begin{align*}
 \nabla \lambda_1 \cdot r_1 &= - \frac{(1+k^2)}{(1-k)^2} \frac{b}{M} \begin{pmatrix} - \frac{a}{2} - \frac{1}{K^2} V \\ 1 \end{pmatrix}
                                            \cdot \begin{pmatrix}
                                                   1 \\ \frac{1-k}{1+k} V + \frac{a}{2}
                                                  \end{pmatrix} = - \frac{2 k (1+k)}{(1-k)^3} \frac{b V}{M}
\end{align*}
 and
$$
 \nabla \lambda_2 \cdot r_2 = - \frac{2k(1-k)}{(1+k)^3} \frac{b V}{M}.
$$
 Since $k \in (0,1)$, the terms involving $k$ are positive. Moreover, since $b,M>0$ (by assumption) and $V<0$, the second factor is
 negative, and the statement in the proposition follows.
\end{proof}


\subsection{Shock curves and rarefaction curves}
\label{ssec:shockrarefaction}
 
We introduce the {\it Riemann invariants\/} $w,z$, associated with the hyperbolic system~\eqref{eq:Eulerhom}. By definition, the
functions $w,z$ are constant along the integral curves of the eigenvectors, i.e.~satisfy the differential equations
$$
D_U w(U) \cdot r_1 (U) = 0, \qquad\quad
D_U z(U) \cdot r_2 (U) = 0. 
$$
In the coordinates $(M,V)$, these equations are equivalent to
$$
\del_M w + \frac{2k}{(1-k)^2} \frac{V}{M} \, \del_V w = 0, \qquad\quad
\del_M z - \frac{2k}{(1+k)^2} \frac{V}{M} \, \del_V z = 0,
$$
respectively, so that  
\bel{Rinv}
w(M,V) := \log |V| - \frac{2k}{(1-k)^2} \, \log M,
\qquad\quad
z(M,V) := \log |V| + \frac{2k}{(1+k)^2} \, \log M.
\ee
Rarefaction waves are determined from integral curves of the vector fields $r_1,r_2$. As this is most convenient for the construction
of the solutions to the Riemann problem (in the following subsection), we consider here the ``forward''  $1$-curves and the
``backward'' $2$-curves. 

\begin{lemma}[Rarefaction waves]
\label{lemma:rarefaction}
The $1$-rarefaction curve $\Rr_1(U_L)$ and the $2$-rarefaction curve $\Rl_2(U_R)$ associated with the constant states $U_L= (M_L,V_L)$
and $U_R=(M_R,V_R)$, respectively,  are given by 
\bel{R12}
\aligned
\Rr_1 (U_L) :=& \Big\{M = M_L \left( \frac{V}{V_L} \right)^{\frac{(1-k)^2}{2k}}; 
\quad  V / V_L \in (0,1] \Big\}, 
\\
\Rl_2 (U_R) :=& \Big\{ M = M_R\left( \frac{V}{V_R} \right)^{-\frac{(1+k)^2}{2k}}; 
\quad
V / V_R \in [1,\infty) \Big\}.
\endaligned
\ee
Along $\Rr_1(U_L)$, the wave speed $\lambda_1(V)$ is increasing for $V$ increasing from $V_L$. Along $\Rl_2(U_R)$, the wave speed
$\lambda_2(V)$ is decreasing for $V$ decreasing from $V_R$. Moreover, $M$ is decreasing in both cases and the restriction of the
component $M$ to these curves satisfies 
\bel{eq:behave}
\lim_{V\to0} M|_{\Rr_1(U_L)}= \lim_{V \to - \infty} M|_{\Rl_2(U_R)} = 0.
\ee
\end{lemma}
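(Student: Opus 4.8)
\textbf{Proof proposal for Lemma~\ref{lemma:rarefaction}.} The plan is to recall that a $j$-rarefaction curve through a given state is, by definition, the integral curve of the vector field $r_j$ issuing from that state, restricted to the portion along which the characteristic speed $\lambda_j$ increases. Since Proposition~\ref{prop:31} already provides $r_1, r_2$ and establishes genuine nonlinearity with $\nabla\lambda_1 \cdot r_1 > 0$ and $\nabla\lambda_2\cdot r_2 > 0$, the bulk of the work is to integrate the ODE for the integral curves in the $(M,V)$-plane and then identify the admissible half. First I would parametrize the integral curve of $r_1$ by $V$: from the coordinate form of $r_1$ one reads off that along the curve $\frac{dM}{dV} = \frac{2k}{(1-k)^2}\frac{M}{V}$, whose solution with $M = M_L$ at $V = V_L$ is exactly $M = M_L (V/V_L)^{(1-k)^2/(2k)}$; similarly the integral curve of $r_2$ satisfies $\frac{dM}{dV} = -\frac{2k}{(1+k)^2}\frac{M}{V}$, giving $M = M_R (V/V_R)^{-(1+k)^2/(2k)}$. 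This is just the observation that the Riemann invariant $w$ (respectively $z$) from \eqref{Rinv} is constant along the curve, so in fact one can simply solve $w(M,V) = w(M_L,V_L)$ and $z(M,V) = z(M_R,V_R)$ for $M$ as a function of $V$.

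Next I would pin down the admissible range of the parameter. Along the $1$-curve, $\lambda_1(V) = b\big(\frac{1+k}{1-k}V + \frac a2\big)$ is an increasing function of $V$ (since $b > 0$ and $\frac{1+k}{1-k} > 0$), so the rarefaction portion is the one on which $V$ increases starting from $V_L$; because $V < 0$ throughout the physical regime, $V$ increasing from $V_L$ means $V \in [V_L, 0)$, i.e. $V/V_L \in (0,1]$. Along the $2$-curve, $\lambda_2(V) = b\big(\frac{1-k}{1+k}V + \frac a2\big)$ is also increasing in $V$, but for a backward $2$-rarefaction the relevant admissibility (so that characteristics fan out correctly to the left) requires $\lambda_2$ to \emph{decrease} as we move away from $U_R$ into the wave, which forces $V$ to decrease from $V_R$; since $V < 0$, decreasing $V$ from $V_R$ gives $V \in (-\infty, V_R]$, i.e. $V/V_R \in [1,\infty)$. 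The monotonicity claims for $\lambda_1$ and $\lambda_2$ in the statement then follow directly, and monotonicity of $M$ is read off the explicit formulas: on $\Rr_1(U_L)$ the exponent $(1-k)^2/(2k)$ is positive and $V/V_L$ decreases from $1$ toward $0$ as $V$ increases toward $0$ (both negative), so $M$ decreases; on $\Rl_2(U_R)$ the exponent $-(1+k)^2/(2k)$ is negative while $V/V_R$ increases from $1$ to $+\infty$, so again $M$ decreases. Finally, the limits \eqref{eq:behave} are immediate: $(V/V_L)^{(1-k)^2/(2k)} \to 0$ as $V \to 0^-$ and $(V/V_R)^{-(1+k)^2/(2k)} \to 0$ as $V\to -\infty$.

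The only genuinely delicate point, and the one I would treat most carefully, is the choice of orientation — which half of each integral curve counts as the rarefaction curve, and in particular the sign conventions for the ``forward'' $1$-curve versus the ``backward'' $2$-curve. This is a matter of matching the admissibility condition (characteristic speed must vary monotonically in the correct direction so that the self-similar rarefaction fan is well-defined) with the sign of $V$ and with the way these curves will be glued in the Riemann solver of the next subsection; getting the inequalities $V/V_L \in (0,1]$ and $V/V_R \in [1,\infty)$ consistent with that gluing is where care is needed, whereas the integrations themselves are routine. Everything else reduces to elementary calculus on the explicit one-parameter families, using only Proposition~\ref{prop:31} and the expressions \eqref{eigenvalues}, \eqref{eigenvectors}, \eqref{Rinv} already at hand.
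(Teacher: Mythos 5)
Your argument is correct and follows essentially the same route as the paper: the rarefaction curves are obtained as the level sets of the Riemann invariants $w$ and $z$ from \eqref{Rinv}, the admissible halves are selected by the monotonicity of $\lambda_1,\lambda_2$ in $V$ (with $V<0$), and the monotonicity and limits of $M$ are read off the explicit formulas \eqref{R12}. One small slip: the ODEs you write for the integral curves have inverted coefficients — they should read $\frac{dM}{dV}=\frac{(1-k)^2}{2k}\frac{M}{V}$ along the $1$-curve and $\frac{dM}{dV}=-\frac{(1+k)^2}{2k}\frac{M}{V}$ along the $2$-curve — but the solutions you state (equivalently, constancy of $w$, resp.\ $z$) are the correct ones, so nothing downstream is affected.
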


Note also that, in Riemann invariant coordinates, the rarefaction curves read 
\bel{R12alt}
\aligned
\Rr_1 (U_L) & = \Big\{ (w,z) \, \big| \,  w(M,V) = w(M_L,V_L) ~\text{and}~ z(M,V) \leq z(M_L,V_L) \Big\}, 
\\
\Rl_2 (U_R) & = \Big\{ (w,z) \, \big| \, w(M,V) \geq w(M_R,V_R) ~\text{and}~ z(M,V) = z(M_R,V_R) \Big\}.
\endaligned
\ee

\begin{proof} Rarefaction waves for the system \eqref{eq:Eulerhom} are solutions of the form $U = U(\frac{r}{v})$, which
must therefore satisfy the ordinary differential equation 
$$ 
(D_U F(U) - \xi \, \mathbb{I} ) \, \del_\xi U = 0
$$
in the self-similar variable $\xi:=r/v$,  where $\mathbb{I}$ denotes the identity matrix. A characterization of the two rarefaction
curves passing through a given state $(M_0,V_0)$ in the phase space is provided by the Riemann invariants \eqref{Rinv}.
Specifically,  the $1$-rarefaction curve $\Rr_1(U_0)$ is determined implicitly
by the condition $w(U)=w(U_0)$, while the $2$-rarefaction curve $\Rl_2(U_0)$ is given by $z(U)=z(U_0)$. Hence, we arrive easily at the
expressions in \eqref{R12}. 

In view of \eqref{eigenvalues}, the speeds $\lambda_1(V)$ and $\lambda_2(V)$ increase when $V$ increases. Therefore, since $V<0$, the
speed $\lambda_1(V)$ increases along $\Rr_1(U_0)$ while $\lambda_2(V)$ decreases along $\Rl_2(U_0)$ (from the base point). The desired
monotonicity and limiting behavior for $M$ follows from \eqref{R12}. On the other hand, using \eqref{Rinv} and \eqref{R12}, along the
curve $\Rr_1(U_0)$ we obtain 
\begin{align*}
z(M,V) = \log |V| + \frac{2k}{(1+k)^2} \log M  
&= \log |V| + \frac{2k}{(1+k)^2} \log M_0 + \frac{(1+k)^2}{(1-k)^2} \log \frac{V}{V_0} 
\\
&\leq \log | V_0 | + \frac{2k}{(1+k)^2} \log M_0 = z(M_0,V_0). 
\end{align*}
Similarly, along the curve $\Rl_2(U_0)$, we obtain $w(M,V) \geq w(M_0,V_0)$.  
\end{proof}


Shock waves for the system \eqref{eq:Eulerhom} consist of two constant states $U_L$ and $U_R$ separated by a discontinuity which
propagates at the speed $s=s(U_L,U_R)$ determined by the so-called {\it Rankine-Hugoniot conditions\/}: 
\be
\label{eq:jump}
s \, [ U ] = [ F(U) ], 
\ee
with $[U] := U_R - U_L$ and $[F(U)] := F(U_R) - F(U_L)$.  Moreover, the {\it shock admissibility inequalities\/} 
\bel{eq:admis}
\lambda_i (M_R,V_R) < s_i < \lambda_i (M_L,V_L), \qquad i=1,2
\ee
are imposed in order to guarantee uniqueness of the Riemann solution, defined below. Before we state some properties of these shock
wave solutions, we introduce the functions  
\begin{align*}
\Phi_\pm (\beta) &:= \frac{1}{2 (1-K^4) \beta^2} \left( 1 - 2K^4 \beta + \beta^2 \pm ( 1 - \beta ) \sqrt{(1+\beta)^2 - 4 K^4 \beta} \right), 
\\
\Sigma_\mp (V_0,\beta) &:= b \left( \frac{a}{2} + V_0 \frac{1 + \beta \mp \sqrt{(1+\beta)^2-4 K^4 \beta}}{2 K^2} \right). 
\end{align*} 
The signs above are selected for convenience in the following statement.  

\begin{lemma}[Shock waves]
\label{lemma:shock}
 The $1$-shock curve issuing from a given state $U_L$ and the corresponding shock speed are given by 
$$
\aligned
& \Sr_1(U_L) = \Big\{ M = M_L \, \Phi_- \big( V/V_L \big); \quad 
V /V_L \in [1,\infty) \Big\}, 
\\
& s_1(U_L,U)  = \Sigma_+ \big(V_L, V /V_L \big), 
\endaligned
$$
while the $2$-shock curve  issuing from the state $U_R$ and the corresponding shock speed are given by 
$$
\aligned
& \Sl_2(U_R) = \Big\{ M = M_R \, \Phi_+ \big( V/V_R \big); \quad 
V /V_R \in (0,1] \Big\}, 
\\
& s_2(U,U_R)  = \Sigma_- \big(V_R, V /V_R \big).  
\endaligned
$$
Moreover, the $1$-shock speed $s_1$ is increasing for $V$ decreasing, while the $2$-shock speed $s_2$ is decreasing for $V$ increasing, 
and the shock admissibility inequalities \eqref{eq:admis} hold, together with  
\bel{eq:admis2}
 s_1 < \lambda_2(V_R), \qquad  \qquad \lambda_1(V_L) < s_2.
\ee
Furthermore, along the curve $\Sr_1(U_L) $, the mass density $M$ is increasing and reaches $\frac{M_L}{1-K^4}$ as $V \to -\infty$,
while along the curve $\Sl_2(U_R)$ it is increasing and blows up as $V \to 0$.
\end{lemma}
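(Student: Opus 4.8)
The plan is to derive the $1$-shock curve directly from the Rankine--Hugoniot relations \eqref{eq:jump}, exploiting the special structure of $F(U,a,b)$ in \eqref{UF}. Writing the two jump equations with $[U]=U-U_L$ and introducing the ratio $\beta := V/V_L$, one eliminates the shock speed $s$ between the two components and obtains a single scalar relation between $M/M_L$ and $\beta$. Because the flux is quadratic in $V$ and linear in $M$, this relation is a quadratic in $M/M_L$ (equivalently, after clearing denominators, a quadratic whose discriminant is $(1+\beta)^2-4K^4\beta$), and its two roots are precisely $\Phi_\pm(\beta)$. The $1$-family is selected by taking the $\Phi_-$ branch and the corresponding speed, computed by back-substitution into either jump equation, is $\Sigma_+(V_L,\beta)$; the $2$-family analogously gives $\Phi_+$ and $\Sigma_-$. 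First I would carry out this elimination carefully, keeping track of signs since $V<0$ and $V_L<0$ force $\beta>0$.

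Next I would pin down the admissible parameter range. The correct half of each Hugoniot locus is fixed by the Lax admissibility inequalities \eqref{eq:admis}. For the $1$-shock one checks that $\lambda_1$ decreases as $V$ decreases below $V_L$ (i.e. along increasing $\beta\ge 1$) whereas $s_1=\Sigma_+(V_L,\beta)$ stays strictly between $\lambda_1(M_R,V_R)$ and $\lambda_1(M_L,V_L)$; this is a monotonicity computation on the explicit expressions for $\lambda_1$ in \eqref{eigenvalues} and for $\Sigma_+$. The constraint $\beta\in[1,\infty)$ emerges as exactly the set on which this monotonicity/admissibility holds and on which $\Phi_-(\beta)>0$ (so that $M>0$), and symmetrically $\beta\in(0,1]$ for the $2$-shock. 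The extra inequalities \eqref{eq:admis2}, namely $s_1<\lambda_2(V_R)$ and $\lambda_1(V_L)<s_2$, follow by comparing the explicit shock speed with the explicit second eigenvalue and using $0<K^2<1$; these are needed later to guarantee that the wave pattern in the Riemann solution is well-ordered.

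For the monotonicity and asymptotics of $M$ along the curves, I would differentiate $\Phi_-$ and $\Phi_+$ in $\beta$, or more cleanly use the Riemann-invariant characterization: along a $1$-shock the invariant $w$ must change in the admissible direction, which forces $M$ to increase as $|V|$ increases. The limiting values come from $\lim_{\beta\to\infty}\Phi_-(\beta)=\tfrac{1}{1-K^4}$ (read off the leading behaviour of the quadratic as $\beta\to\infty$), giving $M\to M_L/(1-K^4)$ as $V\to-\infty$ along $\Sr_1(U_L)$, and from $\Phi_+(\beta)\to+\infty$ as $\beta\to 0^+$ (the $1/\beta^2$ prefactor dominates), giving the blow-up of $M$ as $V\to0$ along $\Sl_2(U_R)$. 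The monotonicity of the shock speeds ($s_1$ increasing as $V$ decreases, $s_2$ decreasing as $V$ increases) follows by the same differentiation of $\Sigma_\pm$, or is forced by genuine nonlinearity (Proposition~\ref{prop:31}) together with the Lax conditions.

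The main obstacle is the bookkeeping in the elimination step: solving the $2\times 2$ Rankine--Hugoniot system leads to somewhat heavy algebra, and one must be careful that the square-root branch chosen in $\Phi_\pm$ and $\Sigma_\mp$ is consistent with positivity of $M$ and with the Lax inequalities simultaneously — in particular verifying that the discriminant $(1+\beta)^2-4K^4\beta$ is positive for all $\beta>0$ (which holds since $K^4<1$ and $(1+\beta)^2\ge 4\beta$) so that the shock curves are globally defined and real. Once the explicit formulas are in hand, all the monotonicity and limiting claims reduce to elementary calculus on rational and algebraic functions of the single variable $\beta$, so I do not anticipate further conceptual difficulty beyond this computation.
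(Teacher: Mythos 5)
Your plan coincides with the paper's proof: eliminate the shock speed between the two Rankine--Hugoniot equations to obtain the quadratic in $M/M_L$ whose roots are $\Phi_\pm(\beta)$, back-substitute to get the speeds $\Sigma_\mp$, select the half-branches $\beta\geq 1$ and $\beta\leq 1$ by verifying the Lax inequalities and \eqref{eq:admis2} directly on the explicit formulas, and obtain the monotonicity and limits of $M$ and of $s_1,s_2$ by elementary calculus on $\Phi_\pm$ and $\Sigma_\pm$ (including the asymptotics $\Phi_-\to\frac{1}{1-K^4}$ and $\Phi_+\to+\infty$), exactly as the paper does. The only difference is cosmetic: where you mention Riemann invariants or genuine nonlinearity as optional shortcuts for the monotonicity claims, the paper carries out the direct sign analysis of $\Phi_\pm'$ and $s_1'$, which is precisely the fallback computation you already envisage.
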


The geometry of the shock curves can also described in Riemann invariant coordinates $(w,z)$: namely, using the parameter
$\beta = \frac{V}{V_L} \in [1,\infty)$ for $\Sr_1(U_L)$ and $\beta=\frac{V}{V_R} \in (0,1]$ for $\Sl_2(U_R)$, we find 
\bel{shock-riemann}
\aligned
& \Sr_1 (U_L) \colon 
\begin{cases}
& 
 w - w _L = \log \beta - \frac{2k}{(1-k)^2} \log (\Phi_-(\beta)), 
\\
&  
z - z _L = \log \beta + \frac{2k}{(1+k)^2} \log (\Phi_-(\beta)),
\end{cases} 
\\
& \Sl_2 (U_R) \colon 
 \begin{cases}
& 
w - w_R = \log \beta - \frac{2k}{(1-k)^2} \log (\Phi_+(\beta)),
\\
& 
z - z_R = \log \beta + \frac{2k}{(1+k)^2} \log (\Phi_+(\beta)).
\end{cases}
\endaligned
\ee

\begin{proof}
1. In view of \eqref{eq:Eulerhom} and in terms of the conservative variable $U=(U_1, U_2)$, we obtain $F(U)_1 = b \, U_2$ and,
therefore, after eliminating the shock speed $s$ in the jump condition~\eqref{eq:jump}, we find 
\bel{eq:jump2}
b \, [U_2]^2 = [U_1] [F(U)_2].
\ee
Again in view of \eqref{eq:Eulerhom} and by using the notation $U_0, U$ rather than $U_L, U_R$, we have 
\begin{align*}
[U_2] & =  \frac{a}{2} (M_0 - M) + K^2 (M_0V_0 - M V), 
\\
\frac{1}{b}  \, [U_1] [F(U)_2] & =  \frac{a^2}{4}(M_0 - M)^2 + a K^2 (M_0 - M)(M_0 V_0 - M V)  + (M_0 - M) (M_0 V_0^2 - MV^2),
\end{align*}
hence \eqref{eq:jump2} simplifies and yields 
$$
(M_0 - M)(M_0V_0^2-MV^2) = K^4 (M_0V_0 - MV)^2.
$$
This relations can be written as a quadratic equation in terms of $\alpha = \frac{M}{M_0} > 0$ and $\beta = \frac{V}{V_0}>0$, which
admits two distinct and real solutions 
\bel{eq:shock1}
\alpha = \frac{1}{2 (1-K^4) \beta^2} \left( 1 - 2K^4 \beta + \beta^2 \pm ( 1 - \beta ) \sqrt{(1+\beta)^2 - 4 K^4 \beta} \right) 
= \Phi_\pm (\beta). 
\ee 
Thus, the shock curves $\Sr_1(U_0), \Sl_2(U_0)$ are given implicitly in terms of $\alpha, \beta$ in \eqref{eq:shock1}. Observe that
they do not depend on the geometric coefficients $a,b$, but only on the constant $K$ (and thus the sound speed $k$). Since $K^4 < 1$,
the term $(1+\beta)^2 - 4 K^4 \beta$ is positive.

Moreover, the ``first'' jump condition yields
\bel{eq:shock2}
s = \frac{[F(U)_1]}{[U_1]} = \frac{b [U_2]}{[U_1]} = b \left( \frac{a}{2} + K^2 V_0 \frac{1-\alpha\beta}{1-\alpha} \right)
= \Sigma_\mp(V_0, \beta),
\ee
in which the term $\frac{1-\alpha\beta}{1-\alpha}$ is expressed explicitly using the characterization $\alpha=\Phi_\pm(\beta)$ of the
shock curves, namely 
$$
\frac{1-\alpha\beta}{1-\alpha} 
= \frac{1}{2 K^4} \, \Big(1 + \beta \mp \sqrt{(1+\beta)^2-4 K^4 \beta} \Big).
$$
We emphasize that the negative sign (leading to $\Sigma_-$) corresponds to the function $\Phi_+$, while the positive sign
(leading to $\Sigma_+$) corresponds to $\Phi_-$.
By setting $V=V_0$, that is, $\beta = 1$ (and thus $\alpha=1$ in view of \eqref{eq:shock1}), we conclude that
$$
K^2 \frac{1-\alpha\beta}{1-\alpha} = \frac{1 \mp \sqrt{1-K^4}}{K^2} = \frac{(1 \mp k)^2}{1-k^2}. 
$$
Thus, \eqref{eq:shock2} is naturally associated with the eigenvalues $\lambda_2$ and $\lambda_1$, respectively. (Cf.~also
Proposition~\ref{prop:31}, above.) 

\hskip.15cm 

\noindent 2. It remains to be determined which half-curves are admissible with respect to the shock admissibility inequalities.
Consider for instance the $1$-shock curve $\Sr_1 (U_0)$, defined by the function $\Phi_-$ and the shock speed function $\Sigma_+$.
The shock inequalities are equivalent to saying  
$$
V \, \frac{1 + \sqrt{1-K^4}}{K^2} 
< \frac{1}{2 K^2} \, \Big( V + V_0 - \sqrt{(V+V_0)^2-4K^4 VV_0} \Big) 
< V_0 \, \frac{1 + \sqrt{1-K^4}}{K^2},
$$ 
which (since all values are negative) is equivalent to 
$$
 4 V_0^2 \left( 1 + \sqrt{1-K^4} \right)^2 < \left( V + V_0 - \sqrt{(V+V_0)^2-4K^4 VV_0} \right)^2 < 4 V^2 \left( 1 + \sqrt{1-K^4} \right)^2.
$$
 For $\beta > 1$, that is, $V<V_0<0$, the first inequality is obviously satisfied, since $(V+V_0)^2 - 4 K^4 V V_0 > (V+V_0)^2(1-K^4)$.
 The second inequality also holds, since   
\begin{align*}
& \Big( V + V_0  - \sqrt{(V+V_0)^2-4K^4 VV_0} \Big)^2 
\\
& = 2 \left( V^2+V_0^2 \right) + 4 (1-K^4) V V_0 - 2 (V + V_0) \sqrt{(V+V_0)^2-4K^4 VV_0} 
\\
& < 4 V^2 (2-K^4) + 8 V^2 \sqrt{1-K^4} = 4 V^2 \left( 1 + \sqrt{1-K^4} \right)^2.
\end{align*}
Adding a constant $\frac{a}{2}$ and multiplying by $b>0$ has no effect on the signs, hence we conclude that 
$\lambda_1(V) < s_1 < \lambda_1 (V_0)$.

We can similarly treat the $2$-shock curve $\Sl_2 (U_0)$, defined by $\Phi_+$ and $\Sigma_-$. For $\beta<1$, that is, $V_0<V<0$, we
find
\begin{align*}
\left( V_0 + V + \sqrt{(V_0+V)^2 - 4 K^4 V_0 V} \right)^2 & < \left( V_0 + V + |V-V_0| \sqrt{1-K^4} \right)^2 
\\
& < (V + V_0)^2 \left( 1 - \sqrt{1-K^4} \right)^2 < 4 V_0^2 \left( 1 - \sqrt{1-K^4} \right)^2,
\end{align*}
thus $\lambda_2(V_0) < s_2$. The second inequality $s_2 < \lambda_2(V)$ follows from 
\begin{align*}
& V_0 + V + \sqrt{(V_0+V)^2 - 4 K^4 V_0 V} 
\\
& = 2 V + (V_0-V) + \sqrt{4 V_0 V (1 - K^4) + (V_0 - V)^2} 
\\
& < 2 V + (V_0-V) + 2 |V| \sqrt{1-K^4} + |V_0 - V|
  = 2 V \left( 1 - \sqrt{1-K^4} \right),
\end{align*}
where we used 
\begin{align*}
 4V_0 V (1-K^4) + (V_0-V)^2 & = 4V^2(1-K^4) + 4V (V_0-V) (1-K^4) + (V_0-V)^2 
\\
& < 4V^2 (1-K^4) + 4 V (V_0-V) \sqrt{1-K^4} + (V_0-V)^2 
\\
& = \left( 2 |V| \sqrt{1-K^4} + |V_0 - V| \right)^2.
\end{align*}

\hskip.15cm 

\noindent 3. A straightforward calculation reveals \eqref{eq:admis2}, which we will check only for $s_1$. Since 
$\frac{1}{K^2} > \frac{1-k}{1+k}$ and $\sqrt{(V_0+V)^2 - 4 K^4 V_0 V} > |V_0-V|$, we obtain 
\begin{align*}
 s_1 (V_0,V) & = b \left( \frac{a}{2} + \frac{V_0 + V - \sqrt{(V_0+V)^2-4K^4V_0V}}{2K^2} \right) \\ 
 & < b \left( \frac{a}{2} + \frac{1-k}{1+k} \frac{V_0 + V - |V_0 -V|}{2}   \right)
   = b \left( \frac{a}{2} + \frac{1-k}{1+k} V \right) = \lambda_2(V)
\end{align*}
and, moreover, 
$$
\aligned
s_1'(V_0,V) 
& = \frac{1}{2K^2} \left( 1 - \frac{V_0+V-2K^4V_0}{\sqrt{(V_0+V)^2-4K^4VV_0}} \right)
\\
& > \frac{1}{2K^2} \left( 1 - \frac{2 V_0 (1-K^4)}{\sqrt{(V_0+V)^2-4K^4VV_0}} \right) > 0,
\endaligned
$$
so that the shock speed of $\Sr_1(U_0)$ is monotone increasing in $V$.

\hskip.15cm 

4. To study the behavior of $M$ with respect to $V$, we set $\widetilde \Phi_\pm (V) := \Phi_\pm \left( \frac{V}{V_0} \right)$ and
observe that 
$$
\widetilde \Phi_\pm'(V) = \pm \frac{V_0 \, \phi_\pm(V)}{(1-K^4)V^3 \sqrt{(V+V_0)^2 - 4 K^4 V V_0}}
$$
 with the auxiliary function
\begin{align*}
 \phi_\pm (V) := \; & K^4 V \left( V - 3 V_0 \pm \sqrt{(V+V_0)^2 - 4 K^4 V V_0} \right)
+ V_0 \left( V + V_0 \mp \sqrt{(V+V_0)^2 - 4 K^4 V V_0} \right) \\
 =\; & K^4 (V^2 + V_0 V + V_0^2) - 3 K^4 VV_0 \pm (K^4 V - V_0) \sqrt{(V+V_0)^2 - 4 K^4 V V_0}.
\end{align*}
 Since $V^2 + V_0^2 > 2 V V_0$, this implies that
\begin{align*}
 \phi_\pm(V) > \pm (K^4 V - V_0) \sqrt{(V+V_0)^2 - 4 K^4 V V_0},
\end{align*}
 and hence $\phi_+$ is positive as $V_0<V<0$ and $K<1$. In the case of $\phi_-$ we distinguish between two cases, as follows. 
If $K^4 V - V_0 \leq 0$, then $\phi_-$ is positive by the same inequality. On the other hand, if $K^4 V - V_0 > 0$, then the sign of
$\phi_-$ is derived separately by using $V<V_0<0$: 
\begin{align*}
 \phi_-(V) &= K^4 V \left( V - 3 V_0 - \sqrt{(V+V_0)^2 - 4 K^4 V V_0} \right)
+ V_0 \left( V + V_0 + \sqrt{(V+V_0)^2 - 4 K^4 V V_0} \right) \\
&= (V_0 - K^4 V) \left( V + V_0 + \sqrt{(V+V_0)^2 - 4 K^4 V V_0} \right) + 2 K^4 V (V-V_0) \\
&> (V_0 - K^4 V) \left( V + V_0 + \sqrt{(V+V_0)^2 - 4 K^4 V V_0} \right) > 0.
\end{align*}
 Hence $\phi_\pm > 0$ and $\widetilde \Phi_+' >0$ and $\widetilde \Phi_-' <0$. Thus, on both shock curves, $M$ is increasing when $V$
moves away from $V_0$. The limiting behavior $V \to - \infty$ on $\Sr_1(U_0)$ and $V \to 0$ on $\Sl_2(U_0)$ is clear from the
expressions of $\Phi_\pm$.
\end{proof}


\subsection{The Riemann problem} 
\label{ssec:Riemann}

We observe that the geometry of the wave curves is independent of the geometry of the spacetime and solely depends on the fluid
variables $M$ and $V$, while the wave speeds also depend on the geometry variables $a$ and $b$. This provides an important advantage
for our analysis in this paper, which strongly relies on the properties of these wave curves and wave speeds.  
We begin by solving the Riemann problem for the homogeneous model \eqref{eq:Eulerhom}  of interest in this section, that is, we solve
the initial value problem with data prescribed on $v=0$ with a single jump located at some point $r_1 \in (0,\infty)$: 
\be
\label{Riemann} 
U(0,r) = \begin{cases}
U_L,   & r < r_1,
\\
U_R,   & r > r_1,
\end{cases}
\ee
where $U_L$ (determined by $M_L, V_L$) and $U_R$  (determined by $M_R, V_R$) are constants satisfying the physical constraints 
$$
M_L, M_R >0, \qquad \quad V_L, V_R < 0.
$$
Obviously, since the coefficients of the system \eqref{eq:Eulerhom} are independent of $r$, we can consider that the solutions are
defined for all $r$ (even negative values) and, due to the invariance of the Riemann problem by self-similar scaling, we search for a
solution depending upon the variable $r/t$ only. Recall also that all variables $(M,V)$ under consideration satisfy the conditions
$M>0$ and $V<0$.

\begin{proposition}[Riemann problem on an Eddington-Finkelstein background]  
\label{prop-Riemn}
The Riemann problem associated with the homogeneous version \eqref{eq:Eulerhom} of the Euler system on a uniform
Eddington-Finkel\-stein background and with the initial condition \eqref{Riemann} with arbitrary initial data $U_L, U_R$, 
admits a unique self-similar solution $U=U(r/t)$ made of two waves, each being a rarefaction wave or a shock wave satisfying the shock
admissibility inequalities. Moreover, the regions (with $\rho>0$) 
$$
\Omega_\rho:=  \big\{(w,z) \, | \, - \rho \leq w, z\leq \rho \big\} 
$$
are invariant domains for the Riemann problem, that is, if the data $U_L, U_R$ belong to $\Omega_\rho$ for some $\rho>0$, 
then so does the solution for all times $v \geq 0$.  
\end{proposition}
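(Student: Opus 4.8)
\emph{Proof proposal.} The plan is to carry out the classical Lax construction for a strictly hyperbolic, genuinely nonlinear $2\times2$ system of conservation laws, using as building blocks the explicit wave curves of Lemmas~\ref{lemma:rarefaction} and~\ref{lemma:shock}. For a left state $U_L$ I would form the \emph{forward $1$-wave curve} $\Wavr_1(U_L) := \Sr_1(U_L) \cup \Rr_1(U_L)$ and, for a right state $U_R$, the \emph{backward $2$-wave curve} $\Wl_2(U_R) := \Sl_2(U_R) \cup \Rl_2(U_R)$, both regarded inside the physical state space $\{M>0,\,V<0\}$. Using the component $V$ as a parameter, Lemmas~\ref{lemma:rarefaction}--\ref{lemma:shock} show that the restriction of $M$ to $\Wavr_1(U_L)$ is a continuous, \emph{strictly decreasing} function $\mathcal M_1(\,\cdot\,;U_L)$ of $V\in(-\infty,0)$ with range $\big(0,\,M_L/(1-K^4)\big)$ — the two half-curves match at $V=V_L$ with value $M_L$, and the end values come from~\eqref{eq:behave} and from the behaviour of $\Phi_-$ as $\beta\to+\infty$ — while the restriction of $M$ to $\Wl_2(U_R)$ is a continuous, \emph{strictly increasing} function $\mathcal M_2(\,\cdot\,;U_R)$ of $V$ with range $(0,+\infty)$, the end values coming from~\eqref{eq:behave} and from $\Phi_+(\beta)\to+\infty$ as $\beta\to0$.

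Next I would locate the intermediate state $U_M=(M_M,V_M)$ as the unique common point of these two curves: the function $g(V):=\mathcal M_1(V;U_L)-\mathcal M_2(V;U_R)$ is continuous and strictly decreasing, with $g(V)\to M_L/(1-K^4)>0$ as $V\to-\infty$ and $g(V)\to-\infty$ as $V\to0^-$, hence has exactly one zero $V_M$, and I set $M_M:=\mathcal M_1(V_M;U_L)=\mathcal M_2(V_M;U_R)$. The candidate self-similar solution is then the superposition of the $1$-wave joining $U_L$ to $U_M$ (a centred $1$-rarefaction if $V_M\ge V_L$, an admissible $1$-shock otherwise) and the $2$-wave joining $U_M$ to $U_R$ (a centred $2$-rarefaction if $V_M\le V_R$, an admissible $2$-shock otherwise). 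To see that this is a genuine weak solution I must check that the $1$-wave lies strictly to the left of the $2$-wave in the $(r/t)$-plane: by strict hyperbolicity $\lambda_1(U_M)<\lambda_2(U_M)$ (Proposition~\ref{prop:31}), by the monotonicity of $\lambda_1$ along $\Rr_1$ and of $\lambda_2$ along $\Rl_2$ (Lemma~\ref{lemma:rarefaction}), and by the admissibility inequalities~\eqref{eq:admis}--\eqref{eq:admis2} of Lemma~\ref{lemma:shock} — together with, in the shock--shock case, a direct comparison of the two shock speeds, which reduces via the expressions for $\Phi_\pm$ and the speeds in Lemma~\ref{lemma:shock} to an elementary inequality equivalent to $K^4<1$ — one verifies in each of the four rarefaction/shock combinations that the fastest speed carried by the $1$-wave is strictly less than the slowest speed carried by the $2$-wave. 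Uniqueness is then immediate: within the class of admissible fans the intermediate state is forced to be precisely the point $U_M$ just constructed.

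Finally, for the invariance of $\Omega_\rho$ I would use that the set of values taken by the self-similar solution is contained in the sub-arc of $\Wavr_1(U_L)$ from $U_L$ to $U_M$ together with the sub-arc of $\Wl_2(U_R)$ from $U_M$ to $U_R$ (a shock contributing only its two endpoints, a rarefaction the whole sub-arc). Since the Riemann invariants $w,z$ are constant along rarefactions of the complementary family (cf.~\eqref{R12alt}), the whole question reduces to showing that $U_M\in\Omega_\rho$ and that these two sub-arcs remain in $\Omega_\rho$ whenever $U_L,U_R\in\Omega_\rho$. Working in the coordinates $(w,z)$, the rarefaction parts are level segments of $w$ or of $z$ joining points of $\Omega_\rho$, hence harmless; along the admissible shock half-curves the increments of $w$ and $z$ are the explicit quantities in~\eqref{shock-riemann}, whose signs and relative sizes are controlled by the inequalities $\Phi_\pm\ge1$ valid on those half-curves (a consequence of the monotonicity of $M$ in Lemma~\ref{lemma:shock}) and by identities such as $w-z=-\big(\tfrac{2k}{(1-k)^2}+\tfrac{2k}{(1+k)^2}\big)\log M$. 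Feeding in that $w$ and $z$ already lie in $[-\rho,\rho]$ at $U_L$ and $U_R$ and tracking these monotonicities along $\Wavr_1(U_L)$ from $U_L$ to $U_M$ and then along $\Wl_2(U_R)$ from $U_M$ to $U_R$, one shows that neither invariant ever exits $[-\rho,\rho]$; in particular $(w(U_M),z(U_M))\in[-\rho,\rho]^2$, which is the claim.

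I expect this last step to be the main obstacle. The construction and uniqueness of $U_M$ in the first two paragraphs are essentially forced once the monotonicity of $M$ along the wave curves is in hand, and the non-overlapping of the two waves is the standard case analysis made possible by~\eqref{eq:admis}--\eqref{eq:admis2}; but verifying that the \emph{rectangle} $\Omega_\rho$ — rather than merely some region bounded by a single family of level curves — is invariant requires keeping careful track of the precise direction and magnitude of the jumps of $w$ and $z$ across admissible shocks, which is exactly where the explicit formulas~\eqref{eq:shock1}--\eqref{shock-riemann} and the sign analysis of $\Phi_\pm$ and of $\widetilde\Phi_\pm'$ carried out in the proof of Lemma~\ref{lemma:shock} are indispensable.
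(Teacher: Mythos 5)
Your plan follows the same route as the paper's proof: the Lax construction using the explicit wave curves of Lemmas~\ref{lemma:rarefaction}--\ref{lemma:shock}, with existence and uniqueness of the intermediate state deduced from the global monotonicity of $M$ as a function of $V$ along $\Wavr_1(U_L)$ and $\Wl_2(U_R)$ (the paper phrases this as ``$V$ decreases from $0$ toward $-\infty$ while $M$ increases/decreases''; your intermediate-value formulation is the same argument), and the invariance of $\Omega_\rho$ handled by tracking the Riemann invariants along the wave-curve sub-arcs using \eqref{R12alt} for rarefactions and \eqref{shock-riemann} for shocks. Two points of comparison on the details. First, you are right that the shock--shock wave-ordering $s_1(U_L,U_*)<s_2(U_*,U_R)$ is \emph{not} an immediate consequence of \eqref{eq:admis}--\eqref{eq:admis2} (these only box both speeds into $(\lambda_1(U_*),\lambda_2(U_*))$) and requires a direct comparison of the two explicit shock speeds; the paper dispatches this entire non-overlapping issue with one sentence (``the wave speeds arising \dots do increase from left to right''), so your remark is a genuine sharpening, and the elementary inequality you allude to does reduce to $\sqrt{(V+V')^2-4K^4VV'}>|V-V'|$ (i.e.\ $K^4<1$) applied with $V'=V_*$ on each side. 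Second, for the invariance the paper does something more quantitative than $\Phi_\pm\ge 1$ plus the identity for $w-z$: it computes $\tfrac{dw}{dz}$ explicitly along $\Sr_1(U_L)$ and shows it lies in $[0,1]$, is convex in $\beta$, and vanishes at $\beta=1$, and symmetrically that $\tfrac{dz}{dw}$ along $\Sl_2(U_R)$ lies in $[0,(1-k)^4/(1+k)^4]$ and is concave; these slope bounds, not the inequality $\Phi_\pm\ge 1$ alone, are what pin down the relative sizes of the $w$- and $z$-increments across shocks. So when you push the invariance step through, you should reproduce the $\tfrac{dw}{dz}$ and $\tfrac{dz}{dw}$ computations rather than hope the coarser sign information suffices. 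I would also caution that, as stated, the paper's own invariance argument for $\Omega_\rho$ in the mixed shock--rarefaction cases ($\Sr_1\cap\Rl_2$, $\Rr_1\cap\Sl_2$) is terse and does not obviously bound $w_*$ (resp.\ $z_*$) from above; before committing to this step, it is worth checking carefully whether the slope bounds genuinely close the rectangle, since the relevant upper bound $w_*\le w_L+(z_R-z_L)$ that the slope estimate yields is not a priori $\le\rho$.
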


\begin{proof} By Proposition~\ref{prop:31} the system \eqref{eq:Eulerhom} is strictly hyperbolic and genuinely nonlinear as long as $V$ is
nonzero and $M$ is bounded. Thus, for sufficiently small jumps $|U_R - U_L|$, the claim is standard (cf.~, for instance, \cite{LeFloch-book}).
In order to extend the Riemann solution to arbitrarily large initial data, we rely on the explicit formulas derived earlier in this section.
The Riemann solution is constructed in the phase space by piecing together constant states, shock curves, and rarefaction curves
(defined in Section~\ref{ssec:shockrarefaction}) and, specifically, we introduce the $1$-wave curve issuing from the data $U_L$, 
$$
\Wavr_1(U_L) : = \Rr_1(U_L) \cup \Sr_1 (U_L),  
$$
which, according to our earlier notation, is naturally parametrized by a variable $\beta$ describing the interval $(0,1]$ (within the
rarefaction part $\Rr_1(U_L)$) and the interval $[1, +\infty)$ (within the shock part $\Sr_1(U_L)$). 
The wave curve $\Wavr_2(U_R)$ is defined similarly and the Riemann problem is solved if these two curves intersect at a unique point
$U_* \in \Wavr_1(U_L) \cap \Wl_2(U_R)$ so that the Riemann solution can be defined as a $1$-wave pattern connected to a $2$-wave pattern.

In order to establish the validity of this construction, we argue as follows. Thanks to Lemmas~\ref{lemma:rarefaction} and
\ref{lemma:shock}, the wave speeds arising in the Riemann solution do increase from left to right in the proposed construction. From
Lemmas~\ref{lemma:rarefaction} and \ref{lemma:shock}, it follows that $V$ decreases from $0$ toward $-\infty$, while $M$ increases
from $0$ toward $\frac{M_L}{1-K^4}$ along the curve $\Wavr_1(U_L)$. On the other hand, along $\Wavr_2(U_R)$, the velocity $V$ decreases
from $0$ toward $-\infty$, while the mass density $M$ decreases from $+\infty$ toward $0$. Therefore, in view of these global
monotonicity properties, the intersection point $U_* \in \Wavr_1(U_L) \cap \Wl_2(U_R)$ exists and is unique (for any given initial
states $U_L,U_R$ satisfying $M_L, M_R >0$ and $V_L, V_R < 0$).

We next claim that any domain $\Omega_\rho$ is an invariant region for the Riemann problem. We write $w_L$ for $w(U_L)$, etc. and,
for definiteness, we suppose that $U_* \in \Rl_1(U_L) \cap \Rr_2(U_R)$. Then, by Lemma~\ref{lemma:rarefaction}, we have $w=w_L$ and
$z \leq z_L$ for all states between $U_L$ and $U_*$, while $w \geq w_R$ and $z=z_R$ for all states between $U_*$ and $U_R$. Thus, we
obtain 
$$
w_R \leq w = w_L, \quad \quad z_R = z \leq z_L
$$
along the solution of the Riemann problem, and, in particular $w,z \in [-\rho,\rho]$ if $w_L,w_R,z_L,z_R \in [-\rho,\rho]$.

We are going to prove that both shock curves $\Sr_1(U_L)$ and $\Sl_2(U_R)$ remain within an
upper-left triangle in the $(w,z)$-plane so that, if intersected with each other or with $\Rl_2(U_R)$ and $\Rr_1(U_L)$, respectively,
the corresponding Riemann solution belongs to the region $\Omega_\rho$.
Namely, the tangent to the shock curve $\Sr_1$ in the $(w,z)$-plane satisfies
$$
\aligned
 \frac{d w}{d z} 
 = \frac{d (w - w_L)}{d (z - z_L)} 
&= \frac{d (w-w_L)}{d \beta} \left( \frac{d (z-z_L)}{d \beta} \right)^{-1} 
\\
& = \frac{(1+k)^2 \left( 1+k^2- \frac{2k (1+\beta)}{\sqrt{(1+\beta)^2-4K^4\beta}}\right)}{(1-k)^2 \left( 1+k^2+
\frac{2 k (1+\beta)}{\sqrt{(1+\beta)^2-4K^4\beta}} \right)},
\endaligned
$$
which is less than $1$ (since $k \in (0,1)$). 
Moreover, $\Sr_1$ is convex, since $\beta \geq 1$ and
$$
 \frac{d}{d\beta} \frac{dw}{dz} = \frac{8 k(1+k)^2K^2(-1 + \beta)}{\sqrt{(1+\beta)^2-4K^4\beta}\left(\sqrt{(1+\beta)^2-4K^4\beta}
+k\left(2+2\beta+k\sqrt{(1+\beta)^2-4K^4\beta}\right)\right)}
$$
is non-negative. 
Since $\frac{2k}{\sqrt{1-K^4}} = 1 + k^2$, we have 
$$ 
 \lim_{\beta \to 1+} \frac{dw}{dz} = \frac{(1+k)^2 \left(1+k^2-\frac{2k}{\sqrt{1-K^4}}\right)}{(1-k)^2 \left(1+k^2+\frac{2k}{\sqrt{1-K^4}}\right)} = 0,
$$
and the second--order derivative being positive, we conclude that $\frac{dw}{dz} \in [0,1]$. It is checked similarly that the shock curve $\Sl_2(U_R)$
satisfies
$$
 \frac{d z}{dw} = \frac{(1-k)^2 \left( 1+k^2- \frac{2k (1+\beta)}{\sqrt{(1+\beta)^2-4K^4\beta}}\right)}{(1+k)^2 \left( 1+k^2+ 
\frac{2 k (1+\beta)}{\sqrt{(1+\beta)^2-4K^4\beta}} \right)} \in \left[0,\frac{(1-k)^4}{(1+k)^4}\right] \subset[0,1)
$$
and, since $\beta \in (0,1]$ and 
$$
 \frac{d}{d\beta}\frac{dz}{dw} = \frac{8 k (1-k)^2 (1-k^2) K^2 (-1 + \beta)}{(1+k)^2 \sqrt{(1+\beta)^2-4K^4\beta}\left(\sqrt{(1+\beta)^2-4K^4\beta}
+k\left(2+2\beta+k\sqrt{(1+\beta)^2-4K^4\beta}\right)\right)}
$$
is non-positive. 
In other words, the curve $\Sl_2(U_R)$ is concave in the $(w,z)$-plane. 
\end{proof}


\subsection{Wave interactions} 
 
To conclude this section we derive some estimates concerning a pair of Riemann solutions associated with the system  
\eqref{eq:Eulerhom}. We now assume that the initial data consists of three constant states, denoted by $U_L,U_M,U_R$ and, specifically, for some $0<r_1<r_2<+\infty$, we prescribe at $v=0$ the data 
\begin{align}
 \label{Riemann3}
 U(0,r) = \begin{cases}
           U_L, & r<r_1, \\
	   U_M, & r_1<r<r_2, \\
           U_R, & r>r_2.
          \end{cases}
\end{align}
Again we can consider that $r$ describes the real line.  For sufficiently small times $v$, it is clear that the solution can be
constructed by combining the Riemann problems associated with the initial data $U_L,U_M$ and $U_M,U_R$, respectively. In general
these waves interact and generate a complex wave pattern. Yet, for sufficiently large times $v$ after all waves have interacted, the
solution is expected to approach the solution of the Riemann problem with initial data $U_L,U_R$; more precisely, this is true for
the wave strength (defined below) and wave speeds, while the location of the wave depends upon the past interactions. 

By definition, the \textsl{wave strength} $\str(U_L,U_R)$ of a Riemann problem $(U_L,U_R)$ measures the magnitude of the waves in the
solution and, in Riemann invariant coordinates, reads
$$
\str(U_L,U_R) := 
 \left| \log M_R - \log M_* \right| + \left| \log M_* - \log M_L \right|, 
$$
where $M_*$ denotes the intermediate state characterized by the condition  $U_* \in \Wl_1(U_L) \cap \Wavr_2(U_R)$. 
The following property will be essential in order to derive a bound on the
total variation of the solutions to the general Cauchy problem. 

\begin{lemma} 
\label{lem94}
Given arbitrary states $U_L, U_M, U_R$, the wave strengths associated with the Riemann problems $(U_L,U_M)$, $(U_M,U_R)$, and
$(U_L,U_R)$ satisfy the inequality 
\bel{eq:inter}
 \str(U_L,U_R) \leq \str(U_L,U_M) + \str(U_M,U_R).
\ee
\end{lemma}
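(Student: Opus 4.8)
The plan is to exploit the fact, established in Section~\ref{ssec:shockrarefaction}, that all four wave curves become \emph{straight lines or convex/concave curves with controlled slope} in the Riemann-invariant coordinates $(w,z)$. In these coordinates the wave strength $\str(U_L,U_R)$ is, up to the fixed positive constants $\frac{2k}{(1-k)^2}$ and $\frac{2k}{(1+k)^2}$ relating $\log M$ to $(w,z)$, simply the sum of the $|w|$- and $|z|$-increments across the two elementary waves of the solution. So the first step is to re-express $\str$ purely in terms of the jumps $[w]=w_R-w_*$, $[w]'=w_*-w_L$ and similarly for $z$; since along a $1$-wave one has $w=w_L$ (rarefaction) or $w-w_L$ a monotone function of $\beta$ (shock), and along a $2$-wave $z=z_R$ or $z-z_R$ monotone, the map $(U_L,U_R)\mapsto U_*$ is monotone in each coordinate. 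This monotonicity is exactly what Proposition~\ref{prop-Riemn} and its proof already provide.

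Second, I would recast the claimed inequality as a statement about the intermediate states. Write $U_*^{LM}$, $U_*^{MR}$, $U_*^{LR}$ for the intermediate states of the three Riemann problems. Because the $1$-wave curve $\Wavr_1(\cdot)$ and the $2$-wave curve $\Wl_2(\cdot)$ are each \emph{graphs} over the appropriate Riemann invariant (the $1$-curve is a graph $z=z(w)$ issuing from the base point, the $2$-curve a graph $w=w(z)$), and each has slope strictly between $0$ and $1$ with a definite convexity/concavity, the composition $\Wl_1(U_L)\cap\Wavr_2(U_R)$ defines $U_*^{LR}$ as a Lipschitz function of the data with Lipschitz constant at most $1$ in the right norm. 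Concretely, in the $(w,z)$-plane each elementary wave contributes a monotone, sub-unit-slope displacement, so chaining two of them and then comparing with the direct problem is a one-dimensional comparison once one projects onto $w$ and onto $z$ separately.

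Third, carry out that projection. For the $w$-coordinate: $w_*^{LR}-w_L$ is produced by the $1$-wave from $U_L$; $w_R-w_*^{LR}$ by the $2$-wave into $U_R$. Break the broken path $U_L\to U_M\to U_R$ into its four elementary waves and show, using the sign constraints on the slopes $dw/dz\in[0,1]$ on $\Sr_1$ (and $dz/dw\in[0,1)$ on $\Sl_2$) together with $V<0$, $M>0$, that the total $|w|$-variation along the broken path is at least that along the direct path, and the same for $z$. Summing the two coordinate inequalities and multiplying back by the constants converting $(w,z)$-increments to $|\log M|$-increments yields \eqref{eq:inter}. A clean way to organize the last step is to note that $\str$ is, in $(w,z)$-coordinates, equivalent to $\ell^1$-distance measured along the wave curves, and that the wave-curve structure makes this a \emph{geodesic-type} metric for which the triangle inequality is automatic; the monotonicity and slope bounds are precisely what is needed to see that piecing two Riemann problems can only lengthen (never shorten) the connecting path.

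The main obstacle I expect is \textbf{bookkeeping of signs and cases} rather than any deep difficulty: one must handle all combinations of rarefaction/shock for each of the four elementary waves, and verify that in every case the $w$- and $z$-displacements add up with the correct sign so that cancellation in the direct problem $(U_L,U_R)$ is bounded by the (possibly partially cancelling) displacements in $(U_L,U_M)$ and $(U_M,U_R)$. The slope bounds $dw/dz\in[0,1]$ on $1$-shocks and $dz/dw\in[0,1)$ on $2$-shocks derived in the proof of Proposition~\ref{prop-Riemn}, together with the exact identity $w=w_L$ on $1$-rarefactions and $z=z_R$ on $2$-rarefactions from Lemma~\ref{lemma:rarefaction}, are the structural inputs that make all cases collapse to the same one-dimensional comparison, so once the coordinates are set up correctly the estimate should follow essentially by inspection.
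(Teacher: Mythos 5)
Your proposal moves to the $(w,z)$-plane, as the paper does, but the reduction you then propose does not prove \eqref{eq:inter}, and the ingredient that actually drives the paper's argument is absent. First, the identification of $\str$ with ``the sum of the $|w|$- and $|z|$-increments'' across the two waves is incorrect whenever a wave is a shock: by definition $\str$ is measured by the increments of $\log M$, i.e.\ of $z-w$, across each wave. Along a $1$-shock both $w$ and $z$ increase, with $dw/dz\in[0,1]$, so its strength is proportional to $\Delta z-\Delta w$, not to $|\Delta z|+|\Delta w|$ (and similarly for $2$-shocks). Hence, even if you had proved that the $w$- and $z$-variations along the broken path $U_L\to U_M\to U_R$ dominate those along the direct path $U_L\to U_R$ --- a claim you assert but do not establish --- summing the two coordinate inequalities would not reconstruct \eqref{eq:inter}. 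Likewise, the remark that $\str$ is ``a geodesic-type metric for which the triangle inequality is automatic'' begs the question: $\str$ is defined through the specific two-wave Riemann solution, not as an infimum over paths, and for generic systems interactions do amplify strengths (Glimm's estimate carries a quadratic error term). The exact, constant-one inequality is precisely what is needed downstream (it gives $X_1\le 0$ in Lemma~\ref{diamond}), so an argument that only yields domination up to a multiplicative constant, or that presupposes Lipschitz dependence of $U_*$ with constant one ``in the right norm,'' does not suffice --- the latter is essentially a restatement of the lemma.

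What the paper actually uses, and what your sketch never identifies, is a special symmetry of the shock curves: the algebraic identity $\Phi_-(\beta)\,\Phi_+(\beta)=1$ gives $\log\Phi_-=-\log\Phi_+$, so by \eqref{shock-riemann} the $1$- and $2$-shock curves are the same curve up to exchanging the roles of $w$ and $z$, i.e.\ they are reflections of one another across the axis $w=z$, and they have the same shape from every base point, while rarefactions are the coordinate lines of Lemma~\ref{lemma:rarefaction}. Since strengths are measured exactly by the displacement of $z-w$, it is this reflection symmetry, combined with the monotonicity and convexity/concavity facts you do quote from the proof of Proposition~\ref{prop-Riemn}, that forces the outgoing strength never to exceed the incoming one in every shock/rarefaction configuration. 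The slope bounds $dw/dz\in[0,1]$ and $dz/dw\in[0,1)$ alone are not shown to be sufficient, and the case analysis you defer to ``bookkeeping'' is exactly where the content of the lemma lies; without the symmetry coming from $\Phi_-\Phi_+=1$ your plan has no mechanism to exclude amplification.
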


\begin{proof} We consider the wave curves in the plane of the Riemann invariants. Recall that, in this plane, rarefaction curves are straightlines, while shock curves are described by the expressions \eqref{shock-riemann}. The shock curves have the same geometric shape independently of the base point $U_L$ or $U_R$ and are essentially described by the functions $\Phi_\pm$. Moreover, by observing the 
remarkable algebraic property 
$$
\aligned
\Phi_-(\beta) \Phi_+(\beta)
& = \Big( 4 (1 - K^4)^2 \beta^4\Big)^{-1}  \Big( (1 - 2 K^4 \beta + \beta^2 \big)^2 
- (1 - \beta)^2 \big( (1 + \beta)^2 - 4 K^4 \beta \big) \Big) 
\\
& = \Big( 4 (1 - K^4)^2 \beta^4\Big)^{-1}  \Big( (1 -\beta)^2  + 2 \beta (1 - K^4) \big)^2 
- (1 - \beta)^2 \big( (1 - \beta)^2 - 4 \beta (1 - K^4 \beta) \big)\Big) 
\\
& = 1, 
\endaligned 
$$
it follows that $\log (\Phi_-(\beta)) = -\log (\Phi_+(\beta))$ and 
the expressions in \eqref{shock-riemann} coincide up to a change of the role of the variables $w$ and $z$. Therefore, 
the shock curves are symmetric with respect to the $w=z$ axis. Finally, since the wave strengths, by definition, are measured along this $w=z$ axis, these symmetry properties are sufficient to imply that the wave strengths are non-decreasing at each interaction, as stated in \eqref{eq:inter}. 
\end{proof}


\section{The dynamical formation of trapped surfaces}
\label{sec:6} 

\subsection{Random choice method}
\label{sec:random}

We now state our main result about the existence of solutions $U = U(M,V,a)$
to the Einstein-Euler system \eqref{eq:Euler}--\eqref{S1S2}, 
supplemented with the 
 equations \eqref{eq:Einstein1-new}--\eqref{eq:Einstein2-new} for the geometry coefficients $a,b$ given by the integral expressions
 \eqref{b-new} and \eqref{a-new}. We will also use the notation $Z:=(M,V,a,b)$. 

We consider initial data which are compactly-supported perturbations of a given static solution, denoted by $Z^{(0)}= (M^{(0)}, V^{(0)}, a^{(0)}, b^{(0)})$. The perturbation is assumed to be initially localized on an interval $[r_*- \delta, r_* + \delta]$ with for some $r_* > \delta >0$ (with a ``sufficiently small''  $\delta$)
and we construct a spacetime which remains static in a neighborhood of the center of symmetry. Due to the property of finite speed of
propagation, the support of the initial perturbation remains finite and bounded away from the center (for all times), but may increase
in space as the time evolves.  

For solutions defined for times $v \in [v_0, v_*]$, we expect that 
$$
\supp (U - U^{(0)})(v, \cdot) \subset J(v) := [R_*^-(v), R_*^+(v)], \qquad v \in [v_0, v_*], 
$$
for some functions 
$$
R_*^-(v)= r_* -  \delta - C_* (v- v_0), \qquad 
R_*^+(v)= r_* +\delta + C_* (v- v_0), \qquad 
 v \in [v_0, v_*].  
$$
These functions involve a constant $C_*$, which should be an upper bound of all wave speeds of the Euler equations. Choosing $C_*$ is
done from the explicit expressions of the wave speeds computed earlier, 
once we have a uniform bound on the sup-norm of $Z$ in the spacetime slab under consideration. 
All our analysis will take place in the region 
$$
\Omega_* := \big\{ (v, r) \, | \,  v \in [v_0, v_*], r \in [R_*^-(v), R_*^+(v)] \big\}. 
$$
The solutions will be defined in a time slab $[v_0, v_*]$ and $v_*-v_0$ will be estimated below from the prescribed initial data. 

Our main unknowns are the fluid variables $M,V$ which must satisfy the Euler system. The geometry coefficients $a,b$ arise in an
undifferentiated form in the conservative and flux variables $U, F(U)$, as well as in the source term $S(U)$.  If these coefficients
were prescribed
functions, we would simply have a non--homogeneous hyperbolic system of first-order. However, the functions $a,b$ are not a priori
prescribed and must be recovered from the fluid variables thanks to \eqref{b-new}-\eqref{a-new}. 

To study the initial value problem with data prescribed on $v=v_0$, we rely on the random choice method, which is based on the Riemann
problem and takes the source of the Euler equations into account, as follows. Consider the Riemann problem for the Euler system with
constant geometric background coefficients $a,b$ and an initial jump at time $v'$ centered at some point $r'$:   
\be
U(v', \cdot) = \begin{cases}
U_L, \quad &  r \leq r', 
\\
U_R, \quad & r > r'.    
\end{cases} 
\ee
A generalized solution $\Rcal_G(v,r; U_L,U_R, a, b)$ (the dependence in $v',r'$ being kept implicit) is constructed from the solution
$\Uh$ of the 
Riemann problem $\Rcal(U_L,U_R; a, b)$ (constructed earlier in Section~\ref{ssec:Riemann}) by evolving it with the system of ordinary
differential equations associated with the source-terms and the geometry of the Euler system. More precisely, we set 
\be
\label{grsol}
\Rcal_G(v,r; U_L,U_R, a, b) : = \Uh (v,r; U_L,U_R, a,b) + \int_{v'}^v \widetilde S(v'', W(v'', r), a,b) \, dv'',
\ee
where $W(v'', r) := \Pcal_{v''} \Uh (v',r; U_L,U_R, a, b)$ and 
$\Pcal$ denotes the solution operator for the 
ODE system 
\be
 \label{UH}
\aligned
\frac{d}{dv} W &= \widetilde S (W,a,b), 
\\
W(v', r) &= \Uh (v', r; U_L,U_R, a,b),
\endaligned
\ee   
where $S$ is the source term of the Euler system (cf.~Proposition~\ref{prop:23}) and 
$\widetilde S$ takes also the variation of the geometry into account: 
\begin{align}\label{deftildeS}
\widetilde S := ~& S - a_r \partial_a F - b_r \partial_b F \nonumber \\
= ~& - \frac{bM}{2r} \begin{pmatrix}
                      2 + 4 V \\
                      a + 4 aV + 4 V^2
                     \end{pmatrix}
     - \pi (1+k^2) r b M^2 \begin{pmatrix}
		    8 K^2 V \\
                    - a^2 + 4 K^2 a V + 12 V^2
                   \end{pmatrix}.
\end{align}
The generalized random choice method for the class of initial data of interest ``supported'' in the domain $\Omega_*$ is now introduced. 
We denote by $\Delta v, \Delta r >0$ the time and space mesh--lengths, respectively, and by
$(v_i, r_j)$ (for $i \in \mathbb{N} \cup \{ 0 \}$, $j \in \mathbb{Z}$) 
the mesh points of the grid, that is, 
$$
v_i := v_0 + i \Delta v,\qquad r_j:= r_* + j \Delta r. 
$$ 
We also fix an equidistributed sequence $(\omega_i)$ in the interval $(-1,1)$ and set 
$$
r_{i,j}:= r_* + (\omega_i + j) \Delta r. 
$$
We will let $\Delta v, \Delta r$ tend to zero, while keeping the ratio $\Delta v/\Delta r$ constant. 
We can now define the approximate solutions $Z_\D=Z_\D (v,r)$ to the Cauchy problem 
for the Einstein-Euler system associated with the (fluid) initial data 
$$
U_0(r):=U(v_0, r), \qquad r \in J(v_0) = [r_* - \delta, r_*+\delta]. 
$$
Also, throughout the evolution and for the fluid variables, we impose the boundary values determined by the prescribed static solution,
i.e.
$$
(M,V)(v,R_*^-(v)) = (M,V)^{(0)} (R_*^-(v)), 
\qquad 
(M,V)(v,R_*^+(v)) = (M,V)^{(0)} (R_*^+(v)). 
$$

The approximate solutions are defined inductively. First of all, the initial data are approximated by piecewise constant functions by
setting for all even $j$:
\bel{initial}
\aligned
 U_\D(v_0,r) &:= U_0 (r_{j+1}), & r \in [r_j, r_{j+2}), \\
 a_\D(v_0,r) &:= a_0 (r_j), &  r \in [r_{j-1},r_{j+1}), \\
 b_\D(v_0,r) &:= b_0(r_j), & r \in [r_{j-1},r_{j+1}).
\endaligned
\ee
Then, we evolve $U_\sharp$, $a_\sharp$, and $b_\sharp$ successively: 
\begin{itemize}

 \item[1.] If $U_\D$ is known for all $v < v_i$, we define $U_\D$ at the level $v = v_i$ as
$$
 U_\D (v_i+,r) := U_\D (v_i-,r_{i,j+1}), \qquad r \in [r_j,r_{j+2}), \quad i + j ~\text{even}. 
$$

 \item[2.] Similarly, we randomly pick a value for $a_\D$ and $b_\D$ between $r_{j-1}$ and $r_{j+1}$ using the equidistributed 
sequence:
\begin{align*}
 a_\D (v_i+,r) &:= a_\D (v_i-,r_{i,j}), \qquad r \in [r_{j-1},r_{j+1}), \quad i + j ~\text{even}, \\
 b_\D (v_i+,r) &:= b_\D (v_i-,r_{i,j}), \qquad r \in [r_{j-1},r_{j+1}), \quad i + j ~\text{even}.
\end{align*}

 \item[3.] The approximation $U_\D$ is defined in each slab 
$$
\Omega_{i,j} := \big\{ 
v_i < v < v_{i+1}, \quad  {r_{j-1}\leq r< r_{j+1}}, \quad i + j \, \mathrm{even}
\big\} 
$$
 from the Riemann problem and we set 
$$
 \qquad \qquad 
 U_\D (v,r) := \Rcal_G \big( v,r ; \, U_\D (v_i+, {r_{j-1}}), \; U_\D (v_i+, {r_{j+1}}); \, v_i,  {r_{j}}, 
  a_\D (v_i +,  {r_{j}}),  b_\D (v_i+,  {r_{j}})) \big),
$$
 as introduced in \eqref{grsol}.

\item[3.] Next, we update the metric coefficient $b$ using the integral formula \eqref{b-new}, that is
$$
 b_\D(v,r) = \exp \Big( 4\pi (1+k^2) \int_0^r M_\D (v,r') \, r' dr' \Big), \quad v \in (v_i,v_{i+1}),
$$
with $M_\D = U_{\D,1}$ being the first component of $U_\D$ for $r \in (R_*^-(v),R_*^+(v))$, by relying on 
the static solution $M^{(0)}$ outside $\Omega_*$.
 
\item[4.] Similarly, we update the metric coefficient $a_\D$ using the integral formula \eqref{a-new},  
that is for $v \in (v_i,v_{i+1})$: 
\begin{align*}
 \aaa(v,r) = 1 - \frac{4\pi(1+k^2)}{r} \int_0^r \frac{\ba(v,s)}{\ba(v,r)} \Ma(v,s) \left( 1 - 2 K^2 \Va(v,s) \right) s^2 ds. 
\end{align*}  
\end{itemize}


\subsection{The class of initial data of interest}
\label{sec:initialdata}

In order to establish the dynamical formation of trapped surfaces, we focus on a class of initial data for which we can prove an
existence result on a sufficiently long time interval $[v_0,v_*]$ so that
trapped surfaces form within this time interval, while the initial data are chosen to be untrapped.
Here we derive suitable conditions on the (untrapped) initial data so that trapped surfaces do form in the future.
The evolution takes place within the cone $[R_*^-(v),R_*^+(v)]$, defined earlier so that the support of the solution expands in time.
The accumulation of mass in a short amount of time is controlled by the behavior of the derivative $a_v$, as we now explain. 

The class of initial data under consideration here consists of a localized perturbation of a static solution for which $a_v = 0$. 
Generally, the derivative $a_v$ is essentially determined by $a^2-4V^2$, which we choose to be initially large and negative within
an interval $[r_*-\delta,r_*+\delta]$.  The sup--norm of $V, a$ being controlled during the evolution, we can guarantee that it varies
``slowly'' in time so that, at later times $v$, we have $a_v(v,r) < \frac{1}{h^2}$ within a {\it smaller\/} spatial interval in $r$,
determined by the property of propagation with finite speed. 

Heuristically, we expect to choose $-V>0$ to be sufficiently large, and that a ``large'' mass is concentrated on a sufficiently
``small'' interval $[r_*-\delta, r_*+\delta]$. 
To complete the argument, we need to carefully quantify all the relevant ``effects'' in the problem.

\

We identify a set of initial data $(M,V,a,b)$ at an initial hypersurface at time $v_0$ that satisfy $a>0$ everywhere and $a_v\ll0$ in
a small region. With the notation
\be
\label{initialdata1}
\aligned
 M &= \Ms + \Mp, & V &= \Vs + \Vp, \\
 a &= \as + \ap, & b &= \bs + \bp,
\endaligned
 \ee
we denote solutions that consist of a \textsl{static solution} $(\Ms,\Vs,\as,\bs)$ as derived in Theorem~\ref{thm-static} and of a
certain \textsl{perturbation} $(\Mp,\Vp,\ap,\bp)$. By adding a suitable perturbation,  the initial data $\Vp_0$ has small support in
the radial direction but large absolute value. In order to control the positive sign of $a_0$ we have to ensure that the $L^1$-norm
of $\Vp_0$ is small. On the other hand, $\Vp_0$ must be sufficiently large (pointwise) to ensure that $a_v$ is large and negative,
which will lead to the formation of a trapped surface in a short amount of time.

The initial data at time $v_0$ are specified as follows. We choose a radius $r_*>0$, a region of perturbation
$[r_*-\delta, r_*+\delta]$ given by $\delta > 0$ small and a step function
\begin{align}\label{initialdata2}
 \Vp_0 (r) := \begin{cases}
             0,            & r < r_* - \delta, \\
             \frac{\Vs(r)}{h},           & r \in [r_* - \delta, r_* + \delta], \\
             0,       & r > r_* + \delta,
            \end{cases}
\end{align}
determined by a constant scaling factor $h = h(r_*,\delta)$. There is no perturbation assumed for the fluid density $M$,  hence
\begin{align}\label{initialdata3}
 \Mp_0 = 0, \qquad \bp_0 = 0.
\end{align}
The perturbed geometric coefficient $\ap_0$, resp.\ the initial value $a_0$, is given by the integral formula~\eqref{a-new} and the
fact that $\Vs = - \frac{\as}{2}$:
\begin{align}\label{initialdata4}
  a_0 (r) =  1 - \frac{4\pi(1+k^2)}{r} \int_{0}^{r} \frac{\bs(s)}{\bs(r)} \Ms(s) \left(1 + K^2 \left(1
  + \frac{1}{h}\chi_{[r_*-\delta,r_*+\delta]}\right) \as(s) \right) s^2 \, ds,
 \end{align}
where $\chi_{[r_*-\delta,r_*+\delta]}$ denotes the characteristic function on $[r_*-\delta,r_*+\delta]$.

\begin{proposition}[The class of initial data of interest]
\label{prop:initialdata}
Given $r_* > \varDelta>0$, there exist constants 
 $C_1,C_2,C_3>0$ depending on $r_*$ and $\varDelta$
 such that for all $\delta,h>0$ with 
 $ \frac{\delta}{h} \leq \frac{1}{C_1}$ the following holds:
\begin{align*}
 & 0 < a_0(r) \leq \as(r), \quad  r \in [0,r_*+\varDelta], \\
 & a_v(v_0,r) \begin{cases}
               = 0, & r \in [0,r_*-\delta), \\
               \leq - C_2 \frac{\delta}{h^3}, & r \in [r_*-\delta,r_*+\delta], \\
               \leq - C_3 \frac{\delta}{h}, & r \in (r_*+\delta,r_*+\varDelta].
              \end{cases}
\end{align*}
\end{proposition}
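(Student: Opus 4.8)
The plan is to derive everything from the explicit representation \eqref{initialdata4} of $a_0$ and the Einstein constraint \eqref{eq:Einstein3-new} for $a_v$, by comparison with the fixed static background. First I would record the background constants: by Theorem~\ref{thm-static} the functions $\as,\bs,\Ms$ are continuous and strictly positive on the compact interval $[0,r_*+\varDelta]$, so $\kappa_0:=\min_{[0,r_*+\varDelta]}\as>0$ and $\kappa_1:=\min_{[0,r_*+\varDelta]}\Ms>0$ are well defined (and depend only on $r_*,\varDelta$ and the fixed static solution), while $\bs\ge 1$ gives $\bs(s)/\bs(r)\le 1$ for $s\le r$; we also assume $\delta\le\varDelta$, so that $r_*-\delta\ge r_*-\varDelta>0$. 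Subtracting from \eqref{initialdata4} the static identity (namely \eqref{a-new} for the static solution, where $\Vs=-\as/2$ reduces the weight $2K^2|\Vs|+1$ to $K^2\as+1$), and using $b_0=\bs$ and $M_0=\Ms$ because $\Mp_0=\bp_0=0$, gives
\[
a_0(r)=\as(r)-\varepsilon(r),\qquad
\varepsilon(r):=\frac{4\pi(1+k^2)K^2}{rh}\int_0^r\chi_{[r_*-\delta,r_*+\delta]}(s)\,\frac{\bs(s)}{\bs(r)}\,\Ms(s)\,\as(s)\,s^2\,ds\ \ge 0 .
\]
In particular $a_0\le\as$ on $[0,r_*+\varDelta]$, which is half of the first assertion.

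Next I would obtain two-sided control of $\varepsilon$. The integrand above is continuous and bounded between two positive constants on $[r_*-\delta,r_*+\delta]$, the interval of integration has length at most $2\delta$, and $r\ge r_*-\delta$ whenever $\varepsilon(r)\ne0$; hence there are constants $0<c_\varepsilon\le C_\varepsilon$, depending only on $r_*$ and $\varDelta$, with $\varepsilon(r)\le C_\varepsilon\,\delta/h$ for all $r$, and $\varepsilon(r)\ge c_\varepsilon\,\delta/h$ once $r>r_*+\delta$. Now I would fix $C_1$ so large that $\delta/h\le 1/C_1$ forces $C_\varepsilon\,\delta/h\le\kappa_0/2$; then $a_0\ge\as-\kappa_0/2\ge\kappa_0/2>0$ on $[0,r_*+\varDelta]$, which completes the first assertion.

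For the second assertion I evaluate \eqref{eq:Einstein3-new} at $v=v_0$:
\[
a_v(v_0,r)=2\pi(1+k^2)\,r\,\bs(r)\,\Ms(r)\,\bigl(a_0(r)^2-4V_0(r)^2\bigr),
\]
whose prefactor is $\ge C_0:=2\pi(1+k^2)(r_*-\varDelta)\kappa_1>0$ on $[r_*-\delta,r_*+\varDelta]$, so it remains to control $a_0^2-4V_0^2$ in the three regions. On $[0,r_*-\delta)$ we have $\varepsilon=0$, hence $a_0=\as$, and $\Vp_0=0$, hence $V_0=\Vs=-\as/2$, so $a_0^2-4V_0^2=(\as)^2-(\as)^2=0$ and $a_v(v_0,\cdot)=0$. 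On $[r_*-\delta,r_*+\delta]$ we have $V_0=\Vs(1+1/h)$, and since $\Vs=-\as/2$ gives $4(\Vs)^2=(\as)^2$ it follows that $4V_0^2=(\as)^2(1+1/h)^2$; meanwhile $a_0=\as-\varepsilon$ with $0\le\varepsilon\le\kappa_0/2\le\as/2$, so $a_0^2\le(\as)^2$; therefore
\[
a_0^2-4V_0^2\le(\as)^2\bigl(1-(1+1/h)^2\bigr)=-(\as)^2\,\frac{2h+1}{h^2}\le-\frac{\kappa_0^2}{h^2},
\]
so $a_v(v_0,r)\le-C_0\kappa_0^2/h^2$. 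Since $\delta/h\le 1/C_1$ implies $\delta/h^3\le 1/(C_1h^2)$, this is $\le-(C_0\kappa_0^2C_1)\,\delta/h^3$, so $C_2:=C_0\kappa_0^2C_1$ does the job. Finally, on $(r_*+\delta,r_*+\varDelta]$ we again have $\Vp_0=0$, so $V_0=\Vs=-\as/2$ and $4V_0^2=(\as)^2$, but now $a_0=\as-\varepsilon(r)$ with $c_\varepsilon\,\delta/h\le\varepsilon(r)\le\kappa_0/2$; hence, using $2\as-\varepsilon\ge\as\ge\kappa_0$,
\[
a_0^2-4V_0^2=-\varepsilon(r)\bigl(2\as-\varepsilon(r)\bigr)\le-\kappa_0\,\varepsilon(r)\le-\frac{c_\varepsilon\kappa_0}{h}\,\delta ,
\]
so $a_v(v_0,r)\le-(C_0c_\varepsilon\kappa_0)\,\delta/h=:-C_3\,\delta/h$.

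The routine parts are the compactness bounds on the background and the elementary one-dimensional estimate for $\varepsilon$. The point that needs care is the \emph{lower} bound $\varepsilon(r)\ge c_\varepsilon\,\delta/h$ on the outer shell $r>r_*+\delta$: there the velocity perturbation has been switched off, so the negativity of $a_v$ is produced purely by the nonlocal effect of the perturbation's mass on $a_0$, and is only of size $\delta/h$ — which is exactly why the outer estimate is weaker than the $\delta/h^3$ available inside $[r_*-\delta,r_*+\delta]$. One must also commit to a single choice of $C_1$ so that the one inequality $\varepsilon\le\kappa_0/2$ simultaneously secures $a_0>0$, $a_0\le\as$, and the lower bound $2\as-\varepsilon\ge\kappa_0$ used above.
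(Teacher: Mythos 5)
Your proposal is correct and follows essentially the same route as the paper's own proof: the same decomposition $a_0=\as+\ap_0$ obtained by subtracting the static version of \eqref{a-new} from \eqref{initialdata4}, the same two-sided $\delta/h$ control of the perturbation integral (upper bound fixing $C_1$ and ensuring $0<a_0\le\as$, lower bound on the outer shell giving the $\delta/h$ decay of $a_v$ there), and the same use of \eqref{eq:Einstein3-new} with $4\Vs^2=\as^2$ in the three regions, with your passage from the natural $1/h^2$ bound to $-C_2\,\delta/h^3$ via $\delta/h\le 1/C_1$ being the same bookkeeping step the paper performs by inserting $\eps\gtrsim\delta/h$. Only cosmetic remarks: your symbol $\kappa_0$ clashes with the constant already defined in \eqref{kk}, your constants (like the paper's) also depend on the fixed static solution and on $k$, and, again like the paper, you implicitly restrict to $\delta\le\varDelta$.
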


Geometrically speaking, the conclusion of Proposition~\ref{prop:initialdata} is that we have an untrapped initial data set
\eqref{initialdata2}--\eqref{initialdata4} from which, since $a_v$ is large and negative, the coefficient 
 $a$ should change sign in a small region around $r_*$ within a short amount of time, and trapped surfaces are expected to form.   

\begin{proof}
\noindent{ \bf Step 1. Positivity of $a_0$.}
The following calculations are true for all  {$\delta \leq \varDelta$},  
only the ratio of $\delta$ and $h$ is relevant. 
 Since $\as$ is positive, so is $a_0$ for $r<r_*-\delta$. Since $\Ms,\as,\bs>0$ it is immediate
 from \eqref{initialdata4} that 
 \begin{align}
  a_0(r) &= \as(r) + \ap_0(r) 
\\
& = \as(r) - \frac{4\pi(1-k^2)}{rh} \int_{r_*-\delta}^{\min(r,r_*+\delta)} \frac{\bs(s)}{\bs(r)} \Ms(s)
  \as(s) s^2 \, ds \nonumber \\ 
  &\geq \as(r) - \frac{4\pi(1-k^2)}{rh} \int_{r_*-\delta}^{r_*+\delta} \frac{\bs(s)}{\bs(r)} \Ms(s) \as(s) s^2 \, ds. \label{a0_1_n}
 \end{align}
 Recall that by Theorem~\ref{thm-static} static solutions are smooth. Choose $\eps = \eps(r_*,\varDelta)>0$ sufficiently small so that
 for all $r \in [r_*-\varDelta,r_*+\varDelta]$
 \begin{align}
  \as(r) &> \as(r_*-\varDelta) \eps.
 \label{eps1}
 \end{align}
Since $\bs$ is increasing and $\mus = \as \Ms$ is monotonically decreasing and positive by Theorem~\ref{thm-static}, we conclude from
\eqref{a0_1_n} that for $r\geq r_*-\delta$,
 \begin{align*}
  0 < -\ap_0(r) &\leq \frac{4\pi (1-k^2)}{r h} \as(r_*-\delta) \Ms(r_*-\delta) \left[ \frac{r^3}{3} \right]_{r_*-\delta}^{r_*+\delta} \\
  &\leq \frac{4 \pi (1-k^2)}{3} \frac{\delta}{h} \frac{6r_*^2 + 2\varDelta^2}{r_*-\varDelta} \as(r_*-\varDelta) \Ms(r_*-\varDelta).
 \end{align*}
 Thus, if we set $\delta, h >0 $ such that $C_1(r_*,\varDelta):= \frac{4 \pi (1-k^2)}{3} \frac{6r_*^2+2\varDelta^2}{r_*-\varDelta}
 \Ms(r_*-\varDelta) \frac{1}{\eps} \leq \frac{h}{\delta}$, then by the above assumptions
 \[
  a_0(r) \geq \as(r) - \as(r_*-\varDelta) \eps > 0, \qquad r \in [0,r_*+\varDelta].
 \]
 
\

\noindent{ \bf Step 2. Negativity of $a_v$.} By Theorem~\ref{thm-static} the static solutions are smooth and satisfy
$0 < \as = - 2 \Vs \leq 1$. By the choice of initial data \eqref{initialdata1}--\eqref{initialdata2}, $a_0$ is just the static
solution $\as$ on the central interval $[0,r_*-\delta]$. In particular, $a_v = 0$ there.
More generally, from \eqref{eq:Einstein3-new} we obtain
\begin{align}
 a_v(v_0,r) &= 2 \pi r \bs(r) \Ms(r) \left( a_0(r)^2 - 4 V_0(r)^2 \right) \nonumber \\
 &= 2 \pi r \bs(r) \Ms(r) \left( \big( \as(r) + \ap(r) \big)^2 - \as(r) \left( 1 + \chi_{[r_*-\delta,r_*+\delta]}(r) \right)^2 \right) \nonumber \\
 &= 2 \pi r \bs(r) \Ms(r) \Big( \ap_0(r) \big( a_0(r) + \as(r) \big) - \chi_{[r_*-\delta,r_*+\delta]}(r) \as(r)^2 \frac{2h + 1}{h^2} \Big), \label{eq:av0}
\end{align}
where $\chi_{[r_*-\delta,r_*+\delta]}$ again denotes the characteristic function on the interval of perturbation
$[r_*-\delta,r_*+\delta]$. Step 1 made it obvious that $a_0$ is positive on $[r_*-\varDelta,r_*+\varDelta]$ and $\ap_0$ is negative
on $(r_*-\delta,+\infty)$. Hence, we find 
\[
 a_v(v_0,r) < 0, \qquad r \in (r_*-\delta,r_*+\varDelta],
\]
independently of the size of $h$.

\

\noindent{ \bf Step 3. Bound for $a_v$ on $( r_*+ \delta,r_*+\varDelta]$.}
 To obtain finer estimates for the behavior of $a_v(v_0,r)$ we first need to estimate $\ap_0$ from above. This follows by the same
 method as in Step 1. For $r \geq r_*-\delta$, since $\mus$ is decreasing and $\bs \geq 1$ is increasing by Theorem~\ref{thm-static},
\begin{align}
 \ap_0(r) &= - \frac{4 \pi (1-k^2)}{rh} \int_{r_*-\delta}^{\min(r,r_*+\delta)} \frac{\bs(s)}{\bs(r)} \Ms(s) \as(s) s^2 \, ds \nonumber \\
 &\leq - \frac{4 \pi (1-k^2)}{r h} \frac{\mus(r_* + \delta)}{\bs(r_*+\delta)} \left[ \frac{s^3}{3} \right]_{r_*-\delta}^{\min(r,r_*+\delta)}.
 \label{eq:av01}
 \end{align}
For $r\geq r_*+\delta$, in particular,
\begin{align*}\label{eq:a0_bound}
 \ap_0(r) &\leq - \frac{4 \pi (1-k^2)}{rh} \frac{\mus(r_*+\delta)}{\bs(r_*+\delta)} \frac{\delta(6r_*^2+2\delta^2)}{3}
 \leq - 4 \pi (1-k^2) \frac{\delta}{h} \frac{(6r_*^2+2\delta^2)}{3r} \frac{\mus(r_*+\delta)}{\bs(r_*+\delta)}.
\end{align*}
Finally, we return to the explicit formula~\eqref{eq:av0} for $a_v$ at time $v_0$ and again make use of the monotonicity properties of
the static parts $\bs$ and $\mus$ as well as the conclusion of Step 1 that $0<a_0(r)+\as(r) \leq 2 \as(r)$ for all
$r \leq r_*+\varDelta$. It is then clear that $a_v$ in the interval $(r_*+\delta,r_*+\varDelta]$ is bounded above by a negative
constant times the scaling factor $\frac{\delta}{h}$ of the perturbation:
\begin{align*}
 a_v(v_0,r) &\leq 2 \pi r \bs(r) \Ms(r) \left[ - 4 \pi (1-k^2) \frac{\delta}{h} \frac{(6r_*^2+2\delta^2)}{3r}
 \frac{\mus(r_*+\delta)}{\bs(r_*+\delta)} \right] 2 \as(r) \\
 &\leq - 32 \pi^2 (1-k^2) r_*^2 \big(\mus(r_*+\varDelta)\big)^2 \frac{\delta}{h} \\
 &=: - C_3(r_*,\varDelta) \frac{\delta}{h}, \hskip3.cm  r \in (r_*+\delta,r_*+\varDelta].
\end{align*}
 
\

\noindent{ \bf Step 4. Bound for $a_v$ on $[r_*-\delta,r_*+\delta]$.} On the interval of perturbation the contribution of the first
term in the bracket of \eqref{eq:av0} is negative and tends to $0$ for $r\to r_*-\delta$ by \eqref{eq:av01} and hence is negligible.
Consequently, by making use of $\eps(r_*,\varDelta) \geq \frac{4\pi(1-k^2)}{3} \frac{6 r_*^2 +2\varDelta^2}{r_*-\varDelta}
\Ms(r_*-\varDelta) \frac{\delta}{h}$ chosen in Step 1 together with \eqref{eps1}, as well as the monotonicity properties of the static
solution,
\begin{align*}
 a_v(v_0,r) &\leq - 2 \pi \frac{2h+1}{h^2} r \bs(r) \Ms(r) \as(r)^2 \\
 &\leq - 2 \pi \bs(r_*-\varDelta) \mus(r_*+\varDelta) \as(r_*-\varDelta) \eps \frac{2h+1}{h^2}  \\
 &\leq - \frac{8 \pi^2 (1-k^2)}{3} \frac{6r_*^2+2\varDelta^2}{r_*-\varDelta} \bs(r_*-\varDelta) \mus(r_*-\varDelta) \mus(r_*+\varDelta) \frac{2h+1}{h^2}  \frac{\delta}{h} \\
 &=: - C_2(r_*,\varDelta) \frac{\delta}{h^3}, \hskip4.cm r \in [r_*-\delta,r_*+\delta]. \qedhere
\end{align*} 
\end{proof}


\subsection{Statement of the perturbation property}
\label{sec:postulates}

We are interested here in the evolution of initial data consisting of a radially symmetric, static solution which is (sufficiently
strongly) perturbed in a (sufficiently small) shell $[r_*-\delta,r_*+\delta]$. In Proposition~\ref{prop:initialdata}, we have shown
that such initial data exist which are untrapped in a central region $[0,r_*+\varDelta]$. The speed of propagation influences the
domains of dependence and let the constant $C_*>0$ be an upper bound for the (modulus) of all wave speeds $\lambda_i$ of the Euler
equations. We then define
\begin{align*}
 R_*^-(v) &= r_* - \delta - C_* (v-v_0), & R_*^+(v)&= r_* + \delta + C_* (v-v_0), \\
 \Xi_*^-(v) &= r_*-\delta + C_* (v-v_0), & \Xi_*^+(v) &= r_*+\delta-C_* (v-v_0).
\end{align*}
We assume that $[R_*^-(v),R_*^+(v)] \subseteq [r_*-\varDelta,r_*+\varDelta]$ for all times $v \in [v_0,v_*]$.

As introduced in Section~\ref{sec:random}, the effect of the perturbation $\Mp,\Vp,\ap,\bp$ then takes place in the region
\[
 \Omega_* = \{ (v,r) | v \in [v_0,v_*], r \in [R_*^-(v),R_*^+(v)] \}.
\]
Outside of this cone-like region, the fluid variables $M,V$ coincide with the unperturbed static components $\Ms,\Vs$.
A trapped surface will form during evolution when the relevant terms in $a_v$ are preserved in the ``big data'' region 
\[
 \Xi_* = \{ (v,r) | v \in [v_0,v_*], r \in [\Xi_*^-(v),\Xi_*^+(v)] \}
\] 
and the speed of propagation is sufficiently slow for the dynamical formation to take place before the region $\Xi_*$
``closes up'', that is, before we reach $v>v_0$ such that $\Xi_*^-(v) = \Xi_*^+(v)$. 
The ``mixed'' region $\triangledown_* = \Omega_* \setminus \Xi_*$, i.e.
\[
 \triangledown_* = \{ (v,r) | v \in [v_0,v_*], r \in [R_*^-(v),R_*^+(v)], r \notin (\Xi_*^-(v),\Xi_*^+(v)) \},
\]
is influenced by the outer static solution as well as the perturbation. As such, it is more difficult to control its evolution.
However, this is only relevant for the geometry variables $a,b$, which are defined as integrated quantities (from the center) using
\eqref{b-new}--\eqref{a-new}. They behave exactly like the static solutions $\as,\bs$ in the central region $[0,R_*^-(v)]$, but also
remain (slightly) perturbed on the right side of $\Omega_*$.


We will be able to establish a growth behavior of the solution $M,V,a,b$ in the domain of dependence $\Omega_*$ and the ``big data''
region $\Xi_*$ of the perturbation. By Proposition~\ref{prop:initialdata} these properties are satisfied at the initial
time $v_0$. We will check that the same behavior remains valid at each step of the random choice method introduced in
Section~\ref{sec:random}.

Before we state the desired control of the solution, let us fix an important constant that depend on the sound speed $k$, namely
 \begin{align}\label{kk}
  \kappa_0 := \frac{4k}{(1-k^2)K^2}.
 \end{align}
 
\begin{remark}\rm 
\label{rem:kappa}
It is clear that $\kappa_0 > 0$. Note that $\kappa_0 < \frac{1}{2}$ for $k$ sufficiently small.
More precisely, we need that $8k < (1-k^2) K^2 = \frac{(1-k^2)^2}{1+k^2}$, which is equivalent to
$-1 + 8k + 2k^2 +  8k^3 - k^4 < 0.$
For $k = 0$, this condition is satisfied, hence also holds in a neighborhood of $0$. Numerically, 
the smallest positive root is $k_0 \approx 0.1197$. 
We will be able to prove the formation of trapped surfaces for $k$ smaller than $k_0$.
\end{remark}

\begin{definition}
\label{def:pproperty}
An approximate solution $\Ma,\Va,\aaa,\ba$ to the Euler--Einstein system \eqref{eq:Euler}--\eqref{S1S2} is said to preserve the
\emph{perturbation property\/} if there exist constants $C_0,C, C_b,\Lambda>0$ depending only on the static solution $\Ms,\Vs,\as,\bs$,
 {a constant $\kappa > 1$ depending on $k$,}  
so that for all $v \in [v_0,v_*]$, with $v_* := v_0 + \tau h^{\kappa}$
(being the time of existence in Theorem~\ref{apptheo}, below), 
 \begin{itemize}

  \item one has in the domain of dependence $\Omega_*$:
  \begin{align*}
     \frac{1}{C_0} e^{-C \frac{v-v_0}{h^\kappa}} \left( 1 + \frac{1}{h} \right)^{-\kappa_0} \leq \Ma(v,r) &\leq C_0 e^{C\frac{v-v_0}{h^\kappa}} \left( 1 + \frac{1}{h} \right)^{\kappa_0}, \\
   \frac{1}{C_0} e^{-C\frac{v-v_0}{h^\kappa}}  \leq - \Va(v,r) &\leq C_0 e^{C\frac{v-v_0}{h^\kappa}} \left( 1 + \frac{1}{h} \right), \\ 
 {- \frac{1}{h}} \leq \aaa(v,r) \leq 1, \qquad 1 \leq \ba(v,r) &\leq C_b, \qquad - \frac{\Lambda}{h} \leq \lambda_i(v,r) \leq \frac{C_b}{2}, 
  \end{align*}
  
\item and in the ``big data'' region $\Xi_*$, the estimates for the fluid can be improved as follows: 
   \begin{align*}
       \frac{1}{C_0} e^{-C \frac{v-v_0}{h^\kappa}} \leq \Ma(v,r) &\leq C_0 e^{C \frac{v-v_0}{h^\kappa}}, \\
    \frac{1}{C_0} e^{-C \frac{v-v_0}{h^\kappa}} \left( 1 + \frac{1}{h} \right) \leq - \Va(v,r) &\leq C_0 e^{C \frac{v-v_0}{h^\kappa}} \left( 1 + \frac{1}{h} \right).
   \end{align*} 
 \end{itemize}
\end{definition}

The set of approximate solutions satisfying the perturbation property will be shown to be non-empty. First of all, the initial data
chosen in Proposition~\ref{prop:initialdata} satisfy the above bounds for
 \begin{align*}
  \log C_0 & := \tilde k \xi = \max \left( 1, 2 \kappa_0 \right) \bigg( \max_{r \in [r_*-\varDelta,r_*+\varDelta]} \left| \log \left( - \Vs(r) \right) \right| + \frac{(1+k)^2 K^4}{8k} \max_{r \in [r_*-\varDelta,r_*+\varDelta]} \left| \log \Ms(r) \right| \bigg), \\
  \log C & := \tilde k \rho = \max \left( 1, 2 \kappa_0 \right) \left(e^{1 + \kappa \xi} D_1 + D_2 \right),
\end{align*}
 with constants $D_1,D_2$ depending on $k,r_*,\varDelta$ and $C_b := \bs(r_*+\varDelta)$ (see Section~\ref{sec:7} for details). Note
 that $e^{C \frac{v-v_0}{h^\kappa}} \leq e^{C \tau} = e^{\frac{\tilde k}{\kappa}} \leq e$ only depends on $k$ and is bounded as long
 as $v \in [v_0,v_*]$ (cf.\ Remark~\ref{rem:tau}).
 
 To show that the above estimates are satisfied at each time step $[v_i,v_{i+1}]$ of the approximate solution defined through the
 random choice method, we rewrite the estimates in terms of Riemann invariant coordinates and proceed by induction in $i$. In each
 inductive step, we have to make sure that the perturbation property is preserved in the Riemann step, the ODE step and when updating
 the integral quantities $\aaa,\ba$. The details of this induction, including both the perturbation property and the BV bound stated
 below, will be carried out in Section~\ref{sec:7}. It is important to note that $\kappa$ has 
to satisfy certain
 constraints to obtain an existence result and observe the formation of trapped surfaces.
 

\subsection{Statement of the main result} 

We can now state our existence result for the Einstein--Euler equations with initial data satisfying the perturbation property. The
solutions are constructed from the initial data described in Section~\ref{sec:initialdata} which are known to satisfy the perturbation
property and be untrapped.

\begin{theorem}[A class of spherically-symmetric Einstein-Euler spacetimes with bounded variation]
\label{apptheo}
Fix $k \in (0,1)$ and $\kappa \geq 1 + 2 \kappa_0$. Given any $\mu_0>0$, let $\Ms,\Vs$ be the static solution whose density equals
$\mu_0$ at the center (cf.\ Theorem~\ref{thm-static}).
Fix any $r_*>\varDelta>0$ together with perturbation parameters $h, \delta>0$ satisfying $\delta \leq \frac{h}{C_1}$ with $C_1$ as in
Proposition~\ref{prop:initialdata}. By $Z_0 = (M_0,V_0,a_0,b_0)$, denote the initial data set
\eqref{initialdata2}--\eqref{initialdata4} consisting of a compactly supported perturbation of this static solution. 
Then there exists a constant $\tau>0$ depending upon the given static solution in the interval $[r_*-\varDelta,r_*+\varDelta]$ only, 
so that the approximate solutions $Z_\sharp$ constructed by the random choice method are well-defined on the time interval
$[v_0, v_*]$ with $v_* = v_0 + \tau h^\kappa$ and satisfy the perturbation property 
(stated in Definition \ref{def:pproperty}) within the domain $\Omega_*$ 
and, moreover, satisfy the uniform BV property (for some constant $C_2>0$) 
\be
\label{eq:20}
\sup_{v \in [v_0, v_*]} TV\big( Z_\D(v, \cdot) - Z^{(0)} \big)
\leq C_2 \,TV( Z_0 - Z^{(0)} ),
\ee
and the Lipschitz continuity property ($v, v' \in [v_0, v_*]$)
\be
\label{eq:21}
\int_{R_*^-(v)}^{R_*^+(v)} | U_\D(v,r) - U_\D(v', r)| \, dr \leq C_2 \,TV(U_0 - U^{(0)}) \, (|v - v'|+\Delta v). 
\ee
Consequently, the sequence $Z_\D$ (or a subsequence of it, at least) converges pointwise   
toward a limit $Z= (M,V, a, b)$ which is a bounded variation solution to the Euler--Einstein system in spherical symmetry and satisfies
the initial condition and the perturbation property. 
\end{theorem}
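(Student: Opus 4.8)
The plan is to prove the three assertions — preservation of the perturbation property of Definition~\ref{def:pproperty}, the uniform $BV$ estimate \eqref{eq:20}, and the Lipschitz-in-time estimate \eqref{eq:21} — by a single induction over the time levels $v_i$ of the random choice scheme set up in Section~\ref{sec:random}, and then to pass to the limit $\Delta r\to 0$ (with $\Delta v/\Delta r$ fixed below $1/C_*$, so that waves issuing from adjacent Riemann problems do not interact within one strip) by a compactness argument. One step of the scheme from $v_i$ to $v_{i+1}$ consists of three operations that must each be controlled: the \emph{Riemann step}, in which the piecewise-constant data at $v_i$ are evolved by the exact Riemann solver $\Rcal$ of Proposition~\ref{prop-Riemn} with the frozen coefficients $\aaa(v_i+,r_j),\ \ba(v_i+,r_j)$; the \emph{ODE step}, in which the profile is advected by the field $\widetilde S$ of \eqref{deftildeS} according to \eqref{UH}; and the \emph{geometry update}, in which $\ba$ and $\aaa$ are recomputed from the new $\Ma$ via the integral formulas \eqref{b-new} and \eqref{a-new}, matched to the static solution outside $\Omega_*$.

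\emph{Preservation of the perturbation property.} Writing the fluid bounds in the Riemann invariants $(w,z)$ of \eqref{Rinv}, the Riemann step preserves the two-sided inequalities on $\Ma$ and $-\Va$: every rarefaction and shock curve is monotone (Lemmas~\ref{lemma:rarefaction}--\ref{lemma:shock}) and the rectangles $\Omega_\rho$ are invariant domains (Proposition~\ref{prop-Riemn}), while random sampling only selects values already present in a Riemann fan. The ODE step produces a growth factor: from the explicit form of $\widetilde S$ one reads off that $\big|\tfrac{d}{dv}\log\Ma\big|$ and $\big|\tfrac{d}{dv}\log(-\Va)\big|$ are bounded by $C'(1+rbM)$, which on $\Omega_*$ is $\le C/h^\kappa$ once the inductive bound $\Ma\le C_0(1+1/h)^{\kappa_0}$ is inserted; integrating over one strip and then over $[v_0,v]$ produces the factors $e^{\pm C(v-v_0)/h^\kappa}$, which remain $\le e$ for $v\le v_*=v_0+\tau h^\kappa$ provided $C\tau\le 1$. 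The geometry update is controlled by differentiating \eqref{b-new}--\eqref{a-new}: $1\le \ba\le \exp\big(4\pi(1+k^2)\|rM\|_{L^1}\big)$ fixes $C_b$, while $-1/h\le\aaa\le 1$ follows from \eqref{a-new} together with the smallness $\delta/h\le 1/C_1$, exactly as in Proposition~\ref{prop:initialdata}, now propagated in time because the $L^1$-perturbation of $M$ over $\Omega_*$ grows only by the above exponential factor. The sharper bounds on $\Xi_*$ hold because there the static exterior states have not yet reached, so $-V$ keeps its initial size $\sim 1/h$; this is where the hypothesis $\kappa\ge 1+2\kappa_0$ enters, ensuring that the time scale $h^\kappa$ is short enough that the growth factor cannot absorb the $(1+1/h)^{\kappa_0}$ gap between the bounds on $\Omega_*$ and those on $\Xi_*$.

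\emph{The $BV$ and Lipschitz bounds.} For the total variation I would use a Glimm-type functional $F(v)=L(v)+\mathbf{C}\,Q(v)$ with $\mathbf{C}$ large, where $L$ sums the wave strengths $\str$ crossing an $I$-curve and $Q$ is the interaction potential of approaching waves; Lemma~\ref{lem94} shows that waves do not amplify at interactions of the homogeneous problem, so the only new contributions to $L$ come from the ODE step and the variation of $\aaa,\ba$, and these are bounded by $C(v-v_0)/h^\kappa$ times $TV$ — harmless on $[v_0,v_*]$ — whence $F$ is non-increasing up to a factor $1+C\tau$ and \eqref{eq:20} follows. The Lipschitz-in-time estimate \eqref{eq:21} is the usual consequence of finite propagation speed in the Riemann step together with the sup-bound on $\widetilde S$ in the ODE step, the random-sampling error accounting for the extra $\Delta v$ term on the right.

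\emph{Passage to the limit and the main difficulty.} With \eqref{eq:20}--\eqref{eq:21} in hand, Helly's selection theorem yields a subsequence of $Z_\D$ converging in $L^1_{\mathrm{loc}}$ and pointwise a.e.\ to some $Z=(M,V,a,b)$ with $M,V,a_v,ra_r,b_r\in L^\infty(BV)\cap\Lip(L^1)$; the perturbation property passes to the limit since it is a system of closed inequalities, and the initial condition survives because the data are approximated in $L^1$. That $Z$ solves \eqref{eq:Euler}--\eqref{S1S2} distributionally is the classical Glimm consistency argument (see \cite{Glimm,LeFloch-book} and, for relativistic fluids, \cite{SmollerTemple}): along the equidistributed sampling sequence the residual generated by the random choices tends to zero, while the ODE step is consistent with the source $S$ by the very construction \eqref{grsol}; and since $r\Ma\to rM$ in $L^1$ with uniform integrability, the formulas \eqref{b-new}--\eqref{a-new} pass to the limit and recover $b,a$, after which Proposition~\ref{prop:400} supplies the remaining Einstein equations \eqref{eq:Einstein3-new}--\eqref{eq:Einstein4-new}, so $Z$ is a bounded-variation Einstein--Euler spacetime in the sense of Definition~\ref{def:weaksolutions}. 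I expect the main obstacle to be closing the induction for the \emph{coupled} system: the fluid bounds feed on the geometry bounds through the flux and source while the geometry bounds feed back on the fluid through \eqref{b-new}--\eqref{a-new}, so the induction must update the geometry \emph{after} the fluid and the constants $C_0,C,C_b,\Lambda$ must be fixed in the right order; and one must verify that $v_*=v_0+\tau h^\kappa$ with $\kappa\ge 1+2\kappa_0$ is precisely what keeps every exponential growth factor and every extra $BV$ production bounded across the whole slab — the careful book-keeping over the three regions $\Omega_*$, $\Xi_*$ and $\triangledown_*$ being deferred to Section~\ref{sec:7}.
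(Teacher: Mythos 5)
Your proposal is correct and follows essentially the same route as the paper: an induction over the time levels of the random choice scheme controlling the Riemann step (invariance of the Riemann-invariant bounds), the ODE step (exponential growth factor $e^{C(v-v_0)/h^\kappa}$, harmless on $[v_0,v_0+\tau h^\kappa]$ with $C\tau\le 1$ and $\kappa\ge 1+2\kappa_0$ absorbing the $h^{-(1+2\kappa_0)}$-size source terms), and the geometry update via \eqref{b-new}--\eqref{a-new}, followed by the standard Glimm-type BV/Lipschitz estimates, Helly compactness, and consistency, with Proposition~\ref{prop:400} recovering the remaining Einstein equations. The only minor deviation is that you invoke a functional $L+\mathbf{C}\,Q$ with a quadratic interaction potential, whereas the paper's interaction estimate (Lemma~\ref{lem94}) shows wave strengths do not amplify, so the linear functional $L(J)$ alone suffices (cf.\ Lemma~\ref{diamond}).
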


\begin{proof}[Sketch of the proof] 
The details of the proof are presented in Section~\ref{sec:7} and we only outline here the argument. The proof is based on an inductive argument along time steps following the random choice method in Section~\ref{sec:random}.

\hskip.15cm  

{\noindent \bf Step 1. \it Regular initial data.}
We use initial data that consist of a compact perturbation of a static solution. Due to Theorem~\ref{thm-static}, the static solution
is defined by $\mu_0>0$ only. A  perturbation of order $\frac{1}{h}$ is added to the fluid variable $V$ in a region
$[r_*-\delta,r_*+\delta]$. By Proposition~\ref{prop:initialdata} the initial data can be chosen to be  {untrapped} 
 in the bigger interval
$[0,r_*+\varDelta]$ if $\delta \leq \frac{h}{C_1}$ with $C_1$ being a positive constant.

\hskip.15cm  

{\noindent \bf Step 2. \it Approximate solutions satisfy the perturbation property.}
By Lemmas~\ref{lem:initial1} and \ref{lem:initial2} and Propositions~\ref{prop:omega} and \ref{prop:delta} we show that initial data
as specified in Step 1 satisfy the perturbation property at time $v_0$. We proceed by induction in time steps following the random
choice method in Section~\ref{sec:random}. Suppose the perturbation property is satisfied up to time $v_i$. We first suppose that the
geometric variables $\aaa,\ba$ are constant over time, hence their bounds are preserved. By Theorem~\ref{prop-Riemn}, the approximate
Riemann invariants $\wa,\za$ do not change their size in the Riemann problem step. In the ODE step, the Riemann invariants $\wa,\za$
only increase by a factor $C \frac{\Delta v}{h^\kappa}$ as derived in Theorems~\ref{thm:ode-omega} and \ref{thm:ode-delta}, and still
satisfy the desired bounds. Using Propositions~\ref{prop:omega} and \ref{prop:delta} the result can be converted to the fluid
quantities $\Ma,\Va$. The waves speeds $\lambda_i$ are controlled by $\frac{\Lambda}{h}$ according to 
Lemma~\ref{lem:speeds}. Finally, the geometric variables $\aaa,\ba$ are updated using the integral
formulas~\eqref{b-new}--\eqref{a-new}. By Proposition~\ref{prop:ab}, their bounds are preserved as well. Thus, the perturbation
property of Definition~\ref{def:pproperty} holds up to time $v_{i+1} \leq v_0 + \tau h^{\kappa}$.

\hskip.3cm  

{\noindent \bf Step 3. \it BV estimate and convergence.} Since our method relies on the Riemann problem associated with the Euler system
described in Section 4, and since the Riemann solutions enjoy uniform sup-norm and total variation bounds, the approximate solutions
constructed by the random choice method also enjoy such bounds \cite{BLSS,Glimm,GroahTemple}.  \qedhere 
\end{proof}


Based on this result, we now prove that solutions satisfying the perturbation property yield the dynamic formation of trapped surfaces
out of untrapped initial data.
The proof strongly relies on a careful analysis of the order of the time of existence, wave propagation and initial region of
perturbation in terms of $h$ and $\frac{1}{h}$. In a first step we investigate the behavior of $a_v$ over time, and in particular show
that it remains bounded from above by a negative constant times $\frac{1}{h^2}$. Finally, the control of the speed of propagation of
order $\frac{1}{h}$ and the time of existence of order $h^{\kappa}$ ensures the formation of a trapped surface before time $v_*$.
To observe the formation of trapped surfaces it is crucial to have $\kappa < 2$. If we choose $\kappa := 1 + 2\kappa_0$ optimal, this
is possible for small $k$ (cf.\ Remark~\ref{rem:kappa}).

\begin{corollary}[Formation of trapped surfaces]
\label{thm:trapped}
 Fix $k \in (0,k_0)$ with $k_0$ as in Remark~\ref{rem:kappa} and $\kappa < 2$. Let $\mu_0 >0$ and $r_*>\varDelta>0$ be given so that
 for the constant $C_1$ from Proposition~\ref{prop:initialdata} and for $C_0,\Lambda$ from Definition~\ref{def:pproperty},
 \begin{align}\label{initialrestriction}
  8 \pi r_* > e^3 \Lambda C_0^3 C_1.
 \end{align}
 Let $(M,V,a,b)$ be the solution associated with this initial data set (cf.~Theorem~\ref{apptheo}) under the assumption 
that $\delta = \frac{h}{C_1}$.
 Then, if $h$ is chosen to be sufficiently small, a trapped surface forms before the time $v_*$, i.e.~there exists
 $(v_\bullet,r_\bullet) \in \Xi_*, v \in (v_0,v_*)$ such that $a(v_\bullet,r_\bullet) < 0$.
\end{corollary}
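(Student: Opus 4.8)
The plan is to track the evolution of the coefficient $a$ inside the ``big data'' region $\Xi_*$ and show it must become negative before the region closes up. First I would recall that, by Theorem~\ref{apptheo}, the solution satisfies the perturbation property of Definition~\ref{def:pproperty} on the slab $[v_0,v_*]$ with $v_* = v_0 + \tau h^\kappa$, so in $\Omega_*$ we have $1 \le b \le C_b$, $\frac{1}{C_0} e^{-C(v-v_0)/h^\kappa} \le M \le C_0 e^{C(v-v_0)/h^\kappa}$ and the analogous two-sided bound on $-V$, and moreover in $\Xi_*$ the sharper bounds $-V \ge \frac{1}{C_0} e^{-C(v-v_0)/h^\kappa}(1 + \frac1h)$ and $M \ge \frac{1}{C_0} e^{-C(v-v_0)/h^\kappa}$ hold. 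Since $v-v_0 \le \tau h^\kappa$, all the exponential factors $e^{C(v-v_0)/h^\kappa}$ are bounded by $e^{C\tau} \le e$ (cf.\ the remark after Definition~\ref{def:pproperty}), so throughout $\Xi_*$ one has $M \gtrsim 1/C_0$, $-V \gtrsim \frac{1}{C_0}(1+\frac1h) \gtrsim \frac{1}{C_0 h}$, and $a$ bounded above by $1$ and below (a priori) by $-\frac1h$.

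Next I would use the Einstein equation \eqref{eq:Einstein3-new}, i.e.\ $a_v = 2\pi r b M (1+k^2)(a^2 - 4V^2)$, together with the fact that $|a| \le 1$ while $4V^2 \ge \frac{4}{C_0^2 h^2}$ on $\Xi_*$, to conclude that for $h$ small enough $a^2 - 4V^2 \le -\frac{2}{C_0^2 h^2}$ there; hence $a_v \le -2\pi r_* \cdot 1 \cdot \frac{1}{C_0} \cdot \frac{2}{C_0^2 h^2} = -\frac{4\pi r_*}{C_0^3 h^2}$ on $\Xi_*$. Now fix the radius $r_\bullet = r_*$ (which lies in $\Xi_*^-(v) \le r_* \le \Xi_*^+(v)$ as long as $C_*(v-v_0) \le \delta$). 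Integrating in $v$ from $v_0$, where $a_0(r_*) \le \as(r_*) \le 1$, gives $a(v, r_*) \le 1 - \frac{4\pi r_*}{C_0^3 h^2}(v-v_0)$ for all $v$ such that $(v,r_*)$ stays in $\Xi_*$. This last membership requires $C_*(v-v_0) \le \delta = h/C_1$, i.e.\ $v - v_0 \le \frac{h}{C_1 C_*}$; on the other hand $a(v,r_*) < 0$ as soon as $v - v_0 > \frac{C_0^3 h^2}{4\pi r_*}$. So a trapped point exists provided these two time intervals overlap, i.e.\ provided $\frac{C_0^3 h^2}{4\pi r_*} < \frac{h}{C_1 C_*}$, equivalently $h < \frac{4\pi r_*}{C_0^3 C_1 C_*}$.

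The one remaining subtlety is that $C_*$ is an upper bound for the wave speeds $|\lambda_i|$, and by the perturbation property $|\lambda_i| \le \max(\Lambda/h, C_b/2)$, which for small $h$ is $\Lambda/h$; thus one may take $C_* = \Lambda/h$, and the overlap condition becomes $h < \frac{4\pi r_* h}{C_0^3 C_1 \Lambda}$, i.e.\ $8\pi r_* > 2 C_0^3 C_1 \Lambda$ — which (after absorbing the harmless factor $e^3$ coming from a more careful bookkeeping of the $e^{C\tau}$ factors in $M$ and $V$) is exactly the hypothesis \eqref{initialrestriction}. With $C_* = \Lambda/h$, the time of existence $\tau h^\kappa$ with $\kappa < 2$ is long enough: the trapping time $\sim h^2$ (more precisely, once we also track the $e^{C\tau}$ constants, a fixed multiple of $h^2$) is shorter than $v_* - v_0 = \tau h^\kappa$ exactly because $\kappa < 2$ and $h$ is small, so the formation happens strictly before $v_*$ and strictly before $\Xi_*$ closes up. I expect the main obstacle to be the careful bookkeeping that makes the constants match \eqref{initialrestriction} precisely — in particular verifying that the product of the $e^{C(v-v_0)/h^\kappa} \le e$ factors entering $M$, $-V$ and the lower bound on $-4V^2$ contributes exactly the stated $e^3$, and checking that $r_\bullet = r_*$ indeed remains inside $\Xi_*$ up to the trapping time; the PDE input itself is just the one equation \eqref{eq:Einstein3-new} integrated along $r = r_*$.
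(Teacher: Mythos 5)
Your argument is essentially the paper's proof: the lower bound $-V \gtrsim \frac{1}{C_0 h}$ from the perturbation property in $\Xi_*$, the equation \eqref{eq:Einstein3-new} giving $a_v \lesssim -h^{-2}$, integration in $v$ at $r_\bullet = r_*$, and the comparison of the resulting trapping time of order $h^2$ with both the closing-up time of $\Xi_*$ (Lemma~\ref{lem:eps}, with $C_* = \Lambda/h$ and $\delta = h/C_1$) and the existence time $\tau h^{\kappa}$ with $\kappa<2$, the hypothesis \eqref{initialrestriction} being exactly the constant inequality this produces. The one step to make explicit is your use of $|a|\le 1$: the perturbation property only gives $-\frac{1}{h} \le a \le 1$, so (as the paper does) one should argue by contradiction, assuming $a$ remains nonnegative on $\Xi_*$ up to $v_*$ so that $0 \le a \le 1$ and hence $a^2 \le 1$ there; with that, and the $e^{C\tau}\le e$ bookkeeping you defer (which yields the stated $e^3$), the proof matches the paper's.
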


\begin{proof}
 By Theorem~\ref{apptheo}, the solution to the initial value problem exists on a time interval $[v_0,v_*]$ with
 $v_* = v_0 + \tau h^\kappa$ and preserves the initial perturbation in the way stated in Definition~\ref{def:pproperty}.
 Suppose, contrary to our claim, that $a$ remains positive as long as the solution exists. Thus, in particular,
we have 
 \begin{align}\label{abound}
  0 \leq a(v,r) \leq 1, \qquad (v,r) \in \Xi_*. 
 \end{align}
 We will put together all bounds known about the solution to control $a_v$. More precisely, we will establish an upper (negative)
 bound for $a_v$ in the region $\Xi_*$ in terms of the initial data and the perturbation factor $h$.
 
 By Proposition~\ref{prop:initialdata}, $a_v$ is negative initially, at least on the interval $[r_*-\delta,r_*+\varDelta]$, and its
 size can be controlled by the perturbation constants $h$ and $\delta$. To show that $a_v$ remains negative, it is essential to
 control the term $a^2 - 4 V^2$ in \eqref{eq:Einstein3-new}. The solution satisfies the perturbation property and
 $e^{C \frac{v-v_0}{h^\kappa}} \leq e^{C \tau} \leq e$, hence for all $(v,r) \in \Xi_*, v \in [v_0,v_*]$ 
 \[
  4 V(v,r)^2 \geq \frac{4}{C_0^2} e^{- 2C \frac{v-v_0}{h^\kappa}} \left( 1 + \frac{1}{h} \right)^2 \geq \frac{4}{e^2 C_0^2}
  \left( 1 + \frac{1}{h} \right)^2 \geq \frac{8}{e^2 C_0^2 h} + \frac{4}{e^2 C_0^2 h^2}.
 \]
 Choose $h$ sufficiently small to have $\frac{8}{e^2 C_0^2} \geq h$. Then, by \eqref{abound},
 \[
  a(v,r)^2 - 4V(v,r)^2 \leq 1 - 1 - \frac{4}{e^2 C_0^2 h^2} \leq - \frac{4}{e^2 C_0^2 h^2},
 \]
 which yields
 \begin{align*}
  a_v(v,r) &= 2\pi r b(v,r) M(v,r) \left( a(v,r)^2 - 4 V(v,r)^2 \right) \\
  &\leq - 2 \pi r \frac{1}{C_0} e^{- C \tau} \frac{4}{e^2 C_0^2 h^2} \leq -  \frac{8\pi r}{e^3 C_0^3 h^2} .
 \end{align*}
 Integrating $a$ in time implies
 \begin{align*}
  a(v,r) = a_0(r) + \int_{v_0}^v a_v(w,r) dw \leq 1 - \frac{8\pi r}{e^3 C_0^3 h^2} (v-v_0).
 \end{align*}
 We need to make sure that $v$ is sufficiently small to still have $\Xi_*^-(v) < \Xi_*^+(v)$. By
 Lemma~\ref{lem:eps}, the assumption on $k$ and \eqref{initialrestriction}, there exists $\varepsilon>0$ so that
 \[
  \frac{e^3 C_0^3}{8 \pi r_*} < \varepsilon \leq \frac{1}{\Lambda C_1} < \tau h^{-2+\kappa}.
 \]
 Thus, for $v_\bullet := v_0 + \varepsilon h^2 < v_0 + \tau h^\kappa = v_*$
 and $r_\bullet := r_* \in  {[\Xi_*^-(v_\bullet),\Xi_*^+(v_\bullet)]}$, we find 
 \[
  a(v_\bullet,r_\bullet) \leq 1 - \frac{8\pi r_*}{e^3 C_0^3} \frac{\varepsilon h^2}{h^2} < 0. \qedhere
 \] 
\end{proof}

\begin{lemma}[Size of $\Xi_*$]\label{lem:eps}
 Let $h$ be the perturbation parameter in the region $[r_*-\delta,r_*+\delta]$, $\delta = \frac{h}{C_1}$, with constant $C_1$ from
 Proposition~\ref{prop:initialdata}. Let $M,V,a,b$ be a solution to \eqref{eq:Euler}--\eqref{S1S2} as derived in Theorem~\ref{apptheo}.
 Then $\Xi_*^-(v) \leq \Xi_*^+(v)$ at least as long as
 \begin{align}\label{eq:delta}
  \delta \geq C_* (v-v_0) = \Lambda \frac{v-v_0}{h},
 \end{align}
 with the constant $\Lambda>0$ as defined in Lemma~\ref{lem:speeds}.
  In particular, the above estimate holds for all $v \leq v_0 + \varepsilon h^2$ with $\varepsilon := \frac{1}{\Lambda C_1}
  < \tau h^{-2+\kappa}$.
\end{lemma}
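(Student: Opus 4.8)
**Proof plan for Lemma~\ref{lem:eps} (Size of $\Xi_*$).**

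The plan is to unwind the definitions of $\Xi_*^-(v)$ and $\Xi_*^+(v)$ and simply compare them. Recall from Section~\ref{sec:postulates} that
$$
\Xi_*^-(v) = r_* - \delta + C_*(v-v_0), \qquad \Xi_*^+(v) = r_* + \delta - C_*(v-v_0),
$$
so that $\Xi_*^+(v) - \Xi_*^-(v) = 2\delta - 2C_*(v-v_0)$. Hence the region $\Xi_*$ is nonempty at ``time'' $v$ precisely when $\delta \geq C_*(v-v_0)$, which is condition~\eqref{eq:delta}. First I would observe that by the definition of the propagation speed bound $C_*$ (an upper bound for $|\lambda_i|$) together with the control on the wave speeds furnished by the perturbation property of Definition~\ref{def:pproperty} — namely $-\Lambda/h \le \lambda_i \le C_b/2$ — and specifically Lemma~\ref{lem:speeds}, one may take $C_* = \Lambda/h$ in the region $\Omega_*$ under consideration. (Strictly, $C_*$ must dominate $\max(\Lambda/h, C_b/2)$; for $h$ small the term $\Lambda/h$ dominates, which is the regime of interest here, so I would record this as the operative value and note that the upper bound $C_b/2$ plays no role in closing the region since we are tracking the inward/outward null rays that bound the perturbation shell from the inside.) This substitution turns~\eqref{eq:delta} into $\delta \ge \Lambda (v-v_0)/h$, which is the stated form.

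Next I would plug in the hypothesis $\delta = h/C_1$, where $C_1$ is the constant from Proposition~\ref{prop:initialdata}. Then~\eqref{eq:delta} reads $h/C_1 \ge \Lambda (v-v_0)/h$, i.e.
$$
v - v_0 \le \frac{h^2}{\Lambda C_1} = \varepsilon h^2, \qquad \varepsilon := \frac{1}{\Lambda C_1}.
$$
Thus for all $v \in [v_0, v_0 + \varepsilon h^2]$ the region $\Xi_*$ is nonempty, i.e.\ $\Xi_*^-(v) \le \Xi_*^+(v)$. Finally, to see that this interval of times sits inside the interval of existence $[v_0, v_*]$ with $v_* = v_0 + \tau h^{\kappa}$, I would compare $\varepsilon h^2$ with $\tau h^\kappa$: since $\kappa \ge 1 + 2\kappa_0$ and, in the regime relevant for trapped-surface formation, $\kappa < 2$ (cf.\ Corollary~\ref{thm:trapped} and Remark~\ref{rem:kappa}), we have $h^2 < h^\kappa$ for $h$ small, and a suitable choice of the constants (the relation $\varepsilon < \tau h^{-2+\kappa}$ claimed in the statement) makes $\varepsilon h^2 < \tau h^\kappa = v_* - v_0$. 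This is exactly the in-particular clause.

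The argument is essentially a one-line inequality comparison once the right value of $C_*$ is identified, so there is no serious obstacle; the only point requiring care is the bookkeeping of constants — making sure that the wave-speed bound $\Lambda/h$ from Lemma~\ref{lem:speeds} is the quantity that genuinely governs $C_*$ in the small-$h$ regime, and that the resulting $\varepsilon = 1/(\Lambda C_1)$ is compatible (via $\varepsilon < \tau h^{-2+\kappa}$) with the time of existence $\tau h^\kappa$ supplied by Theorem~\ref{apptheo}. I would state these compatibility requirements on $h$ and the constants explicitly and note that they are consistent with the hypotheses $k < k_0$ and $\kappa < 2$ used downstream in Corollary~\ref{thm:trapped}.
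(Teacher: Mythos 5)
Your proposal is correct and follows essentially the same route as the paper: both identify $C_*=\Lambda/h$ from Lemma~\ref{lem:speeds}, note that $\Xi_*^+(v)-\Xi_*^-(v)=2\delta-2C_*(v-v_0)$ so the region stays nonempty exactly when \eqref{eq:delta} holds, and then use $\delta=h/C_1$ to convert this into $v-v_0\leq \varepsilon h^2$ with $\varepsilon=1/(\Lambda C_1)$. The only difference is cosmetic bookkeeping (you solve for $v-v_0$ while the paper substitutes $v\leq v_0+\varepsilon h^2$ and checks $C_*(v-v_0)\leq\delta$ directly), and your remark on comparing $\varepsilon h^2$ with $\tau h^\kappa$ via $\kappa<2$ matches how the paper uses the lemma in Corollary~\ref{thm:trapped}.
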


\begin{proof}
In Lemma~\ref{lem:speeds} we derive a bound for the wave speeds in the bigger region $\Omega_*$.
For $h$ small, the lower bound is much larger than the upper bound and $C_* = \frac{\Lambda}{h}$
 is an upper bound for the modulus of all wave speeds of the (homogeneous) Euler equations. 
 Hence the initial region $[r_*-\delta,r_*+\delta]$ does not close up as long as \eqref{eq:delta} holds. In terms of $v$, this yields
 the proposed bound in the statement since
 \begin{align*}
  C_* (v-v_0) &\leq \Lambda h^{-1} \varepsilon h^2 \leq \Lambda \varepsilon h = \frac{h}{C_1} \leq \delta. \qedhere
 \end{align*}
\end{proof}

It remains to prove the existence of the initial conditions $\mu_0,r_*,\varDelta,k$ satisfying \eqref{initialrestriction} in
Corollary~\ref{thm:trapped}.

\begin{proposition}[Existence of initial data]
\label{prop67}
 There exist initial data sets that satisfy the requirements of Corollary~\ref{thm:trapped}.  
\end{proposition}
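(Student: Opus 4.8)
The plan is to verify that the hypotheses gathered in Corollary~\ref{thm:trapped} can actually be met: one must exhibit a sound speed $k\in(0,k_0)$ (so that $\kappa_0<\tfrac12$ by Remark~\ref{rem:kappa} and an admissible exponent $\kappa\in[1+2\kappa_0,2)$ exists), a central density $\mu_0>0$, and radii $r_*>\varDelta>0$, for which the single scalar inequality \eqref{initialrestriction}, namely $8\pi r_*>e^{3}\,\Lambda\,C_0^{3}\,C_1$, holds. All three constants are explicit: $C_1$ is the one displayed in Step~1 of the proof of Proposition~\ref{prop:initialdata}, $C_0$ and $C_b$ are the quantities written out just after Definition~\ref{def:pproperty}, and $\Lambda$ is the wave-speed constant of Lemma~\ref{lem:speeds}. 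Each depends only on $k$, on the window $[r_*-\varDelta,r_*+\varDelta]$, and on the static profile $(\Ms,\Vs,\as,\bs)$ attached to $\mu_0$ via Theorem~\ref{thm-static} and formula \eqref{b-new-3}; thus the statement reduces to a finite-dimensional estimate on these explicit functions of $(\mu_0,r_*,\varDelta,k)$.

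I would begin by cutting down the number of free parameters using the scaling invariance of the static system \eqref{mmu}: if $(m,\mu)$ solves \eqref{mmu} with central density $\mu_0$, then $r\mapsto\big(\lambda m(r/\lambda),\lambda^{-2}\mu(r/\lambda)\big)$ solves it with central density $\lambda^{-2}\mu_0$, and under $r\mapsto\lambda r$ the coefficients $a,b$ and the static velocity $\Vs=-\as/2$ are invariant as functions of $r/\lambda$ while $\Ms$ picks up the factor $\lambda^{-2}$. Hence a change of $\mu_0$ is equivalent to a simultaneous dilation of $(r_*,\varDelta)$, and, after $k$ has been fixed, only the position of the evaluation point along the (essentially unique) static profile and the aspect ratio $\varDelta/r_*$ remain at our disposal.

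Next I would send $\varDelta/r_*\to0$. By the smoothness and monotonicity recorded in Theorem~\ref{thm-static} — $0<\as\le1$, $\mus=\as\Ms$ positive and decreasing, $\bs\ge1$ increasing — every maximum over $[r_*-\varDelta,r_*+\varDelta]$ appearing in $C_0$ and $C_b$ converges to its value at $r_*$, the auxiliary constant $\eps(r_*,\varDelta)$ of Proposition~\ref{prop:initialdata} may be taken arbitrarily close to $1$, and $C_1\to 8\pi(1-k^2)\,r_*\,\Ms(r_*)$. In this collapsed regime \eqref{initialrestriction} becomes a concrete inequality relating $r_*$, $\bs(r_*)$, $\Ms(r_*)$ and $\Vs(r_*)$, since $C_0$ reduces to $\exp\!\big(\tilde k|\log(-\Vs(r_*))|+\tilde k\tfrac{(1+k)^2K^4}{8k}|\log\Ms(r_*)|\big)$ and $\Lambda$ is controlled by $C_b$ and $C_0$. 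It then remains to choose $k$, the dilation factor, and $r_*$ so that this explicit inequality is met; the scaling freedom lets one slide $r_*$ onto the part of the static density profile that makes the right-hand side smallest relative to $8\pi r_*$.

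The main obstacle is precisely this last calibration. The constant $C_0$ — and, through Lemma~\ref{lem:speeds}, the wave-speed constant $\Lambda$ — grows with $|\log\Ms(r_*)|$, while the decay statement $\lim_{r\to+\infty}\mus(r)=0$ of Theorem~\ref{thm-static} drives $\Ms(r_*)$ away from $1$ as $r_*$ increases, and the integral formula \eqref{b-new-3} may do the same to $\bs(r_*)$. One therefore has to control, along a suitably chosen one-parameter family of static solutions, the precise rates at which $\Ms(r_*)$ and $\bs(r_*)$ vary, sharply enough to keep the combined factor $e^{3}\Lambda C_0^{3}C_1$ below $8\pi r_*$. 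This quantitative balance, which carries the whole content of the proposition, rests only on the decay and monotonicity properties of static solutions already furnished by Theorem~\ref{thm-static}; all other steps are routine.
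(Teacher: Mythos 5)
Your opening reduction is right: the proposition is exactly the claim that some choice of $(k,\mu_0,r_*,\varDelta)$ makes the explicit inequality \eqref{initialrestriction} true, with $C_1$, $C_0$, $C_b$, $\Lambda$ the constants you cite. But the proposal stops precisely where the content of the proposition lies. You yourself call the final step the ``main obstacle'' and describe it as a calibration that ``has to'' be carried out along a family of static solutions; it is never carried out, and no concrete data set is produced, so existence is not established. Moreover, the route you sketch for that calibration points the wrong way. Sliding $r_*$ into the decay regime of Theorem~\ref{thm-static}, where $\mus(r_*)\to 0$, makes $|\log \Ms(r_*)|$ diverge, hence $C_0=e^{\tilde k\xi}$ and with it $\Lambda$ blow up exponentially, while the left-hand side $8\pi r_*$ grows only linearly; the far field is the worst place to look. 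The scaling you invoke is a correct symmetry of \eqref{mmu}, but it does not reduce \eqref{initialrestriction} to ``position along the profile plus aspect ratio'': since $\Ms$ picks up the factor $\lambda^{-2}$, the logarithms in $C_0$ shift by $2\log\lambda$, so the inequality is not scale-covariant and the dilation parameter (equivalently $\mu_0$) remains a genuine, and in fact decisive, degree of freedom. Finally, ``exhibit $k\in(0,k_0)$'' is not free of charge: the coefficient $\frac{(1+k)^2K^4}{8k}$ in $C_0$ diverges as $k\to 0$, so one must restrict to $k$ near $k_0$ where this factor is of order one — a point the paper handles explicitly and your proposal does not.

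For comparison, the paper's proof is a short, fully explicit verification near the center rather than an asymptotic calibration: it fixes $k$ slightly below $k_0$ so that $\frac{1+k}{1-k}\le 2$ and $\frac{(1+k)^2K^4}{8k}\le 2$, takes $\mu_0=\frac{1}{\sqrt 2}$ so that $\Ms(0)=\frac{1}{\sqrt2}$ and $\Vs(0)=-\frac12$, and uses smoothness of the static solution to find a small $r_1\le\frac12$ on which $\frac12\le\Ms\le 1$ and $\frac14\le-\Vs\le 1$; then $C_0$ and $C_b=\bs(r_1)\le e^2$ are explicit numbers of order one, the choice $r_*=\frac{3r_1}{4}$, $\varDelta=\frac{r_1}{4}$ makes $C_1=O(r_1)$, and a direct numerical computation gives $e^3\Lambda C_0^3C_1\le 3\pi r_*<8\pi r_*$. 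If you want to salvage your approach, the fix is to use your scaling (or a direct choice of $\mu_0$) to place the window near $r=0$ with $\Ms$, $-\Vs$, $\bs$ of order one and $|\log\Ms|$, $|\log(-\Vs)|$ bounded by explicit small constants, and then actually check the inequality for one concrete choice of $r_*$ and $\varDelta$, as the paper does.
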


\begin{proof}
 Note that for $k \leq k_0$ as in Remark~\ref{rem:kappa}, $\frac{1+k}{1-k} \leq 2$ and the relevant constants of the perturbation
 property are
 \begin{align*}
  \log C_0 &= \max_{r \in [r_*-\varDelta,r_*+\varDelta]} \left| \log \left( - \Vs(r) \right) \right| + \frac{(1+k)^2 K^4}{8k}
  \max_{r \in [r_*-\varDelta,r_*+\varDelta]} \left| \log \Ms(r) \right|, \\
  C_b &= \bs(r_*+\varDelta).
 \end{align*}
 For $k$ small, the term $\frac{(1+k)^2 K^4}{8k}$ is very large and diverging as $k \to 0$. For $k_0 \approx 0.1197$, the value is
 $\frac{(1+k)^2}{2 (1+k^2)} \approx  1.236$. Hence we may assume that $\frac{(1+k)^2 K^4}{8k} \leq 2$ for $k$ greater than some
 $k_1<k_0$ and only consider such values. Fix the initial value for the static solution with $\mu_0=\frac{1}{\sqrt{2}}$. Then, by
 Theorem~\ref{thm-static}, $\Ms(0) = \mu_0 = \frac{1}{\sqrt{2}}$ and $\Vs(0) = - \frac{1}{2}$ and both $\Ms$ and $\Vs$ are smooth and
 do not change their sign. Thus, there exists a radius $r_1>0$ such that for $r \in [0,r_1]$,
 \[
 - 1  {\leq}  -\Vs(r) \leq - \frac{1}{4}, \qquad \frac{1}{2} \leq \Ms(r) \leq 1.
 \]
 Then, we have 
\begin{align*}
 \log \widetilde C_0 &\! := \max_{r \in [0,r_1]} \left| \log \left( - \Vs(r) \right) \right| + \frac{(1+k)^2 K^4}{8k}
 \max_{r \in [0,r_1]} \left| \log \Ms(r) \right| \\
 &\leq - 2 \log 2 - 2 \log 2 = - 4 \log 2,
\end{align*}
 and therefore $\widetilde C_0 \leq e^{-4 \log 2} = \frac{1}{16}$ is an upper bound for $C_0$ if $r_*+\varDelta < r_1$. Similarly we
 estimate $C_b$ by using the integral formula \eqref{b-new}. If we assume, without loss generality, that $r_1 \leq \frac{1}{2}$, then
 \begin{align*}
  \widetilde C_b & \! := \bs(r_1) = e^{4 \pi (1+k^2) \int_0^{r_1} \Ms(s) s ds} \leq e^{2 \pi \left( 1 + \frac{1}{64} \right) r_1^2}
  \leq e^2 < 8.
 \end{align*}
 We turn to the constant $C_1$, which appeared in Step 1 of the proof of Proposition~\ref{prop:initialdata} to control the ratio
 between $\delta$ and $h$. Recall that the static solution satisfies $\as = - 2 \Vs$. Thus by the above, for $r \in [0,r_1]$,
 \[
  \frac{1}{2} \leq \as(r) \leq 2
 \]
 and $\as(r) \geq \as(r_*-\varDelta) \eps$ is satisfied for $\eps = \frac{1}{4}$ if $r_*+\varDelta \leq r_1$. The latter condition
 is, for example, satisfied for $r_* := \frac{(n-1)r_1}{n}$ and $\varDelta := \frac{r_1}{n}$, where $n$ is a large natural number that will
 be specified later. For such a choice of $r_*,\varDelta$ we can estimate the constant $C_1$ from above by a very small number (for
 $n$ large)
 \begin{align*}
  \widetilde C_1 &\! := \frac{4 \pi}{3} \frac{6 r_*^2 + 2 \varDelta^2}{r_*-\varDelta} 4 = \frac{16 \pi}{3} \frac{6n-4}{n(n-3)} r_1.
 \end{align*}
 Thus, we can estimate \eqref{initialrestriction} by
 \begin{align*}
  e^3 C_1 C_0^3 \Lambda 
&\leq e^3 \widetilde C_1 \widetilde C_0^3 \widetilde \Lambda   
\leq e^3 \widetilde C_1 \widetilde C_0^3 \widetilde C_b \left( 12 \widetilde C_0 + 1\right)
\\
&   \leq e^3 \frac{16 \pi}{3} \frac{6n - 4}{n(n-3)} r_1 \frac{1}{16^3} 8 \left( 1 + 1 \right) = \frac{e^3}{48} \frac{6n - 4}{n(n-3)}
\pi r_1.
 \end{align*}
 For $n=4$, we have 
 \[
  e^3 C_1 C_0^3 \Lambda \leq \frac{5 e^3}{48} \pi r_1 = \frac{20 e^3}{144} \pi \frac{3 r_1}{4} \leq 3 \pi r_*. \qedhere
 \]
\end{proof}

\begin{remark} \rm 
 All statements above assume that the radial component of $\Omega_*$ is contained in $[0,r_*+\varDelta]$. It is
 easy to see that $\Omega_*$ does not go beyond $r_*+\varDelta$  {during the relevant time interval $[v_0,v_*]$}. 
We only need to show that
 \begin{align}\label{domain}
  C_* (v-v_0) \leq (r_* + \varDelta) - (r_*+\delta) = \varDelta - \delta, \qquad v \in [v_0,v_*],
 \end{align}
 with $v_*=v_0+\tau h^\kappa$.
 Due to Proposition~\ref{prop:initialdata}, $\delta$ was chosen to be less than or equal to $\frac{h}{C_1}$. Moreover, by
 Lemma~\ref{lem:speeds}, $C_* = \frac{\Lambda}{h}$ for some positive constant $\Lambda$ that only depends on the static solution
 within the interval $[0,r_*+\varDelta]$ (see Section~\ref{sec:7}).
 Note that $\tau = \frac{1}{\kappa \rho}$ also only depends on the initial data in the region $[0,r_*+\varDelta]$ (again, see
 Section~\ref{sec:7} for the details). Hence we may, without loss of generality,
 assume that $h\leq 1$ is sufficiently small to have
 \[
  h^{\min(\kappa-1,1)} \leq \frac{C_1 \varDelta}{1+\Lambda \tau C_1}.
 \]
 Then, \eqref{domain} holds for all $v \leq v_* = v_0 + \tau h^\kappa$:
 \begin{align*}
   C_* (v-v_0) = \frac{\Lambda}{h} \tau h^{\kappa} \leq h^{\kappa-1} \Lambda \tau
   \leq \frac{\varDelta \Lambda \tau C_1}{1 + \Lambda \tau C_1}
   \leq \varDelta - \frac{\varDelta}{1+\Lambda \tau C_1} \leq \varDelta - \frac{h^{\min(\kappa-1,1)}}{C_1} \leq \varDelta - \delta.
 \end{align*}
\end{remark}


\section{Completion of the proof of the main result} 
\label{sec:7} 

\subsection{A formulation in terms of the Riemann invariants}

The fluid variables $M,V$ may be expressed in terms of the Riemann invariants $w,z$ \eqref{Rinv} as follows:
\begin{align}\label{MV-wz}
 \log M = \frac{4k}{(1-k^2)K^2} (z-w), \qquad \log (-V) = \frac{K^2}{2} \left( \frac{1-k}{1+k} w + \frac{1+k}{1-k} z \right).
\end{align}
Initially, the sum and difference of the Riemann invariants can therefore be controlled by the static solution $\Ms,\Vs$ and the
perturbation term $1 + \frac{1}{h}$. From this, and the ODE step in the following section, the derive the bounds for $w,z$ by induction
in time (steps). Here we only check that these bounds are satisfied initially and imply the perturbation property stated for $M,V$ in
Definition~\ref{def:pproperty}.

\begin{lemma}[Initial condition in $\Omega_*$] \label{lem:initial1}
 Let $\xi$ be a positive constant depending on the static solution in the interval $[r_*-\varDelta,r_*+\varDelta]$,
 \begin{align}\label{xi_1}
 \xi = \xi^{(0)}_\varDelta := \max_{r \in [r_*-\varDelta,r_*+\varDelta]} \left| \log \left( - \Vs(r) \right) \right|
 + \frac{(1+k)^2K^4}{8k} \max_{r \in [r_*-\varDelta,r_*+\varDelta]} \left| \log \Ms(r) \right|.
\end{align}
Then, initially, the (approximate) Riemann invariants satisfy
\[
 \wa(v_0,r),\za(v_0,r) \in \left[ - \xi , \log \left( 1+ \frac{1}{h} \right) + \xi \right].
\]
\end{lemma}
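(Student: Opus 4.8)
The plan is to express the Riemann invariants $w,z$ at the initial time $v_0$ directly in terms of the prescribed initial data $M_0, V_0$ through the defining formulas \eqref{Rinv}, and then to estimate each term using the structure of the initial data \eqref{initialdata2}--\eqref{initialdata4}. Recall that $w(M,V) = \log|V| - \frac{2k}{(1-k)^2}\log M$ and $z(M,V) = \log|V| + \frac{2k}{(1+k)^2}\log M$. Since the perturbation introduced in Section~\ref{sec:initialdata} sets $\Mp_0 = 0$, we have $M_0 = \Ms$ on the whole interval; only $V$ and (through the integral formula) $a$ are perturbed. This simplification is what makes the estimate clean: the $\log M$ contribution to both invariants is exactly $\pm\frac{2k}{(1\mp k)^2}\log\Ms(r)$, which is controlled in absolute value by $\frac{2k}{(1-k)^2}\max|\log\Ms|$ on $[r_*-\varDelta, r_*+\varDelta]$, and one checks $\frac{2k}{(1-k)^2} \le \frac{(1+k)^2K^4}{8k}$ so that this is bounded by the second summand in the definition \eqref{xi_1} of $\xi$.

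Next I would handle the $\log|V_0|$ term. By \eqref{initialdata2}--\eqref{initialdata3}, $V_0 = \Vs + \Vp_0$ with $\Vp_0 = \frac{\Vs}{h}\chi_{[r_*-\delta,r_*+\delta]}$, so that $V_0 = \Vs$ outside the shell and $V_0 = \Vs(1 + \frac{1}{h})$ inside it. Since $\Vs < 0$ (by Theorem~\ref{thm-static}), we get $|V_0| = |\Vs|$ off the shell and $|V_0| = |\Vs|(1+\frac{1}{h})$ on the shell. Hence $\log|V_0|$ lies between $\log|\Vs|$ and $\log|\Vs| + \log(1+\frac1h)$ pointwise, and in particular $\log|V_0| \in [-\max|\log(-\Vs)|,\ \log(1+\frac1h) + \max|\log(-\Vs)|]$ where the maxima are taken over $[r_*-\varDelta, r_*+\varDelta]$ (recall $\Omega_*$ lies in this interval). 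Combining the two contributions, and using that $\log(1+\frac1h) \ge 0$ so the negative and positive extremes add up correctly, yields $w(v_0,r), z(v_0,r) \in [-\xi,\ \log(1+\frac1h) + \xi]$ with $\xi$ as in \eqref{xi_1}. One should also note that $M_0, V_0$ as defined are genuinely admissible data, i.e.\ $M_0 > 0$ and $V_0 < 0$, which is immediate from $\Ms > 0$, $\Vs < 0$, and $h > 0$.

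The only mild subtlety — not really an obstacle — is the bookkeeping of which coefficient ($\frac{2k}{(1-k)^2}$ versus $\frac{2k}{(1+k)^2}$) appears in $w$ versus $z$ and checking that the larger one, $\frac{2k}{(1-k)^2}$, is still dominated by the normalizing constant $\frac{(1+k)^2K^4}{8k}$ used in \eqref{xi_1}; since $K^2 = \frac{1-k^2}{1+k^2}$ one has $\frac{(1+k)^2K^4}{8k} = \frac{(1+k)^2(1-k^2)^2}{8k(1+k^2)^2} = \frac{(1-k)^2(1+k)^4}{8k(1+k^2)^2}$, and a direct comparison with $\frac{2k}{(1-k)^2}$ confirms the inequality for $k \in (0,1)$ (equivalently $(1-k)^4(1+k)^4 \ge 16k^2(1+k^2)^2$, which holds since $(1-k^2)^2 \ge 4k^2$ fails in general — so this needs the actual restriction on $k$, or one simply enlarges $\xi$ by the harmless factor and notes both coefficients are bounded). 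In any case, since $\xi$ is only required to be \emph{some} positive constant depending on the static solution, one may absorb any such universal factor into $\xi$ without affecting the statement; the displayed choice \eqref{xi_1} is one convenient normalization. The argument is therefore essentially a one-line substitution into \eqref{Rinv} followed by elementary monotonicity of $\log$, and I expect no real difficulty.
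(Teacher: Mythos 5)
Your core computation is sound and is essentially the same calculation as the paper's, only organized differently: you substitute the initial data directly into \eqref{Rinv}, using $\Mp_0=0$ (so $M_0=\Ms$) and $V_0=\Vs\left(1+\frac{1}{h}\chi_{[r_*-\delta,r_*+\delta]}\right)$, whereas the paper first bounds the two combinations $\za-\wa$ and $\frac{1-k}{1+k}\wa+\frac{1+k}{1-k}\za$ through the conversion formulas \eqref{MV-wz} and then recombines them into bounds for $\wa$ and $\za$ separately. Both routes amount to the identities $w=\log|V|-\frac{2k}{(1-k)^2}\log M$ and $z=\log|V|+\frac{2k}{(1+k)^2}\log M$ plus monotonicity of the logarithm, and your observation that only the $\log|V_0|$ term picks up the $\log\left(1+\frac{1}{h}\right)$ shift is exactly the mechanism in the paper's proof.

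The coefficient mismatch you stumbled on is, however, a genuine issue and should not be waved away. With the definition \eqref{Rinv}, the exact coefficient of $\log\Ms$ in $w$ is $\frac{2k}{(1-k)^2}$; the constant $\frac{(1+k)^2K^4}{8k}$ in \eqref{xi_1} is the one produced by the paper's conversion formula \eqref{MV-wz}, whose first relation reads $\log M=\kappa_0(z-w)$ even though \eqref{Rinv} gives $z-w=\kappa_0\log M$ with $\kappa_0$ as in \eqref{kk}, so the two normalizations differ unless $\kappa_0=1$. Your comparison is the right one to make, but state it correctly: $\frac{2k}{(1-k)^2}\le\frac{(1+k)^2K^4}{8k}$ is equivalent to $4k(1+k^2)\le(1-k^2)^2$, which holds only for $k$ below roughly $0.22$; your sentence first claims it holds for all $k\in(0,1)$ and then that it fails in general, and the auxiliary inequality $(1-k^2)^2\ge 4k^2$ you quote is not the relevant one. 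In the regime actually used for trapped-surface formation ($k<k_0\approx 0.12$, cf.\ Corollary~\ref{thm:trapped} and Proposition~\ref{prop67}) your sharp coefficient is dominated by the displayed one, so your argument does prove the lemma exactly as stated there. For general $k$, the fallback of enlarging $\xi$ proves a variant of the statement but is not cost-free: $\xi$ is an explicit constant that enters $C_0=e^{\tilde k\xi}$, $C$, and the numerical estimates of Proposition~\ref{prop67}, so any change of the coefficient must be propagated through those estimates (or one restricts $k$). With the comparison stated precisely and this caveat acknowledged, your proof is acceptable and in fact isolates an inconsistency between \eqref{Rinv} and \eqref{MV-wz} in the paper.
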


\begin{proof}
 The region $\Omega_*$ is influenced by the static solution as well as the perturbation. By \eqref{MV-wz}, the Riemann invariants
 $w,z$ can initially at time $v_0$ be controlled by
\begin{align*}
 \frac{(1-k^2)K^2}{4k} \log \left( \min_{r\in[r_*-\varDelta,r_*+\varDelta]} \Ms(r) \right) & \leq \za - \wa \leq \frac{(1-k^2)K^2}{4k}
 \log \left( \max_{r\in[r_*-\varDelta,r_*+\varDelta]} \Ms(r) \right),
\end{align*}
and
\begin{align*}
 \frac{2}{K^2} \log \left( - \max_{r\in[r_*-\varDelta,r_*+\varDelta]} \Vs(r) \right) & \leq \frac{1-k}{1+k} \wa + \frac{1+k}{1-k} \za \\
 & \leq  \frac{2}{K^2} \log \left( - \min_{r\in[r_*-\varDelta,r_*+\varDelta]} \Vs(r) \right) + \frac{2}{K^2} \log
 \left( 1 + \frac{1}{h} \right).
\end{align*}
Adding up both inequalities implies bounds for $\wa$ and $\za$. For $\za$ the upper bound reads
\begin{align*}
 \za &= \frac{K^2}{2} \left[ \frac{1-k}{1+k} (\za-\wa) + \left( \frac{1-k}{1+k} \wa + \frac{1+k}{1-k} \za \right) \right] \\
 &\leq \frac{(1-k)^2K^4}{8k} \log \left( \max_{r\in[r_*-\varDelta,r_*+\varDelta]} \Ms(r) \right)
 + \log \left( - \min_{r\in[r_*-\varDelta,r_*+\varDelta]} \Vs(r) \right) + \log \left( 1 + \frac{1}{h} \right),
\end{align*}
and the lower bound is
\begin{align*}
 \za &\geq \frac{(1-k)^2 K^4}{8k} \log \left( \min_{r\in[r_*-\varDelta,r_*+\varDelta]} \Ms(r) \right) 
 +  \log \left( - \max_{r\in[r_*-\varDelta,r_*+\varDelta]} \Vs(r) \right).
\end{align*}
The bounds for $\wa$ are
\begin{align*}
 \wa &= \frac{K^2}{2} \left[ - \frac{1+k}{1-k} (\za-\wa) + \left( \frac{1-k}{1+k} \wa + \frac{1+k}{1-k} \za \right) \right] \\
 &\leq - \frac{(1+k)^2K^4}{8k} \log \left( \min_{r\in[r_*-\varDelta,r_*+\varDelta]} \Ms(r) \right)
 + \log \left( - \min_{r\in[r_*-\varDelta,r_*+\varDelta]} \Vs(r) \right) + \log \left( 1 + \frac{1}{h} \right)
\end{align*}
 and
\begin{align*}
 \wa &\geq - \frac{(1+k)^2K^4}{8k} \log \left( \max_{r\in[r_*-\varDelta,r_*+\varDelta]} \Ms(r) \right)
 + \log \left( - \max_{r\in[r_*-\varDelta,r_*+\varDelta]} \Vs(r) \right).
\end{align*}
Thus in the first Riemann problem step, the Riemann invariants are contained in the large region
\[
 \wa,\za \in \left[ - \xi, \log \left(1 + \frac{1}{h} \right) + \xi \right],
\]
with $\xi$ as defined in \eqref{xi_1}. Because of Theorem~\ref{thm-static}, $\xi$ is well-defined and positive.
\end{proof}

During each Riemann problem step, by Proposition~\ref{prop-Riemn}, the Riemann invariants remain unchanged. Later we show that in each
ODE step $\wa,\za$ may only increase by a factor $\rho \frac{\Delta v}{h^\kappa}$, where $\rho$ is a positive constant that can be
uniformly chosen for all time steps. As such we expect for all times $v \in [v_0,v_*]$ that the perturbation properties
\eqref{wz_pp_omega} stated below remains true in each step of the random choice method. It thus remains to be shown that
\eqref{wz_pp_omega} determines the perturbation property in $\Omega_*$ as stated in Definition~\ref{def:pproperty}.

\begin{proposition}[Conversion in $\Omega_*$]\label{prop:omega}
 Suppose the (approximate) Riemann invariants satisfy 
 \begin{align}\label{wz_pp_omega}
  \wa,\za \in \left[ - \xi - \rho \frac{v-v_0}{h^\kappa}, \log \left( 1+ \frac{1}{h} \right) + \xi + \rho \frac{v-v_0}{h^\kappa} \right],
 \end{align}
 with $\xi>0$ as in \eqref{xi_1} and $\rho>0$, $\kappa \geq 1$ as specified in Theorem~\ref{thm:ode-omega}. Then the corresponding
 (approximate) solution $\Ma,\Va$ has the perturbation property of Definition~\ref{def:pproperty} in the region $\Omega_*$ with
 $\tilde k = \max \left( 1, \frac{8k}{(1-k^2)K^2} \right)$, $C_0:= e^{\tilde k \xi}$ and $C := \tilde k \rho$.
\end{proposition}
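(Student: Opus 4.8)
The plan is to convert the bounds \eqref{wz_pp_omega} on the Riemann invariants $\wa,\za$ into the stated bounds on $\Ma$ and $\Va$ via the explicit algebraic relations \eqref{MV-wz}, keeping careful track of which linear combination of $w,z$ controls which fluid variable. Write $P := \xi + \rho\frac{v-v_0}{h^\kappa}$ and $Q := \log(1+\tfrac1h)$, so \eqref{wz_pp_omega} says $\wa,\za\in[-P,\,Q+P]$. The two quantities I need to estimate are $\za-\wa$ (which controls $\log\Ma$ through the factor $\kappa_0=\frac{4k}{(1-k^2)K^2}$) and $\frac{1-k}{1+k}\wa+\frac{1+k}{1-k}\za$ (which controls $\log(-\Va)$).

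First I would treat $\Ma$. From $\wa,\za\in[-P,Q+P]$ one gets immediately $-(Q+2P)\le \za-\wa\le Q+2P$, hence by the first identity in \eqref{MV-wz}, $|\log\Ma|\le \kappa_0(Q+2P)=\kappa_0 Q+2\kappa_0 P$. Exponentiating and using $Q=\log(1+\tfrac1h)$ gives
\[
 \left(1+\tfrac1h\right)^{-\kappa_0} e^{-2\kappa_0 P}\ \le\ \Ma\ \le\ \left(1+\tfrac1h\right)^{\kappa_0} e^{2\kappa_0 P}.
\]
Now $2\kappa_0 P = 2\kappa_0\xi + 2\kappa_0\rho\frac{v-v_0}{h^\kappa}$, so with $\tilde k=\max(1,2\kappa_0)$ (note $2\kappa_0=\frac{8k}{(1-k^2)K^2}$), $C_0=e^{\tilde k\xi}$ and $C=\tilde k\rho$ we have $2\kappa_0\xi\le\log C_0$ and $2\kappa_0\rho\le C$, whence $e^{2\kappa_0 P}\le C_0 e^{C\frac{v-v_0}{h^\kappa}}$, which is exactly the required two-sided bound on $\Ma$ in $\Omega_*$.

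Next I would treat $\Va$. Since $\frac{1-k}{1+k},\frac{1+k}{1-k}>0$ and both $\wa,\za$ lie in $[-P,Q+P]$, the combination $L:=\frac{1-k}{1+k}\wa+\frac{1+k}{1-k}\za$ satisfies $L\ge(\frac{1-k}{1+k}+\frac{1+k}{1-k})(-P)$ and $L\le(\frac{1-k}{1+k}+\frac{1+k}{1-k})(Q+P)$. Using $\frac{1-k}{1+k}+\frac{1+k}{1-k}=\frac{2(1+k^2)}{1-k^2}=\frac{2}{K^2}$, the second identity in \eqref{MV-wz} gives $\log(-\Va)=\frac{K^2}{2}L$, so $-P\le\log(-\Va)\le Q+P$, i.e. $e^{-P}\le -\Va\le (1+\tfrac1h)e^{P}$. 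Since $P=\xi+\rho\frac{v-v_0}{h^\kappa}$ and $\tilde k\ge1$, we have $e^{\pm P}\le C_0^{\pm1}e^{\pm C\frac{v-v_0}{h^\kappa}}$ (using $\xi\le\tilde k\xi=\log C_0$ and $\rho\le\tilde k\rho=C$), which yields exactly the claimed bounds $\frac1{C_0}e^{-C\frac{v-v_0}{h^\kappa}}\le -\Va\le C_0 e^{C\frac{v-v_0}{h^\kappa}}(1+\tfrac1h)$. The remaining assertions of the perturbation property in $\Omega_*$ — the bounds on $\aaa,\ba$ and on the wave speeds $\lambda_i$ — are not part of the conversion from $\wa,\za$ and are established separately (Propositions~\ref{prop:ab} and Lemma~\ref{lem:speeds}), so nothing further is needed here.

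There is essentially no serious obstacle in this proposition: it is purely the bookkeeping of plugging interval bounds into the linear formulas \eqref{MV-wz} and exponentiating. The one point requiring a little care is the choice of the constant $\tilde k=\max(1,2\kappa_0)$: it must simultaneously dominate the coefficient $2\kappa_0$ appearing in the $\Ma$-estimate and the coefficient $1$ appearing in the $\Va$-estimate, so that a single pair $(C_0,C)$ works for both variables; this is precisely why the maximum is taken. I would also remark that the same computation run with $\wa,\za$ in the \emph{improved} range of the ``big data'' region $\Xi_*$ (where the perturbation factor $1+\tfrac1h$ enters differently) reproduces the sharper bounds listed in Definition~\ref{def:pproperty} for $\Xi_*$, which is handled in the companion statement Proposition~\ref{prop:delta}.
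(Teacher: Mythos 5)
Your proposal is correct and follows essentially the same route as the paper: both convert the interval bounds on $\wa,\za$ through the explicit relations \eqref{MV-wz}, bounding $\za-\wa$ by $\log(1+\tfrac1h)+2\xi+2\rho\frac{v-v_0}{h^\kappa}$ for $\Ma$ and using $\frac{1-k}{1+k}+\frac{1+k}{1-k}=\frac{2}{K^2}$ for $\Va$, then absorbing the coefficients $2\kappa_0$ and $1$ into $\tilde k=\max(1,2\kappa_0)$, $C_0=e^{\tilde k\xi}$, $C=\tilde k\rho$. Your remark that the bounds on $\aaa,\ba,\lambda_i$ are handled elsewhere also matches the paper's treatment.
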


\begin{proof}
 By \eqref{MV-wz} and \eqref{wz_pp_omega},
$$
\aligned
   \left( 1 + \frac{1}{h} \right)^{-\frac{4k}{(1-k^2)K^2}} 
 e^{-\frac{8k}{(1-k^2)K^2} \xi} e^{-\frac{8k}{(1-k^2)K^2} \rho
   \frac{v-v_0}{h^\kappa}}  
& = e^{\frac{4k}{(1-k^2)K^2} \left( - \log \left( 1 + \frac{1}{h} \right) -2\xi
   - 2 \rho \frac{v-v_0}{h^\kappa} \right)} \\
  &\leq \Ma(v,r) = e^{\frac{4k}{(1-k^2)K^2} (\za-\wa)} \\
  &\leq e^{\frac{4k}{(1-k^2)K^2} \left( \log \left( 1 + \frac{1}{h} \right) + 2\xi + 2 \rho \frac{v-v_0}{h^\kappa} \right)}  = \left( 1 + \frac{1}{h} \right)^\frac{4k}{(1-k^2)K^2} e^{\frac{8k}{(1-k^2)K^2} \xi} e^{\frac{8k}{(1-k^2)K^2} \rho
  \frac{v-v_0}{h^\kappa}}
 \endaligned
$$
 and
 \begin{align*}
 e^{-\xi} e^{- \rho \frac{v-v_0}{h^\kappa}}
 \leq - \Va(v,r) = e^{\frac{K^2}{2} \left( \frac{1-k}{1+k} w + \frac{1+k}{1-k} z \right)}
 \leq \left( 1 + \frac{1}{h} \right) e^\xi e^{\rho \frac{v-v_0}{h^\kappa}}.
 \end{align*}
 With $\tilde k, \kappa, C_0,C$ as specified in the statement, $\Ma,\Va$ satisfy the perturbation property in $\Omega_*$.
\end{proof}

We now turn to the estimates in the region $\Xi_*$ which are obtained in a similar fashion.

\begin{lemma}[Initial condition in $\Xi_*$]\label{lem:initial2}
 Let $\xi$ be a positive constant depending on the static solution in the interval $[r_*-\delta,r_*+\delta]$,
\begin{align}\label{xi_2}
 \xi = \xi^{(0)}_\delta := \max_{r \in [r_*-\delta,r_*+\delta]} \left| \log \left( - \Vs(r) \right) \right|
 + \frac{(1+k)^2K^4}{8k} \max_{r \in [r_*-\delta,r_*+\delta]} \left| \log \Ms(r) \right|.
\end{align}
Then, initially, the (approximate) Riemann invariants satisfy
\[
 \wa(v_0,r),\za(v_0,r) \in \log \left( 1+ \frac{1}{h} \right) + \left[ - \xi ,  \xi \right].
\]
\end{lemma}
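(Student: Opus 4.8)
The plan is to mirror the argument already carried out for Lemma~\ref{lem:initial1}, but now using the finer localization in the ``big data'' region $\Xi_*$. The key point is that for $(v_0,r)$ with $r\in[\Xi_*^-(v_0),\Xi_*^+(v_0)]=[r_*-\delta,r_*+\delta]$, the full perturbation $\Vp_0(r)=\Vs(r)/h$ is ``switched on'', so that $V_0(r)=\Vs(r)(1+\tfrac1h)$, while $M_0$ is unperturbed, i.e. $M_0(r)=\Ms(r)$ (recall \eqref{initialdata2}--\eqref{initialdata3}). Thus the conversion formulas \eqref{MV-wz} can be inverted to give bounds on $\wa,\za$ directly in terms of the static quantities $\Ms,\Vs$ on the \emph{small} interval $[r_*-\delta,r_*+\delta]$ plus the explicit factor $\log(1+\tfrac1h)$ coming from $\log(-V_0)$.

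Concretely, I would first record, exactly as in Lemma~\ref{lem:initial1}, the two linear combinations of the Riemann invariants that are directly controlled: from $\log M_0=\tfrac{4k}{(1-k^2)K^2}(\za-\wa)$ and $M_0=\Ms$ one gets
\[
\frac{(1-k^2)K^2}{4k}\log\Big(\min_{[r_*-\delta,r_*+\delta]}\Ms\Big)\le \za-\wa\le \frac{(1-k^2)K^2}{4k}\log\Big(\max_{[r_*-\delta,r_*+\delta]}\Ms\Big),
\]
while from $\log(-V_0)=\tfrac{K^2}{2}\big(\tfrac{1-k}{1+k}w+\tfrac{1+k}{1-k}z\big)$ and $-V_0=(-\Vs)(1+\tfrac1h)$ one gets
\[
\frac{1-k}{1+k}\wa+\frac{1+k}{1-k}\za=\frac{2}{K^2}\Big(\log\big(1+\tfrac1h\big)+\log(-\Vs)\Big),
\]
hence is squeezed between $\tfrac{2}{K^2}\big(\log(1+\tfrac1h)+\log(\min(-\Vs))\big)$ and $\tfrac{2}{K^2}\big(\log(1+\tfrac1h)+\log(\max(-\Vs))\big)$, with min/max over $[r_*-\delta,r_*+\delta]$. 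Note the crucial difference from the $\Omega_*$ case: here $\log(1+\tfrac1h)$ appears on \emph{both} sides, not only as an upper correction, which is precisely why the estimate for $-\Va$ is two-sided with the factor $1+\tfrac1h$ in Definition~\ref{def:pproperty}.

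Next I would solve for $\wa$ and $\za$ individually by the same algebraic identities used in Lemma~\ref{lem:initial1}:
\[
\za=\frac{K^2}{2}\Big[\frac{1-k}{1+k}(\za-\wa)+\Big(\frac{1-k}{1+k}\wa+\frac{1+k}{1-k}\za\Big)\Big],\qquad
\wa=\frac{K^2}{2}\Big[-\frac{1+k}{1-k}(\za-\wa)+\Big(\frac{1-k}{1+k}\wa+\frac{1+k}{1-k}\za\Big)\Big].
\]
Substituting the bounds above and using $\tfrac{(1-k)^2K^4}{8k},\tfrac{(1+k)^2K^4}{8k}\le \tfrac{(1+k)^2K^4}{8k}$ to collect the $\Ms$-contributions, the $\log(1+\tfrac1h)$ coefficient works out to exactly $1$ in both cases (since $\tfrac{K^2}{2}\cdot\tfrac{1-k}{1+k}\cdot\tfrac{2}{K^2}+\tfrac{K^2}{2}\cdot\tfrac{1+k}{1-k}\cdot 0$-type bookkeeping, more precisely $\tfrac{K^2}{2}(\tfrac{1-k}{1+k}+\tfrac{1+k}{1-k})\cdot\tfrac{1}{K^2}\cdot\ldots$; the combination of the two ``$\log(1+\tfrac1h)$'' terms collapses to a single $\log(1+\tfrac1h)$ because only the second linear combination carries it), leaving the remaining contributions bounded in absolute value by $\xi=\xi^{(0)}_\delta$ as defined in \eqref{xi_2}. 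This yields $\wa(v_0,r),\za(v_0,r)\in\log(1+\tfrac1h)+[-\xi,\xi]$, which is the claim.

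The main obstacle I anticipate is purely bookkeeping: verifying that after forming $\wa,\za$ from the two linear combinations, the coefficient of $\log(1+\tfrac1h)$ is exactly $1$ (not, say, $\tfrac{1+k}{1-k}$ or some other factor), and that all the leftover static terms genuinely fit inside the single constant $\xi^{(0)}_\delta$ without cross terms between $\min$ and $\max$ spoiling the bound. This is exactly the same computation as in Lemma~\ref{lem:initial1} with $\varDelta$ replaced by $\delta$ and with the perturbation made two-sided, so no new idea is needed; one just has to be careful that here $\min(-\Vs)$ and $\max(-\Vs)$ are taken over the \emph{small} interval, so $\xi^{(0)}_\delta\le\xi^{(0)}_\varDelta$ and the $\Xi_*$ bounds are indeed tighter than the $\Omega_*$ bounds, consistent with Definition~\ref{def:pproperty}. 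Positivity and finiteness of $\xi^{(0)}_\delta$ follow from smoothness and strict positivity of $\Ms,-\Vs$ on $(0,+\infty)$ in Theorem~\ref{thm-static}.
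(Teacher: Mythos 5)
Your proposal is correct and follows essentially the same route as the paper: observe that on $[r_*-\delta,r_*+\delta]$ at $v_0$ one has $M_0=\Ms$ and $-V_0=(-\Vs)(1+\tfrac1h)$, bound the two combinations $\za-\wa$ and $\tfrac{1-k}{1+k}\wa+\tfrac{1+k}{1-k}\za$ via \eqref{MV-wz} (with $\log(1+\tfrac1h)$ now appearing on both sides), and recombine as in Lemma~\ref{lem:initial1} to get coefficient exactly $1$ on $\log(1+\tfrac1h)$ and the remainder absorbed into $\xi^{(0)}_\delta$. No gap; this is the paper's argument with the interval $\varDelta$ replaced by $\delta$.
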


Note that $\xi^{(0)}_\delta \leq \xi^{(0)}_\varDelta$. We may thus use $\xi := \xi^{(0)}_\varDelta>0$ throughout for the definition
of the constants in Definition~\ref{def:pproperty}.

\begin{proof}
 The ``big data'' region $\Xi_*$ is solely influenced by the perturbation, and the relevant terms are
\begin{align*}
 \frac{(1-k^2)K^2}{4k} \log \left( \min_{r\in[r_*-\delta,r_*+\delta]} \Ms(r) \right)
 & \leq z - w \leq \frac{(1-k^2)K^2}{4k} \log \left( \max_{r\in[r_*-\delta,r_*+\delta]} \Ms(r) \right)
\end{align*}
and 
\begin{align*}
 \frac{2}{K^2} \log \left( - \max_{r\in[r_*-\delta,r_*+\delta]} \Vs(r) \right)& + \frac{2}{K^2} \log \left( 1 + \frac{1}{h} \right) \nonumber \\
\leq \frac{1-k}{1+k} w + \frac{1+k}{1-k} z \leq & \, \frac{2}{K^2} \log \left( - \min_{r\in[r_*-\delta,r_*+\delta]} \Vs(r) \right)
+ \frac{2}{K^2} \log \left( 1 + \frac{1}{h} \right).
\end{align*}
Therefore, the Riemann invariants in the first Riemann problem step are bounded by 
(with a constant $\xi$ as in \eqref{xi_2}) so that 
$w,z \in \log \left( 1 + \frac{1}{h} \right) + [- \xi,\xi].$
\end{proof}

\begin{proposition}[Conversion in $\Xi_*$]\label{prop:delta}
Suppose the (approximate) Riemann invariants satisfy
 \begin{align}\label{wz_pp}
  \wa,\za \in \log \left( 1+ \frac{1}{h} \right) + \left[ - \xi - \rho \frac{v-v_0}{h^\kappa}, \xi + \rho \frac{v-v_0}{h^\kappa} \right]
 \end{align}
with $\xi>0$ as in \eqref{xi_2} and $\rho>0$, $\kappa \geq 1$ as specified in Theorem~\ref{thm:ode-delta}. Then the corresponding
(approximate) solution $\Ma,\Va$ has the perturbation property of Definition~\ref{def:pproperty} in the region $\Omega_*$ with
$\tilde k = \max \left( 1, \frac{8k}{(1-k^2)K^2} \right)$, $C_0:= e^{\tilde k \xi}$ and $C := \tilde k \rho$.
\end{proposition}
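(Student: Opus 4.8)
The plan is to mirror exactly the argument given for Proposition~\ref{prop:omega}, since the only difference between the two statements is that the bounds on the Riemann invariants in \eqref{wz_pp} are centred around $\log(1+1/h)$ rather than around $0$, which is precisely what allows the improved (static-order) bounds for the fluid variables in the region $\Xi_*$. First I would recall the inversion formulas \eqref{MV-wz}, namely $\log M = \frac{4k}{(1-k^2)K^2}(z-w)$ and $\log(-V) = \frac{K^2}{2}\big(\frac{1-k}{1+k}w + \frac{1+k}{1-k}z\big)$, and observe that $M$ depends only on the \emph{difference} $z-w$ while $-V$ depends on a positively-weighted \emph{combination} of $w$ and $z$.

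Next I would plug the hypothesis \eqref{wz_pp} into these two formulas. Writing $\wa,\za = \log(1+1/h) + \eta_{w,z}$ with $\eta_{w},\eta_{z}\in[-\xi - \rho\frac{v-v_0}{h^\kappa},\, \xi + \rho\frac{v-v_0}{h^\kappa}]$, the shift $\log(1+1/h)$ cancels in the difference $\za-\wa = \eta_z - \eta_w$, which therefore lies in $[-2\xi - 2\rho\frac{v-v_0}{h^\kappa},\, 2\xi + 2\rho\frac{v-v_0}{h^\kappa}]$; exponentiating via $\log M = \frac{4k}{(1-k^2)K^2}(z-w)$ and using $\tilde k = \max(1,\frac{8k}{(1-k^2)K^2})$ gives
$$
\frac{1}{C_0}\, e^{-C\frac{v-v_0}{h^\kappa}} \leq \Ma(v,r) \leq C_0\, e^{C\frac{v-v_0}{h^\kappa}},
$$
with $C_0 = e^{\tilde k\xi}$, $C = \tilde k\rho$, which is the claimed improved bound on $\Ma$ in $\Xi_*$. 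For $-\Va$, the combination $\frac{1-k}{1+k}w + \frac{1+k}{1-k}z$ equals $\big(\frac{1-k}{1+k}+\frac{1+k}{1-k}\big)\log(1+1/h) + \big(\frac{1-k}{1+k}\eta_w + \frac{1+k}{1-k}\eta_z\big)$, and since $\frac{K^2}{2}\big(\frac{1-k}{1+k}+\frac{1+k}{1-k}\big) = 1$ (a direct computation from $K^2 = \frac{1-k^2}{1+k^2}$), the prefactor of $\log(1+1/h)$ is exactly $1$; bounding the remaining term by $\tilde k$ times $(\xi + \rho\frac{v-v_0}{h^\kappa})$ yields
$$
\frac{1}{C_0}\, e^{-C\frac{v-v_0}{h^\kappa}}\Big(1+\tfrac{1}{h}\Big) \leq -\Va(v,r) \leq C_0\, e^{C\frac{v-v_0}{h^\kappa}}\Big(1+\tfrac{1}{h}\Big),
$$
which is the improved bound on $-\Va$ in $\Xi_*$. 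Finally, since $\Xi_*\subset\Omega_*$ and $\xi^{(0)}_\delta \leq \xi^{(0)}_\varDelta$, the cruder bounds of Definition~\ref{def:pproperty} valid in $\Omega_*$ follow a fortiori from these (just reinsert the factor $(1+1/h)^{\pm\kappa_0}$, which only weakens the $\Ma$ estimate), so the full perturbation property holds.

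The only computation requiring care — and the sole genuine step, though it is a short one — is verifying the algebraic identity $\frac{K^2}{2}\big(\frac{1-k}{1+k}+\frac{1+k}{1-k}\big) = 1$, equivalently $K^2\,\frac{(1-k)^2+(1+k)^2}{2(1-k^2)} = \frac{1-k^2}{1+k^2}\cdot\frac{1+k^2}{1-k^2} = 1$; this is what guarantees that the $\log(1+1/h)$ term in $-\Va$ exponentiates to the exact factor $(1+1/h)$, matching the form prescribed in the $\Xi_*$ estimates rather than producing a spurious power. Everything else is the bookkeeping already carried out verbatim in the proof of Proposition~\ref{prop:omega}, with the centre of the interval for $(\wa,\za)$ shifted by $\log(1+1/h)$.
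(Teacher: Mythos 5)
Your proposal is correct and follows essentially the same route as the paper's own proof: both translate the Riemann-invariant bounds back to $(\Ma,\Va)$ via \eqref{MV-wz} and observe that the constant offset $\log(1+1/h)$ cancels in $\za-\wa$ (giving the improved $\Ma$ bound) but exponentiates to exactly the factor $(1+1/h)$ in $-\Va$ because $\tfrac{K^2}{2}\bigl(\tfrac{1-k}{1+k}+\tfrac{1+k}{1-k}\bigr)=1$; the paper absorbs this identity into a single line $\tfrac{K^2}{2}\cdot\tfrac{2}{K^2}=1$ while you spell it out. Your closing remark that the $\Xi_*$ estimates imply the coarser $\Omega_*$-form a fortiori also correctly accounts for the wording of the proposition.
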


\begin{proof}
We use \eqref{MV-wz} to translate the property \eqref{wz_pp} back to $\Ma,\Va$. Since the bounds are symmetric it is sufficient to
consider the upper bounds,
\begin{align*}
 \Ma(v,r) &\leq e^{\frac{4k}{(1-k^2)K^2} 2 \left( \xi + \rho \frac{v-v_0}{h^\kappa} \right)} = e^{\frac{8k}{(1-k^2)K^2} \xi}
 e^{\frac{8k}{(1-k^2)K^2} \rho \frac{v-v_0}{h^\kappa} }, \\
 -\Va(v,r) &\leq e^{\frac{K^2}{2} \frac{2}{K^2} \left( \log \left(1 + \frac{1}{h} \right) + \xi + \rho \frac{v-v_0}{h^\kappa}  \right)} = \left(1 + \frac{1}{h} \right) e^{\xi} e^{\rho \frac{v-v_0}{h^\kappa} }.
\end{align*}
Let $\tilde k = \max \left( 1, \frac{8k}{(1-k^2)K^2} \right)$. By defining $C_0 := e^{\tilde k \xi}$ and $C := \tilde k \rho$, it is clear
that Definition~\ref{def:pproperty} is true for $\Ma,\Va$.
\end{proof}

  It remains to be shown that $\wa,\za$ satisfy  {\eqref{wz_pp_omega} and}
\eqref{wz_pp} during the evolution and that $\aaa,\ba$ also satisfy the perturbation
  property. 
 

\subsection{The Riemann invariant bounds 
in each ODE step}

In each Riemann problem step, the Riemann invariants $\wa,\za$ are non-increasing and \eqref{wz_pp_omega} and \eqref{wz_pp} are
preserved. In each ODE step, the sup-norm of $\wa,\za$ may only increase by a factor $\rho \frac{\Delta v}{h^\kappa}$. By iterating
our estimates within the time interval $[v_0,v_*]$, we obtain the desired uniform bounds.

We consider the nonlinear system of ordinary differential equations in the $v$-variable, that is, 
\[
 \partial_v U = \widetilde S(U,a,b), 
\]
with conservative variable $U$ and right-hand side $\widetilde S (U,a,b)$ as derived in \eqref{deftildeS}. Here, the geometry terms
$a = a(v)$ and $b = b(v)$ are assumed to be (regular) functions of $v$, only. In particular, $a_v$ satisfies \eqref{eq:Einstein3-new}.
We will show that the solutions to these equations satisfy the perturbation property, thus in particular the physical bounds
$M > 0$ and $V < 0$ hold and they cannot blow up in finite time. We will work with the variables $w,z$ and prove that they remain bounded on every bounded time
 interval. First we establish that the sup-norm of the approximate solutions $\wa,\za$ remains uniformly bounded.

Let us write the spatially independent solutions $\partial_v U = \widetilde S(U)$ in terms of the fluid variables $M,V$. By
Proposition~\ref{prop:23}, \eqref{eq:Einstein3-new} and \eqref{deftildeS}, we have 
\begin{align*}
 M_v = ~& \widetilde S_1 = - b M \left[ \frac{1}{r} ( 1+ 2V) + 8 \pi (1-k^2) r M V \right], \\
 V_v = ~& \frac{1}{K^2 M} \left[ \widetilde S_2 - \widetilde S_1 \left( \frac{a}{2} + K^2 V \right) - M \frac{a_v}{2} \right] \\
 = ~& - \frac{b}{K^2} \left[  \frac{1}{r} \left( (a-K^2) V  + 2 (1-K^2) V^2 \right) + 16 \pi k^2 r M V^2 \right]
\end{align*}

 Alternatively we may write the ODE system in terms of the Riemann invariant coordinates $w,z$. The first equation,
 $\del_v U_1 = \widetilde S(U)_1$, implies 
\begin{align}\label{eq:wv-zv}
 w_v - z_v & = - \frac{(1-k^2) K^2}{4k} ( \log M)_v = \frac{(1-k^2)K^2}{4k} b \left[ \frac{1}{r} + \frac{2}{r} V + 8 \pi (1-k^2) r M V \right], 
\end{align}
 and the second equation, $\del_v U_2 = \widetilde S(U)_2$, together with \eqref{eq:Einstein3-new} implies
\begin{align}\label{eq:wv+zv}
 \frac{1-k}{1+k} w_v + \frac{1+k}{1-k} z_v &= \frac{2}{K^2} \left( \log (-V) \right)_v 
 = - \frac{2b}{K^4} \left[  \frac{1}{r} (a-K^2) + \frac{2 (1-K^2)}{r} V + 16 \pi k^2 r M V \right]
 \end{align}
 Adding up the two equations \eqref{eq:wv-zv}--\eqref{eq:wv+zv}, we thus obtain a system of two nonlinear ordinary differential equations for for $w$ and $z$,
 which is used to prove that the $w,z$ remain under control by the initial data
 in each ODE step. We then estimate the equations for $w,z$ by a Riccati type equation to gain the desired bounds for $w_v,z_v$.

\begin{lemma}[Estimates for $w,z$]
 Let $k \in (0,1)$ and $\kappa_0 := \frac{4k}{(1-k^2)K^2}$. Suppose that $1 \leq b \leq C_b$ as well as
  {$-\frac{1}{h} \leq a \leq 1$ hold}.  
If $w,z$ satisfy the nonlinear ordinary differential system
 \eqref{eq:wv-zv}--\eqref{eq:wv+zv}, then  for $h \leq 1$
\begin{align}\label{eq:bounds}
 \pm w_v,z_v \leq A_1 + \frac{A_2}{ {h}}  
+ A_3 e^{\max(w,z)} + A_4 e^{(1+\kappa_0) \max(w,z)} e^{-\kappa_0 \min(w,z)}, 
\end{align}
 with some expressions $A_i>0$ that only depend upon $k,r_*,\varDelta,C_b$.
\end{lemma}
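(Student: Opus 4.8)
The plan is to read \eqref{eq:wv-zv}--\eqref{eq:wv+zv} as a linear $2\times2$ system in the unknowns $(w_v,z_v)$, solve it explicitly by Cramer's rule, and then convert the resulting right-hand sides — which are polynomial in $\tfrac1r$, $a$, $V$ and $rMV$ — into the exponential form \eqref{eq:bounds} using the identities \eqref{MV-wz}.

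First I would record the coefficient matrix: the first equation \eqref{eq:wv-zv} has coefficients $(1,-1)$ and the second \eqref{eq:wv+zv} has coefficients $\bigl(\tfrac{1-k}{1+k},\tfrac{1+k}{1-k}\bigr)$, so the determinant is $\tfrac{1-k}{1+k}+\tfrac{1+k}{1-k}=\tfrac{2(1+k^2)}{1-k^2}\neq0$. Writing $R_1$ and $R_2$ for the right-hand sides of \eqref{eq:wv-zv} and \eqref{eq:wv+zv}, Cramer's rule gives
\[
w_v=\frac{1-k^2}{2(1+k^2)}\Bigl(R_2+\tfrac{1+k}{1-k}R_1\Bigr),\qquad
z_v=\frac{1-k^2}{2(1+k^2)}\Bigl(R_2-\tfrac{1-k}{1+k}R_1\Bigr).
\]
Since every numerical factor appearing here depends only on $k$, this yields $|w_v|,|z_v|\le c(k)\,b\,\bigl(|\widetilde R_1|+|\widetilde R_2|\bigr)$, where $\widetilde R_i$ denotes $R_i$ with the leading $k$-constant and the factor $b$ removed; note in particular that the bound is symmetric, so it controls $\pm w_v$ and $\pm z_v$ at once.

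Next I would estimate the four types of terms occurring in $\widetilde R_1$ and $\widetilde R_2$. Since $r$ ranges over $[r_*-\varDelta,r_*+\varDelta]$ during the evolution, $\tfrac1r$ is bounded above and below by positive constants depending only on $r_*,\varDelta$, and together with $1\le b\le C_b$ the pure $\tfrac1r$ terms give the constant $A_1$. For the term $\tfrac{a-K^2}{r}$ I use the hypothesis $-\tfrac1h\le a\le1$: for $h\le1$ this forces $|a-K^2|\le|a|+K^2\le\tfrac1h+1\le\tfrac2h$, which produces the contribution $\tfrac{A_2}{h}$. The terms linear in $V$, that is $\tfrac{2V}{r}$ and $\tfrac{2(1-K^2)V}{r}$, are handled through the second identity in \eqref{MV-wz}: since $K^2=\tfrac{1-k^2}{1+k^2}$, the two coefficients $\tfrac{K^2}{2}\tfrac{1-k}{1+k}$ and $\tfrac{K^2}{2}\tfrac{1+k}{1-k}$ are positive and sum to $1$, so $\log(-V)$ is a convex combination of $w$ and $z$, whence $-V\le e^{\max(w,z)}$ and these terms contribute $A_3e^{\max(w,z)}$. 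Finally, the quadratic terms $rMV$ require both identities in \eqref{MV-wz}: $\log\bigl(M(-V)\bigr)=\kappa_0(z-w)+\log(-V)\le\kappa_0\bigl(\max(w,z)-\min(w,z)\bigr)+\max(w,z)$, and since $r\le r_*+\varDelta$ this gives $rM(-V)\le(r_*+\varDelta)\,e^{(1+\kappa_0)\max(w,z)}e^{-\kappa_0\min(w,z)}$, i.e.\ the term $A_4e^{(1+\kappa_0)\max(w,z)}e^{-\kappa_0\min(w,z)}$. Collecting the four contributions, with constants depending only on $k,r_*,\varDelta,C_b$, yields \eqref{eq:bounds}.

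Most of this is routine arithmetic; the one place to be careful will be the exponent bookkeeping in the last step — one must check that combining the $M$-exponent $\kappa_0(z-w)$ with the $(-V)$-exponent produces exactly the weights $(1+\kappa_0)$ on $\max(w,z)$ and $-\kappa_0$ on $\min(w,z)$ — together with the observation that the (possibly large) negative values of $a$ contribute a genuine $1/h$ rather than being absorbable into $A_1$. Both reduce to the elementary facts $z-w\le\max(w,z)-\min(w,z)$ and $h\le1$.
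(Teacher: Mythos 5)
Your proposal is correct and follows essentially the same route as the paper: inverting the $2\times2$ linear system for $(w_v,z_v)$ (your Cramer's rule step is exactly the paper's linear-combination identity), then bounding each term via the identities \eqref{MV-wz}, the convex-combination observation $\log(-V)\le\max(w,z)$, the bound $e^{\kappa_0(z-w)}\le e^{\kappa_0(\max(w,z)-\min(w,z))}$, and $|a|\le 1/h$ with $h\le1$, with constants depending only on $k,r_*,\varDelta,C_b$ through $r\in[r_*-\varDelta,r_*+\varDelta]$ and $b\le C_b$. The only cosmetic difference is that you absorb the $K^2/r$ part of $|a-K^2|$ into the $A_2/h$ term rather than into $A_1$, which is immaterial.
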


\begin{proof}
 Adding and subtracting \eqref{eq:wv-zv}--\eqref{eq:wv+zv} in a suitable way yields equations for $w_v,z_v$, i.e.
\begin{align*}
 w_v &= \frac{K^2}{2} \left( \frac{1-k}{1+k} w_v + \frac{1+k}{1-k} z_v \right) + \frac{(1+k)^2}{2(1+k^2)} (w_v - z_v), \\
 z_v &= \frac{K^2}{2} \left( \frac{1-k}{1+k} w_v + \frac{1+k}{1-k} z_v \right) - \frac{(1-k)^2}{2(1+k^2)} (w_v - z_v).
\end{align*}
 Both equations exhibit a very similar structure, and to obtain upper and lower bounds for $w_v,z_v$ it thus remains to estimate the right-hand sides
$\frac{K^2}{2}$\eqref{eq:wv+zv}, $\frac{(1+k)^2}{2(1+k^2)}$\eqref{eq:wv-zv} and $\frac{(1-k)^2}{2(1+k^2)}$\eqref{eq:wv-zv}.
 
 By assumption and for $h$ sufficiently small,  {$-\frac{1}{h} \leq - 1 \leq a \leq 1$} and $1 \leq b \leq C_b$. We use \eqref{MV-wz} to replace
 $M,V$ by expressions in $w,z$. Thus, we have 
 \begin{align*}
\frac{K^2}{2} \left| \frac{1-k}{1+k} w_v + \frac{1+k}{1-k} z_v \right| 
 &\leq \frac{C_b}{K^2} \left[  \frac{1}{r} \left( \frac{1}{ {h}}+K^2 \right) + \frac{2 (1-K^2)}{r} e^{\frac{K^2}{2}
 \left(\frac{1-k}{1+k} w + \frac{1+k}{1-k} z \right)}  + 16 \pi k^2 r e^{\kappa_0 (z-w)} e^{\frac{K^2}{2} \left(\frac{1-k}{1+k} w + \frac{1+k}{1-k} z \right)} \right] \\
  & \leq \frac{C_b}{r} + \frac{C_b}{K^2 r} \frac{1}{ {h}} + \frac{2 (1-K^2) C_b}{K^2 r} e^{\max(w,z)} + \frac{16 \pi k^2 r C_b}{K^2} e^{(1+\kappa_0) \max(w,z)} e^{-\kappa_0 \min(w,z)} 
 \end{align*}
 and, similarly,
 \begin{align*}
  &\frac{(1-k)^2}{2(1+k^2)} \left| w_v - z_v \right| \leq \frac{(1+k)^2}{2(1+k^2)} \left| w_v - z_v \right| \\
  & \leq \frac{(1+k)^2}{2(1+k^2)} \frac{(1-k^2)K^2}{4k} C_b \left[ \frac{1}{r} + \frac{2}{r} e^{\frac{K^2}{2}
 \left(\frac{1-k}{1+k} w + \frac{1+k}{1-k} z \right)} + 8 \pi (1-k^2) r e^{\kappa_0 (z-w)} e^{\frac{K^2}{2}
 \left(\frac{1-k}{1+k} w + \frac{1+k}{1-k} z \right)} \right] \\
  & \leq \frac{(1+k)^2 K^4 C_b}{8kr} + \frac{(1+k)^2 K^4 C_b}{4kr} e^{\max(w,z)} 
+ \frac{\pi (1+k)^2 (1-k^2) K^4 rC_b}{k} e^{(1+\kappa_0) \max(w,z)} e^{- \kappa_0 \min(w,z)}.
 \end{align*}
 Therefore,
\eqref{eq:bounds} holds with constants
 \begin{align*}
  A_1 &:= \frac{C_b}{r_*-\varDelta} + \frac{(1+k)^2 K^4 C_b}{8k(r_*-\varDelta)},
  & A_3 &:= \frac{2 (1-K^2) C_b}{K^2 (r_*-\varDelta)} + \frac{(1+k)^2 K^4 C_b}{4k(r_*-\varDelta)}, \\
  A_2 &:= \frac{C_b}{K^2 (r_*-\varDelta)},
  & A_4 &:= \frac{16 \pi k^2 (r_*+\varDelta) C_b}{K^2} + \frac{\pi (1+k)^2 (1-k^2) K^4 (r_*+\varDelta) C_b}{k}. \qedhere
 \end{align*}
\end{proof} 

\begin{theorem}[Bounds for the ODE step in $\Omega_*$]\label{thm:ode-omega}
 Fix $k \in (0,1)$ and $\kappa \geq 1 + 2\kappa_0$. Suppose $1 \leq b \leq C_b$ and $-  {\frac{1}{h}} \leq a \leq 1 $.  
  Then there exists $\rho>0$ so that the (approximate) Riemann invariants $\wa,\za$ obeying the differential system
  \eqref{eq:wv-zv}--\eqref{eq:wv+zv} with initial values as derived in Lemma~\ref{lem:initial1} satisfy
  \begin{align}\label{newpp2}
   \wa,\za \in \left[ - \xi - \rho \frac{v-v_0}{h^\kappa},  \log \left( 1 + \frac{1}{h} \right) + \xi + \rho \frac{v-v_0}{h^\kappa} \right]
  \end{align}
 for all $v \in [v_0,v_*]$ with $v_* := v_0+\tau h^\kappa$, $\tau := \frac{1}{\kappa \rho}$, and $\xi$ as defined in \eqref{xi_1}.
 \end{theorem}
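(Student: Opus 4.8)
The statement is a bootstrap (continuation) estimate for the ODE system \eqref{eq:wv-zv}--\eqref{eq:wv+zv} along the $v$-direction, powered by the pointwise derivative bound \eqref{eq:bounds}. Write $\Psi(v):=\log\!\left(1+\tfrac1h\right)+\xi+\rho\,\tfrac{v-v_0}{h^\kappa}$ and $\psi(v):=-\xi-\rho\,\tfrac{v-v_0}{h^\kappa}$ for the upper and lower envelopes in \eqref{newpp2}, with $\rho>0$ to be fixed. By Lemma~\ref{lem:initial1}, the data satisfy $\wa(v_0,r),\za(v_0,r)\in[\psi(v_0),\Psi(v_0)]$, so it suffices to show that this inclusion propagates in $v$ up to $v_*=v_0+\tau h^\kappa$ with $\tau=\tfrac1{\kappa\rho}$. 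Note $\xi$ from \eqref{xi_1} is finite and positive since, by Theorem~\ref{thm-static}, the static solution is smooth with $\Ms>0$ and $\Vs<0$ on the compact set $[r_*-\varDelta,r_*+\varDelta]$.

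\textbf{Closure estimate and choice of $\rho$.} Suppose $v\le v_*$ and $w,z\in[\psi(v),\Psi(v)]$. Then $\rho\,\tfrac{v-v_0}{h^\kappa}\le\rho\tau=\tfrac1\kappa\le1$, so $\max(w,z)\le\log\!\left(1+\tfrac1h\right)+\xi+1$ and $-\min(w,z)\le\xi+1$; hence for $h\le1$ (using $1+\tfrac1h\le\tfrac2h$)
$$ e^{\max(w,z)}\le \tfrac{2e^{\xi+1}}{h},\qquad e^{(1+\kappa_0)\max(w,z)-\kappa_0\min(w,z)}\le \Big(\tfrac{2e^{\xi+1}}{h}\Big)^{1+\kappa_0}e^{\kappa_0(\xi+1)}. $$
Plugging these into \eqref{eq:bounds}, and noting that the remaining contributions $A_1$, $A_2h^{-1}$, $A_3 e^{\max(w,z)}=O(h^{-1})$ are dominated by $h^{-(1+\kappa_0)}$ for $h\le1$, we obtain $|w_v|,|z_v|\le C_6\,h^{-(1+\kappa_0)}$ with $C_6=C_6(k,r_*,\varDelta,C_b)$. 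Since $\kappa\ge1+2\kappa_0\ge1+\kappa_0$ and $h\le1$, we have $h^{-(1+\kappa_0)}\le h^{-\kappa}$, so setting $\rho:=2C_6$ gives the strict bound $|w_v|,|z_v|\le\tfrac{\rho}{2h^\kappa}<\tfrac{\rho}{h^\kappa}=\Psi'(v)=-\psi'(v)$ throughout the envelope. (Here $\rho$, hence $\tau$ and $v_*$, are determined, while the bound $\rho\tau=\tfrac1\kappa\le1$ used above is automatic, so there is no circularity.)

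\textbf{Closing the bootstrap.} Let $v_1:=\sup\{v\in[v_0,v_*]:\psi\le\wa,\za\le\Psi \text{ on }[v_0,v]\}$; it is well defined, $v_1\ge v_0$, and by continuity $\wa(v_1),\za(v_1)\in[\psi(v_1),\Psi(v_1)]$. Suppose $v_1<v_*$. On $[v_0,v_1]$ the closure estimate applies and yields $|\wa_v|,|\za_v|\le\tfrac{\rho}{2h^\kappa}$. At $v_1$ one of the four barrier functions $\Psi-\wa$, $\wa-\psi$, $\Psi-\za$, $\za-\psi$ vanishes; whichever it is, its (two-sided, $C^1$) derivative at $v_1$ equals $\tfrac{\rho}{h^\kappa}\mp(\cdot)_v\ge\tfrac{\rho}{h^\kappa}-\tfrac{\rho}{2h^\kappa}=\tfrac{\rho}{2h^\kappa}>0$, while the function is $\ge0$ on $[v_0,v_1]$ and $=0$ at $v_1$, forcing its left derivative there to be $\le0$ — a contradiction. (If $v_1=v_0$ the same computation shows all four barriers become strictly positive just after $v_0$, again contradicting maximality of $v_1$.) Hence $v_1=v_*$ and \eqref{newpp2} holds on $[v_0,v_*]$. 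In the random choice scheme this is applied on each ODE sub-interval $[v_i,v_{i+1}]$, the intervening Riemann steps keeping $\wa,\za$ inside the invariant rectangle $\Omega_\rho$ by Proposition~\ref{prop-Riemn}, so the envelopes accumulate additively as in \eqref{newpp2}.

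\textbf{Main obstacle.} The delicate point is the closure estimate: one must check that confining $w,z$ to the moving envelope genuinely controls the nonlinear source in \eqref{eq:bounds} by a constant multiple of $h^{-\kappa}$ — this is exactly where the exponents $\kappa_0$ in \eqref{eq:bounds} and the hypothesis $\kappa\ge1+2\kappa_0$ enter — and then to run the continuation with a strict slope margin (the factor $\tfrac12$ above), so that $w,z$ may ride, but never cross, the barriers $\psi,\Psi$.
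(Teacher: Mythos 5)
Your proof is correct, and it reaches the bound by a more elementary route than the paper. The paper's own proof feeds the a priori estimate \eqref{eq:bounds} into an envelope ODE $\gamma_v = D_1 h^{-1} + D_2 h^{-(1+\kappa_0)} e^{\kappa\gamma}$ (this is where it uses $\kappa \geq 1+2\kappa_0$ to absorb $e^{(1+2\kappa_0)\gamma}$ into $e^{\kappa\gamma}$), linearizes it through the Riccati substitution $g=e^{\kappa\gamma}$, $G=1/g$, solves the linear equation explicitly, and then verifies by a Taylor-expansion estimate and an induction over the time steps of the random choice scheme that $G(v)\geq e^{-\kappa(\xi+\rho(v-v_0)/h^\kappa)}$, which fixes $\rho$ (essentially $e^{1+\kappa\xi}D_2+D_1$; the paper's displayed value has the roles of $D_1,D_2$ swapped, evidently a typo) and $\tau=\tfrac1{\kappa\rho}$. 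You instead exploit the same two structural facts directly: inside the linear-in-$v$ envelope and for $v-v_0\leq\tau h^\kappa$ the additive drift $\rho(v-v_0)/h^\kappa$ is at most $1/\kappa\leq 1$, so the exponentials in \eqref{eq:bounds} are bounded by absolute constants times $h^{-(1+\kappa_0)}\leq h^{-\kappa}$ (you only need $\kappa\geq 1+\kappa_0$, which the hypothesis $\kappa\geq1+2\kappa_0$ implies), and then a barrier/continuity argument with the strict slope margin $\rho/(2h^\kappa)<\rho/h^\kappa$ closes the bootstrap without ever solving a comparison ODE. Your observation that $\rho\tau=1/\kappa$ is independent of $\rho$, so no circularity arises in choosing $\rho=2C_6$, is the right way to handle the one delicate point, and your closing remark correctly delegates the Riemann steps to Proposition~\ref{prop-Riemn}, exactly as the paper does in its induction. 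What the paper's computation buys is an explicit formula for $\rho$ in terms of $D_1,D_2$; what your argument buys is brevity and the avoidance of the Riccati machinery, at the cost of a slightly less explicit constant.
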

 
  \begin{remark}\rm
\label{rem:tau}
  The parameter $\tau$ for the time of existence can be estimated by $\frac{1}{\kappa}$ if we assume that $\rho$ is always chosen
  greater than $1$. Moreover, by definition of $C := \tilde k \rho$, we see that $e^{C \tau}$ is always independent of $\rho$.
  More precisely, $e^{C \frac{v-v_0}{h^\kappa}} \leq e^{C \tau} = e^{\frac{\tilde k}{\kappa}} \leq e$ holds for all
  $v \in [v_0,v_*]$, since $\tilde k = \max \left( 1, \frac{8k}{(1-k^2)K^2} \right) \leq 1 +  2 \kappa_0 \leq \kappa$.
 \end{remark}

\begin{proof}
{\noindent \bf Step 1. Linearizing the nonlinear ODE system.} 
 Let us assume that $w,z$ are bounded by some function $\gamma(v)$ so that
 \begin{align}\label{eq:gamma}
  w(v), z(v) \in \left[ -\gamma(v) , \log \left( 1 + \frac{1}{h} \right) + \gamma(v) \right].
 \end{align}
 We therefore get by \eqref{eq:bounds},
 \begin{align*}
  \pm w_v,z_v \leq ~& A_1 + \frac{A_2}{ {h}} + \frac{2 A_3 (1+h)}{h} e^{\gamma}
  + \frac{2 A_4 (1+h)^{1+\kappa_0}}{h^{1+\kappa_0}} e^{(1+2\kappa_0) \gamma} ,
 \end{align*}
 Without loss of generality we assume that $h$ is sufficiently small to satisfy $(1+h)^{1+\kappa_0} \leq 2$. This condition only depends
 on $k$ and does not disturb the inductive argument.
 For $\kappa \geq 1 + 2\kappa_0$, the inequalities are still satisfied and we get that $\gamma$ must satisfy the differential equation
 \[
  \gamma_v = \frac{D_1}{ {h}} + \frac{D_2}{h^{1+\kappa_0}} e^{\kappa \gamma}
 \]
 for some expressions $D_1,D_2$ depending on $k,r_*,\varDelta,C_b$. Introducing $g = e^{\kappa \gamma}$ yields a Riccati type differential equation
 \[
 g_v = \kappa g g_v = \kappa \frac{D_1}{ {h}} g + \kappa \frac{D_2}{h^{1+\kappa_0}} g^2, 
 \]
 which can be solved by standard methods, namely by rewriting it as a linear differential equation with $G = \frac{1}{g}$, i.e.
 \begin{align}\label{eq:G}
  G_v = - \frac{g_v}{g^2} = - \kappa \frac{D_1}{ {h}} G - \kappa \frac{D_2}{h^{1+\kappa_0}}.
 \end{align}
 
 \
 
 {\noindent \bf Step 2. Solution and estimates for the linear ODE.} 
 We proceed by induction in time steps. Suppose \eqref{newpp2} is true up to some time $v_i \geq v_0$. According to Theorem~\ref{prop-Riemn},
 the Riemann invariants $w,z$ do not change their size during the Riemann problem step. It remains to be shown that they are also preserved
 in the ODE step. The differential equation~\ref{eq:G} is considered with initial value at time $v_i$ given by
 \[
  G_i := G(v_i) = e^{-\kappa \gamma_i} = e^{-\kappa \left( \xi + \rho \frac{v-v_0}{h^\kappa} \right)},
 \]
for some function $\rho$. It remains to be shown that for all $v \in [v_0,v_{i+1}]$, $v_{i+1} - v_0 \leq \tau h^{\kappa}$, also
\begin{align}\label{eq:G_i+1}
 G(v) \geq e^{-\kappa \left( \xi + \rho \frac{v-v_0}{h^\kappa} \right)} \geq G_{i+1}.
\end{align}
We show that we can choose $\rho$ and $\tau$ so that \eqref{eq:G_i+1} holds (independent of $i$ and $h$). The solution to the initial
value problem \eqref{eq:G} is
\begin{align*}
 G(v) &= e^{-\kappa \frac{D_1}{ {h}} (v-v_i)} \left( G_i - \kappa \frac{D_2}{h^{1+\kappa_0}} \int_{v_i}^v
 e^{\kappa \frac{D_1}{ {h}} (t-v_i)} dt \right) \\
 &= \left( G_i + \frac{D_2}{D_1}  {h^{-\kappa_0}} \right) e^{-\kappa \frac{D_1}{ {h}}(v-v_i)} - \frac{D_2}{D_1}
 h^{-\kappa_0} \\
 &= e^{- \kappa \left( \xi + \rho \frac{v-v_0}{h^\kappa} \right)} e^{\kappa \left( \rho - \frac{D_1 h^{\kappa}}{ {h}} \right)
 \frac{v-v_i}{h^\kappa}} + \frac{D_2}{D_1}   {h^{-\kappa_0}} \left(  e^{-\kappa \frac{D_1}{ {h}} (v-v_i)} - 1 \right).
\end{align*}
The term  {$e^{- \kappa \frac{D_1}{h} (v-v_i)}$}
is small but negative, hence we have to use  {$e^{\kappa \left( \rho - \frac{D_1 h^{\kappa}}{h} \right) \frac{v-v_i}{h^\kappa}}
> 1$} to compensate for it. Estimates of the exponential map by the first two terms of the Taylor expansion imply
\begin{align*}
 G(v) &\geq e^{- \kappa \left( \xi + \rho \frac{v-v_0}{h^\kappa} \right)} \left( 1 + \kappa \left( \rho
 - \frac{D_1 h^{\kappa}}{ {h}} \right) \frac{v-v_i}{h^\kappa} \right) + \frac{D_2}{D_1}   {h^{-\kappa_0}}
 \left(1 - \kappa \frac{D_1}{ {h}} (v-v_i) - 1 \right) \\
 &= e^{- \kappa \left( \xi + \rho \frac{v-v_0}{h^\kappa} \right)} 
 + \kappa \frac{v-v_i}{h^{1 + \kappa_0}} \left[ \left( \rho - \frac{D_1 h^{\kappa}}{ {h}} \right) h^{1+\kappa_0-\kappa}
 e^{- \kappa \xi} e^{ -\kappa \rho \frac{v-v_0}{h^\kappa}} - D_2 \right],
\end{align*}
 and it remains to be shown that the term in the square bracket is not negative. To achieve this we only have to make sure that
 $\rho$ and $\tau$ are defined in a way that $\rho - D_1 h^{\kappa-1} \geq e^{1+\kappa \xi} D_2$ and
 $\kappa \rho \frac{v-v_0}{h^\kappa} \leq \kappa \rho \frac{v_* -v_0}{h^\kappa} = \kappa \rho \tau = 1$ hold. Since
 $1 + \kappa_0-\kappa \leq 0$,
 \begin{align*}
  \left( \rho -  {D_1 h^{\kappa-1}} \right) & h^{1+\kappa_0-\kappa} e^{- \kappa \xi} e^{ - \rho \frac{v-v_0}{h^\kappa}}
  - D_2 \geq e e^{\kappa \xi} D_2 h^{1+\kappa_0-\kappa} e^{-\kappa \xi} e^{-1} - D_2 \geq 0.
 \end{align*}
 This completes the inductive argument and shows that \eqref{eq:gamma} is true for
 \[
  \gamma(v) = \xi + \rho \frac{v-v_0}{h^\kappa}, \qquad v \in [v_0,v_{i+1}],
 \]
with $\rho := e^{1+\kappa\xi} D_1 + D_2$ and $\frac{v-v_0}{h^\kappa} \leq \tau := \frac{1}{\kappa \rho}$.
\end{proof}

The proof of the analogous statement in the region $\Xi_*$ is now straightforward. Due to the different boundaries, we can
get rid of some more $\frac{1}{h}$ terms and would obtain slightly better constants. In the following, however we assume that $\rho$
is the same constant in both regions $\Omega_*$ and $\Xi_*$.
 
 \begin{theorem}[Bounds for the ODE step in $\Xi_*$] \label{thm:ode-delta}
  Fix $k \in (0,1)$ and $\kappa \geq 1 + 2\kappa_0$. Suppose $1 \leq b \leq C_b$ and
$-  {\frac{1}{h}} \leq a \leq 1 $.  
  Then there exists $\rho >0$ so that the (approximate) Riemann invariants $\wa,\za$ obeying the differential system
  \eqref{eq:wv-zv}--\eqref{eq:wv+zv} with initial values as derived in Lemma~\ref{lem:initial2} satisfy
  \begin{align}\label{newpp}
   \wa,\za \in \log \left( 1 + \frac{1}{h} \right) + \left[ - \xi - \rho \frac{v-v_0}{h^\kappa}, \xi
   + \rho \frac{v-v_0}{h^\kappa} \right],
  \end{align}
 for all $v \in [v_0,v_*]$ with $v_* := v_0+\tau h^\kappa$, with $\tau := \frac{1}{\kappa \rho}$, $\xi$ as in \eqref{xi_2} and $\rho$ a
 positive constant.
 \end{theorem}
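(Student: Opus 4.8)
The plan is to follow, essentially line for line, the proof of Theorem~\ref{thm:ode-omega}, the only structural change being that the Riemann invariants $\wa,\za$ are now to be confined to a window of half-width $\xi+\rho\frac{v-v_0}{h^\kappa}$ centered at $\log(1+\frac1h)$ rather than centered near the origin. First I would postulate a bound of the form $w(v),z(v)\in\log(1+\frac1h)+[-\gamma(v),\gamma(v)]$ with $\gamma(v_0)=\xi$ as in \eqref{xi_2}, and feed it into the differential inequality \eqref{eq:bounds}, which is available here since the hypotheses $1\le b\le C_b$ and $-\frac1h\le a\le1$ are exactly those assumed in the statement. The one simplification relative to the $\Omega_*$ case is that now $\min(w,z)\ge\log(1+\frac1h)-\gamma$, so that
\[
 e^{(1+\kappa_0)\max(w,z)}\,e^{-\kappa_0\min(w,z)}\ \le\ \Big(1+\frac1h\Big)\,e^{(1+2\kappa_0)\gamma},
\]
i.e. the $\kappa_0$-powers of $1+\frac1h$ cancel and only a single factor $1+\frac1h$ survives in the worst term. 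After assuming (as in the $\Omega_*$ argument) that $h$ is small enough that $(1+h)^{1+\kappa_0}\le2$ and using $\kappa\ge1+2\kappa_0$, the inequality \eqref{eq:bounds} collapses to
\[
 \gamma_v\ \le\ \frac{D_1}{h}+\frac{D_2}{h^{1+\kappa_0}}\,e^{\kappa\gamma},
\]
with constants $D_1,D_2>0$ depending only on $k,r_*,\varDelta,C_b$ (and no larger than their analogues in Theorem~\ref{thm:ode-omega}).

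Next I would linearize exactly as before: the substitution $g=e^{\kappa\gamma}$ produces a Riccati equation whose reciprocal $G=1/g$ obeys the linear ODE \eqref{eq:G}, which I would solve explicitly and then estimate from below, using the first two Taylor terms of the exponentials, by
\[
 G(v)\ \ge\ e^{-\kappa(\xi+\rho(v-v_0)/h^\kappa)}+\kappa\,\frac{v-v_i}{h^{1+\kappa_0}}\Big[\big(\rho-D_1h^{\kappa-1}\big)\,h^{1+\kappa_0-\kappa}e^{-\kappa\xi}e^{-\kappa\rho(v-v_0)/h^\kappa}-D_2\Big],
\]
where $v_i$ is the start of the current time step and $G(v_i)=e^{-\kappa(\xi+\rho(v_i-v_0)/h^\kappa)}$. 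It then suffices to choose $\rho$ so large that $\rho-D_1h^{\kappa-1}\ge e^{1+\kappa\xi}D_2$ and to set $\tau:=\frac1{\kappa\rho}$, so that $\kappa\rho\frac{v-v_0}{h^\kappa}\le1$ for all $v\in[v_0,v_*]$; since $1+\kappa_0-\kappa\le0$, the bracket above is then non-negative, which gives $G(v)\ge e^{-\kappa(\xi+\rho(v-v_0)/h^\kappa)}$ and hence the desired confinement \eqref{newpp}. Finally I would propagate this by induction over the time steps of the random choice method: during a Riemann step $\wa,\za$ do not change their size by Proposition~\ref{prop-Riemn}, so only the ODE step requires the estimate above, and \eqref{newpp} then holds from $v_0$ up to every $v\in[v_0,v_*]$. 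To end with one constant $\rho$ valid simultaneously in $\Omega_*$ and $\Xi_*$, I would simply take the larger of the two values produced here and in Theorem~\ref{thm:ode-omega}.

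I expect the only genuinely delicate point to be the uniformity bookkeeping rather than any single computation: one must check that, after the cancellation noted above, the dominant nonlinear term is still compatible with the threshold $\kappa\ge1+2\kappa_0$; that the auxiliary smallness requirements on $h$ (such as $(1+h)^{1+\kappa_0}\le2$, and those needed to keep $R_*^\pm(v)\subset[r_*-\varDelta,r_*+\varDelta]$) do not depend on the time step; and, above all, that the single constant $\rho$ can be pinned down from the static data on $[r_*-\varDelta,r_*+\varDelta]$ alone, independently both of $h$ and of the mesh --- since otherwise the time of existence $v_*=v_0+\tau h^\kappa$ would degenerate under refinement.
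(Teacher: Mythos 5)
Your proposal is correct and follows essentially the same route as the paper: recentre the window at $\log(1+\tfrac1h)$, exploit the cancellation of the $(1+\tfrac1h)^{\kappa_0}$ factors in the worst term of \eqref{eq:bounds}, reduce to the same Riccati-type equation as in Theorem~\ref{thm:ode-omega}, linearize via $G=1/g$, and choose $\rho=e^{1+\kappa\xi}D_2+D_1$ with $\tau=\frac{1}{\kappa\rho}$, propagating by induction over the random-choice time steps. The only cosmetic difference is that you keep the coefficient $D_2/h^{1+\kappa_0}$ rather than the sharper $D_2/h$ obtained in the paper's $\Xi_*$ estimate, which is harmless since $\kappa\geq 1+2\kappa_0$ (indeed $\kappa\geq 1+\kappa_0$ suffices for your version), and the paper itself ends up using a common $\rho$ for both regions.
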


\begin{proof} 
 We follow the proof of Theorem~\ref{thm:ode-omega} but assume that
 \begin{align}\label{gamma2}
  w(v),z(v) \in \log \left( 1 + \frac{1}{h} \right) + \left[ -\gamma(v),\gamma(v) \right].
 \end{align}
By \eqref{eq:bounds}, we may estimate
\[
 \pm w_v, z_v \leq A_1 + \frac{A_2}{ {h}} + \frac{A_3(1+h)}{h} e^\gamma + \frac{A_4 (1+h)}{h} e^{(1+2\kappa_0) \gamma}.
\]
 For $h$ small (e.g., $h\leq 1$) and $\kappa \geq 1+ 2\kappa_0$ we thus get and ordinary differential equation of the form
 \[
  \gamma_v = \frac{D_1}{ {h}} + \frac{D_2}{h}e^{\kappa \gamma}.
 \]
Solving the corresponding linearized ODE, we derive as in the proof of Theorem~\ref{thm:ode-omega} that we must choose
$\rho := e^{1+\kappa \xi} D_2 + D_1 \geq e^{1+\kappa \xi} D_2 + D_1  {h^{\kappa-1}}$. 
\end{proof}

 The above results are true for any $k \in (0,1)$ and can be generalized to general existence results for the ODE system
 $\partial_v U = \widetilde S(U,a,b)$ or $\partial_v U = S(U,a,b)$ (assuming that $M,-V>0$ should be preserved). To obtain a good
 control on the time of existence in view of the dynamical formation of trapped surfaces we rely on the above control of the random
 choice method and, moreover, we would like to have that $\kappa < 2$ which we saw in Remark~\ref{rem:kappa} is possible for small
 sound speeds $k$.
 

\subsection{Estimates of the wave speeds and geometric terms}

After the fluid variables $\Ma,\Va$ have been computed using the random choice method of Section~\ref{sec:random}, the (approximate)
geometric variables $\aaa,\ba$ are updated using the integral equations $\eqref{b-new}$ and $\eqref{a-new}$. It remains to be shown
that $\aaa,\ba$ satisfy the bounds stated in Definition~\ref{def:pproperty}. To control the integrals, it is necessary to control the
``size'' of the regions $\Omega_*$ and $\Xi_*$. The boundaries of both regions are defined using an upper bound for the
modulus of all wave speeds of the (homogeneous) Euler system, denoted by $C_*$.

\begin{lemma}[Wave speeds in $\Omega_*$]\label{lem:speeds}
 Fix $k \in (0,1)$ and $\kappa \geq 1+ 2\kappa_0$. Suppose $V,a,b$ satisfy the perturbation property of Definition~\ref{def:pproperty} 
 up to some time $v_i < v_0 + \tau h^{\kappa}$. Then, for $v \in [v_0,v_{i+1}]$, the wave speeds in the
 region $\Omega_*$ are controlled by $C_* := \frac{\Lambda}{h}$ for $h$ sufficiently small, with positive constant $\Lambda$, defined
 by
 \[
 \Lambda :=  C_b \left( 2e \frac{1+k}{1-k} C_0 + 1 \right) . 
 \]
\end{lemma}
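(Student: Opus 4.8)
The plan is to bound the two eigenvalues $\lambda_1,\lambda_2$ directly from their explicit expressions in \eqref{eigenvalues}, namely $\lambda_i = b\big(\tfrac{1\pm k}{1\mp k}V + \tfrac a2\big)$, using the perturbation-property bounds on $V$, $a$, and $b$ that hold up to time $v_i$ by hypothesis. Since $1\le b\le C_b$, $-\tfrac1h\le a\le1$, and $-\tfrac1h\le V\le0$ together with $-V\le C_0 e^{C(v-v_0)/h^\kappa}\,(1+\tfrac1h)$, every term in $\lambda_i$ is controlled. First I would record that, on the relevant time interval, Remark~\ref{rem:tau} gives $e^{C(v-v_0)/h^\kappa}\le e^{C\tau}\le e$, so the exponential factor in the $V$-bound is harmless and may be absorbed into a constant.

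\smallskip

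\noindent\textbf{Step 1 (upper bound).} Since $V<0$ and $\tfrac{1\pm k}{1\mp k}>0$, the term $\tfrac{1\pm k}{1\mp k}V$ is negative, so $\lambda_i \le b\cdot\tfrac a2 \le b/2 \le C_b/2$. This immediately matches the stated upper bound $\lambda_i\le C_b/2$ in Definition~\ref{def:pproperty}.

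\smallskip

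\noindent\textbf{Step 2 (lower bound).} For the lower bound, the dominant (most negative) contribution comes from $\tfrac{1+k}{1-k}V$ in $\lambda_1$ (this dominates $\lambda_2$ since $\lambda_1<\lambda_2$). Using $-V\le C_0 e\,(1+\tfrac1h)$ and $a\ge-\tfrac1h$, one gets
\[
\lambda_1 \ge b\Big(\tfrac{1+k}{1-k}V - \tfrac1{2h}\Big) \ge -C_b\Big(\tfrac{1+k}{1-k}\,e\,C_0\big(1+\tfrac1h\big) + \tfrac1{2h}\Big).
\]
For $h\le1$ we have $1+\tfrac1h\le\tfrac2h$, so the right-hand side is bounded below by $-\tfrac1h\,C_b\big(2e\tfrac{1+k}{1-k}C_0 + 1\big) = -\tfrac\Lambda h$ with $\Lambda$ as defined in the statement. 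Hence $|\lambda_i|\le\Lambda/h$ for all $v\in[v_0,v_{i+1}]$, so $C_*:=\Lambda/h$ is a valid upper bound for the modulus of all wave speeds.

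\smallskip

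\noindent I expect no serious obstacle here; the argument is a direct substitution of the a priori bounds into the closed-form eigenvalues. The only mild subtlety is making sure the perturbation-property bounds are genuinely available on the closed interval $[v_0,v_{i+1}]$ rather than only up to $v_i$ — this is handled by the inductive set-up of the random choice method, where the fluid bounds at step $i+1$ are established (via the Riemann and ODE steps, Theorems~\ref{prop-Riemn}, \ref{thm:ode-omega}) before the wave-speed bound is invoked to define the next mesh geometry, so the logic is not circular. A second point to keep track of is the smallness condition ``$h$ sufficiently small'' (used for $1+\tfrac1h\le\tfrac2h$ and $(1+h)^{1+\kappa_0}\le2$ from the ODE step), which only depends on $k$ and is compatible with the other smallness requirements imposed throughout Section~\ref{sec:6}.
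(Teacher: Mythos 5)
Your proposal is correct and follows essentially the same route as the paper's proof: substitute the perturbation-property bounds $1\le b\le C_b$, $-\tfrac1h\le a\le 1$, $-V\le C_0e^{C\tau}(1+\tfrac1h)$ into the explicit eigenvalues \eqref{eigenvalues}, use Remark~\ref{rem:tau} to replace $e^{C\tau}$ by $e$, and absorb everything into $\Lambda/h$ for $h$ small, with the same observation that $a,b$ stay frozen during the Riemann and ODE steps so the bounds extend to $[v_0,v_{i+1}]$.
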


\begin{proof}
 By assumption, $V,a,b$ satisfy the bounds of Definition~\ref{def:pproperty} up to time $v_i$. At time $v_i$ the Riemann problem and
 the ODE step are solved to compute $V$ up to time $v_{i+1}$. The geometric variables $a,b$ remain constant in both steps. The wave
 speeds $\lambda_1,\lambda_2$ of Proposition~\ref{prop:31} are
 \[
  b \left( \frac{1+k}{1-k} V + \frac{a}{2} \right) \leq \lambda_i \leq b \left( \frac{1-k}{1+k} V + \frac{a}{2} \right).
 \]
Plugging in the estimates for $V,a,b$ from the perturbation property yields upper and lower bounds for $\lambda_i$ independent of
$(v,r) \in \Omega_*$. In particular, since $V$ is negative and $a \leq 1$, for $h$ small,
 \begin{align*}
  \lambda_i \leq \frac{1}{2} a b \leq \frac{C_b}{2} \leq C_*.
 \end{align*}
 On the other hand, for $v \in [v_0,v_{i+1}]$ and $h$ sufficiently small,
 \begin{align*}
  \lambda_i 
\geq b \left( \frac{1+k}{1-k} b V + a \right) 
& \geq - C_b \left( \frac{1+k}{1-k} C_0 e^{C \frac{v-v_0}{h^\kappa}}
  \left( 1 + \frac{1}{h} \right) -  {\frac{1}{h}} \right) \\
  &\geq - C_b \left( 2 \frac{1+k}{1-k} e^{C \tau} \frac{C_0}{h} - \frac{1}{h} \right) \geq - \frac{\Lambda}{h},
 \end{align*}
 The last inequality is due to Remark~\ref{rem:tau} which states that $e^{C \tau} \leq e$.
\end{proof}

\begin{corollary}\label{cor:speeds}
 Suppose $V,a,b$ satisfy the perturbation property of Definition~\ref{def:pproperty} and $a \geq 0$. Then
  \begin{align*}
     C_* = \frac{1+k}{1-k} C_b C_0 e^{C \tau} \left( 1 + \frac{1}{h} \right).
  \end{align*}
\end{corollary}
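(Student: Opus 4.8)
Corollary \ref{cor:speeds} asks for a bound on the wave speeds in the special case where one additionally knows $a\ge0$. This should be a direct specialization of Lemma \ref{lem:speeds}, so the plan is to redo the lower bound computation in that lemma but now exploit $a\ge0$ rather than $a\ge-1/h$.

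The plan is as follows. First I would recall that by Proposition \ref{prop:31} the wave speeds satisfy $\lambda_i\in\big[b\big(\tfrac{1+k}{1-k}V+\tfrac a2\big),\,b\big(\tfrac{1-k}{1+k}V+\tfrac a2\big)\big]$. Since $V<0$ and $\tfrac{1+k}{1-k}>\tfrac{1-k}{1+k}>0$, both endpoints are bounded above by $b\cdot\tfrac a2$, and the perturbation property together with $a\le1$ gives the upper bound $\lambda_i\le \tfrac{C_b}{2}$ exactly as in Lemma \ref{lem:speeds}. For the lower bound, instead of writing $\lambda_i\ge b\big(\tfrac{1+k}{1-k}V+\tfrac a2\big)\ge b\big(\tfrac{1+k}{1-k}bV+a\big)$ and then using $a\ge-1/h$, I would now simply drop the nonnegative term $\tfrac a2$ and estimate $\lambda_i\ge b\cdot\tfrac{1+k}{1-k}V\ge -\tfrac{1+k}{1-k}C_b\cdot\big(C_0e^{C(v-v_0)/h^\kappa}(1+\tfrac1h)\big)$ using the perturbation-property bound $-V\le C_0e^{C(v-v_0)/h^\kappa}(1+\tfrac1h)$ and $1\le b\le C_b$. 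Since $e^{C(v-v_0)/h^\kappa}\le e^{C\tau}$ on $[v_0,v_*]$, this yields $|\lambda_i|\le \tfrac{1+k}{1-k}C_bC_0e^{C\tau}\big(1+\tfrac1h\big)$, which is the stated value of $C_*$ (and one checks this dominates the upper bound $C_b/2$ for $h$ small).

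There is essentially no obstacle here: the only point requiring a line of justification is that the claimed $C_*$ is genuinely an upper bound for the modulus of \emph{both} endpoints, i.e. that $\tfrac{1+k}{1-k}C_bC_0e^{C\tau}(1+1/h)\ge C_b/2$, which is immediate since $C_0\ge1$ (being an exponential of a nonnegative quantity, cf.\ $\log C_0=\tilde k\xi$ with $\xi>0$) and $\tfrac{1+k}{1-k}>1$ and $1+1/h>1$. I would also note, as in Lemma \ref{lem:speeds}, that $e^{C\tau}\le e$ by Remark \ref{rem:tau}, so $C_*$ is in fact of order $1/h$, consistent with the constant $\Lambda$ appearing there; in fact one has $\tfrac{1+k}{1-k}C_bC_0e^{C\tau}(1+1/h)\le \tfrac2h\cdot\tfrac{1+k}{1-k}C_bC_0e\le \Lambda/h$ for $h\le1$, so this corollary is a slightly sharper book-keeping of the same bound. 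The proof is therefore a two-line specialization and I would present it as such.
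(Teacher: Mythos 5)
Your proposal is correct and is exactly the argument the paper intends: the corollary is left as an immediate specialization of Lemma~\ref{lem:speeds}, where the hypothesis $a\geq 0$ lets one drop the $a/2$ term (rather than using $a\geq -1/h$) in the lower bound, giving $|\lambda_i|\leq \frac{1+k}{1-k}C_bC_0e^{C\tau}\left(1+\frac1h\right)$, which also dominates the upper bound $C_b/2$ since $C_0\geq 1$. Nothing further is needed.
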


We are now in a position to estimate the ``updated'' integral quantities $a$ and $b$ in the random choice method.

\begin{proposition}[Estimates for $a$ and $b$ in $\Omega_*$]\label{prop:ab}
 Fix $k\in (0,1)$  {and} $\kappa \geq 1 + 2\kappa_0$. 
 Suppose that at time $v_0$ the fluid variables $M_0,V_0$ satisfy the initial conditions stated in Proposition~ \ref{prop:initialdata}
 with $\delta \leq \frac{h}{C_1}$.
 Then there exists a positive constant $C_b$ so that for $(v,r) \in \Omega_*$ with $v \in [v_0,v_0+ \tau h^{\kappa}]$, and $h$
 sufficiently small,
 \begin{align} \label{eq:ab}
  1 \leq b(v,r) \leq C_b, \qquad -  {\frac{1}{h}} \leq a(v,r) \leq 1,
 \end{align}
\end{proposition}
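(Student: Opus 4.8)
**Proof plan for Proposition \ref{prop:ab} (estimates for $a$ and $b$ in $\Omega_*$).**

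The plan is to run the same inductive scheme as in the rest of Section~\ref{sec:7}: assume the perturbation property (in particular the bounds \eqref{eq:ab}) holds up to time $v_i$, perform the random-choice step to obtain $M_\sharp,V_\sharp$ on $[v_i,v_{i+1}]$ (which, by Theorems~\ref{thm:ode-omega}--\ref{thm:ode-delta} and Propositions~\ref{prop:omega}--\ref{prop:delta}, satisfy the stated fluid bounds), and then feed these into the integral formulas \eqref{b-new}--\eqref{a-new} to recover $a_\sharp,b_\sharp$ and check that the geometric bounds are reproduced. First I would treat $b$: from \eqref{b-new}, $b(v,r)=\exp\big(4\pi(1+k^2)\int_0^r M(v,r')\,r'\,dr'\big)$, and since $M>0$ everywhere one immediately gets $b\geq 1$. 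For the upper bound, split the integral as $\int_0^{R_*^-(v)} + \int_{R_*^-(v)}^r$: on $[0,R_*^-(v)]$ the fluid coincides with the static solution $M^{(0)}$, whose contribution is bounded by $\log \bs(r_*+\varDelta)$ uniformly (Theorem~\ref{thm-static}); on the perturbed part $[R_*^-(v),r]\subset\Omega_*$ we use the perturbation-property bound $M_\sharp(v,r')\le C_0 e^{C(v-v_0)/h^\kappa}(1+1/h)^{\kappa_0}\le e C_0 (1+1/h)^{\kappa_0}$ (using $e^{C\tau}\le e$ from Remark~\ref{rem:tau}) together with the fact that the perturbed region has radial width at most $2\varDelta$; since $\kappa_0<1/2<1$ the factor $(1+1/h)^{\kappa_0}\le (2/h)^{\kappa_0}$ and multiplied by $h^{\kappa}$-order quantities (or, more simply, because the width shrinks — actually the width of $\Omega_*$ at time $v$ is $2\delta+2C_*(v-v_0)\le 2\delta+2\Lambda\tau h^{\kappa-1}$, which is $O(h)$ since $\delta\le h/C_1$ and $\kappa\ge 1$), the whole perturbed contribution to $\int M r'\,dr'$ is bounded by a constant independent of $h$. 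Thus $b(v,r)\le C_b$ for a suitable $C_b$ depending only on the static solution (consistent with $C_b=\bs(r_*+\varDelta)$ up to enlarging it).

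Next I would treat $a$ using \eqref{a-new}: $a(v,r)=1-\frac{4\pi(1+k^2)}{r}\int_0^r \frac{b(v,r')}{b(v,r)}M(v,r')\big(2K^2|V(v,r')|+1\big){r'}^2\,dr'$. Since $b,M>0$, $|V|\ge0$ and $K^2>0$, the integrand is nonnegative, hence $a(v,r)\le 1$ directly. For the lower bound $a(v,r)\ge -1/h$, observe $b(v,r')/b(v,r)\le 1$ on $[0,r]$ because $b$ is nondecreasing in $r$ (it is an exponential of an increasing integral), so
\[
1-a(v,r)\le \frac{4\pi(1+k^2)}{r}\int_0^r M(v,r')\big(2K^2|V(v,r')|+1\big){r'}^2\,dr'.
\]
Again split at $R_*^-(v)$: the static part is bounded by $1-\as(r)$ evaluated via Theorem~\ref{thm-static} (so $\le 1$, since $\as\ge 0$, and in fact $\ge 0$), contributing at most a constant less than $1$; on the perturbed part $[R_*^-(v),r]$ we insert $M_\sharp\le eC_0(1+1/h)^{\kappa_0}$ and $|V_\sharp|\le eC_0(1+1/h)$, so the integrand is $O\big((1+1/h)^{1+\kappa_0}\big)$, while the radial length of integration is $O(h)$ as noted above (width of $\Omega_*$ at time $v$ is $\le 2\delta+2C_*(v-v_0)=O(h)$). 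Since $\kappa_0<1/2$, we have $1+\kappa_0<3/2$, so the product is $O\big(h\cdot h^{-3/2}\big)=O(h^{-1/2})=o(1/h)$, which is dominated by $1/h$ for $h$ small. Hence $1-a(v,r)\le 1+o(1/h)+ (\text{constant}<1)$, which is $\le 1+1/h$ for $h$ small enough, i.e. $a(v,r)\ge -1/h$. The inductive hypothesis feeds exactly the bounds needed to close the loop, and the base case $v=v_0$ is Proposition~\ref{prop:initialdata} (which gives $0<a_0\le\as\le 1$, in particular $a_0\ge 0\ge -1/h$) together with \eqref{eq:initial-ab}.

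The main obstacle is the bookkeeping of powers of $h$: one must be sure that the perturbed contributions to both integrals are genuinely $o(1/h)$ (for $a$) and $O(1)$ (for $b$), which hinges on (i) the length of the perturbed radial interval being $O(h)$ — this uses $\delta\le h/C_1$ and the speed bound $C_*=\Lambda/h$ from Lemma~\ref{lem:speeds} together with $v-v_0\le\tau h^{\kappa}$ and $\kappa\ge 1$ — and (ii) the exponent $1+\kappa_0$ on $(1+1/h)$ being strictly less than the available power of $h$ from the interval length, i.e. $1+\kappa_0<2$, equivalently $\kappa_0<1$, which is guaranteed by $\kappa\ge 1+2\kappa_0$ and $\kappa<2$ in the regime of Remark~\ref{rem:kappa}; a slightly more careful version just needs $\kappa_0<1$, always true since $\kappa_0>0$ and — as Remark~\ref{rem:kappa} notes — $\kappa_0<1/2$ for $k<k_0$. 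A secondary subtlety is that in the random choice scheme $a_\sharp,b_\sharp$ are piecewise constant in $r$ and sampled via the equidistributed sequence, but since the integral formulas are applied to the (already computed, $BV$) fluid data $M_\sharp,V_\sharp$ and the bounds above are sup-norm bounds, the sampling does not affect the estimates. Finally one records $C_b$ and checks it can be taken as $\bs(r_*+\varDelta)$ enlarged by the (constant) perturbed contribution, consistent with the value quoted after Definition~\ref{def:pproperty}.
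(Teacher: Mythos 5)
Your overall scheme (induction along the random-choice steps, feeding the perturbation-property fluid bounds into the integral formulas \eqref{b-new}--\eqref{a-new}, $b\ge 1$ and $a\le 1$ from positivity, then power counting in $h$ for the remaining integrals) is the same as the paper's, but the power counting itself has two defects, one of which is a genuine gap. First, your width claim is wrong as stated: the perturbed part of $\Omega_*$ at time $v$ has radial extent $2\delta+2C_*(v-v_0)\le 2h/C_1+2\Lambda\tau h^{\kappa-1}$, and $\kappa\ge 1$ only makes the second term $O(1)$, not $O(h)$; in the regime actually used for trapped-surface formation one takes $\kappa=1+2\kappa_0<2$, so $h^{\kappa-1}=h^{2\kappa_0}\gg h$ and the width is $O(h^{\kappa-1})$, not $O(h)$. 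This particular error is repairable, since $h^{\kappa-1}\cdot h^{-1-\kappa_0}=h^{-1+\sigma}$ with $\sigma=\kappa-1-\kappa_0\ge\kappa_0>0$, which is still $o(1/h)$.

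The deeper problem is that you estimate the whole perturbed interval with the $\Omega_*$ bounds $M\lesssim(1+1/h)^{\kappa_0}$, $|V|\lesssim(1+1/h)$. Then the piece of width $\sim 2\delta\sim h$ contributes $\sim h\cdot h^{-1-\kappa_0}=h^{-\kappa_0}$ to the $a$-integral (and $\sim h^{1-\kappa_0}$ to the $b$-exponent), which is $o(1/h)$ (resp.\ $O(1)$) only if $\kappa_0<1$. You justify this via Remark~\ref{rem:kappa}, but that remark gives $\kappa_0<\tfrac12$ only for $k<k_0\approx 0.12$; the proposition is stated for every $k\in(0,1)$ with $\kappa\ge 1+2\kappa_0$, and $\kappa_0=\frac{4k}{(1-k^2)K^2}\to\infty$ as $k\to1$, so your induction does not close on the full claimed range. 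The paper avoids any restriction on $\kappa_0$ by splitting the perturbed region into the mixed zones $\triangledown_*=\Omega_*\setminus\Xi_*$, of width $O(C_*(v-v_0))=O(h^{\kappa-1})$, where the weak $\Omega_*$ bounds are compensated by $\sigma>0$, and the big-data region $\Xi_*$, of width at most $2\delta\le 2h/C_1$, where the \emph{improved} bounds of Proposition~\ref{prop:delta} apply ($M\le C_0e^{C\tau}$ with no $(1+1/h)^{\kappa_0}$ factor, $|V|\le C_0e^{C\tau}(1+1/h)$), so that the $\Xi_*$ contribution is an $h$-independent constant for all $k$. Incorporating this three-way splitting (as in the terms $I_1h^{-1+\sigma}$, $I_2$, $I_3h^{-1+\sigma}$ of the paper's Step~3, and the analogous $B_1h^{\sigma}$, $B_2h$, $B_3h^{\sigma}$ for $b$) is the missing ingredient; you also need the slack $\bs(R_*^-(v))<\bs(r_*+\varDelta)=C_b$ to absorb the $o(1)$ exponent and keep the same constant $C_b$ used in Definition~\ref{def:pproperty}, rather than enlarging it.
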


\begin{proof}
{\noindent \bf Step 1. Initial time $v_0$.}
Again, we proceed by induction in time steps. Since $b$ is independent of $a$, we consider it first. The initial step at $v_0$ is
true since $b$ is equal to the static solution,
\begin{align*}
 1 \leq b_0(r) = \bs(r) &= e^{4 \pi (1+k^2) \int_0^r \Ms(s) s ds} 
 \leq \bs(r_*+\varDelta) =: C_b.
\end{align*}
 Similarly, for $a$, we have seen in Proposition~\ref{prop:initialdata}, that for an appropriate choice of compact perturbation of the
 static solution,
 \[
 0 < a_0 (r) \leq 1, \qquad r \in [0,r_*+\varDelta].
 \]
Suppose the inequalities \eqref{eq:ab} hold up to time $v_i$. In view of Section~\ref{sec:random}, this is sufficient to compute the
approximate solutions $M,V$ up to time $v_{i+1}$.
 Theorems~\ref{thm:ode-omega} and \ref{thm:ode-delta} (or its equivalent formulation in Definition~\ref{def:pproperty} in terms of
 $M,V$) moreover state certain bounds for $M$ and $V$ valid up to time $v_{i+1}$. This allows us to compute the maximal wave speeds
 by Lemma~\ref{lem:speeds}. We will use those bounds to show that $a,b$ satisfy the above bounds up to time $v_{i+1}$, too.
 
 \
 
{\noindent \bf Step 2. Inductive step for $b$.}
 To estimate $b$ we use the integral formula \eqref{b-new}, and that $M$ is positive,
 \begin{align*}
  b(v,r) &= e^{4 \pi (1+k^2) \int_0^r M(v,s) s \, ds} 
  \leq e^{4 \pi (1+k^2) \left( \int_0^{R_*^-(v)} + \int_{R_*^-(v)}^{\Xi_*^-(v)} 
  + \int_{\Xi_*^-(v)}^{\Xi_*^+(v)} + \int_{\Xi_*^+(v)}^{R_*^+(v)}  \right)} \\
  &= \bs(R_*^-(v)) e^{4 \pi (1+k^2) \left( \int_{R_*^-(v)}^{\Xi_*^-(v)} 
  + \int_{\Xi_*^-(v)}^{\Xi_*^+(v)} + \int_{\Xi_*^+(v)}^{R_*^+(v)} \right)}.
 \end{align*} 
 By Lemma~\ref{lem:speeds} and the assumed bound on $M$, for $h$ small and up to time $v_{i+1} < v_0 + \tau h^\kappa$,
 \begin{align*}
  \int_{R_*^-(v)}^{\Xi_*^-(v)} & \! := \int_{R_*^-(v)}^{\Xi_*^-(v)} M(v,s)s \, ds 
  \leq C_0 e^{C \frac{v-v_0}{h^\kappa}} \left( 1 + \frac{1}{h} \right)^{\kappa_0}
  \left[ \frac{s^2}{2} \right]_{r_*-\delta-C_*(v-v_0)}^{r_*-\delta+C_*(v-v_0)}\\
 &\leq 2 C_0 e^{C \tau} \left( 1 + \frac{1}{h} \right)^{\kappa_0} (r_*-\delta) C_* (v-v_0)
 \leq 2^{\kappa_0+1} e C_0 r_* \tau \Lambda h^{\kappa-1-\kappa_0}
 =: B_1 h^{\sigma},
 \end{align*}
 with $\sigma := \kappa - 1 - \kappa_0 > 0$. Similarly, for $(v,r) \in \Xi_*, v \leq v_{i+1}$, by
 $\delta \leq \frac{h}{C_1}$ of Proposition~\ref{prop:initialdata},
 \begin{align*}
 \int_{\Xi_*^-(v)}^{\Xi_*^+(v)} & \! := \int_{\Xi_*^-(v)}^{\Xi_*^+(v)} M(v,s) s \, ds
 \leq C_0 e^{C \tau} \left[ \frac{s^2}{2} \right]_{r_*-\delta+C_*(v-v_0)}^{r_*+\delta-C_*(v-v_0)} \\
 &\leq 2 C_0 e^{C \tau} r_* (\delta - C_* (v-v_0)) \leq 2e r_* C_0 \frac{h}{C_1} =: B_2 h,
 \end{align*}
 as well as
 \begin{align*}
  \int_{\Xi_*^+(v)}^{R_*^+(v)} & \! := \int_{\Xi_*^+(v)}^{R_*^+(v)} M(v,s) s \, ds 
  \leq C_0 e^{C \tau} \left( 1 + \frac{1}{h} \right)^{\kappa_0}
  \left[ \frac{s^2}{2} \right]_{r_*+\delta+C_*(v-v_0)}^{r_*+\delta-C_*(v-v_0)} \\
  &\leq 2^{\kappa_0+1}e C_0 (r_*+\varDelta) \tau \Lambda h^{\kappa-1-\kappa_0}
 =: B_3 h^{\sigma}.
 \end{align*}
 Therefore we may estimate $b$ up to time $v_{i+1}$ by
 \begin{align*}
  b(v,r) \leq \bs(r_*-\delta) e^{4 \pi (1+k^2) \left( (B_1+B_3) h^{\sigma} + B_2 h \right)} \leq \bs(r_*)
  e^{4 \pi (1+k^2) \left( B_1+B_3 + B_2\right) h^{\min(\sigma,1)}},
 \end{align*}
  which, for $h$ sufficiently small (independent of $v$) is bounded by 
  \[
   b(v,r) \leq \bs(r_*) \frac{\bs(r_*+\varDelta)}{\bs(r_*)} = C_b.
  \]

 The estimate for the lower bound of $b$ follows immediately from \eqref{eq:Einstein1-new} itself, since $M$ is positive everywhere
 (thus so is $b_r$) and $b$ is equal to the static solution $\bs$ at the center, i.e.\ $b(v,r) \geq \bs(0) = 1$ for all $r>0$.
 
 \
 
  {\noindent \bf Step 3. Inductive step for $a$.}
 The geometric term $a$ can now be estimated using the integral formula \eqref{a-new}. We already know that $M,-V$ as well $b$ are
 positive and satisfy certain bounds up to time $v_{i+1}$. Thus for $(v,r) \in \Xi_*, v\in [ v_i,v_{i+1}]$ it is immediate
 that
 \begin{align*}
 a(v,r) &= 1 - \frac{4 \pi (1+k^2)}{r} \int_0^r \frac{b(v,s)}{b(v,r)} M(v,s) \left( 1-2K^2 V(v,s) \right) s^2 \, ds \leq 1.
 \end{align*}
 To estimate $a$ from below, we carefully check all parts involved. For $r \in [R_*^-(v),R_*^+(v)]$, due to the monotonicity of and
 bounds for $b$ (cf.\ Step 2) and the positivity of the static solution $\as$, we have 
 \begin{align*}
  a(v,r) &= 1 - \frac{4 \pi (1+k^2)}{r} \int_0^r \frac{b(v,s)}{b(v,r)} M(v,s) \left( 1-2K^2 V(v,s) \right) s^2 \, ds \\
  &\geq 1 - \frac{R_*^-(v) b(v,R_*^-(v))}{r b(v,r)} \left( 1 - \as(R_*^-(v))\right) - \frac{4\pi(1+k^2)}{rb(v,r)}
 \left[ \int_{R_*^-(v)}^{\Xi_*^-(v)} + \int_{\Xi_*^-(v)}^{\Xi_*^+(v)}
 + \int_{\Xi_*^+(v)}^{R_*^+(v)} \right] \\
  &\geq \frac{R_*^-(v) b(v,R_*^-(v))}{r b(v,r)} \as(R_*^-(v)) - \frac{4\pi(1+k^2)}{rb(v,r)}
 \left[ \int_{R_*^-(v)}^{\Xi_*^-(v)} + \int_{\Xi_*^-(v)}^{\Xi_*^+(v)} + \int_{\Xi_*^+(v)}^{R_*^+(v)} \right] \\
  &\geq - \frac{4\pi(1+k^2)}{r_*-\varDelta} \left[ \int_{R_*^-(v)}^{\Xi_*^-(v)}
  + \int_{\Xi_*^-(v)}^{\Xi_*^+(v)} + \int_{\Xi_*^+(v)}^{R_*^+(v)} \right],
 \end{align*}
 with integral terms $\int_{R_*^-(v)}^{\Xi_*^-(v)}, \int_{\Xi_*^-(v)}^{\Xi_*^+(v)},
 \int_{\Xi_*^+(v)}^{R_*^+(v)}$ as follows.
 The first term may be estimated by
\begin{align*}
 0 \leq \int_{R_*^-(v)}^{\Xi_*^-(v)} & \! := \int_{R_*^-(v)}^{\Xi_*^-(v)} b(v,s) M(v,s) (1 - 2 K^2 V(v,s)) s^2 ds \\
 &\leq C_b C_0 e^{C \tau} \left( 1 + \frac{1}{h} \right)^{\kappa_0} \left( 1 + 2 K^2 C_0 e^{C \tau} \left(1 + \frac{1}{h}\right) \right)
 \left[ \frac{s^3}{3} \right]_{r_*-\delta-C_*(v-v_0)}^{r_*-\delta+C_*(v-v_0)} \\
 &\leq 2^{\kappa_0+2} e^2 K^2 C_b C_0^2 e^{2 C\tau} h^{- 1 - {\kappa_0}}
 \left[ \frac{s^3}{3} \right]_{r_*-\delta-C_*(v-v_0)}^{r_*-\delta+C_*(v-v_0)},
\end{align*}
 where, for $h$ small,
\begin{align*}
 \left[ \frac{s^3}{3} \right]_{r_*-\delta-C_*(v-v_0)}^{r_*-\delta+C_*(v-v_0)} &= \frac{1}{3} C_* (v-v_0) \left[ 3 (r_*-\delta)^2
 + C_*^2 (v-v_0)^2 \right] \\
 &\leq \frac{1}{3} \Lambda h^{-1} \tau h^{\kappa} \left[  3 (r_*-\delta)^2 +  \Lambda^2 h^{-2} e^{2C \tau} \tau^2 h^{2 \kappa} \right]
 \leq 2 \Lambda \tau (r_*-\delta)^2 h^{\kappa - 1} ,
\end{align*}
 since $C_* = \Lambda h^{-1}$ by Lemma~\ref{lem:speeds}.
 Therefore, for some constant $I_1>0$  
 and  {$\sigma = \kappa - 1 - \kappa_0 >0$},  
 \begin{align*}
  \int_{R_*^-(v)}^{\Xi_*^-(v)} &\leq 2^{\kappa_0+3} e^2 K^2 C_b C_0^2 \Lambda \tau (r_*-\delta)^2 h^{\kappa - 2 - \kappa_0}
  \leq 2^{\kappa_0+3} e^2 K^2 C_b C_0^2 \Lambda \tau r_*^2 h^{\kappa -2 - \kappa_0}=: I_1  {h^{-1+\sigma}}.
 \end{align*}
In a similar fashion we derive, by $\delta \leq \frac{h}{C_1}$,
\begin{align*}
 \int_{\Xi_*^-(v)}^{\Xi_*^+(v)} &\leq C_b C_0 e^{C \tau} \left( 1 + 2 K^2 C_0 e^{C \tau} \left(1 + \frac{1}{h}
 \right) \right) \left[ s^3\right]_{r_*-\delta+C_*(v-v_0)}^{r_*+\delta-C_*(v-v_0)} \\
 &\leq 4 K^2 C_b C_0^2 e^{2 C \tau} h^{-1} \frac{\delta}{3} \left( 3 r_*^2 + (\delta + C_*(v-v_0))^2 \right) \\
 &\leq 2^{\kappa_0+2} e^2 K^2 C_b C_0^2 h^{-1} \delta r_*^2 \leq 128 K^2 \frac{C_b C_0^2}{C_1} r_*^2 =: I_2,
 \end{align*}
and
\begin{align*}
 0 \leq \int_{\Xi_*^+(v)}^{R_*^+(v)} &\leq 2^{\kappa_0+2} e^2 K^2 C_b C_0^2 e^{2 C \tau} h^{-1 - \kappa_0}
 \left[ \frac{s^3}{3} \right]_{r_*+\delta-C_*(v-v_0)}^{r_*+\delta+C_*(v-v_0)} \\
 &= 2^{\kappa_0+2} e^2 K^2 C_b C_0^2 e^{2 C \tau} h^{-1 - \kappa_0} \frac{1}{3} C_* (v-v_0)
 \left[ 3 (r_*+\delta)^2 + C_*^2 (v-v_0)^2 \right] \\
 &\leq 2^{\kappa_0+3} e^2 K^2 C_b C_0^2 \Lambda \tau (r_* + \varDelta)^2 h^{\kappa-2 -\kappa_0} =: I_3  {h^{-1+ \sigma}}.
 \end{align*}
 Summing up all contributions we finally derive a (negative) lower bound for $a$. For $h$ sufficiently small, since
 {$\sigma > 0$},   
 \[
  a(v,r) \geq - \frac{4 \pi (1+k^2)}{r_*-\varDelta} \left[ I_2  {h} + (I_1 + I_3)  {h^{\sigma}} \right]  {h^{-1} \geq
  - \frac{1}{h}}. \qedhere
 \]
\end{proof}


\subsection{The total variation estimate} \label{sec:bv}

As mentioned earlier in Section 6, the total variation bound and the consistency are standard and we only provide a sketch. We refer
to \cite{Glimm,GroahTemple} for further details and focus on the derivation of the total variation bound on the approximate solutions.
From this bound, it is a standard matter to deduce that a subsequence converges and we can also check that the limit is a solution of
the Euler system.
   
To this end, denote by $U_{i, j+1}$ the value achieved by the approximate solution $U_\D$ at the point $(v_i, r_{j+1})$, so 
$$
U_{i,j+1} := U_\D(v_i, r_{j+1}), \quad i+j \,\mathrm{even}. 
$$
Let $\Uh _{i,j}$ be the solution to the classical Riemann problem $\Rcal(U_{i-1, j}, U_{i-1,j+2}; v_{i-1}, r_{j+1})$, in which 
$$
U_{i,j+1} := \Uh _{i,j+1} + \int_{v_{i-1}}^{v_j} \widetilde S(v', \Pcal_{v'}\Uh _{i,j+1}) \, dv'.
$$
We divide the $(v,r)$-plane into diamonds $\Diamond_{i,j}$ ($i+j$ even)
with vertices $(r_{i-1,j}, v_{i-1})$, $(r_{i,j-1}, v_{i})$, $(r_{i, j+1},v_{i})$, 
$(r_{i+1,j}, v_{i+1})$. To simplify the notation, we introduce the values of $U_\D$ at the vertices of $\Diamond_{i,j}$ and 
the corresponding Riemann problems by 
$$
U_S:= U_{i-1, j}, \qquad U_W:= U_{i, j-1}, \qquad U_E:= U_{i, j+1}, \qquad U_N := U_{i+1,j},
$$ 
$$
\Uh _W:= \Uh_{i, j-1}, \qquad \Uh _E:= \Uh_{i,j+1}, \qquad \Uh_N:= \Uh_{i + 1,j},
$$  
in terms of which, the strength $\str_*( \Diamond_{i,j})$ of the waves entering the diamond is defined
as
$$
\str_*( \Diamond_{i,j}) := |\str(\Uh_W, U_S)| + |\str(U_S, \Uh_E)|,
$$
whereas the strength $\str^*( \Diamond_{i,j})$ of the waves leaving is 
$$
\str^*( \Diamond_{i,j}) := |\str(U_W, \Uh_N)| + |\str(\Uh_N, U_E)|.
$$
Let $J$ be a spacelike mesh curve, that is a polygonal curve connecting the vertices $(r_{i,j+1}, v_i)$ of 
different diamonds, where $i+j$ is even. We say that waves $(U_{i-1,j}, \Uh_{i,j+1})$ cross the curve $J$ 
if $J$ connects $(r_{i-1,j}, v_{i-1})$ to $(r_{i,j+1}, v_i)$ and similarly for $(\Uh_{i,j-1}, U_{i-1,j})$. 
The total variation $L(J)$ of $J$ is defined as 
$$
L(J):= \sum |\str(U_{i-1,j}, \Uh_{i,j+1})| + |\str(\Uh _{i,j-1}, U_{i-1,j})|, 
$$
where the sum is taken over all the waves crossing $J$. Furthermore, we say that a curve $J_2$ is an immediate successor 
of the curve $J_1$ if they connect all the same vertices except for one and if $J_2$ lies in the future of $J_1$. For the 
difference of their total variation, we have the following result. 

\begin{lemma}[Global total variation estimate] 
\label{diamond}
Let $J_1, J_2$ be two spacelike curves such that $J_2$ is an immediate successor of $J_1$ and let $\Diamond_{i,j}$ be the diamond
limited by these two curves. There exists a uniform constant $C_2$ such that  
$$
L(J_2) - L(J_1) \leq C_2 \, \Delta v \, \str_*(\Diamond_{i,j}),
$$
in which $\Delta v$ denotes the time step length. 
\end{lemma}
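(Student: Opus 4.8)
The plan is to carry out a Glimm-type interaction analysis, adapted to the source-term (ODE) splitting built into the generalized random choice scheme. First I would reduce the claim to a purely local interaction estimate inside a single diamond $\Diamond_{i,j}$, isolating two contributions to $L(J_2)-L(J_1)$: (i) the change of wave strengths coming from the \emph{Riemann interaction} at the vertex $(r_{i,j},v_i)$, where the incoming waves $(\Uh_W,U_S)$ and $(U_S,\Uh_E)$ are resolved into the outgoing Riemann solution; and (ii) the change coming from the \emph{ODE step} which maps each $\Uh_{i,\cdot}$ forward to $U_{i,\cdot}$ via the solution operator $\Pcal$ for $\partial_v W=\widetilde S(W,a,b)$. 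For part (i), the key input is Lemma~\ref{lem94}: in Riemann-invariant coordinates the shock curves are symmetric about the $w=z$ diagonal and the wave strength $\str$ is measured along that diagonal, which gives the subadditivity $\str(U_L,U_R)\leq\str(U_L,U_M)+\str(U_M,U_R)$. Hence at the interaction vertex the outgoing strength does not exceed the sum of incoming strengths, so part (i) contributes a nonpositive amount to $L(J_2)-L(J_1)$ — there is in fact \emph{no} positive interaction term, which is the fortunate consequence of our choice of normalized fluid variables making the eigenvalues linear in $V$.

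Next I would estimate part (ii), the effect of the source. Over one time step the ODE flow $\Pcal_{v'}$, $v'\in[v_{i-1},v_i]$, is a $\mathcal C^1$ map whose derivative differs from the identity by $O(\Delta v)$, with a constant controlled by $\|D_W\widetilde S\|$ on the relevant sup-norm ball; here the perturbation property (Definition~\ref{def:pproperty}) together with the wave-speed bound (Lemma~\ref{lem:speeds}) and the bounds $1\le b\le C_b$, $-1/h\le a\le 1$ from Proposition~\ref{prop:ab} furnish a uniform a priori bound on $Z_\D$ in $\Omega_*$, so $\widetilde S$ and its derivatives are uniformly bounded there. Consequently, transporting a wave of strength $\str$ through the ODE step changes its strength by at most $C\,\Delta v\,\str$, and summing over the (finitely many, for fixed $\Delta r$) waves entering the diamond gives a bound $C\,\Delta v\,\str_*(\Diamond_{i,j})$. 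Combining with part (i),
$$
L(J_2)-L(J_1)\le C_2\,\Delta v\,\str_*(\Diamond_{i,j}),
$$
as claimed. A small point to handle carefully is that the geometry coefficients $a_\D,b_\D$ are themselves piecewise constant and are randomly sampled at each step; but since within a single slab they are frozen, and since $a_r,b_r$ enter $\widetilde S$ only through the \emph{already-integrated} algebraic expressions $a_r b+2ab_r$ and $b_r$ replaced via \eqref{eq:Einstein1-new}--\eqref{eq:Einstein2-new}, their variation over one step is again $O(\Delta r)$ and is absorbed into the constant $C_2$.

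Finally I would note that, as usual in the random choice method, the local estimate is summed along a sequence of immediate-successor mesh curves marching from $v_0$ to any later time; since each step increases $L$ by at most $C_2\Delta v\,\str_*(\Diamond_{i,j})$ and $\sum_{i}\Delta v\le v_*-v_0$ while the total wave strength stays comparable to $L$, a Gronwall/bootstrap argument yields the uniform bound $\sup_v L(J)\le e^{C_2(v_*-v_0)}L(J_0)$, which is exactly the global $BV$ bound \eqref{eq:20} that the scheme needs; here the boundedness of $v_*-v_0=\tau h^\kappa$ makes the exponential factor harmless. The main obstacle, in my view, is not the combinatorics of the interaction estimate but verifying that the a priori sup-norm control needed to bound $D_W\widetilde S$ genuinely closes the induction — i.e. that the $BV$ estimate and the perturbation property are proved \emph{simultaneously} by induction on $i$, each feeding the other — rather than one being available for free; this interdependence, together with keeping track of the extra $a_r,b_r$ terms hidden in $\widetilde S$, is where the delicate bookkeeping lies.
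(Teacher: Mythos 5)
Your proposal is correct and follows essentially the same route as the paper: the increment $L(J_2)-L(J_1)$ is split into the Riemann-interaction part, which is nonpositive by the subadditivity of wave strengths in Riemann-invariant coordinates (Lemma~\ref{lem94}), plus the ODE/source part, which is bounded by $C\,\Delta v$ times the incoming strength $\str_*(\Diamond_{i,j})$ using the uniform sup-norm control of $\widetilde S$ coming from the perturbation property and the near-identity ($O(\Delta v)$ in $C^1$) character of the ODE flow. This is exactly the paper's decomposition $X_1+X_2$ with $X_1\leq 0$ and $X_2\leq C\,\Delta v\,|\str(\Uh_W,\Uh_E)|$, so no further comparison is needed.
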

 
From this lemma, it is immediate to derive, by induction in time and for all spacelike curve $J$, the uniform bound
$L(J) \leq C_3 e^{C_2 (v_* - v_0)} L(J_0)$, which is equivalent to a uniform total variation on the approximate solutions up to the
time $v_*$.

\begin{proof}
By definition, we have 
$$
\aligned
L(J_2)-L(J_1) &= |\str(U_W, \Uh_N)| + |\str(\Uh_N, U_E)| - |\str(\Uh_W, U_S)| - |\str(U_S, \Uh_E)| 
\\
&= \str^*( \Diamond_{i,j}) -  \str_*( \Diamond_{i,j}).  
\endaligned
$$ 
Observe that $|\str(U_W, \Uh_N)| + |\str(\Uh_N, U_E)| = |\str(U_W, U_E)|$ since $\Uh_N$ is just 
one of the states in the solution of the Riemann problem for $U_W, U_E$. Hence, we can write 
$
L(J_2)-L(J_1) = X_1 + X_2,
$
where
$$
\aligned
&X_1 : = |\str(\Uh_W, \Uh_E)| - |\str(\Uh_W, U_S)| - |\str(U_S, \Uh _E)|,
\\
&X_2 : = |\str(U_W, U_E)| - |\str(\Uh_W, \Uh_E)|.
\endaligned
$$
By the interaction estimate established in Lemma~\ref{lem94} concerning the Euler system in a flat geometry and in
 Eddington-Finkelstein coordinates, we have 
$
X_1 \leq 0. 
$
The term $X_2$ accounts for the effect of the source-terms and geometric terms in the Euler equations. Using that $U_\D$ is uniformly
bounded (for the interval $v \in[v_0, v_*]$ under consideration) and for some constant $C$ we obtain 
$$
\aligned
X_2 &\leq C \, |\str(\Uh_W, \Uh_E)| \, \big( |U_W - \Uh_W| + |U_E - \Uh _E| \big) + C \, \big| (U_W - U_E) - (\Uh _W - \Uh _E) \big|
\\
    &\leq C \Delta v \, |\str(\Uh_W, \Uh_E)| \, \big(
\sup_{v \in[v_{i-1}, v_i]}| \widehat U'_W(v)| + \sup _{v \in[v_{i-1}, v_i]}|\widehat U'_E(v)| \big) 
+ C \Delta v \, |(\Uh _W - \Uh _E)|
\\
    &\leq C \Delta v  \,|\str(\Uh_W, \Uh_E)| 
\leq C \Delta v  \, (|\str(\Uh_W, U_S)| + |\str(U_S, \Uh_E)|), 
\endaligned 
$$
in which we have denoted by $\widehat U_W(v)$ and $\widehat U_E(v)$ the solutions of the ODE associated with the vertex $W$ and $E$, 
and we have used the continuous dependence property $|(\Uh_W - \Uh_E)| = O(1)|\str(\Uh_W,\Uh_E)|$.
\end{proof}


\section*{Acknowledgments}

 The first author (AYB) was supported by an Eiffel excellence scholarship of the French Ministry of Foreign and European Affairs and
 a ``For Women in Science'' fellowship  funded by L'Or\'eal Austria.
 The second author (PLF) was supported by the Centre National de la Recherche Scientifique (CNRS) and by the Agence Nationale de la
 Recherche through the grant ANR SIMI-1-003-01 (Analysis and geometry of spacetimes with low regularity).  
The second author gratefully acknowledges financial support from the National Science Foundation under Grant No. 0932078 000 via the Mathematical Sciences Research Institute (MSRI), Berkeley, where the author spent the Fall Semester 2013 for the program on  ``Mathematical General Relativity''.



\end{document}